\title{The Trichotomy of Regular Property Testing}
\author{Gabriel {Bathie}}{LaBRI, Université de Bordeaux \and DIENS, Paris, France \and \url{https://perso.ens-lyon.fr/gabriel.bathie}}{gabriel.bathie@labri.fr}{}{Partially funded by the grant ANR-20-CE48-0001 from the French National Research Agency.}
\author{Nathanaël Fijalkow}{LaBRI, CNRS, Université de Bordeaux, France \and \url{https://games-automata-play.com/}}{nathanael.fijalkow@gmail.com}{}{}
\author{Corto Mascle}{LaBRI, Université de Bordeaux, France \and MPI-SWS, Kaiserslautern, Germany \and \url{https://corto-mascle.github.io/}}{corto.mascle@labri.fr}{}{}
\authorrunning{G. Bathie and N. Fijalkow and C. Mascle} 
\keywords{property testing, regular languages} 
\tikzset{
	node distance=1cm, 
	every state/.style={thick}, 
	initial text=$ $, 
}
\theoremstyle{theorem}
\newtheorem{fact}[theorem]{Fact}
\newtheorem{property}[theorem]{Property}
\theoremstyle{definition}
\newtheorem{problem}[theorem]{Problem}
\DeclareMathOperator{\poly}{poly}
\newcommand{\Aa}{\mathcal{A}}
\newcommand{\Bb}{\mathcal{B}}
\newcommand{\cO}{\mathcal{O}}
\newcommand{\curly}{\mathrel{\leadsto}}
\newcommand{\dd}{.\,.}
\newcommand{\Dd}{\mathcal{D}}
\newcommand{\EE}{\mathbb{E}}
\newcommand{\emptyw}{\gamma}
\newcommand{\eps}{\varepsilon}
\newcommand{\equivportals}{\simeq}
\newcommand{\factor}{\preccurlyeq}
\newcommand{\Ff}{\mathcal{F}}
\newcommand{\geqportals}{\succeq}
\newcommand{\lang}[1]{\mathcal{L}(#1)}
\newcommand{\lcm}{\mathsf{lcm}}
\newcommand{\lefteffect}[2]{(#1 \gg #2)}
\newcommand{\leqportals}{\preceq}
\newcommand{\MBF}{\textsf{MBF}\xspace}
\newcommand{\MBS}{\textsf{MBS}\xspace}
\newcommand{\Mm}{\mathcal{M}}
\newcommand{\nequivportals}{\not\simeq}
\newcommand{\NN}{\mathbb{N}}
\newcommand{\pobs}{\trianglelefteq}
\newcommand{\portal}[4]{#1,#2 \curly #3, #4}
\newcommand{\pow}[1]{2^{#1}}
\newcommand{\PP}{\mathbb{P}}
\newcommand{\Pp}{\mathcal{P}}
\newcommand{\PSPACE}{\textsf{PSPACE}\xspace}
\newcommand{\righteffect}[2]{(#2 \ll #1)}
\newcommand{\Ss}{\mathcal{S}}
\newcommand{\SCCpath}{\pi}
\newcommand{\SCCset}{\mathscr{S}}
\newcommand{\set}[1]{\{ #1 \}}
\newcommand{\sigmasize}[1]{|| #1 ||}
\newcommand{\timedA}{\widehat{\Aa}}
\newcommand{\timedlanguage}[4]{\mathcal{L}(\portal{#1}{#2}{#3}{#4})}
\newcommand{\timedlang}[1]{\mathcal{TL}( #1 )}
\newcommand{\timedword}[2]{(#1:#2)}
\newcommand{\twu}{\timedword{0}{u}}
\newcommand{\ZZ}{\mathbb{Z}}
\newcommand{\Prob}[2][]{%
\ifthenelse{\equal{#1}{}}%
{\PP}%
{\PP_{ #1 }}%
\left( #2 \right)}
\newcommand{\epslogeps}[1][]
{%
\ifthenelse{\equal{#1}{}}%
{\log(\eps^{-1})/\eps}%
{\log^{ #1 }(\eps^{-1})/\eps}%
}
\newcommand{\doc}{chapter\xspace}
\newcommand{\doc}{article\xspace}
\begin{document}

\maketitle
\begin{abstract}
Property testing is concerned with the design of algorithms making a sublinear number of queries to distinguish whether the input satisfies a given property or is far from having this property. A seminal paper of Alon, Krivelevich, Newman, and Szegedy in 2001 introduced property testing of formal languages: the goal is to determine whether an input word belongs to a given language, or is far from any word in that language. They constructed the first property testing algorithm for the class of all regular languages. This opened a line of work with improved complexity results and applications to streaming algorithms. In this work, we show a trichotomy result: the class of regular languages can be divided into three classes, each associated with an optimal query complexity. 
Our analysis yields effective characterizations for all three classes using so-called minimal blocking sequences, reasoning directly and combinatorially on automata.


\end{abstract}

\section{Introduction}
Property testing was introduced by Goldreich, Goldwasser and Ron~\cite{goldreich1998property} in 1998: it is the study of randomized approximate decision procedures that must distinguishing objects that have a given property from those that are \emph{far} from having it. 
Because of this relaxation on the specification, the field focuses on very efficient decision procedures, typically with sublinear (or even constant) running time -- in particular, the algorithm does not even have the time to read the whole input.

In a seminal paper, Alon et al.~\cite{alon2001regular} introduced property testing of formal languages: given a language $L$ of finite words, the goal is to determine whether an input word $u$ belongs to the language or is $\eps$-far\footnote{Informally, $u$ is $\eps$-far from $L$ means that even by changing an $\eps$-fraction of the letters of $u$, we cannot obtain a word in $L$. See \cref{sec:preliminaries} for a formal definition.} from it, where $\eps$ is the precision parameter.
The model assumes random access to the input word: a \emph{query} specifies a position in the word and asks for the letter at that position, and the \emph{query complexity} of the algorithm is the worst-case number of queries it makes to the input.
Alon et al.~\cite{alon2001regular} showed a surprising result: under the Hamming distance, all regular languages are testable with $\cO(\epslogeps[3])$ queries, where the $\cO(\cdot)$ notation hides constants that depend on the language, but, crucially, not on the length of the input word.
In that paper, they also identified the class of \emph{trivial} regular languages, those for which the answer is always \emph{yes} or always \emph{no} for sufficiently large $n$, e.g. finite languages or the set of words starting with an $a$, and showed that testing membership in a \emph{non-trivial} regular language requires $\Omega(1/\eps)$ queries.

The results of Alon et al.~\cite{alon2001regular} leave a multiplicative gap of $\cO(\log^3(1/\eps))$ between the best upper and lower bounds.
We set out to improve our understanding of property testing of regular languages by closing this gap.
Bathie and Starikovskaya obtained in 2021~\cite{bathie2021property} the first improvement over the result of Alon et al.~\cite{alon2001regular} in more than 20 years:
\begin{fact}[{From \cite[Theorem 5]{bathie2021property}}]
    Under the edit distance, every regular language can be tested with $\cO(\epslogeps)$ queries.
\end{fact}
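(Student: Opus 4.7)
The plan is to design a tester that samples $\Theta(1/\eps)$ positions in the input word $u$ and, around each, queries a short window of length $\Theta(\log(\eps^{-1}))$, for a total of $\cO(\epslogeps)$ queries. Fix a DFA $\Aa = (Q, \Sigma, \delta, q_0, F)$ recognizing $L$; all constants hidden below depend on $|Q|$. A window $w$ is declared \emph{admissible} if there exists a pair $(p,q) \in Q^2$ with $\delta^*(p,w) = q$, where $p$ is reachable from $q_0$ and $F$ is reachable from $q$. Windows touching the endpoints of $u$ are tested against $q_0$ or $F$ in the obvious way. The tester accepts iff every sampled window is admissible. Completeness is immediate: every length-$k$ factor of a word in $L$ is admissible.

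The core of the argument is a structural lemma whose contrapositive asserts: if $u$ is $\eps$-far from $L$ under edit distance, then at least a $\Omega(\eps)$ fraction of the length-$\Theta(\log(\eps^{-1}))$ windows of $u$ are inadmissible. Granted this, sampling $\Theta(1/\eps)$ positions catches one inadmissible window with constant probability, giving soundness via a Chernoff bound. The lemma itself is established by a greedy repair strategy: scan $u$ left to right while maintaining a current DFA state; at each admissible window whose source state matches the current state, advance to the window's target; at each inadmissible window, or each admissibility mismatch, delete the offending region and splice in a bridge word of length $\leq |Q|$ connecting the adjacent viable states. Each such repair has edit cost $\cO(\log(\eps^{-1})) + |Q| = \cO(\log(\eps^{-1}))$.

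The main obstacle is the repair accounting. To conclude $u$ is $\eps$-close to $L$, the total edit cost must stay below $\eps n$, which forces the number of repairs to be at most $\cO(\eps n / \log(\eps^{-1}))$ — exactly the bound on inadmissible windows granted by the hypothesis of the contrapositive. The delicate point is that repairs can cascade: inserting a bridge of length $\cO(1)$ in place of a deleted region of length $\Theta(\log(\eps^{-1}))$ shifts all later windows, potentially turning previously admissible windows into mismatches. This is controlled by exploiting the large ratio between the window length $\Theta(\log(\eps^{-1}))$ and the bridge length $\cO(1)$: after a repair, the scan can resynchronize within $\cO(|Q|)$ letters, absorbing misalignments into the bridge itself. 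Choosing the constant in the window length large enough compared to $|Q|$ makes this resynchronization local, so repairs remain independent and the edit budget telescopes as claimed.

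A secondary point is that the sampled positions must tile the input densely enough for inadmissible windows to be hit with constant probability; this is handled by drawing samples uniformly and noting that every inadmissible window of length $\Theta(\log(\eps^{-1}))$ contributes $\Theta(\log(\eps^{-1}))$ starting positions to the sample space, balancing the sample size against the fraction of bad positions.
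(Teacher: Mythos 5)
Your key structural lemma---that an $\eps$-far word must have an $\Omega(\eps)$ fraction of its length-$\Theta(\log(1/\eps))$ windows inadmissible---is false, and the failure is precisely the phenomenon that makes $\Theta(\epslogeps)$ the optimal complexity for some languages. Minimal blocking factors can have length $\Theta(1/\eps)$: for instance, words of the form $cvd$ with $v$ having odd $b$-parity for the ``repeated parity'' language used in the lower bound of Bathie--Starikovskaya and in \cref{sec:scc-lb} of the present paper. A word built from $\eps n$ disjoint minimal blocking factors of length $\Theta(1/\eps)$ (with light padding) is $\eps$-far under the edit distance, since each such block independently requires at least one edit; yet---because every proper factor of a \emph{minimal} blocking factor is, by definition, not blocking---no window of length $o(1/\eps)$ is inadmissible anywhere in it. A tester that only ever looks at windows of the single fixed length $\Theta(\log(1/\eps))$ therefore accepts such inputs with probability $1$. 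The $\log(1/\eps)$ factor in the query complexity does not come from querying windows of length $\log(1/\eps)$; it comes from sampling windows at $\Theta(\log(1/\eps))$ different \emph{scales} $2^0, 2^1, \ldots, 2^T$ with $T = \Theta(\log(1/\eps))$, taking roughly $1/(\eps 2^t)$ windows of length $2^t$, so that blocking factors of every length scale up to $\Theta(1/\eps)$ are caught with constant probability. This multi-scale sampling is the \textsc{Sampler} procedure of \cref{alg:generic-sampling} and its analysis in \cref{lemma:generic-sampling-alg}; that tradeoff, not the window size, is the genuine content of the upper bound.

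There is also a cost-accounting mismatch in your repair argument, even if the lemma were true. Each repair in your greedy scheme costs $\Theta(\log(1/\eps))$ edits, so to keep the total below $\eps n$ you can afford at most $\cO(\eps n / \log(1/\eps))$ repairs; but the contrapositive hypothesis only bounds the number of inadmissible windows by a small constant times $\eps n$. The missing $\log(1/\eps)$ factor cannot be absorbed into the constants. Again, the resolution is to allow windows (and hence repairs) of varying lengths, so that short windows trigger cheap repairs and long ones expensive repairs, with the inverse-length sampling rate balancing the budget---exactly what \cref{lemma:many-blocking} and \cref{lemma:many-short-blocking} provide for the paper's tester.
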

Testers under the edit distance are weaker than testers under the Hamming distance, hence this result does not exactly improve the result of Alon et al.~\cite{alon2001regular}. We overcome this shortcoming later in this \doc: \cref{thm:gen-ub} extends the above result to the case of the Hamming distance.

Bathie and Starikovskaya~\cite{bathie2021property} also showed that this upper bound is tight, in the sense that there is a regular language $L_0$ for which this complexity cannot be further improved, thereby closing the query complexity gap.
\begin{fact}[{From \cite[Theorem 15]{bathie2021property}}]
    There is a regular language $L_0$ with query complexity $\Omega(\epslogeps)$ under the edit distance\footnote{Note that, as opposed to testers, lower bounds for the edit distance are stronger than lower bounds of the Hamming distance.}, for all small enough $\eps > 0$.
\end{fact}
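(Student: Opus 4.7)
The plan is to apply Yao's minimax principle, reducing the randomized query lower bound to a distributional complexity lower bound. I would exhibit a pair of distributions $D_{\mathrm{yes}}$ and $D_{\mathrm{no}}$ on words of length $n$, with $D_{\mathrm{yes}}$ supported on $L_0$ and $D_{\mathrm{no}}$ supported on words at edit distance at least $\eps n$ from $L_0$, and show that no deterministic algorithm making $o(\epslogeps)$ queries can distinguish them with probability bounded away from $1/2$. By Yao's principle, this gives the same lower bound against randomized algorithms.

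For the choice of $L_0$, I would look for a regular language whose edit-far words are forced to contain local obstructions at $\Theta(\log(\eps^{-1}))$ distinct geometric scales $2^0, 2^1, \ldots, 2^{\log(\eps^{-1})}$, where an obstruction of size $2^i$ costs $\Theta(2^i)$ edits to repair but is revealed by only $O(1)$ ``hot'' query positions within its support. A natural family of candidates is regular languages defined by sets of forbidden factors of various lengths, or by nested periodic constraints. The hard distribution $D_{\mathrm{no}}$ plants, for each scale $i$, about $\eps n / (2^i \log(\eps^{-1}))$ independent random violations of size $2^i$, so that the edit-distance contributions compose to $\Theta(\eps n)$ across scales while the total density of hot positions is only $\Theta(\eps / \log(\eps^{-1}))$. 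Since outside of hot positions the two distributions are identical, the probability that $q$ (adaptive) queries reveal a violation is at most $\Theta(q \eps / \log(\eps^{-1}))$, forcing $q = \Omega(\epslogeps)$; the adaptivity is handled by the usual argument of answering each query with the $D_{\mathrm{yes}}$ marginal conditioned on the transcript seen so far.

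The main obstacle will be proving that a multi-scale planted word is genuinely $\eps n$-far from $L_0$ under the \emph{edit} distance. Edit distance is much more flexible than Hamming distance: a single insertion or deletion can realign an arbitrarily long suffix, so it does not suffice to sum the repair costs of individual violations in isolation. One must rule out edit scripts that amortise corrections across many distant violations simultaneously, which typically requires a combinatorial disjointness argument, such as a Dilworth-style decomposition or an amortised potential function that charges each edit to a single violation, ensuring that any script from the planted word to $L_0$ spends $\Omega(2^i)$ edits per scale-$i$ violation independently. Once this combinatorial core is in place, the indistinguishability analysis is routine.
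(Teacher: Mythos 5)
This statement is not proved in the paper at all: it is quoted as a black-box fact from Bathie and Starikovskaya~\cite{bathie2021property}. The nearest thing the paper does prove is \cref{thm:scc-lb}, which generalizes the lower bound to every strongly connected NFA with infinitely many minimal blocking factors, but under the \emph{Hamming} distance and restricted to one-sided-error testers. I will compare you against that proof, since it is the relevant internal argument.

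Your high-level plan (Yao's minimax principle applied to a multi-scale distribution of planted obstructions, with roughly $\eps n / (2^i \log(\eps^{-1}))$ obstructions at each scale $2^i$, giving total repair cost $\Theta(\eps n)$ and hot-position density $\Theta(\eps/\log(\eps^{-1}))$) is essentially the same framework as \cref{thm:scc-lb}. The paper's distribution is organised slightly differently --- it partitions the word into $\eps n$ intervals of length $1/\eps$ and assigns each interval a single random scale $\kappa_j$ with $\Pr[\kappa_j = t] \propto 2^t$, filling it with $\approx 2^{-\kappa_j}/\eps$ blocking factors of size $\approx 2^{\kappa_j}$ --- rather than overlaying all scales on the same word, but the multi-scale geometry and the final counting are the same.

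There are two places where your plan diverges from the paper's argument in a substantive way. First, the paper does \emph{not} prove information-theoretic indistinguishability of $D_{\mathrm{yes}}$ and $D_{\mathrm{no}}$. Instead it exploits perfect completeness: the ``no'' instances use a periodic blocking factor $\tau_{-,r} = \phi\nu_-^r\chi$ and the ``yes'' instances use $\tau_{+,r,s} = \phi\nu_-^s\nu_+\nu_-^{r-1-s}\chi$ for a random $s$, and \cref{lemma:must-accept} shows that if a deterministic tester leaves a single period of some block unqueried, the transcript is consistent with a positive instance, so the tester \emph{must} accept. This converts the lower bound into a purely combinatorial coverage argument and removes the need for the conditional-marginal / transcript-coupling machinery you sketch; on the other hand, it only yields a one-sided-error lower bound, which is all the paper claims. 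Second, and more importantly, you correctly identify as the ``main obstacle'' the fact that edit distance allows a single indel to realign an arbitrarily long suffix, so summing per-obstruction repair costs does not work directly. The paper's \cref{thm:scc-lb} sidesteps this entirely by proving the bound only under Hamming distance (where \cref{lemma:far-whp} follows from counting disjoint positional blocking factors plus Hoeffding); it then relies on the cited result of~\cite{bathie2021property} for the edit-distance statement. So your plan is pointed at the genuinely harder version of the claim, and the ``Dilworth-style decomposition or amortised potential'' step you flag is precisely the part that is not present anywhere in this paper and would have to be filled in from~\cite{bathie2021property}. Your outline is therefore correct in spirit but leaves open exactly the step that makes the edit-distance version nontrivial.
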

Furthermore, it is easy to find specific non-trivial regular languages for which there is an algorithm using only $\cO(1/\eps)$ queries, e.g. $L = a^*$ over the alphabet $\{a,b\}$, $L = (ab)^*$ or $L = (aa + bb)^*$.

Hence, these results combined with those of Alon et al.~\cite{alon2001regular} show that there exist trivial languages (that require 0 queries for large enough $n$), \emph{easy} languages (with query complexity $\Theta(1/\eps)$) and \emph{hard} languages (with query complexity $\Theta(\epslogeps)$).
This raises the question of whether there exist languages with a different query complexity (e.g. $\Theta(\log\log(\eps^{-1})/\eps)$), or if every regular is either trivial, easy or hard.
This further asks the question of giving a characterization of the languages that belong to each class, inspired by the recent success of exact characterizations of the complexity of sliding window~\cite{GanardiHL18} recognition and dynamic membership~\cite{amarilli2021dynamic} of regular languages.

In this \doc, we answer both questions: we show a trichotomy of the complexity of testing regular languages under the Hamming distance\footnote{We consider one-sided error testers, also called testing with perfect completeness, see definitions below.}, showing that there are only the three aforementioned complexity classes (trivial, easy and hard), we give a characterization of all three classes using a combinatorial object called \emph{blocking sequences}, and show that this characterization can be decided in polynomial space (and that it is complete for \PSPACE).
This trichotomy theorem closes a line of work on improving query complexity for property testers and identifying easier subclasses of regular languages.

\subsection{Related work}

A very active branch of property testing focuses on graph properties, for instance one can test whether a given graph appears as a subgraph~\cite{AlonDLRY94} or as an induced subgraph~\cite{alon2000efficient}, and more generally every monotone graph property can be tested with one-sided error~\cite{AlonS08a}. Other families of objects heavily studied under this algorithmic paradigm include probabilistic distributions~\cite{paninski2008coincidence,diakonikolas2016new} combined with privacy constraints~\cite{Aliakbarpour2018DifferentiallyPI},
numerical functions~\cite{blum1990self, rubinfeld1996robust}, and programs~\cite{ergun1998spot,dodis1999improved}.
We refer to the book of Goldreich~\cite{goldreich2017introduction} for an overview of the field of property testing.

\subparagraph*{Testing formal languages.} Building upon the seminal work of Alon et al.~\cite{alon2001regular}, Magniez et al.~\cite{magniez2007property} gave a tester using $\cO(\epslogeps[2])$ queries for regular languages under the edit distance with moves, and François et al.~\cite{francois_et_al:LIPIcs:2016:6355} gave a tester using $\cO(1/\eps^2)$ queries for the case of the weighted edit distance.
Alon et al.~\cite{alon2001regular} also show that context-free languages cannot be tested with a constant number of queries, and subsequent work has considered testing specific context-free languages such as the \textsc{Dyck} languages~\cite{parnas2003testing,fischer2018improved} or regular tree languages~\cite{magniez2007property}.
Property testing of formal languages has been investigated in other settings: Ganardi et al.~\cite{ganardi2019sliding} studied the question of testing regular languages in the so-called ``sliding window model'', while others considered property testing for subclasses of context-free languages in the streaming model: Visibly Pushdown languages~\cite{francois_et_al:LIPIcs:2016:6355}, \textsc{Dyck} languages~\cite{JN14,Krebs,MagniezMN14} or $\mathsf{DLIN}$ and $\mathsf{LL}(k)$~\cite{BabuLRV13}.
A recent application of property testing of regular languages was to detect race conditions in execution traces of distributed systems~\cite{ThokairZMV23}.

\subsection{Main result and overview of the paper}
We start with a high-level presentation of the approach, main result, and key ideas. In this section we assume familiarity with standard notions such as finite automata; we will detail notations in \cref{sec:preliminaries}.

Let us start with the notion of a property tester for a language $L$: the goal is to determine whether an input word $u$ belongs to the language $L$, or whether it is $\eps$-far from it. 
We say that $u$ of length $n$ is \emph{$\eps$-far from $L$} with respect to a metric $d$ over words if all words $v \in L$ satisfy $d(u, v) \ge \eps n$, written $d(u,L) \ge \eps n$.
Throughout this work and unless explicitly stated otherwise, we will consider the case where $d$ is the Hamming distance, defined for two words $u$ and $v$ as the number of positions at which they differ if they have the same length, and as $+\infty$ otherwise.
In that case, $d(u,L) \ge \eps n$ means that one cannot change a proportion $\eps$ of the letters in $u$ to obtain a word in $L$.
We assume random access to the input word: a query specifies a position in the word and asks for the letter in this position. 
A \emph{$\eps$-property tester} (or for short, simply a \emph{tester}) $T$ for a language $L$ is a randomized algorithm that, given an input word $u$ of length $n$, always answers ``yes'' if $u\in L$ and answers ``no'' with probability bounded away from 0 when $u$ is $\eps$-far from $L$. 
As in previous works on this topic, we measure the complexity of a tester by its \emph{query complexity}. It is the maximum number of queries that $T$ makes on an input of length $n$, as a function of $n$ and $\eps$, in the worst case over all words of length $n$ and all possible random choices.

We can now formally define the classes of \emph{trivial, easy} and \emph{hard} regular languages.
\begin{definition}[Hard, easy and trivial languages]
	Let $L$ be a regular language. We say that:
	\begin{itemize}
		\item $L$ is \emph{hard} if  the optimal query complexity for a property tester for $L$ is $\Theta(\epslogeps)$.
		
		\item $L$ is \emph{easy} if the optimal query complexity for a property tester for $L$ is $\Theta(1/\eps)$.
		
		\item $L$ is \emph{trivial} if there exists $\eps_0> 0$ such that for all positive $\eps < \eps_0$, there is a property tester and some $n \in \NN$ such that the tester makes $0$ queries on words of length $\geq n$.
	\end{itemize}
\end{definition}

\begin{remark}\label{rmk:finite}
    If $L$ is finite, then it is trivial: since there is a bound~$B$ on the lengths of its words, a tester can reject words of length at least $n_0 = B+1$ without querying them.
    For that reason, we only consider \emph{infinite} languages in the rest of the article.
\end{remark}

Our characterization of those three classes uses the notion of \emph{blocking sequence} of a language $L$. Intuitively, they  are sequences of words such that finding those words as factors of a word $w$ proves that $w$ is not in $L$.
We also define a partial order on them, which gives us a notion of \emph{minimal} blocking sequence.

\begin{restatable}{theorem}{ptestmainthm}\label{thm:general}
	Let $L$ be an infinite regular language recognized by an NFA $\Aa$ and let $\MBS(\Aa)$ denote the set of minimal blocking sequences of $\Aa$.
	The complexity of testing $L$ is characterized by $\MBS(\Aa)$ as follows:
	\begin{enumerate}
		\item $L$ is trivial if and only if $\MBS(\Aa)$ is empty;
		\item $L$ is easy if and only if $\MBS(\Aa)$ is finite and nonempty;
		\item $L$ is hard if and only if $\MBS(\Aa)$ is infinite.
	\end{enumerate}
\end{restatable}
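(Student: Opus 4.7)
The plan is to prove the three bi-implications by matching each structural condition on $\MBS(\Aa)$ with both an upper bound (a tester) and a lower bound (impossibility). Since $L$ is infinite and the classes trivial/easy/hard are mutually exclusive and (by Alon et al.\ and Bathie--Starikovskaya) exhaust the possibilities, it suffices to show that each condition implies the corresponding class; the converses follow automatically. So I will prove: (i) $\MBS(\Aa) = \emptyset \Rightarrow L$ trivial; (ii) $\MBS(\Aa)$ finite nonempty $\Rightarrow L$ easy; (iii) $\MBS(\Aa)$ infinite $\Rightarrow L$ hard.

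For (i), I would show directly from the definition that if no finite sequence of factors can witness non-membership, then for every sufficiently long word $u$ either $u \in L$ or $u$ can be identified as far from $L$ by length alone. Concretely, I expect a pumping-style argument: absence of minimal blocking sequences should translate into a structural property of $\Aa$ (e.g., every sufficiently long word reaches an accepting SCC from the initial state, or no long word does), allowing a 0-query decision for large enough $n$.

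For (ii), the upper bound relies on the fact that if $u$ is $\eps$-far from $L$, then, by the (future) structural lemma on blocking sequences, $u$ must contain $\Omega(\eps n)$ disjoint occurrences of factors realizing some minimal blocking sequence. Since $\MBS(\Aa)$ is finite, these factors have bounded length, so sampling $\cO(1/\eps)$ contiguous windows of bounded length detects one with constant probability; this gives a tester with $\cO(1/\eps)$ queries. The lower bound $\Omega(1/\eps)$ comes from non-triviality: since $\MBS(\Aa) \neq \emptyset$, $L$ is non-trivial, so the $\Omega(1/\eps)$ bound of Alon et al.\ applies.

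For (iii), the upper bound $\cO(\epslogeps)$ is an application of the generalized \cref{thm:gen-ub} (which handles arbitrary regular languages under Hamming). The real work is the matching lower bound. The hard part will be this step: I need to leverage the existence of arbitrarily long minimal blocking sequences to embed, in an instance-dependent way, the hard lower-bound construction of Bathie--Starikovskaya for $L_0$. Concretely, I would pick, for each scale $k$, a minimal blocking sequence of length $\Theta(\log(1/\eps))$ and build from it a distribution on inputs that are $\eps$-far from $L$ but look locally like yes-instances unless a deep sequence of correlated positions is queried; minimality of the blocking sequence is essential here, ensuring that no shorter witness can shortcut the tester's task. This reduction-style argument, relating the combinatorics of $\MBS(\Aa)$ to the Yao-principle lower bound of \cite{bathie2021property}, is where most of the novelty and difficulty lies.

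Finally, I would tie the three directions together by noting that the three conditions on $\MBS(\Aa)$ form a partition of possibilities, so proving the forward implications in all three cases automatically gives the reverse implications. The \PSPACE\ decidability claim (mentioned in the introduction but not part of this theorem) would be handled separately by giving a nondeterministic polynomial-space procedure that searches for minimal blocking sequences of bounded length and certifies finiteness.
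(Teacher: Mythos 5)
Your overall structure matches the paper's: establish tight query bounds in each of the three cases and deduce the reverse implications from mutual exclusivity. Parts (i) and (ii) of your sketch are essentially the paper's route (\cref{lem:trivialiffMBSempty}, \cref{coro:fin-bs-then-easy}, and Alon et al.'s $\Omega(1/\eps)$ bound for non-trivial languages), modulo the fact that for (i) you hand-wave a ``pumping-style'' structure where the paper actually relies on \cref{cor:far-L-many-bs} (far words contain many disjoint blocking sequences; if no blocking sequences exist, no word can be far). Also, a small logical slip: you assert that the converses ``follow automatically'' because Alon et al.\ and Bathie--Starikovskaya ``exhaust the possibilities''; those prior works do not establish exhaustiveness (that is precisely what this theorem shows). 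What actually closes the loop is that your three forward implications establish matching upper and lower bounds, so no fourth complexity class is possible --- the reasoning is sound, only the attribution is off.

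The real gap is in (iii). Your plan treats the lower bound as a single ``embed the hard distribution from $L_0$'' step, with a handwave toward minimality ensuring ``no shorter witness can shortcut.'' The paper instead splits this into two distinct pieces, and the second is where the difficulty actually sits. First (Section~3), for \emph{strongly connected} automata the regular language $\MBF(\Aa)$ is pumped (\cref{lemma:good-bf}) to obtain a uniform family $\tau_{-,r}=\phi\nu_-^r\chi$ with a matching non-blocking family $\tau_{+,r,s}$; these feed Yao's principle directly. Second (Section~4.3), the general case is \emph{reduced} to a single portal $P$: one must exhibit ``framing'' sequences $\sigma_l,\sigma_r$ such that $\sigma_l\,\nu\,\sigma_r$ is blocking for $\Aa$ iff $\nu$ is blocking for $P$ (properties P1--P3 and \cref{lem:hard-aut-to-hard-portal}). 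Crucially, it is \emph{not} enough to find an SCC with infinitely many minimal blocking factors --- \cref{example:hard-to-easy} gives an automaton whose first SCC has infinitely many minimal blocking factors, yet the language is easy, because blocking factors are shared across SCCs. The correct object is minimality of blocking \emph{sequences} of the whole automaton, and the isolation argument needs the $\leqportals$-minimal choice of $P$ together with the $\equivportals$-equivalence condition to rule out exactly this obstruction. Your plan does not identify this obstacle, and ``pick a minimal blocking sequence of length $\Theta(\log(1/\eps))$ at each scale'' would not by itself deliver the reduction: you would need to explain how the part of the instance surrounding the chosen blocking factor can be uniformized across all accepting SCC-paths while keeping the target portal alive.
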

In the case where $L$ is recognised by a strongly connected automaton, blocking sequences can be replaced by \emph{blocking factors}. A blocking factor is a single word that is not a factor of any word in $L$.

\cref{sec:preliminaries} defines the necessary terms and notations.
The rest of the paper is structured as follows.
In Sections~\ref{sec:scc} and~\ref{sec:general}, we delimit the set of hard languages, that is, the ones that require $\Theta(\epslogeps)$ queries.
More precisely, \cref{sec:scc} focuses on the subcase of languages defined by \emph{strongly connected automata}.
First, we combine the ideas of Alon et al.~\cite{alon2001regular} with those presented in~\cite{bathie2021property} to obtain a property tester that uses $\cO(\epslogeps)$ queries for any language with a strongly connected automaton, under the Hamming distance.
Second, we show that if the language of a strongly connected automaton has infinitely many blocking factors then it requires $\Omega(\epslogeps)$ queries.
This result generalizes the result of Bathie and Starikovskaya~\cite{bathie2021property}, which was for a single language, to all regular languages with infinitely many minimal blocking factors. We use Yao's minimax principle~\cite{yao1977probabilistic}: this involves constructing a hard distribution over inputs, and showing that any deterministic property testing algorithms cannot distinguish between positive and negative instances against this distribution.

In \cref{sec:general}, we extends those results to all automata. The interplay with the previous section is different for the upper and the lower bound.For the upper bound of $\cO(\epslogeps)$ queries, we use a natural but technical extension of the proof in the strongly connected case. 
Note that this result is an improvement over the result of Bathie and Starikovskaya~\cite{bathie2021property}, which works under the edit distance, and testers for the Hamming distance are also testers for the edit distance.
For the lower bound of $\Omega(\epslogeps)$ queries for languages with infinitely many minimal blocking sequences, we reduce to the strongly connected case. The main difficulty is that it is not enough to consider strongly connected components in isolation: there exists finite automata that contain a strongly connected component that induces a hard language, yet the language of the whole automaton is easy. We solve this difficulty by carefully defining the notion of minimality for a blocking sequence.

\cref{sec:trivial-easy} completes the trichotomy, by characterising the easy and trivial languages.
We show that languages of automata with finitely many blocking sequences can be tested with $\cO(1/\eps)$ queries.
We also prove that if an automaton has at least one blocking sequence, then it requires $\Omega(1/\eps)$ queries to be tested, by showing that the languages that our notion of trivial language coincides with the one given by Alon et al.~\cite{alon2001regular}.
By contrast, we show that automata without blocking sequences recognize trivial languages.

Once we have the trichotomy, it is natural to ask whether it is effective: given an automaton $\Aa$, can we determine if its language is trivial, easy or hard? 
The answer is yes, and we show in \cref{sec:complexity} that all three decision problems are \PSPACE-complete, even for strongly connected automata.

\section{Preliminaries}
\label{sec:preliminaries}
\subparagraph*{Words and automata.}
We write $\Sigma^*$ (resp. $\Sigma^+$) for the set of finite words (resp. non-empty words) over the alphabet $\Sigma$.
The length of a word $u$ is denoted $|u|$, and its $i$th letter is denoted $u[i]$. The empty word is denoted $\emptyw$.
Given $u\in\Sigma^*$ and $0\le i, j \le |u|-1$, define $u[i\dd j]$ as the word $u[i]u[i+1]\ldots u[j]$ if $i\le j$ and $\emptyw$ otherwise.
Further, $u[i\dd j)$ denotes the word $u[i\dd j-1]$.
A word $w$ is a \emph{factor} (resp. \emph{prefix}, \emph{suffix}) of $u$ is there exist indices $i,j$ such that $w = u[i\dd j]$ (resp. with $i = 0, j = |u|-1$). We use $w\factor u$ to denote ``$w$ is a factor of $u$''.
Furthermore, if $w$ is a factor of $u$ and $w \neq u$, we say that $w$ is a \emph{proper factor} of $u$.
 
A \emph{nondeterministic finite automaton} (NFA) $\Aa$ is a transition system defined by a tuple $(Q, \Sigma, \delta, q_0, F)$, with $Q$ a finite set of states, $\Sigma$ a finite alphabet, $\delta : Q \times \Sigma \to \pow{Q}$ the transition function, $q_0 \in Q$ the initial state and $F \subseteq Q$ the set of final states. The semantics is as usual~\cite{Pin2021}.
When there is a path from a state $p$ to a state $q$ in $\Aa$, we say that $q$ is reachable from $p$ and that $p$ is co-reachable from $q$. 
In this work, we assume w.l.o.g. that all NFA $\Aa$ are \emph{trim}, i.e., every state is reachable from the initial state and co-reachable from some final state.

\subparagraph*{Property testing.} 

\begin{definition}\label{def:eps-far}
    Let $L$ be a language, let $u$ be a word of length $n$, let $\eps > 0$ be a precision parameter and let $d : \Sigma^*\times\Sigma^*\rightarrow \NN \cup\{+\infty\}$ be a metric.
	We say that the word $u$ is \emph{$\eps$-far from $L$ w.r.t. $d$} if $d(u, L) \ge \eps n$, where
    \[d(u, L) := \inf_{v\in L} d(u,v).\]  
\end{definition}
We assume random access to the input word: a query specifies a position in the word and asks for the letter in this position. 

Throughout this work and unless explicitly stated otherwise, we will consider the case where $d$ is the Hamming distance, defined for two words $u$ and $v$ as the number of positions at which they differ if they have the same length, and as $+\infty$ otherwise.
In that case, $d(u,L) \ge \eps n$ means that one cannot change an $\eps$-fraction of the letters in $u$ to obtain a word in $L$.

A \emph{$\eps$-property tester} (or simply a \emph{tester}) $T$ for a language $L$ is a randomized algorithm that, given an input word $u$, always answers ``yes'' if $u\in L$ and answers ``no'' with probability bounded away from 0 when $u$ is $\eps$-far from $L$. 

\begin{definition}\label{def:tester}
	A property tester for the language $L$ with precision $\eps > 0$ is a randomized algorithm $T$ that, for any input $u$ of length $n$, given random access to $u$, satisfies the following properties:
    \begin{align*}
        \text{ if } u\in L, & \text{ then } T(u) = 1, \\
        \text{ if $u$ is $\eps$-far from $L$}, & \text{ then } \Prob{T(u) = 0} \ge 2/3.
    \end{align*}
	The query complexity of $T$ is a function of $n$ and $\eps$ that counts the maximum number of queries that $T$ makes over all inputs of length $n$ and over all possible random choices.
\end{definition}

We measure the complexity of a tester by its \emph{query complexity}.
Let us emphasize that throughout this \doc we focus on so-called ``testers with perfect completeness'', or ``one-sided error'': if a word is in the language, the tester answers positively (with probability $1$). In particular our characterization applies to this class. 
Because they are based on the notion of blocking factors that we will discuss below, all known testers for regular languages~\cite{alon2001regular,magniez2007property,francois_et_al:LIPIcs:2016:6355, bathie2021property} have perfect completeness.

In this paper, we assume that the automaton $\Aa$ that describes the tested language $L$ is \emph{fixed}, and not part of the input. Therefore, we consider its number of states $m$ as a constant.

\subparagraph*{Graphs and periodicity.}
We now recall tools introduced by Alon et al.~\cite{alon2001regular} to deal with periodicity in finite automata.

Let $G = (V,E)$ with $E \subseteq V^2$ be a directed graph.
A \emph{strongly connected component} (or SCC) of $G$ is a maximal set of vertices that are all reachable from each other. 
It is \emph{trivial} if it contains a single state with no self-loop on it. 
We say that $G$ is \emph{strongly connected} if its entire set of vertices is an SCC.

The period $\lambda = \lambda(G)$ of a non-trivial strongly connected graph $G$ is the greatest common divisor of the length of the cycles in $G$.
Following the work of Alon et al.~\cite{alon2001regular}, we will use the following property of directed graphs.
\begin{fact}[{From \cite[Lemma 2.3]{alon2001regular}}]\label{fact:periodicity}
    Let $G = (V,E)$ be a non-empty, non-trivial, strongly connected graph with finite period $\lambda = \lambda(G)$.
    Then there exists a partition $V = Q_0\sqcup \ldots \sqcup Q_{\lambda-1}$ and a reachability constant $\rho = \rho(G)$ that does not exceed $3|V|^2$ such that:
    \begin{enumerate}
        \item\label{case:path-length} For every $0 \le i,j\le \lambda-1$ and for every $s\in Q_i, t\in Q_j$, the length of any directed path from $s$ to $t$ in $G$ is equal to $(j-i)\mod \lambda$.
        \item For every $0 \le i,j\le \lambda-1$, for every $s\in Q_i, t\in Q_j$ and for every integer $r\ge \rho$, if $r = (j-i) \pmod{\lambda}$, then  there exists a directed path from $u$ to $v$ in $G$ of length $r$.
    \end{enumerate}
\end{fact}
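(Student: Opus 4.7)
The plan is to build the partition from a classical path-length equivalence relation, derive Property~1 immediately, and obtain Property~2 via a Frobenius-type numerical argument.

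First I would fix a reference vertex $v_0 \in V$ and, for each $i \in \{0, \ldots, \lambda - 1\}$, define
\[Q_i = \{v \in V : \text{there exists a path from } v_0 \text{ to } v \text{ of length} \equiv i \pmod{\lambda}\}.\]
The main well-definedness claim is that these sets are pairwise disjoint, i.e.\ that all paths from $v_0$ to a fixed $v$ have the same length modulo $\lambda$. This uses strong connectedness essentially: given two such paths $\pi_1, \pi_2$ of lengths $\ell_1, \ell_2$, any return path $\pi_3$ from $v$ to $v_0$ produces cycles $\pi_1\pi_3$ and $\pi_2\pi_3$ through $v_0$, both of length divisible by $\lambda$, so $\ell_1 \equiv \ell_2 \pmod{\lambda}$. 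Strong connectedness also guarantees every vertex belongs to some $Q_i$, so $V = Q_0 \sqcup \cdots \sqcup Q_{\lambda - 1}$.

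Property~1 then follows by the same concatenation trick: if $\pi$ is a path from $s \in Q_i$ to $t \in Q_j$ of length $\ell$, prepending a path from $v_0$ to $s$ of length $\equiv i \pmod{\lambda}$ yields a path from $v_0$ to $t$, whose length must be $\equiv j \pmod{\lambda}$, forcing $\ell \equiv j - i \pmod{\lambda}$. For Property~2, the strategy is \emph{short base path plus extra cycles}. Strong connectedness provides a simple path from $s$ to $t$ of length $\ell_0 \leq |V|$; by Property~1, $\ell_0 \equiv j - i \pmod{\lambda}$. To realize an arbitrary length $r \equiv j - i \pmod{\lambda}$ with $r \geq \rho$, I would inflate this base path by $r - \ell_0$ additional steps, which, being a nonnegative multiple of $\lambda$, should be expressible as a sum of cycle lengths that can be inserted along (or via a short detour from) the base path.

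The main obstacle, and the source of the quadratic bound $\rho \leq 3|V|^2$, is the resulting numerical-semigroup estimate. Simple cycles have length at most $|V|$ and their collective GCD is $\lambda$; by a Frobenius-type theorem for numerical semigroups with generators bounded by $|V|$, every multiple of $\lambda$ above a threshold of order $|V|^2$ is representable as a nonnegative integer combination of such cycle lengths. Combining this representability threshold with the $O(|V|)$ overhead incurred by the base path and by a detour to a vertex that hosts the required cycles yields a total bound of the form $3|V|^2$; extracting the exact constant is careful bookkeeping rather than conceptual work. The conceptual heart of the argument is therefore the interplay between the period equivalence modulo $\lambda$ and the Frobenius-type bound on cycle-length representability.
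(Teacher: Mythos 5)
The paper does not prove this statement: it is cited verbatim as Lemma~2.3 of Alon et al.\ and used as a black box, so there is no ``paper proof'' to compare against. Your sketch is the standard classical argument for the cyclic structure of a strongly connected digraph of period $\lambda$, and the overall plan is sound: the partition into residue classes $Q_i$ via path lengths from a fixed base vertex, well-definedness from the fact that every closed walk has length divisible by $\lambda$, Property~1 by concatenation, and Property~2 by ``short base path plus cycle insertion'' with a numerical-semigroup bound.

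The one place where I would push you to be more careful before calling it bookkeeping is the cost of actually \emph{realizing} the cycle-length combination along a single walk. Your Frobenius-type step gives a representation of $r - \ell_0$ as a nonnegative integer combination $\sum_k a_k c_k$ of simple-cycle lengths, but those cycles live at different vertices, and stitching them into one walk from $s$ to $t$ requires detours. You describe this overhead as ``$O(|V|)$'', but a priori the representation could use $\Theta(|V|)$ distinct cycles, and touring all of them costs up to $\Theta(|V|^2)$; alternatively one can argue that a representation using a bounded number of distinct cycle lengths suffices, or work instead with the numerical semigroup of closed-walk lengths at a single reference vertex $w$ (whose GCD is again $\lambda$), which removes the detour issue entirely since all loops are inserted at $w$. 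Either route keeps the total within $3|V|^2$, but the quadratic order comes from two places, not one: the Frobenius threshold \emph{and} the linking overhead, and a clean proof should keep these two contributions explicit rather than fold the second into ``$O(|V|)$.'' With that adjustment the sketch is complete and correct.
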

The sets $(Q_i: i = 0,\ldots, \lambda-1)$ are the \emph{periodicity classes} of $G$. In what follows, we will slightly abuse notation and use $Q_i$ even when $i \ge \lambda$ to mean $Q_{i \pmod{\lambda}}$ .

An automaton $\Aa = (Q, \Sigma, \delta, q_0, F)$ defines an underlying graph $G = (Q, E)$ where $E= \{(p,q) \in Q^2 \mid \exists a\in\Sigma: q\in\delta(p, a)\}$. 
In what follows, we naturally extend the notions defined above to finite automata through this graph $G$. 
Note that the numbering of the periodicity classes is defined up to a shift mod $\lambda$: we can thus always assume that $Q_0$ is the class that contains the initial state $q_0$.
The period of $\Aa$ is written $\lambda(\Aa)$.

\subparagraph*{Positional words and positional languages.}

Consider the language $L_3 = (ab)^*$.
The word $v = ab$ can appear as a factor of a word $u \in L_3$ if $v$ occurs at an even position (e.g. position 0, 2, etc.) in $u$. However, if $v$ occurs at an \emph{odd} position in $u$, then $u$ cannot be in $L_3$.
Therefore, $v$ can be used to witness that $u$ is not in $L_3$, but only if we find it at an odd position.
This example leads us to introducing \emph{$p$-positional words}, which additionally encode information about the index of each letter modulo an integer $p$.

More generally, we will associate to each regular language a period $\lambda$, and working with $\lambda$-positional words will allow us to define blocking factors in a position-dependent way without explicitly considering the index at which the factor occurs.

\begin{definition}[Positional words]
    Let $p$ be a positive integer.
    A $p$-\emph{positional word} is a word over the alphabet $\ZZ/p\ZZ\times \Sigma$ of the form $(n\pmod{p}, a_0) ((n+1) \pmod{p}, a_1) \cdots ((n+\ell) \pmod{p}, a_\ell)$ for some non-negative integer~$n$.
    If $u= a_0 \cdots a_\ell$, we write $\timedword{n}{u}$ to denote this word.
\end{definition}

With this definition, if $u = abcd$ and we consider the $2$-positional word $\tau = \timedword{0}{u}$, the factor $bc$ appears at position $1$ in $u$ and is mapped to the factor $\mu = (1, a)(0, b)$.
In this case, even when taking factors of $\mu$, we still retain the (congruence classes of the) indices in the original word $\tau$.

Any strongly connected finite automaton $\Aa = (Q, \Sigma, \delta, q_0, F)$ can naturally be extended into an automaton $\timedA$ over $\lambda(\Aa)$-positional words with $\lambda(\Aa)|Q|$ states. It suffices to keep track in the states of the current state of $\Aa$ and the number of letters read modulo $\lambda(\Aa)$.

We call the language recognized by $\timedA$ the \emph{positional language of $\Aa$},
and denote it $\timedlang{\Aa}$. This definition is motivated by the following property:
\begin{property}
	For any word $u\in\Sigma^*$, we have $u\in\lang{\Aa}$ if and only if $\twu\in\timedlang{\Aa}$.
\end{property}

Positional words make it easier to manipulate factors with positional information, hence we phrase our property testing results in terms of positional languages. Notice that a property tester for $\timedlang{\Aa}$ immediately gives a property tester for $\lang{\Aa}$, as one can simulate queries to $\twu$ with queries to $u$ by simply pairing the index of the query modulo $\lambda(\Aa)$ with its result.

\section{Hard Languages for Strongly Connected NFAs}
\label{sec:scc}
Before considering the case of arbitrary NFAs, we first study the case of strongly connected NFAs, which are NFAs such that for any pair of states $p,q\in Q$, there exists a word~$w$ such that $p\xrightarrow{w}q$. We will later generalize the results of this section to all NFAs.

We show that the query complexity of the language of such an NFA~$\Aa$ can be characterized by the cardinality of the set of \emph{minimal blocking factors} of~$\Aa$, which are factor-minimal $\lambda(\Aa)$-positional words that witness the fact that a word does not belong to $\timedlang{\Aa}$.
In this section, we consider a fixed NFA~$\Aa$ and simply use ``positional words'' to refer to $\lambda$-positional words, where $\lambda = \lambda(\Aa)$ is the period of~$\Aa$.

\begin{definition}[Blocking factors]\label{def:blocking}
    Let~$\Aa$ be a strongly connected NFA.
    A positional word $\tau$ is a \emph{blocking factor} of~$\Aa$
    if for any other positional word $\mu$
    we have $\tau \factor \mu \Rightarrow \mu \notin \timedlang{\Aa}$.

    Further, we say that $\tau$ is a \emph{minimal} blocking factor of~$\Aa$ 
    if no proper factor of $\tau$ is a blocking factor of~$\Aa$.
    We use $\MBF(\Aa)$ to denote the set of all minimal blocking factors of~$\Aa$.
\end{definition}
Intuitively and in terms of automata, the positional word~$\timedword{i}{u}$ is blocking for~$\Aa$ if it does not label any transition in~$\Aa$ labeled by $u$ starting from a state of $Q_i$. (This property is formally established later in \cref{lemma:intuition-timed-word}.)
The main result of this section is the following:
\begin{theorem}\label{thm:scc}
    Let $L$ be an infinite language recognised by a strongly connected NFA~$\Aa$. If $\MBF(\Aa)$ is infinite, then $L$ is hard, i.e., it has query complexity $\Theta(\epslogeps)$
\end{theorem}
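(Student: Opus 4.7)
The plan is to prove the lower bound $\Omega(\epslogeps)$; the matching upper bound is already sketched earlier in \cref{sec:scc} and I would only need to invoke it. For the lower bound I would apply Yao's minimax principle \cite{yao1977probabilistic}, reducing the task to exhibiting, for each sufficiently small $\eps$, a pair of distributions $\Dd^+$ supported on words of $L$ and $\Dd^-$ supported on words that are $\eps$-far from $L$ (with a common length $n = n(\eps)$), such that no deterministic algorithm making $o(\epslogeps)$ queries can tell a sample of $\Dd^+$ from a sample of $\Dd^-$ with probability $\geq 2/3$. Throughout, I would work with positional words and the positional automaton $\timedA$ to avoid bookkeeping the residue of positions modulo $\lambda(\Aa)$.

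The negative distribution is built by a \emph{multi-scale planting} of minimal blocking factors. Using the hypothesis that $\MBF(\Aa)$ is infinite, I would select $k = \Theta(\log(\eps^{-1}))$ minimal blocking factors $\tau_1,\ldots,\tau_k$ with strictly increasing lengths $\ell_1 < \cdots < \ell_k$, chosen so that $\ell_k = O(\epslogeps)$. Using strong connectivity of $\Aa$ together with \cref{fact:periodicity}, I would then construct a template word $u^+ \in L$ of length $n$ that admits, at regularly spaced ``slots'', a large number of disjoint substitution windows of each length $\ell_i$; each slot comes with a short padding of $O(\rho)$ letters on either side enabling us to replace the slot content by the positional word $\tau_i$ while still staying in an accepting run of $\timedA$ outside the slot. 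The distribution $\Dd^+$ is supported on $u^+$ (or a small family of such template words). The distribution $\Dd^-$ is obtained by independently choosing, for each scale $i$, a uniformly random subset of exactly $\eps n/(k\ell_i)$ scale-$i$ slots and overwriting them by $\tau_i$ (plus padding). Since each planted $\tau_i$ is a blocking factor and the total Hamming modification is $\Theta(\eps n)$, the resulting word is $\eps$-far from $L$.

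The indistinguishability argument rests on the \emph{minimality} of the $\tau_i$. By \cref{def:blocking}, every proper factor of $\tau_i$ fails to be blocking, hence occurs as a factor of some word in $\timedlang{\Aa}$. Consequently, any transcript of query answers that, within a single scale-$i$ slot, covers strictly fewer than $\ell_i$ of its positions, is consistent with a positive word: there is always an ``alibi'' run of $\timedA$ producing the same answers. Hence the algorithm can distinguish $\Dd^-$ from $\Dd^+$ only via a pair (scale $i$, slot $s$) in which it has queried the entire length-$\ell_i$ window of $s$. For a fixed scale $i$, a uniformly random query falls inside some planted scale-$i$ slot with probability $\Theta(\eps/k)$; moreover, for each specific planted slot, the conditional probability that $\ell_i$ queries all land in it is negligible unless the algorithm already ``knows'' the slot. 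A standard adaptive-to-oblivious reduction combined with a coupon-collector-style counting over the $k$ scales then shows that any algorithm making $q = o(k/\eps) = o(\epslogeps)$ queries sees, with probability $> 1/3$, the same transcript distribution under $\Dd^+$ and $\Dd^-$.

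The main obstacle I expect is the simultaneous construction of the template $u^+$ with disjoint slots at every scale, under the Hamming distance rather than the edit distance used in \cite{bathie2021property}. The positional partition $Q_0 \sqcup \cdots \sqcup Q_{\lambda-1}$ makes this delicate: the positional type of each slot must match the positional type of the blocking factor to be planted, and the $O(\rho)$ padding per slot must not inflate the total perturbation above $\eps n$. Checking that the padding budget is $O(\eps n)$ when $k = \Theta(\log(\eps^{-1}))$ and the number of plants at scale $i$ is $\Theta(\eps n/(k\ell_i))$ requires that $\sum_i 1/\ell_i$ be bounded, which is ensured by choosing the $\ell_i$ to grow at least geometrically within $\MBF(\Aa)$. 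A secondary subtlety is that the argument must handle \emph{adaptive} algorithms; I would address this by a standard coupling showing that, as long as no query has yet landed inside a planted slot, the query responses are distributed identically under $\Dd^+$ and $\Dd^-$, so adaptivity gives no advantage before the first ``useful'' hit.
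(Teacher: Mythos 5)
Your high-level architecture matches the paper's: Yao's minimax principle, a multi-scale planting of blocking factors at geometrically growing lengths ($\Theta(\log(1/\eps))$ scales), a template/positive instance, and exploiting perfect completeness via ``the transcript is still consistent with a positive word.'' The gap is in the indistinguishability step.

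You claim that, by minimality, whenever a scale-$i$ slot has at least one unqueried position, the transcript is consistent with a positive word. Minimality of $\tau_i$ (\cref{def:blocking}) only says that every proper \emph{factor} of $\tau_i$ is non-blocking, i.e.\ each such factor labels \emph{some} run of $\timedA$ in isolation. It does not give a single accepting run that is simultaneously compatible with (a) the already-queried template positions between slots, and (b) the partial contents seen across \emph{many} slots at once. The alibi run for slot $j$ ends at some state $q_j$; the alibi run for slot $j+1$ starts at some $p_{j+1}$; the template word between them is fixed and (at least partially) queried, and nothing forces it to label a run from $q_j$ to $p_{j+1}$. Strong connectivity would let you stitch if the in-between letters were free, but they are not. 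In other words, you have established per-slot consistency, not global consistency, and a one-sided error lower bound requires exactly the latter. There is also a smaller issue of the same flavor: ``covers strictly fewer than $\ell_i$ positions'' only yields one unqueried \emph{letter}, and minimality speaks to proper factors, not to single-letter substitutions inside $\tau_i$, so even the per-slot alibi is not immediate when the queries inside a slot are scattered.

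The paper's proof supplies precisely what is missing, and this is the technical core of the argument rather than a routine detail. It first applies the pumping lemma to the (regular) set of minimal blocking factors to extract an infinite family with a uniform ``swap'' structure (\cref{lemma:good-bf}): $\tau_{-,r}=\phi(\nu_-)^r\chi$ is blocking, while replacing any one copy of $\nu_-$ by a same-length word $\nu_+$ yields a $\tau_{+,r,s}$ that \emph{labels a cycle at a fixed state $q_*$}. The template content between active blocks is also chosen to loop at $q_*$. Then, conditioned on the tester having made fewer than $r=2^{\kappa_j}$ queries in a block, there is a whole unqueried copy of $\nu_-$ whose swap produces a word in $\timedlang{\Aa}$ agreeing with every queried position globally --- this is \cref{lemma:must-accept}. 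Without this swap structure (or an equivalent device forcing all alibi runs through a common anchor state), the multi-scale planting of arbitrary members of $\MBF(\Aa)$ does not yield the required consistency, and the lower bound does not follow.
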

This result gives both an upper bound of $\cO(\epslogeps)$ and a lower bound of $\Omega(\epslogeps)$ on the query complexity of a tester for~$L$:
we prove the upper bound in \cref{sec:scc-ub} and the lower bound in \cref{sec:scc-lb}.

\subsection{Positional words, blocking factors and strongly connected NFAs}\label{sec:scc-tw}

We first establish some properties of positional words that will help us ensure that we are creating well-formed positional words, that is, positional words where the index $i$ of a letter $\timedword{i}{a}$ is equal to $j+1\pmod{\lambda}$, where $j$ is the index of the previous letter. In \cref{sec:scc-ub}, we highlight the connection between property testing and blocking factors in strongly connected NFAs.

We start with the following properties, which are consequences of \cref{fact:periodicity}.
\begin{corollary}\label{corollary:position-length}
    Let~$n$ be a nonnegative integer, let~$w$ be a word of length~$n$.
    If for some states $p\in Q_i, q\in Q_j$ of~$\Aa$ we have $p\xrightarrow{w}q$,
    then the indices $i,j$ satisfy the equation
    \begin{equation*}
        j-i = |w| \pmod{\lambda}
    \end{equation*}
\end{corollary}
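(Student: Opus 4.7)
The plan is to obtain this directly from item 1 of \cref{fact:periodicity} (periodicity), translated from the underlying directed graph of $\Aa$ to runs of the NFA. The statement $p \xrightarrow{w} q$ simply asserts the existence of a sequence of transitions $p = r_0, r_1, \ldots, r_n = q$ with $r_{k+1} \in \delta(r_k, w[k])$ for every $k$, which by definition of the underlying graph $G$ of $\Aa$ is a directed path of length $|w| = n$ from $p$ to $q$ in $G$.

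Next I would check that $G$ meets the hypotheses of \cref{fact:periodicity}. By assumption $\Aa$ is strongly connected, so $G$ is strongly connected; and since $L = \lang{\Aa}$ is infinite, $\Aa$ admits runs of arbitrary length on finitely many states, so $G$ contains a cycle and is therefore non-trivial. Hence the period $\lambda = \lambda(\Aa)$ is well-defined, and the periodicity classes $Q_0, \ldots, Q_{\lambda-1}$ from \cref{fact:periodicity} partition $Q$. With $p \in Q_i$ and $q \in Q_j$, item~1 of \cref{fact:periodicity} applied to the path of length $|w|$ produced above yields $|w| \equiv j - i \pmod{\lambda}$, which is the claimed equation.

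I do not foresee any real obstacle: the corollary is essentially a direct translation of the graph-theoretic periodicity fact from directed paths in $G$ to labelled runs in $\Aa$. The only mildly delicate point is reading \cref{fact:periodicity}\eqref{case:path-length} correctly, namely interpreting the expression ``equal to $(j-i) \bmod \lambda$'' as a congruence modulo $\lambda$ (otherwise the statement would fail as soon as two paths of distinct lengths connect the same pair of states, which happens whenever $G$ has a cycle).
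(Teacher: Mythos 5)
Your proof is correct and matches the paper's intent: the paper states this corollary without proof, as an immediate consequence of \cref{fact:periodicity}\eqref{case:path-length}, and your argument (a run labelled by $w$ is a directed path of length $|w|$ in the underlying graph, so item~1 gives $|w|\equiv j-i\pmod\lambda$) is exactly that derivation. Your remark about reading ``equal to $(j-i)\bmod\lambda$'' as a congruence rather than an equality of integers is the right reading and worth spelling out.
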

\begin{corollary}\label{corollary:position-factor}
    Let $\tau = \timedword{i}{u}$ and $\mu = \timedword{j}{v}$ be positional words.
    If $\tau \factor \mu$, then there exists positional words $\eta, \eta'$ with $|\eta| = i-j \pmod{\lambda}$
    such that $\mu = \eta\tau\eta'$.
    In particular, this implies that there exists words $w,w'$ with $|w| = i-j \pmod{\lambda}$
    such that $v = wuw'$.
\end{corollary}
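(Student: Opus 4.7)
The plan is to unfold the definitions of positional word and of factor, and track indices modulo $\lambda$. Recall that, by definition, $\mu = (j:v)$ is the word $(j \pmod{\lambda}, v_0)((j+1) \pmod{\lambda}, v_1) \cdots ((j+|v|-1) \pmod{\lambda}, v_{|v|-1})$ over $\ZZ/\lambda\ZZ \times \Sigma$, so the first component of its $k$-th letter equals $(j+k) \pmod{\lambda}$. This single observation is essentially all that is needed; the rest is bookkeeping.

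From $\tau \factor \mu$ I pick an index $k$ with $0 \le k \le |\mu|-|\tau|$ such that $\tau = \mu[k \dd k+|\tau|-1]$. Setting $\eta = \mu[0 \dd k-1]$ and $\eta' = \mu[k+|\tau| \dd |\mu|-1]$ (with empty ranges read as $\emptyw$), this gives the decomposition $\mu = \eta\,\tau\,\eta'$. Both $\eta$ and $\eta'$ are contiguous infixes of $\mu$ and therefore inherit the property that consecutive letters carry consecutive first components modulo $\lambda$, so they are themselves positional words.

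It remains to pin down $|\eta| \pmod{\lambda}$. On the one hand, the first letter of $\tau$ has first component $i \pmod{\lambda}$ by definition of $\tau = (i:u)$; on the other hand, that same letter occurs at index $k$ inside $\mu$, where the first component is $(j+k) \pmod{\lambda}$. Equating these yields $k = i - j \pmod{\lambda}$, so $|\eta| = k = i - j \pmod{\lambda}$, as claimed. For the ``in particular'' clause, projecting $\eta, \tau, \eta'$ onto their $\Sigma$-components gives a decomposition $v = wuw'$ where $w, w'$ are the respective projections; since this projection preserves length, $|w| = |\eta| = i - j \pmod{\lambda}$. The only point that could in principle go wrong is the claim that an infix of a positional word is again a positional word, but this is immediate from the consecutive-indices definition, so I do not anticipate any real difficulty.
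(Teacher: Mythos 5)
Your proof is correct, and it is essentially the only sensible argument: the congruence on $|\eta|$ comes directly from matching the first component of the first letter of $\tau$ (namely $i \bmod \lambda$) against the first component at position $k$ in $\mu$ (namely $(j+k) \bmod \lambda$), giving $k \equiv i - j \pmod{\lambda}$. The paper does not spell out a proof, so there is nothing to compare line by line.

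One small observation worth making: the paper frames both \cref{corollary:position-length} and \cref{corollary:position-factor} as ``consequences of \cref{fact:periodicity},'' but as your proof shows, \cref{corollary:position-factor} does not actually invoke that fact at all --- it follows purely from the definition of positional words (each letter carries its index mod $\lambda$). It is \cref{corollary:position-length} that genuinely relies on \cref{fact:periodicity}(\ref{case:path-length}), because it speaks about path lengths in the automaton rather than about syntactic positions. Your more elementary route is the right one here; relying on \cref{fact:periodicity} would just be an unnecessary detour.
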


These properties allows us to formalize the intuition we gave earlier about blocking factors.

\begin{lemma}\label{lemma:intuition-timed-word}
    A positional word $\tau = \timedword{i}{u}$ is a blocking factor for $\Aa$
    iff for every states $p\in Q_i, q\in Q$, we have $p\centernot{\xrightarrow{u}}q$.
\end{lemma}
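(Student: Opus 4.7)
The plan is to prove the two directions of the equivalence separately, in each case leveraging Corollaries~\ref{corollary:position-length} and~\ref{corollary:position-factor} to pass between reachability statements in $\Aa$ and factor relations on positional words. Throughout, I use the fact (from the preceding discussion) that $\timedlang{\Aa} = \{\timedword{0}{v} : v \in \lang{\Aa}\}$, together with the convention that $q_0 \in Q_0$.

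For the ``only if'' direction, I would argue by contrapositive: assume that there exist $p \in Q_i$ and $q \in Q$ with $p \xrightarrow{u} q$, and exhibit some $\mu \in \timedlang{\Aa}$ with $\tau \factor \mu$. Since $\Aa$ is strongly connected and trim, and since $L$ is infinite (so $F \neq \emptyset$), I can pick $w_1, w_2 \in \Sigma^*$ with $q_0 \xrightarrow{w_1} p$ and $q \xrightarrow{w_2} f$ for some $f \in F$. Then $w_1 u w_2 \in L$, so $\mu := \timedword{0}{w_1 u w_2} \in \timedlang{\Aa}$. Applying \cref{corollary:position-length} to the path $q_0 \xrightarrow{w_1} p$ yields $|w_1| \equiv i \pmod{\lambda}$, so the occurrence of $u$ at position $|w_1|$ inside $\mu$ is exactly the positional factor $\timedword{|w_1|}{u} = \timedword{i}{u} = \tau$. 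Thus $\tau \factor \mu$ with $\mu \in \timedlang{\Aa}$, contradicting that $\tau$ is blocking.

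For the ``if'' direction, suppose every $p \in Q_i$ and $q \in Q$ satisfy $p \centernot{\xrightarrow{u}} q$, and let $\mu$ be any positional word with $\tau \factor \mu$; I must show $\mu \notin \timedlang{\Aa}$. If, for contradiction, $\mu \in \timedlang{\Aa}$, then $\mu = \timedword{0}{v}$ with $v \in L$, so there is an accepting run $q_0 \xrightarrow{v} f$ in $\Aa$. By \cref{corollary:position-factor} applied to $\tau = \timedword{i}{u} \factor \timedword{0}{v}$, there exist $w, w'$ with $|w| \equiv i \pmod{\lambda}$ and $v = wuw'$. Splitting the accepting run as $q_0 \xrightarrow{w} p' \xrightarrow{u} q' \xrightarrow{w'} f$ and applying \cref{corollary:position-length} to $q_0 \xrightarrow{w} p'$ (using $q_0 \in Q_0$ and $|w| \equiv i \pmod{\lambda}$) forces $p' \in Q_i$. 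But then $p' \xrightarrow{u} q'$ with $p' \in Q_i$, contradicting the hypothesis.

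The only subtlety I anticipate is the indexing bookkeeping: one must carefully track that the starting index $0$ of $\mu$ combined with $|w| \equiv i \pmod{\lambda}$ really does place $p'$ in the class $Q_i$, and symmetrically that the factor of $w_1 u w_2$ starting at position $|w_1| \equiv i$ is indeed labeled $\timedword{i}{u}$ as a positional factor. This is purely an application of the periodicity structure of \cref{fact:periodicity}, so no genuine obstacle remains beyond these routine verifications.
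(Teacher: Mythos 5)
Your proof is correct and follows essentially the same route as the paper's: both directions are argued via the same pair of corollaries (\cref{corollary:position-length} and \cref{corollary:position-factor}), strong connectivity and trimness supply the prefix/suffix paths for the forward direction, and the converse proceeds by decomposing an accepting run at the positions isolated by \cref{corollary:position-factor} and reading off the periodicity class from \cref{corollary:position-length}. The only cosmetic difference is that the paper builds $\mu = \eta\tau\eta'$ directly from positional words and invokes \cref{fact:periodicity} to ensure it is well-formed, whereas you build $\mu = \timedword{0}{w_1 u w_2}$ from ordinary words and then verify $|w_1| \equiv i \pmod{\lambda}$ a posteriori; this is the same argument with slightly different bookkeeping.
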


\begin{proof}
    We first show that if there exists states $p\in Q_i, q\in Q$ such that $p\xrightarrow{u}q$,
    then $\tau$ is not blocking, i.e. there exists $\mu\in \timedlang{\Aa}$ such that $\tau\factor\mu$.
    As~$\Aa$ is strongly connected, there exist positional words $\eta,\eta'$ such that $q_0 \xrightarrow{\eta} p$
    and $q\xrightarrow{\eta'} q_f$ for some $q_f\in F$.
    By \cref{fact:periodicity}, the positional word $\mu = \eta\tau \eta'$ is well formed.
    Furthermore, it labels a transition from $q_0$ to $q_f$, hence it is in $\timedlang{\Aa}$, and $\tau$ is not blocking.
    
    For the converse, assume that $\tau$ is non-blocking: we show that there exists two states $p\in Q_i, q\in Q$ such that $p\xrightarrow{u}q$.
    As $\tau$ is non-blocking, there exists a positional word $\mu = \timedword{0}{w}$ such that $\tau\factor \mu$
    and there exists a final state~$r$ such that $q_0\xrightarrow{\mu}r$, and equivalently, $q_0\xrightarrow{w}r$.
    By \cref{corollary:position-factor}, since $\tau\factor \mu$, there exists words $v,v'$ such that $w = vuv'$ and the length of~$v$ is equal to~$i$ modulo $\lambda$.
    In particular, the path $q_0\xrightarrow{w}r$ can be decomposed into $q_0\xrightarrow{v}p\xrightarrow{u}q\xrightarrow{w}r$, and we have $p\xrightarrow{u}q$.
    It only remains to show that~$p$ is in $Q_i$: this follows by \cref{corollary:position-length} since $|v| = i\pmod{\lambda}$.
\end{proof}

Next, we show that the Hamming distance between~$u$ and $\lang{\Aa}$ is the same as the (Hamming) distance between $\twu$ and $\timedlang{\Aa}$.
\begin{claim}\label{claim:far-equiv}
    For any word $u\in\Sigma^*$, we have $d(u, \lang{\Aa}) = d(\twu, \timedlang{\Aa})$.
\end{claim}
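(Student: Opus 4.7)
The plan is to observe that the positional encoding $v \mapsto \timedword{0}{v}$ is a bijection between $\Sigma^n$ and the set of length-$n$ positional words whose position coordinates begin at $0$, and that this bijection preserves Hamming distance. Combined with the property $v \in \lang{\Aa} \Leftrightarrow \timedword{0}{v} \in \timedlang{\Aa}$, this will give equality of the two infima.

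First I would set $n = |u|$ and reduce to words of length $n$: since the Hamming distance between words of distinct lengths is $+\infty$, both infima are taken over words of length exactly $n$. Next I would argue that every positional word $\mu \in \timedlang{\Aa}$ has its position coordinates starting at $0$: this is because the construction of $\timedA$ uses $q_0$ paired with the counter value $0$ as its initial state, so any accepting run reads positional letters of the form $(i \pmod{\lambda}, a_i)$ starting with $i = 0$. Hence every $\mu \in \timedlang{\Aa}$ of length $n$ is of the form $\timedword{0}{v}$ for a unique $v \in \Sigma^n$, and by the stated property we have $v \in \lang{\Aa}$.

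For the inequality $d(\twu, \timedlang{\Aa}) \leq d(u, \lang{\Aa})$, I take any $v \in \lang{\Aa}$ of length $n$; then $\timedword{0}{v} \in \timedlang{\Aa}$, and $\twu$ and $\timedword{0}{v}$ share identical position coordinates at every index, so they differ exactly at the positions where $u$ and $v$ differ, giving $d(\twu, \timedword{0}{v}) = d(u,v)$. Taking the infimum over $v$ yields the bound. For the reverse inequality $d(u, \lang{\Aa}) \leq d(\twu, \timedlang{\Aa})$, I take any $\mu \in \timedlang{\Aa}$ of length $n$; by the previous paragraph $\mu = \timedword{0}{v}$ for some $v \in \lang{\Aa}$, and the same position-coordinate matching argument gives $d(u,v) = d(\twu, \mu)$. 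Taking the infimum over $\mu$ closes the inequality and establishes equality.

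The only substantive point — really the only thing to be careful about — is the claim that elements of $\timedlang{\Aa}$ necessarily start at position $0$. This is essentially a consequence of how $\timedA$ is defined, and it is what lets us identify $\timedlang{\Aa} \cap (\ZZ/\lambda\ZZ\times \Sigma)^n$ with $\lang{\Aa} \cap \Sigma^n$ via a Hamming-isometry; everything else is a straightforward unwinding of definitions.
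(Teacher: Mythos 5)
Your proposal is correct and takes essentially the same approach as the paper: both rest on the observation that every $\mu \in \timedlang{\Aa}$ of length $n$ has position coordinates $0, 1, \ldots, n-1 \pmod{\lambda}$, identical to those of $\twu$, so the Hamming distance on positional words only counts letter differences; the paper just phrases this more tersely via minimal substitution sequences. Your more explicit unwinding — identifying $\timedlang{\Aa}\cap(\ZZ/\lambda\ZZ\times\Sigma)^n$ with $\lang{\Aa}\cap\Sigma^n$ through a Hamming-isometry — is a fine way to write the same argument.
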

\begin{claimproof}
    The $\le$ part is straightforward.
    For the reverse inequality, if suffices to see that in any minimal substitution sequence from $\twu$
    to a positional word in $\timedlang{\Aa}$, no operation changes only the index of an (index, letter) pair.
\end{claimproof}
The above claim allows us to interchangeably use the statements ``$u$ is $\eps$-far from $\lang{\Aa}$'' and ``$\twu$ is $\eps$-far from $\timedlang{\Aa}$''.

\subsection{An efficient property tester for strongly connected NFAs.}\label{sec:scc-ub}
In this section, we show that for any strongly connected NFA~$\Aa$, there exists an $\eps$-property tester for $\lang{\Aa}$ that uses $\cO(\epslogeps)$ queries.
\begin{theorem}\label{thm:generic-tester-scc}
    Let~$\Aa$ be a strongly connected NFA.
    For any $\eps > 0$, there exists an $\eps$-property tester for $\lang{\Aa}$
    that uses $\cO(\epslogeps)$ queries. 
\end{theorem}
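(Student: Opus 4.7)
Following \cite{alon2001regular} and \cite{bathie2021property}, I plan a sampling tester based on detecting blocking factors. By \cref{claim:far-equiv}, it is equivalent to test whether $\twu \in \timedlang{\Aa}$, and by \cref{lemma:intuition-timed-word} a positional word $\mu$ is blocking if and only if simulating the subset construction of $\timedA$ on $\mu$, started from the appropriate periodicity class, reaches the empty set. The tester will sample $M = \Theta(1/\eps)$ positions uniformly at random in $\twu$ and, at each sampled position, query a contiguous block of $\ell = \Theta(\log(1/\eps))$ letters; it rejects if and only if some queried block contains a sub-factor that drives the subset construction of $\timedA$ to $\emptyset$. Perfect completeness is immediate from the definition, and the total query complexity is $M \cdot \ell = \cO(\epslogeps)$.

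Soundness will reduce to the following \emph{structural lemma}: if $\twu$ is $\eps$-far from $\timedlang{\Aa}$, then $\twu$ contains $\Omega(\eps n)$ pairwise disjoint blocking factors, each of length at most $\ell$. Given the lemma, summing individual containment probabilities over the disjoint blocking factors shows that a single random $\ell$-block catches one with probability $\Omega(\eps)$, so $M = \Theta(1/\eps)$ independent samples succeed with constant probability by a standard union--amplification argument.

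To prove the structural lemma I intend to use a greedy left-to-right scan of $\twu$, maintaining a reachable subset $R \subseteq Q$ initialized to $\{q_0\}$ and updated via the subset construction of $\timedA$. Whenever $R$ collapses to $\emptyset$, I record a blocking factor (the portion read since the last restart) and restart by paying $\cO(1)$ letter substitutions, using strong connectivity together with \cref{fact:periodicity} (and the reachability constant $\rho$) to re-enter a valid state of the appropriate periodicity class. If $\ell$ letters are read since the last restart without a collapse, I force a restart, producing a truncated blocking factor at the same $\cO(1)$ cost. The scan thereby transforms $\twu$ into a word of $\timedlang{\Aa}$ at Hamming distance $\cO(1)$ per restart; since $\twu$ is $\eps$-far, this distance must exceed $\eps n$, yielding $\Omega(\eps n)$ restarts and hence that many disjoint blocking factors of length at most $\ell$.

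The technical heart of the plan -- and the main obstacle -- is the joint control of the length bound $\ell = \Theta(\log(1/\eps))$ and the density $\Omega(\eps)$ in the structural lemma. A naive greedy scan produces blocking factors of unbounded length, while a naive truncation yields density only $\Omega(\eps/\ell)$, losing a logarithmic factor. Calibrating the forced truncations using the periodic structure of $\Aa$ from \cref{fact:periodicity}, and showing that the amortized substitution cost per restart remains $\cO(1)$ even across mismatched periodicity classes, is the delicate step; it is the point where the refinement of Alon et al.~\cite{alon2001regular} due to Bathie and Starikovskaya~\cite{bathie2021property} for the edit distance must be transferred to the Hamming setting.
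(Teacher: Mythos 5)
There is a genuine gap, and it is located exactly where you sense it: your \emph{structural lemma} is false for hard strongly connected automata, and the fixed-window sampler built on it cannot achieve $\cO(\epslogeps)$ queries.

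Your greedy scan correctly gives $\cO(1)$ substitutions per restart (this is the reachability constant $\rho \le 3m^2$ from \cref{fact:periodicity}), and it correctly shows that an $\eps$-far $\twu$ contains $\Omega(\eps n)$ disjoint blocking factors; that part matches the paper's \cref{lemma:many-blocking}. But forcing a restart every $\ell$ letters accrues roughly $\rho \cdot n/\ell$ ``wasted'' substitutions from the truncations alone. For the amortization to yield $\Omega(\eps n)$ genuine collapses, you need $n/\ell \lesssim \eps n$, i.e.\ $\ell = \Omega(1/\eps)$ — and indeed the paper's \cref{lemma:many-short-blocking} only guarantees blocking factors of length $\cO(1/\eps)$, not $\cO(\log(1/\eps))$. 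The stronger length bound you want cannot hold: for any hard strongly connected language, the $\eps$-far words used in the paper's matching lower bound (\cref{thm:scc-lb}) are built so that the disjoint blocking factors they contain are spread over roughly $\log(1/\eps)$ geometric length scales from $\Theta(1)$ up to $\Theta(1/\eps)$, with only an $O(1/\log(1/\eps))$ fraction of the ``blocking density'' at any single scale. Restricting to factors of length $\cO(\log(1/\eps))$ captures only a vanishing fraction of the $\Omega(\eps n)$ factors, so your one-scale tester with $\Theta(1/\eps)$ windows of fixed length $\Theta(\log(1/\eps))$ accepts those far words with high probability. This is not a calibration issue; the lemma is unfixable.

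What the paper does instead is multi-scale sampling, which is the missing idea. The subroutine \textsc{Sampler} in \cref{alg:generic-sampling} queries, for each scale $t = 0,\ldots, \lceil\log L\rceil$ with $L = \Theta(1/\eps)$, about $r_t \propto 1/2^t$ windows of length $2^t$: short windows are sampled many times, long windows few. Because the per-scale query budget $r_t \cdot 2^t$ is roughly constant, the total is $\cO(\log(L)/\eps) = \cO(\epslogeps)$, yet (via \cref{lemma:generic-sampling-alg}) the scheme catches a factor from \emph{any} family of $N = \Omega(\eps n)$ disjoint factors regardless of how their lengths are distributed in the range $[1, L]$. The geometric scaling exactly compensates for the fact that long blocking factors are rarer but harder to land a window inside of. Your plan keeps the right structural lemma (blocking-factor count $\Omega(\eps n)$, lengths $\cO(1/\eps)$) and the right completeness argument, but you need to replace the fixed-length sampler with this multi-scale one to close the soundness bound.
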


Our proof is similar to the one given in~\cite{bathie2021property}, with one notable technical improvement: we use a new method for sampling factors in~$u$, which greatly simplifies the correctness analysis.

\subsubsection{An efficient sampling algorithm}

We first introduce a sampling algorithm (\cref{alg:generic-sampling}) that uses few queries and has a large probability of finding at least one factor from a large set $\Ss$ of disjoint ``special'' factors.
Using this algorithm on a large set of disjoint blocking factors gives us an efficient property tester for strongly connected NFAs. We will re-use this sampling procedure later in the case of general NFAs (\cref{thm:gen-ub}).  

The procedure is fairly simple: the algorithm samples factors of various lengths in~$u$ at random.
On the other hand, the correctness of the tester is far from trivial.
The lengths and the number of factors of each length are chosen so that the number of queries is minimized and the probability of finding a ``special'' factor is maximized, regardless of their repartition in~$u$. (In what follows, the ``special'' factors are blocking factors.)

\begin{algorithm}[htbp]
\caption{Efficient generic sampling algorithm}\label{alg:generic-sampling}
\begin{algorithmic}[1] 
\Function{OneSample}{$u, \ell$}
    \State $i\gets \Call{uniform}{0, n-1}$
    \State $l\gets \max(i-\ell, 0), r\gets\min(i+\ell, n-1)$
    \State \Return $u[l\dd  r]$
\EndFunction
\Function{Sampler}{$u, N, L$}
\State $n \gets |u|$
\State $\beta \gets n/N$
\State $T \gets \lceil\log(L)\rceil$
\State $F \gets \emptyset$
\For{$t = 0$ to $T$}
    \State $\ell_t \gets 2^t, r_t \gets \lceil 2\ln(3)\beta/\ell_t \rceil$
    \For{$i = 0$ to $r_t$}
        \State $\Ff \gets \Ff \cup \{\Call{OneSample}{u, \ell_t}\}$
    \EndFor
\EndFor
\State \Return $\Ff$
\EndFunction
\end{algorithmic}
\end{algorithm}

\begin{claim}\label{claim:generic-sampling-alg-complexity}
    A call to $\Call{Sampler}{u, N, L}$ (\cref{alg:generic-sampling}) makes $\cO(n\log(L)/N)$ queries to~$u$.
\end{claim}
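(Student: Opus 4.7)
The plan is a direct query-counting argument that traces the two nested loops of \textsc{Sampler}.

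First, I will observe that a single call to $\textsc{OneSample}(u,\ell)$ returns a factor $u[l\dd r]$ of length at most $2\ell+1$, and hence reads at most $2\ell+1$ positions of $u$. Consequently, iteration $t$ of the outer loop of \textsc{Sampler} performs $r_t+1$ such calls with $\ell=\ell_t=2^t$, contributing at most $(r_t+1)(2\ell_t+1)$ queries in total, where $r_t = \lceil 2\ln(3)\beta/\ell_t\rceil$ and $\beta = n/N$.

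Next, I will bound the per-level cost. Using $r_t \le 2\ln(3)\beta/\ell_t + 1$ and expanding the product gives
\[
(r_t+1)(2\ell_t+1) \;\le\; \bigl(2\ln(3)\beta/\ell_t + 2\bigr)(2\ell_t+1) \;=\; 4\ln(3)\beta \;+\; \cO(\beta/\ell_t) \;+\; \cO(\ell_t),
\]
so the number of queries at level $t$ is $\cO(\beta + \ell_t)$. The dominant term $4\ln(3)\beta$ comes from the product of the two leading factors $2\ln(3)\beta/\ell_t$ and $2\ell_t$, while the rest is routine bookkeeping.

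Finally, I will sum over $t=0,\ldots,T$ with $T=\lceil \log L\rceil$. The $\beta$-contributions sum to $(T+1)\cdot\cO(\beta) = \cO(\beta\log L)$. The $\ell_t$-contributions form a geometric series, $\sum_{t=0}^{T} 2^t = \cO(2^T) = \cO(L)$, which is absorbed into $\cO(\beta\log L)$ in the regime in which \textsc{Sampler} is invoked (the parameters $N,L$ used in subsequent applications satisfy $L \le \beta\log L$ once $n$ is at least the standard lower bound on the input length for property testers). Substituting $\beta = n/N$ then yields the claimed bound $\cO(n\log(L)/N)$.

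The proof is purely routine; the only mild subtlety is ensuring that the tail term $\cO(L)$ coming from the geometric sum over $\ell_t$ does not dominate, which is immediate under the intended parameter regime of the caller.
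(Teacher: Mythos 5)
Your proof is correct and follows the same direct loop-counting strategy as the paper's. The one substantive difference is that you track the low-order terms -- the ceiling in $r_t = \lceil 2\ln(3)\beta/\ell_t\rceil$, the extra iteration from the loop bound ``$0$ to $r_t$'', and the $+1$ in the $2\ell_t+1$ factor length -- and thereby surface a leftover $\cO(L)$ from the geometric sum $\sum_t \ell_t$, which you then argue is absorbed because $L \le \beta\log L$ in the regime where \textsc{Sampler} is called. The paper's proof is terser: it silently treats $r_t$ as exactly $2\ln(3)\beta/\ell_t$ and each sample as costing $2\ell_t$, so $r_t\cdot\ell_t$ is constant across levels and the $\cO(L)$ tail never appears. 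Your version is the more rigorous reading of the pseudocode as written; the cost is that the Claim, taken as a free-standing statement about \textsc{Sampler} for arbitrary $N,L$, is not quite correct without the caveat about the caller's parameters (for $N$ close to $n$, the $\cO(L)$ term can dominate $\cO(n\log L/N)$). In the only call site, $\beta = n/N = L$, so your caveat is indeed satisfied. Both proofs are fine for the paper's purposes; yours just makes explicit an implicit parameter assumption that the paper's proof elides by ignoring the rounding.
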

\begin{claimproof}
    A call to $\Call{OneSample}{u, \ell_t}$ makes at most $2\ell_t$ queries to $u$.
    Furthermore, the function $\Call{Sampler}{u, N, L}$ makes $r_t = 2\ln(3)\cdot \beta/\ell_t = 2\ln(3)\cdot n/(N\ell_t)$ calls to $\Call{OneSample}{u, \ell_t}$ for each $t = 0,\ldots, T$, where $T = \lceil\log(L)\rceil$.
    This adds up to \[\sum_{t=0}^T  r_t \cdot\ell_t = \lceil\log(L)\rceil \cdot2\ln(3)\cdot n/N = \cO(n\log(L)/N)\] queries to $u$.
\end{claimproof}

\begin{lemma}\label{lemma:generic-sampling-alg}
    Let $u$ be a word of length~$n$, and consider a set~$\Ss$ containing at least~$N$ disjoint factors of~$u$, each of length at most~$L$.
    A call to the function $\Call{Sampler}{u, N, L}$ (\cref{alg:generic-sampling}) returns a set $\Ff$ of factors of $u$ such that there exists a word of $\Ss$ that is a factor of some word of $\Ff$, with probability at least $2/3$.
\end{lemma}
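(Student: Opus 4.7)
The plan is to analyse the sampler level by level. For $t \in \{0, \ldots, T\}$ with $T = \lceil \log L \rceil$, partition $\Ss$ into subsets $\Ss_t$ according to factor length: $\Ss_t = \{w \in \Ss : |w| \in (\ell_{t-1}, \ell_t]\}$ (with $\ell_{-1} = 0$). Because every $w \in \Ss$ satisfies $1 \le |w| \le L \le \ell_T$, these sets partition $\Ss$, so $\sum_{t=0}^{T} |\Ss_t| \ge N$. For each $t$ and each $w \in \Ss_t$ starting at position $p_w$, define the \emph{good set}
\[
G_w^{(t)} \;=\; \bigl[\max(p_w + |w| - 1 - \ell_t,\, 0),\; \min(p_w + \ell_t,\, n-1)\bigr],
\]
i.e.\ the set of sample centres $i$ for which $\Call{OneSample}{u, \ell_t}$ returns a factor of $u$ containing $w$. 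A direct unpacking of the definition of $\Call{OneSample}$ shows this is the exact characterisation, and since $|w| \le \ell_t$ we have $|G_w^{(t)}| \ge \ell_t + 2$ for factors lying away from the boundary.

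The key combinatorial step is to show that $\bigl|\bigcup_{w \in \Ss_t} G_w^{(t)}\bigr| \ge \frac{1}{2} |\Ss_t| \cdot \ell_t$. Order the elements of $\Ss_t$ by their starting position $p_{w_1} < p_{w_2} < \cdots$. Disjointness of the factors gives $p_{w_{j+1}} - p_{w_j} \ge |w_j| > \ell_t/2$; combining this with $|w_{j+1}| > \ell_t/2$, a computation of the overlap $|G_{w_j}^{(t)} \cap G_{w_{j+1}}^{(t)}| = \ell_t + 2 - (p_{w_{j+1}} - p_{w_j}) - |w_{j+1}|$ yields that consecutive good sets overlap by at most $2$ positions and non-consecutive good sets are disjoint. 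Summing lengths and subtracting overlaps produces the claimed bound (up to $O(\ell_t)$ boundary loss, absorbed into the $1/2$ constant).

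It follows that a single sample at level $t$ catches some $w \in \Ss_t$ with probability $p_t \ge |\Ss_t| \cdot \ell_t / (2n) = |\Ss_t|/(2 N \beta / \ell_t)$. The samples being independent, after $r_t = \lceil 2 \ln(3) \beta / \ell_t \rceil$ trials the probability that no sample at level $t$ catches any element of $\Ss_t$ is at most $(1 - p_t)^{r_t} \le \exp(-p_t r_t) \le \exp(-\ln(3) \cdot |\Ss_t| / N)$. Since the samples across all levels are mutually independent, the probability that the returned set $\Ff$ misses every factor of $\Ss$ is bounded by
\[
\prod_{t=0}^{T} \exp\!\Bigl(-\ln(3) \cdot \tfrac{|\Ss_t|}{N}\Bigr) \;=\; \exp\!\Bigl(-\ln(3) \cdot \tfrac{\sum_t |\Ss_t|}{N}\Bigr) \;\le\; \exp(-\ln 3) \;=\; \tfrac{1}{3},
\]
which yields success probability at least $2/3$.

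The main obstacle is the combinatorial bound on $|\bigcup_w G_w^{(t)}|$: restricting to $\Ss_t$ rather than to all factors of length $\le \ell_t$ is essential, because the lower bound $|w| > \ell_t/2$ is what forces consecutive good sets to overlap in only $O(1)$ positions. Handling factors that sit near the endpoints of $u$ shrinks their good sets, but this affects at most two factors per level and is easily absorbed into the constants; the specific constant $2 \ln 3$ in the algorithm is calibrated so that the product of failure probabilities collapses exactly to $1/3$.
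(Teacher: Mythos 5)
Your overall strategy mirrors the paper's: split the analysis by sampling level, lower-bound the per-level hit probability, and use independence across levels to multiply the failure probabilities down to $1/3$. The genuine difference is in the bookkeeping: you partition $\Ss$ into $\Ss_t$ by length range $(\ell_{t-1}, \ell_t]$, so each factor is charged to exactly one level, whereas the paper works with the cumulative sets $S_t = \set{v \in \Ss : |v| \le \ell_t}$ and then needs a geometric-series argument to recover from the double-counting. Your accounting avoids that series, which is a legitimate simplification.

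However, your ``key combinatorial step'' contains an arithmetic error. Away from the boundary, the good set $G_w^{(t)}$ is the interval $[p_w + |w| - 1 - \ell_t,\, p_w + \ell_t]$, of size $2\ell_t - |w| + 2$, so the overlap of consecutive good sets is $2\ell_t + 2 - (p_{w_{j+1}} - p_{w_j}) - |w_{j+1}|$, not $\ell_t + 2 - (p_{w_{j+1}} - p_{w_j}) - |w_{j+1}|$ as you wrote. With only $p_{w_{j+1}} - p_{w_j} > \ell_t/2$ and $|w_{j+1}| > \ell_t/2$ at your disposal, this bounds the overlap by roughly $\ell_t$, not by $O(1)$; the claim that ``consecutive good sets overlap by at most $2$ positions'' is therefore false, and the interval-union bound as you derive it does not follow.

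The bound $\bigl|\bigcup_{w\in\Ss_t} G_w^{(t)}\bigr| \ge \tfrac12\,|\Ss_t|\,\ell_t$ is nonetheless correct, and can be obtained directly via the observation the paper itself uses: every position strictly inside $w$, i.e.\ in $\set{p_w, \ldots, p_w + |w| - 1}$, already lies in $G_w^{(t)}$ once $|w| \le \ell_t$, and these $|w|$-element sets are pairwise disjoint across the disjoint factors of $\Ss_t$. Hence $\bigl|\bigcup_{w\in\Ss_t} G_w^{(t)}\bigr| \ge \sum_{w\in\Ss_t}|w| > |\Ss_t|\,\ell_t/2$, using $|w| > \ell_{t-1} = \ell_t/2$. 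This replaces your interval-overlap analysis entirely, and it also disposes of the boundary issues you flag, since the inside-of-$w$ positions always lie in $[0, n-1]$. With that step repaired, your proof is correct and constitutes a valid alternative to the paper's.
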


\begin{proof}
    We conceptually divide the blocking factors in $\Ss$ into different categories depending on their length:
    let $T = \lceil\log(L)\rceil$, and for $t= 0,\ldots, T,$ let $S_t$ denote the subset of $\Ss$ which contains factors of length at most $\ell_t = 2^t$.
    We then carefully analyze the probability that randomly sampled factors of length $2\ell_t$ contains a factor
    from $S_{t}$, and show that over all~$t$, at least one sampled factor contains a factor of $\Ss$, with probability at least $2/3$.
    
    \begin{claim}
        If in a call to \textsc{OneSample}, the value~$i$ is such that there exists indices~$l$ and~$r$ such that $l \le i\le r$
        and $u[l, r]$ contains a factor in $\Ss$, then the set $\Ff$ returned by the algorithm has the desired property.
    \end{claim}
    
    As the factors given in $\Ss$ are disjoint,
    the probability $p_t$ that the factor returned by \textsc{OneSample} contains a factor from $\Ss$ is lower bounded by
    \(p_t \ge \frac{1}{n}\sum_{v\in S_t} |v|.\)
    The \textsc{OneSample} function is called $r_t = 2\ln(3)\beta/\ell_t$ times independently for each~$t$,
    hence the probability~$p$ that the algorithm samples a factor containing a factor from $\Ss$ satisfies the following:
    \begin{align*}
        (1-p) 
        &= \prod_{t=0}^T (1-p_t)^{r_t} \le \exp\left(-\sum_{t=0}^T p_t r_t\right)\\
        &\le \exp\left(-\frac{2\ln(3)\beta}{n} \sum_{t=0}^T \frac{1}{\ell_t} \sum_{v\in S_t} |v|\right)\\
        &= \exp\left(-\frac{2\ln(3)\beta}{n} \sum_{v\in \Ss}|v| \sum_{t=\lceil\log|v|\rceil}^T 2^{-t}\right).
    \end{align*}
    Now, inverting the order of summation, and lower bounding the sum of powers of $2$ by its first term,
    we obtain:
    \begin{align*}
        (1-p) &\le \exp\left(-\frac{2\ln(3)\beta}{n} \sum_{v\in \Ss}|v| \cdot 2^{-\lceil\log|v|\rceil}\right)\\
        &\le \exp\left(-\frac{2\ln(3)\beta}{n} \sum_{v\in \Ss}|v| \frac{1}{2|v|}\right)\\
        &= \exp\left(-\frac{2\ln(3)\beta}{n} \cdot\frac{|\Ss|}{2}\right) 
         \le \exp\left(-\frac{\ln(3)\beta N}{n}\right)\\
        &= \exp\left(-\ln(3) \right) = 1/3
    \end{align*}
    It follows that $p \ge 2/3$, which concludes the proof.
\end{proof}

\subsubsection{The tester}

The algorithm for \cref{thm:generic-tester-scc} is given in \cref{alg:generic-tester-scc}.

\begin{algorithm}[htbp]
\caption{Generic $\eps$-property tester that uses $\cO(\epslogeps)$ queries}\label{alg:generic-tester-scc}
\begin{algorithmic}[1] 
\Function{Tester}{$u, \eps$}
\State $n \gets |u|, m\gets |Q|$
\State $L \gets 12m^2/\eps$
\If{$\lang{\Aa} \cap \Sigma^n = \emptyset$}
    \State Reject
\ElsIf{$n < L$}
    \State Query all of $u$ and run~$\Aa$ on it
    \State Accept if and only if~$\Aa$ accepts
\Else \label{line:intersting-case}
    \State \label{line:compute-f}$\Ff \gets\Call{Sampler}{\twu, n/L, L}$
    \State Reject if and only if $\Ff$ contains a blocking factor for~$\Aa$.
\EndIf
\EndFunction
\end{algorithmic}
\end{algorithm}

We now show that \cref{alg:generic-tester-scc} is a property tester for $\lang{\Aa}$ that uses $\cO(\epslogeps)$ queries. In what follows, we use $n$ to denote the length of the input word $u$ and $m$ to denote the number of states of $\Aa$.
\begin{claim}
    The tester given in \cref{alg:generic-tester-scc} makes $\cO(\epslogeps)$ queries to~$u$.
\end{claim}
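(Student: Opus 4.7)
The plan is to do a simple case analysis on the three branches of \cref{alg:generic-tester-scc} and bound the query cost in each. Since $m = |Q|$ is treated as a constant, we have $L = 12m^2/\eps = \cO(1/\eps)$ and in particular $\log L = \cO(\log(1/\eps))$.

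First I would handle the trivial cases. If $\lang{\Aa}\cap \Sigma^n = \emptyset$, the tester rejects immediately, making $0$ queries. If $n < L$, the tester reads the entire input, making at most $n \le L = \cO(1/\eps)$ queries, which is $\cO(\epslogeps)$.

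The main branch to analyze is \cref{line:intersting-case}, where the tester calls $\Call{Sampler}{\twu, n/L, L}$. By \cref{claim:generic-sampling-alg-complexity}, the number of queries made to $\twu$ by this call is
\[
\cO\!\left(\frac{n \log L}{n/L}\right) = \cO(L \log L).
\]
Each query to $\twu$ can be simulated by a single query to $u$ (the positional index is just the position modulo $\lambda(\Aa)$, which is known), so the number of queries to $u$ is of the same order. Substituting $L = 12m^2/\eps$ and noting that $m$ is a constant, we get $\cO(L \log L) = \cO((1/\eps)\log(1/\eps)) = \cO(\epslogeps)$.

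The remaining step is the check in the last line of the algorithm, which tests whether the returned set $\Ff$ contains a blocking factor: this is a purely internal computation and makes no further queries to $u$. Taking the maximum over all three branches gives the claimed $\cO(\epslogeps)$ bound, with no real obstacle beyond carefully bookkeeping the parameters. The interesting content of the theorem lies entirely in the correctness argument, which is handled separately using \cref{lemma:generic-sampling-alg}.
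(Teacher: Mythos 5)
Your proof is correct and follows essentially the same route as the paper: split into the branches of the algorithm, bound the small cases by $\cO(1/\eps)$, and invoke \cref{claim:generic-sampling-alg-complexity} to get $\cO(L\log L) = \cO(\epslogeps)$ for the call to \textsc{Sampler}. The extra remark about simulating queries to $\twu$ by queries to $u$ is a reasonable bit of bookkeeping that the paper handles once in the preliminaries rather than here.
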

\begin{proof}
    If $n \le L$, then the tester makes $n \le L =\cO(1/\eps)$ queries, and the claim holds.
    Otherwise, the number of queries is given by the call to $\Call{Sampler}{u, n/L, L}$:
    by \cref{claim:generic-sampling-alg-complexity}, this uses $\cO(\frac{n \log L}{n/L}) = \cO(L \log L) = \cO(\epslogeps)$ queries.
\end{proof}

Alon et al.~\cite[Lemma 2.6]{alon2001regular} first noticed that if a word~$u$ is $\eps$-far from $\lang{\Aa}$, then it contains $\Omega(\eps n)$ short factors that witness the fact that~$u$ is not in $\lang{\Aa}$.
We start by translating the lemma of Alon et al. on ``short witnesses'' to the framework of blocking factors.
More precisely, we show that if~$u$ is $\eps$-far from $\lang{\Aa}$, then $\twu$ contains many disjoint (i.e. non-overlapping) blocking factors.

\begin{lemma}\label{lemma:many-blocking}
    Let $\eps> 0$, let~$u$ be a word of length $n \ge 6m^2/\eps$ and assume that $\lang{\Aa}$ contains at least one word of length~$n$.
    If $\tau = \twu$ is $\eps$-far from $\timedlang{\Aa}$, then $\tau$ contains at least $\eps n/(6m^2)$ disjoint blocking factors.
\end{lemma}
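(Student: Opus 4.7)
The argument is by contraposition: I assume that $\tau := \twu$ contains fewer than $k := \lceil \eps n / (6m^2)\rceil$ pairwise disjoint blocking factors and construct a positional word $\mu \in \timedlang{\Aa}$ with $d(\tau, \mu) < \eps n$, contradicting the $\eps$-farness hypothesis via \cref{claim:far-equiv}. Throughout, $\rho := 3m^2$ is an upper bound on the reachability constant of $\Aa$ from \cref{fact:periodicity}.

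I build $\mu$ together with an accepting run of $\timedA$ on it by a left-to-right simulation starting at $q_0$, maintaining at position $i$ the invariant that $\tau[i \dd i+\rho-1]$ admits a run in $\timedA$ from the current state $p \in Q_{i \bmod \lambda}$. At each step, if some transition from $p$ on letter $\tau[i]$ leads into a state from which $\tau[i+1 \dd i+\rho]$ still admits a run, I take such a transition and set $\mu[i] := \tau[i]$. Otherwise, by \cref{lemma:intuition-timed-word} the factor $\tau[i+1 \dd i+\rho]$ is itself a blocking factor, which I add to a disjoint collection $\mathcal{B}$ and then enter \emph{repair mode}: I overwrite the next $2\rho$ positions of $\mu$ by a word that routes the run from $p$ into a state $q \in Q_{(i+2\rho)\bmod\lambda}$ chosen so that the invariant holds again from $q$. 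Existence of the routing word is \cref{fact:periodicity}(\ref{case:path-length}) applied at length $2\rho \geq \rho$; existence of a suitable $q$ follows from \cref{lemma:intuition-timed-word} at position $i+2\rho$ (or else a further blocking factor is added to $\mathcal{B}$ and the repair continues). A terminal patch of length at most $\rho$ steers the run into a state of $F$, which is possible thanks to $\lang{\Aa}\cap\Sigma^n \neq \emptyset$ together with \cref{fact:periodicity}(\ref{case:path-length}).

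Each element of $\mathcal{B}$ triggers at most $2\rho = 6m^2$ overwritten positions, and the elements of $\mathcal{B}$ are pairwise disjoint by construction, since each one sits in a dedicated window that is erased before the simulation resumes. With $|\mathcal{B}| < k$ and $n \geq 6m^2/\eps$ (which absorbs the $\rho$-letter terminal patch), this yields $d(\tau,\mu) < \eps n$, the desired contradiction.

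\textbf{Main obstacle.} The delicate step is maintaining the look-ahead invariant "$\tau[i \dd i+\rho-1]$ is readable from the current state $p$" under one-letter commitments: I must argue that any local failure is \emph{always} witnessed by a genuine short blocking factor, and that the $2\rho$-length repair window gives enough freedom via \cref{fact:periodicity}(\ref{case:path-length}) to land in a state from which the invariant is restored. The budget $6m^2 = 2\rho$ per blocking factor is exactly the cost of one short blocking factor of length at most $\rho$ plus a $\rho$-letter routing buffer, and this is what fixes the constant in the lemma's bound $\eps n/(6m^2)$.
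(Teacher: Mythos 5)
Your overall strategy---argue by contraposition, scan $\tau$ from left to right, carve out a collection $\mathcal{B}$ of disjoint blocking factors, and repair the word around them using the $\rho$-reachability of \cref{fact:periodicity} to obtain $\mu\in\timedlang{\Aa}$ with $d(\tau,\mu)<\eps n$---is the same as the paper's. However, there is a genuine gap at precisely the step you flag as the ``main obstacle,'' and it is not resolved by the argument you give. When your simulation fails at position $i$, what you actually know is that $\tau[i+1\dd i+\rho]$ is not readable from any \emph{successor of the one particular state $p$ you are currently at}. But \cref{lemma:intuition-timed-word} says $\tau[i+1\dd i+\rho]$ is a blocking factor only if it is unreadable from \emph{every} state of $Q_{(i+1)\bmod\lambda}$. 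Nothing prevents a state $p''\in Q_{(i+1)\bmod\lambda}$, unreachable from $p$ on $\tau[i]$, from reading $\tau[i+1\dd i+\rho]$---in which case no blocking factor exists, and you have spent a $2\rho$-letter repair that cannot be charged to an element of $\mathcal{B}$. Consequently the bound $d(\tau,\mu)\le 2\rho\,|\mathcal{B}|$ (plus boundary terms) does not follow, and the contraposition collapses. This is not a cosmetic issue: the number of such ``false alarms'' is not a priori bounded, and tracking a single run precisely deprives you of the freedom to pick, at each stage, whichever state can read the next stretch of $\tau$.

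The paper avoids this by never following a single run. It scans directly for blocking \emph{factors}: starting at $i_t$ it takes the smallest $j_t$ for which $\tau[i_t\dd j_t]$ is unreadable from \emph{all} of $Q_{i_t\bmod\lambda}$ (this is exactly the predicate in \cref{lemma:intuition-timed-word}, so it really is a blocking factor), notes that by minimality $\tau[i_t\dd j_t-1]$ is readable from \emph{some} state $p_t$, which it is free to choose afresh for each $t$, and then spends $\rho$ letters of padding to bridge $q_t\to p_{t+1}$. To salvage your argument you would have to maintain, instead of a single current state, the full set of states in $Q_{i\bmod\lambda}$ from which the look-ahead window is readable, and trigger a repair only when that set becomes empty; at that point you are essentially running the paper's scan. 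As written, the proposal is incorrect.
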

\begin{proof}
    We build a set $\Pp$ of disjoint blocking factors of $\tau$ as follows: we process~$u$ from left to right, starting at index $i_1 = \rho$, where~$\rho$ is the reachability constant of $\Aa$ (see \cref{fact:periodicity}).
    Next, at iteration~$t$, set~$j_t$ to be the smallest integer greater than or equal to~$i_t$ and smaller than $n-\rho$ such that $\tau[i_t\dd  j_t]$ is a blocking factor. If there is no such integer, we stop the process.
    Otherwise, we add $\tau[i_t\dd  j_t+\rho-1]$ to the set~$\Pp$, and iterate starting from the index $i_{t+1} = j_t+\rho$.

    Let~$k$ denote the size of~$\Pp$. 
    We will show that we can substitute at most $3(k+1)m^2$ positions in $\tau$ to obtain a word in $\timedlang{\Aa}$. (See \cref{fig:many-blocking} for an illustration of this construction.)
    Using the assumption that $\tau$ is $\eps$-far from $\timedlang{\Aa}$ (which follows from \cref{claim:far-equiv}) will give us the desired bound on~$k$.

    \begin{figure}[htbp]
    \begin{center}
        \begin{tikzpicture}[scale=0.9]
            \node (label) at (-.5, .25) {\textbf{a)}};
            \draw (0, 0) rectangle ++(14, .5);
            \draw[pattern=crosshatch,pattern color=red!40] (0, 0) rectangle ++(1, .5);
            \draw[pattern=north west lines,pattern color=gray!40] (1, 0) rectangle ++(2.8, .5) node[midway] {$\tau[i_1\dd  j_1]$};
            \draw[pattern=crosshatch,pattern color=red!40] (3.8, 0) rectangle ++(1, .5);
            \draw[pattern=north west lines,pattern color=gray!40] (4.8, 0) rectangle ++(2.8, .5) node[midway] {$\tau[i_2\dd  j_2]$};
            \draw[pattern=crosshatch,pattern color=red!40] (7.6, 0) rectangle ++(1, .5);
            \node (dots) at (9.3, .25) {$\ldots$};
            \draw[pattern=north west lines,pattern color=gray!40] (10, 0) rectangle ++(3, .5) node[midway] {$\tau[i_k\dd  j_k]$};
            \draw[pattern=crosshatch,pattern color=red!40] (13, 0) rectangle ++(1, .5);
            \node[white] (p1) at (1, -.3) {$p_1$};
            \node[white] (qf) at (14, -.3) {$q_f$};
        \end{tikzpicture}
        \begin{tikzpicture}[scale=0.9]
            \node (label) at (-.5, .25) {\textbf{b)}};
            \draw (0, 0) rectangle ++(14, .5);
            \draw[pattern=north west lines,pattern color=gray!40] (1, 0) rectangle ++(2.5, .5) node[midway] {$\tau[i_1\dd  j_1-1]$};
            \node (p1) at (1, -.3) {$p_1$};
            \node (q1) at (3.5, -.3) {$q_1$};
            \draw[->] (p1) -> (q1);
            \draw[pattern=north west lines,pattern color=gray!40] (4.8, 0) rectangle ++(2.5, .5) node[midway] {$\tau[i_2\dd  j_2-1]$};
            \node (p2) at (4.8, -.3) {$p_2$};
            \node (q2) at (7.3, -.3) {$q_2$};
            \draw[->] (p2) -> (q2);

            \node (p3) at (8.8, -.3) {$p_3\ldots$};
            \node (dots) at (9.3, .25) {$\ldots$};
            \draw[pattern=north west lines,pattern color=gray!40] (10, 0) rectangle ++(2.6, .5) node[midway] {$\tau[i_k\dd  j_k-1]$};
            \node (pk) at (10, -.3) {$p_k$};
            \node (qk) at (12.6, -.3) {$q_k$};
            \draw[->] (pk) -> (qk);
            \node[white] (qf) at (14, -.3) {$q_f$};
        \end{tikzpicture}
        \begin{tikzpicture}[scale=0.9]
            \node (label) at (-.5, .25) {\textbf{c)}};
            \draw (0, 0) rectangle ++(14, .5);
            \draw[pattern=north west lines,pattern color=gray!40] (1, 0) rectangle ++(2.5, .5) node[midway] {$\tau[i_1\dd  j_1-1]$};
            \node (p1) at (1, -.3) {$p_1$};
            \node (q1) at (3.5, -.3) {$q_1$};
            \draw[pattern=north west lines,pattern color=gray!40] (4.8, 0) rectangle ++(2.5, .5) node[midway] {$\tau[i_2\dd  j_2-1]$};
            \node (p2) at (4.8, -.3) {$p_2$};
            \node (q2) at (7.3, -.3) {$q_2$};

            \node (p3) at (8.8, -.3) {$p_3\ldots$};
            \node (dots) at (9.3, .25) {$\ldots$};
            \draw[pattern=north west lines,pattern color=gray!40] (10, 0) rectangle ++(2.6, .5) node[midway] {$\tau[i_k\dd  j_k-1]$};
            \node (pk) at (10, -.3) {$p_k$};
            \node (qk) at (12.6, -.3) {$q_k$};

            \node (q0) at (0, -.3) {$q_0$};
            \node (qf) at (14, -.3) {$q_f$};
            \draw[->] (q0) -> (p1);
            \draw[->] (p1) -> (q1);
            \draw[->] (q1) -> (p2);
            \draw[->] (p2) -> (q2);
            \draw[->] (q2) -> (p3);
            \draw[->] (p3) -> (pk);
            \draw[->] (pk) -> (qk);
            \draw[->] (qk) -> (qf);
        \end{tikzpicture}

        \caption[Decomposition process for finding many blocking factors in $\eps$-far words]{\textbf{a)} The decomposition process returns~$k$ factors $\tau[i_1, j_t], \ldots, \tau[i_k,j_k]$ (represented as diagonally hatched in gray regions), separated together and with the start of the text by padding regions of $\rho-1$ letters (red crosshatched regions).
        \textbf{b)} If we exclude the last letter of each blocking factor, we obtain factors that label transitions between some pair of states $p_t, q_t$ for each $t = 1,\ldots, k$.
        \textbf{c)} We use the padding regions to bridge between consecutive factors as well as the start and end of the word.}
        \label{fig:many-blocking}
    \end{center}
    \end{figure}
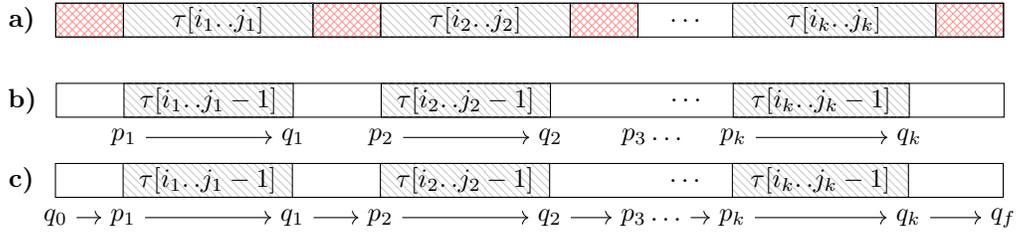

    For every~$t$, we chose $j_t$ to be minimal so that $\tau[i_t\dd j_t]$ is blocking, hence $\tau[i_t\dd j_t-1]$ is not blocking, and therefore $\tau[i_t\dd j_t-1]$ labels a run from some state $p_t \in Q_{i_t}$ to some state $q_t \in Q_{j_t}$.
    Therefore, using the strong connectivity of~$\Aa$ and \cref{fact:periodicity}, we can substitute  the letters in $\tau[j_t\dd  j_t+\rho-1]$ to obtain a factor that labels a transition from~$q_t$ to~$p_{t+1}$. After this transformation, the word $\tau[i_t\dd  j_t+\rho-1]$ labels a transition from~$p_t$ to~$p_{t+1}$.
    Using the $\rho$ letters at the start and the end of the word, we add transitions from an initial state to $p_1$ and from $q_k$ to a final state: the assumption that $\lang{\Aa}$ contains a word of length~$n$ ensures that $Q_n$ contains a final state, hence this is always possible.
    The resulting word is in $\timedlang{\Aa}$ and was obtained from $\tau$ using $(k+1)\rho \le 3(k+1)m^2$ substitutions.
    As $\tau$ is $\eps$-far from $\timedlang{\Aa}$, we obtain the following bound on $k$:
    \begin{align*}
        3(k+1)m^2 \ge \eps n
            &\Longrightarrow k \ge \frac{\eps n}{3m^2}-1\\
            &\Longrightarrow k \ge \frac{\eps n}{6m^2}
    \end{align*}
    The last implication uses the assumption that $n \ge 6m^2 / \eps$.
\end{proof}

Next, we show that if~$u$ is $\eps$-far from $\lang{\Aa}$, then $\twu$ contains $\Omega(\eps n)$ blocking factors, each of length $\cO(1/\eps)$.
\begin{lemma}\label{lemma:many-short-blocking} 
    Let $\eps> 0$, let~$u$ be a word of length $n \ge 6m^2/\eps$ and assume that $\lang{\Aa}$ contains at least one word of length~$n$.
    If~$u$ is $\eps$-far from $\lang{\Aa}$, then the positional word $\twu$ contains at least $\eps n/(12m^2)$ disjoint blocking factors of length at most $12m^2/\eps$.
\end{lemma}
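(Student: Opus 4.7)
The plan is to bootstrap from \cref{lemma:many-blocking}, which already hands us $k \ge \eps n/(6m^2)$ disjoint blocking factors in $\twu$ (without any length bound), and then argue by a counting pigeonhole that most of them must already be short.

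Concretely, set $L := 12m^2/\eps$ and let $\Pp$ be the family of $k \ge \eps n/(6m^2)$ disjoint blocking factors produced by \cref{lemma:many-blocking}. Partition $\Pp = \Pp_{\text{long}} \sqcup \Pp_{\text{short}}$, where $\Pp_{\text{long}}$ collects the factors of length strictly greater than $L$ and $\Pp_{\text{short}}$ the rest. Since the factors in $\Pp$ are pairwise disjoint and all sit inside $\twu$, which has length $n$, the total length of $\Pp_{\text{long}}$ is at most $n$, hence
\[
|\Pp_{\text{long}}| \;<\; \frac{n}{L} \;=\; \frac{\eps n}{12m^2}.
\]
Subtracting from the lower bound on $|\Pp|$ yields
\[
|\Pp_{\text{short}}| \;\ge\; \frac{\eps n}{6m^2} - \frac{\eps n}{12m^2} \;=\; \frac{\eps n}{12m^2},
\]
so $\Pp_{\text{short}}$ is the desired collection of disjoint blocking factors of length at most $L$.

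The only subtlety worth mentioning is that we are allowed to throw away the long factors: every element of $\Pp_{\text{short}}$ is still a blocking factor of $\Aa$, and the disjointness property is preserved under taking a subfamily. No surgery on the long factors is needed — in particular we do not need to extract shorter blocking factors from inside them, which is fortunate because a proper factor of a blocking factor need not be blocking. Thus the proof is essentially a one-line counting argument layered on top of \cref{lemma:many-blocking}, and I do not anticipate any genuine obstacle.
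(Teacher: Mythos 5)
Your proof is correct and follows the same route as the paper's: both invoke \cref{lemma:many-blocking} to get $\ge \eps n/(6m^2)$ disjoint blocking factors, then use disjointness to bound the number that can exceed length $12m^2/\eps$ by $n / (12m^2/\eps) = \eps n/(12m^2)$, leaving at least $\eps n/(12m^2)$ short ones. Your remark that one must merely discard the long factors (rather than shorten them, since proper factors of blocking factors need not be blocking) is a sound observation that the paper leaves implicit.
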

\begin{proof}
    Let $u, \Aa$ be a word and an automaton satisfying the above hypotheses.
    By \cref{lemma:many-blocking}, $\twu$ contains at least $\eps n/(6m^2)$ \emph{disjoint} blocking factors.
    As these factors are disjoint, at most half of them (that is, $\eps n/(12m^2)$ of them) can have length greater than $12m^2/\eps$, as the sum of their lengths cannot exceed~$n$.
\end{proof}

\begin{proof}[Proof of \cref{thm:generic-tester-scc}]
    First, note that if $u\in\lang{\Aa}$, $\twu$ cannot contain a blocking factor for $\Aa$, hence \cref{alg:generic-tester-scc} always accepts $u$.
    Next, if $\lang{\Aa} \cap \Sigma^n$ is empty or if $|u| \le L = 12m^2/\eps$, the tester has the same output as $\Aa$, hence it is correct.

    In the remaining case, $u$ is long enough and $\eps$-far from $\lang{\Aa}$, hence \cref{lemma:many-short-blocking} gives us a large set of short blocking factors in $\twu$: this is exactly what the \textsc{Sampler} function needs to find at least one factor containing a blocking factor with probability at least $2/3$.
    More precisely, by \cref{lemma:many-short-blocking}, $\twu$ contains at least $\eps n/(12m^2)= n/L$ blocking factors of length at most $L = 12m^2/\eps$, hence the conditions of \cref{lemma:generic-sampling-alg} are satisfied.

    As a factor containing a blocking factor is also a blocking factor, the set $\Ff$ computed on line~\ref{line:compute-f} of \cref{alg:generic-tester-scc} contains at least one blocking factor with probability at least~$2/3$, and \cref{alg:generic-tester-scc} satisfies \cref{def:tester}.
\end{proof}

\subsection{Lower bound from infinitely many minimal blocking factors}\label{sec:scc-lb}
We now show that languages with infinitely many \emph{minimal} blocking factors are hard, i.e. any tester for such a language requires $\Omega(\epslogeps)$ queries.

Let us first give an example that will motivate our construction. 
Consider the parity language~$P$ consisting of words that contain an even number of $b$'s, over the alphabet $\set{a,b}$.
Distinguishing $u \in P$ from $u \notin P$ requires $\Omega(n)$ queries, as changing the letter at single position can change membership in~$P$. However, $P$ is trivial to test, as any word is at distance at most~1 from~$P$, for the same reason.
Now, consider language $L_2$ consisting of words over $\set{a,b,c,d}$ such that between a~$c$ and the next~$d$, there is a word in~$P$. Intuitively, this language encodes multiple instances of~$P$, hence we can construct words $\eps$-far from $L_2$, and each instance is hard to recognize for property testers, hence the whole language is.
In~\cite[Theorem 15]{bathie2021property}, Bathie and Starikovskaya proved a lower bound of $\Omega(\epslogeps)$ on the query complexity of any property tester for $L_2$, matching the upper bound in the same paper. 

The minimal blocking factors of $L_2$ include all words for the form $cvd$ where $v\notin P$: there are infinitely many such words.
This is no coincidence: we show that this lower bound can be lifted to any language with infinitely many minimal blocking factors, under the Hamming distance.

\begin{theorem}\label{thm:scc-lb}
    Let~$\Aa$ be a strongly connected NFA.
    If $\MBF(\Aa)$ is infinite, then there exists a constant $\eps_0$ such that for any $\eps <\eps_0$, any $\eps$-property tester for $L = \lang{\Aa}$ uses $\Omega(\epslogeps)$ queries.
\end{theorem}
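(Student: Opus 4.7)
The plan is to apply Yao's minimax principle and exhibit a distribution $\Dd$ over inputs, supported roughly half on $L$ and half on words $\eps$-far from $L$, against which every deterministic algorithm with $q = o(\epslogeps)$ queries errs with probability strictly greater than $1/3$. Since $\Sigma$ is finite and $\MBF(\Aa)$ is infinite, minimal blocking factors must have unbounded length; the construction therefore has a multi-scale flavor. For each dyadic length $\ell_k = 2^k$ with $k \le \log(c/\eps)$, extract a minimal blocking factor $\tau_k = \timedword{j_k}{v_k}$ with $|v_k| = \Theta(\ell_k)$, generalising the blueprint used by Bathie and Starikovskaya~\cite{bathie2021property} for the single language $L_2$.

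Next, using strong connectivity of $\Aa$ together with \cref{fact:periodicity}, I build a template word $w_0 \in \lang{\Aa}$ of length $n$ containing, at each scale $k$, $\Theta(\eps n / \ell_k)$ disjoint ``slots'' of length $\Theta(\ell_k)$, separated from their neighbours by padding regions of $\rho$ letters so that slot contents can be freely modified (respecting positional congruence classes) without disturbing the rest. The positive distribution $\Dd^+$ fills each slot with a random non-blocking ``decoy'' of the right length, keeping the word in $\lang{\Aa}$; the negative distribution $\Dd^-$ samples a scale $k$ uniformly and substitutes $\tau_k$ into every scale-$k$ slot. By a counting argument analogous to the one in the proof of \cref{lemma:many-blocking}, the resulting word is $\eps$-far from $\lang{\Aa}$.

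The core of the argument then shows, scale by scale, that no deterministic algorithm can reliably identify the corrupted scale with few queries. At the unknown scale $k$, identifying a scale-$k$ slot as ``corrupted'' requires querying positions where the decoy genuinely differs from $\tau_k$; using a rich family of decoys constructed from $\tau_k$, we argue that the marginal distribution of a small, fixed set of queried positions inside a slot is essentially the same under $\Dd^+$ and $\Dd^-$. A union bound over the $\Theta(\eps n/\ell_k)$ scale-$k$ slots yields an $\Omega(1/\eps)$ per-scale query lower bound, and since the tester does not know $k$ and there are $\Theta(\log(1/\eps))$ scales, a standard Yao-style coupling then gives the total bound of $\Omega(\epslogeps)$.

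The hard part, and the main obstacle, is the decoy construction and the resulting indistinguishability claim. For $L_2$ one can obtain a decoy by flipping a single letter, but for a general strongly connected NFA one must leverage the minimality of $\tau_k$: any proper factor of $v_k$ is non-blocking, hence labels a run in $\Aa$; combining such runs with paths supplied by strong connectivity, one builds, for every ``small'' position set $S$ inside a slot, a non-blocking length-$|v_k|$ word that agrees with $v_k$ outside $S$. Formalising the indistinguishability at this level of generality — in particular, ensuring that the family of decoys is large and uniform enough for the coupling to yield a genuine $\log(1/\eps)$ factor rather than a $1/\eps$ factor — is where the bulk of the work lies, and it is what permits lifting the $L_2$ argument to the claimed characterisation.
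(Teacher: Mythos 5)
Your high-level plan (Yao's minimax, a multi-scale distribution mixing positive and $\eps$-far words, blocking factors of dyadically growing lengths) matches the paper's, but the technical core of the argument --- the decoy construction and the consistency claim --- is a genuine gap, and as written the plan for it does not work.

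The paper does not merely extract one minimal blocking factor per dyadic scale. Instead, it first shows (using the fact that $\MBF(\Aa)$ is a regular language, \cref{lemma:blocking-regular}) that the minimal blocking factors can be pumped, and then via a pigeonhole-on-states argument (\cref{lemma:good-bf}) extracts a \emph{single structured family} $\tau_{-,r} = \phi(\nu_-)^r\chi$ of blocking factors together with a ``switch'' segment $\nu_+$ of the same length as $\nu_-$ and a fixed state $q_*$ such that $\tau_{+,r,s} = \phi(\nu_-)^s\nu_+(\nu_-)^{r-1-s}\chi$ labels a loop from $q_*$ to $q_*$ for every $s < r$. This shared $\phi,\nu_-,\nu_+,\chi,q_*$ structure is exactly what makes the later consistency argument go through: a block of the negative distribution is $(\tau_{-,r})^N\eta$, and if the tester makes fewer than $r$ queries in it, there is some offset $s$ such that no query lands in the $s$-th copy of $\nu_-$ in any of the $N$ concatenated copies; replacing that copy of $\nu_-$ by $\nu_+$ in all $N$ copies yields a word that agrees with the queried positions, is non-blocking, and --- crucially --- threads through $q_*$ so it extends to a word of $\timedlang{\Aa}$ that has positive probability under $\Dd$.

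Your proposal replaces this by ``extract a minimal blocking factor $\tau_k$ of length $\Theta(2^k)$ per scale'' and claims that minimality alone supplies enough decoys: ``any proper factor of $v_k$ is non-blocking, hence labels a run; combining such runs\dots one builds, for every small position set $S$, a non-blocking length-$|v_k|$ word that agrees with $v_k$ outside $S$.'' This does not follow. First, the claim is stated backwards: for the tester not to distinguish, the decoy must agree with $\tau_k$ \emph{on} the queried set $S$, not outside it. Second, and more importantly, minimality of $\tau_k$ tells you that removing the first or last letter yields a non-blocking word, but it tells you nothing about substitutions in the middle: a length-preserving change on a set $S$ of interior positions can perfectly well leave the word blocking, and there is no way to ``stitch'' a non-blocking prefix and a non-blocking suffix into a non-blocking word of the same length that matches $v_k$ on a prescribed set --- periodicity and the positions of the runs in $\Aa$ get in the way. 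The paper sidesteps all of this precisely because the Pumping Lemma produces segments $\nu_-,\nu_+$ of the same (controlled) length that both return to the same state, so the swap is automatically a well-formed positional word with a run in $\Aa$.

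A secondary issue: you phrase the indistinguishability step as ``the marginal distribution of the queried positions is essentially the same under $\Dd^+$ and $\Dd^-$,'' which is the shape of a two-sided-error lower bound. The paper's testers have perfect completeness (one-sided error), and the argument is deterministic, not statistical: if the transcript is consistent with a positive instance of nonzero probability, the tester \emph{must} accept (\cref{lemma:must-accept}). Your statistical framing is both stronger than needed and, in this setting, not something you have the tools to prove --- the decoys you sketch do not control the joint distribution of queried positions. To repair the proof you would need to establish something like \cref{lemma:good-bf} first; without it, the per-slot consistency/indistinguishability step has no proof.
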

The proof of this result is full generalization of the lower bound against the ``repeated Parity'' example given above.

Our proof is based on (a consequence of) Yao's Minimax Principle~\cite{yao1977probabilistic}: if there is a distribution $\Dd$ over inputs such that any \emph{deterministic} algorithm that makes at most~$q$ queries errs on $u\sim{} \Dd$ with probability at least~$p$, then any \emph{randomized} algorithm with~$q$ queries errs with probability at least~$p$ on some input~$u$.

To prove \cref{thm:scc-lb}, we first exhibit such a distribution $\Dd$ for $q = \Theta(\epslogeps)$.
We take the following steps:
\begin{enumerate}
    \item we show that with high probability, an input~$u$ sampled w.r.t. $\Dd$ is either in or $\eps$-far from~$L$ (\cref{lemma:far-whp}),
    \item we show that with high probability, any deterministic tester that makes fewer than $c \cdot \epslogeps$ queries (for a suitable constant $c$) cannot distinguish whether the instance~$u$ is positive or $\eps$-far, hence it errs with large probability.
    \item combine the above two results to prove \cref{thm:scc-lb} via Yao's Minimax principle.
\end{enumerate}

\subsubsection{The structure of $\MBF(\Aa)$}
Before diving into the proof of \cref{thm:scc-lb}, we show that if $\MBF(\Aa)$ is infinite, then we can find minimal blocking factors with a ``regular'' structure, a crucial ingredient for our proof.
First, we prove that the set of minimal blocking factors of an automaton is a regular language, recognized by an automaton that is possibly exponentially larger than~$\Aa$.
We first prove the result for blocking factors of the form $\timedword{i}{u}$ for a fixed $i\in\ZZ/\lambda\ZZ$.
\begin{lemma}\label{lemma:blocking-regular}
    Let $\Aa = (Q, \Sigma, \delta, I, F)$ be a strongly connected NFA with~$m$ states and let $\lambda = \lambda(\Aa)$.
    For every $i \in\ZZ/\lambda\ZZ$, the set of minimal blocking factors of~$\Aa$ of the form $\timedword{i}{u}$ is a regular language recognized by a NFA of size $2^{\cO(m)}$.
\end{lemma}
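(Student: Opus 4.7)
The plan is to combine the semantic characterisation of blocking factors from \cref{lemma:intuition-timed-word} with the classical subset construction on~$\Aa$. By that lemma, a positional word $\timedword{i}{u}$ is blocking if and only if, running the subset construction of $\Aa$ starting from the set of states $Q_i$ and reading~$u$, one reaches the empty set. Thus, performing the subset construction on $\Aa$ yields a DFA with at most $2^m$ states; taking $Q_i$ as the initial state and $\emptyset$ as the unique accepting state yields a DFA $\Dd_i$ of size $2^m$ recognising the language
\begin{equation*}
L_i^b := \{u \in \Sigma^* : \timedword{i}{u}\text{ is a blocking factor of }\Aa\}.
\end{equation*}
Its complement $\overline{L_i^b}$ is also recognised by a DFA of size~$2^m$.

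Next, I would reduce minimality to checking only two ``one-letter-shorter'' conditions: $\timedword{i}{u}$ is a minimal blocking factor if and only if it is blocking, $\timedword{i+1}{u[1\dd|u|)}$ is non-blocking, and $\timedword{i}{u[0\dd|u|-1)}$ is non-blocking. The forward direction is immediate, as both shorter words are proper factors of $\timedword{i}{u}$. For the converse, the key observation, which follows directly from \cref{def:blocking} and the transitivity of $\factor$, is that being blocking is upward-closed under the factor relation: if $\tau$ is blocking and $\tau\factor\mu$ then $\mu$ is blocking; equivalently, any factor of a non-blocking positional word is non-blocking. Every proper factor of $\timedword{i}{u}$ is obtained by removing at least one letter from the start or from the end, hence is a factor of either $\timedword{i+1}{u[1\dd|u|)}$ or $\timedword{i}{u[0\dd|u|-1)}$; so if both of those are non-blocking then all proper factors of $\timedword{i}{u}$ are non-blocking, and $\timedword{i}{u}$ is minimal blocking.

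Finally, I would combine the three conditions into a single NFA. The set of words $u\in\Sigma^*$ such that $\timedword{i}{u}$ is a minimal blocking factor is exactly
\begin{equation*}
L_i^b \;\cap\; \bigl(\Sigma \cdot \overline{L_{i+1}^b}\bigr) \;\cap\; \bigl(\overline{L_i^b} \cdot \Sigma\bigr).
\end{equation*}
The languages $\Sigma \cdot \overline{L_{i+1}^b}$ and $\overline{L_i^b} \cdot \Sigma$ are each recognised by an NFA of size $2^m+1$, obtained from $\overline{L_{i+1}^b}$ (resp.\ $\overline{L_i^b}$) by prepending a fresh initial state (resp.\ appending a fresh accepting state) reading any letter of $\Sigma$. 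Taking the product of the three NFAs then yields an NFA of size $(2^m)^3 = 2^{\cO(m)}$ recognising the desired language, as claimed. The only point requiring care is ensuring that the position shift from $i$ to $i+1$ when stripping the first letter is correctly propagated through the construction; the rest is an assembly of standard closure constructions, so there is no substantial obstacle.
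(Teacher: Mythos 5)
Your proof is correct, and it takes a genuinely different (and cleaner) route than the paper's. The paper first builds a small NFA $\Aa_i$ of size $m+1$ that is meant to recognize $i$-blocking factors, and then invokes "standard complement and intersection constructions" to filter down to those blocking factors that have no proper blocking factor. Your proof instead passes through the subset construction from the start, which is actually safer: the paper's $\Aa_i$ (with initial state set $Q_i$ and unique accepting sink $\bot$) accepts $u$ whenever \emph{some} run from $Q_i$ on $u$ dies, whereas by \cref{lemma:intuition-timed-word} blocking requires that \emph{all} runs from $Q_i$ die, a distinction that only the powerset semantics captures for a genuine NFA. The second, and most substantive, difference is your reduction of minimality to two "one-letter-shorter" tests via the upward-closedness of the blocking property under $\factor$: this makes the complement/intersection step fully explicit as a triple product $L_i^b \cap (\Sigma\cdot\overline{L_{i+1}^b}) \cap (\overline{L_i^b}\cdot\Sigma)$, whereas the paper leaves the "no proper factor is blocking" filter at the level of a citation. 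Your observation is a genuine simplification that also makes the role of the index shift from $i$ to $i+1$ transparent. Both routes land on the same $2^{\cO(m)}$ bound, but yours is the more self-contained and more careful argument.
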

\begin{proof}
    We call blocking factors of~$\Aa$ of the form $\timedword{i}{u}$ its \emph{$i$-blocking factors}.
    
    We first show that the set of $i$-blocking factors of~$\Aa$, but not necessarily minimal ones, is a regular language recognized by an NFA $\Aa_i$ with $m+1$ states. The result follows by using standard constructions for complement and intersection of automata~\cite[Chapter 1, Section 3]{Pin2021}: these constructions give an automaton of size $2^{\cO(m)}$ that recognizes words in~$L$ that have no proper factor in $L$.

    Consider the NFA $\Aa_i$ obtained by adding a new sink state $\bot$ to~$\Aa$,
    making it the only accepting state, with set of initial states $Q_i$.
    Formally, $\Aa_i$ is defined as $\Aa_i = (Q \cup \{\bot\}, \Sigma, \delta', Q_i, \{\bot\})$,
    where $\delta'$ is defined as follows:
    \[
        \forall p\in Q, \forall a\in\Sigma: \delta'(p, a) = \begin{cases}
            \{\bot\}& \text{ if } \delta(p,a) = \emptyset, \\
            \delta(p,a) & \text{ otherwise.}
        \end{cases}
    \]
    This automaton\footnote{Our definition of NFAs does not allow for multiple initial states. As there is no constraint of strong connectivity for $\Aa_i$, this can be solved using a simple construction that adds a new initial state.} recognizes the set of $i$-blocking factors of~$\Aa$ and has size $\cO(m)$.
    Applying the aforementioned construction to $L = \lang{\Aa_i}$ yields the desired automaton, of size $2^{\cO(m)}$.
\end{proof}

It follows that the set of minimal blocking factors of~$\Aa$ is also a regular language.
\begin{corollary}
   Let~$\Aa$ be an NFA with~$m$ states.
   The set of minimal blocking factors of~$\Aa$ is a regular language recognized by an NFA of size $2^{\cO(m)}$.
\end{corollary}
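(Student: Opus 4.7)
The plan is to reduce the corollary to \cref{lemma:blocking-regular} by taking a union over periodicity classes. Since the corollary sits in the section treating strongly connected NFAs, I take $\Aa$ strongly connected with period $\lambda = \lambda(\Aa)$. Every minimal blocking factor has the form $\timedword{i}{u}$ for exactly one $i \in \ZZ/\lambda\ZZ$, so the set $\MBF(\Aa)$ decomposes as the disjoint union
\[
\MBF(\Aa) \;=\; \bigsqcup_{i=0}^{\lambda-1} \MBF_i(\Aa),
\]
where $\MBF_i(\Aa)$ denotes the set of minimal blocking factors of the form $\timedword{i}{u}$.

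By \cref{lemma:blocking-regular}, for each $i$ the language $\MBF_i(\Aa)$ is recognized by an NFA $\Bb_i$ of size $2^{\cO(m)}$. I would then combine the $\Bb_i$'s by the standard union construction on NFAs: introduce a fresh initial state with $\emptyw$-transitions (or, more directly, just copy its outgoing transitions) to the initial states of each $\Bb_i$, and take the final states to be the union of the final states of the $\Bb_i$'s. The resulting NFA recognizes $\MBF(\Aa)$ and has size $1 + \sum_{i=0}^{\lambda-1} |\Bb_i|$.

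For the size bound, observe that $\lambda \le m$ (the period divides the length of every simple cycle, and such a cycle has length at most $m$), so the union has at most $1 + \lambda \cdot 2^{\cO(m)} \le 1 + m \cdot 2^{\cO(m)} = 2^{\cO(m)}$ states, as required. I expect no real obstacle: the decomposition into periodicity classes is clean, and the size blow-up from a factor of $\lambda \le m$ is absorbed by the exponent in $2^{\cO(m)}$. The only minor subtlety is ensuring the union construction is applied to NFAs over the positional alphabet $\ZZ/\lambda\ZZ \times \Sigma$, which is already the setting of \cref{lemma:blocking-regular}.
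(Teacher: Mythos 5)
Your proposal is correct and takes essentially the same approach the paper intends: the corollary is stated without proof as an immediate consequence of \cref{lemma:blocking-regular}, and the implicit argument is precisely the union over the $\lambda$ periodicity classes together with the bound $\lambda \le m$. Your write-up simply makes this explicit.
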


Therefore, if $\MBF(\Aa)$ is infinite, we can use the Pumping Lemma~\cite[Chapter 1, Proposition 2.2]{Pin2021} to find an infinite family of minimal blocking factors with a shared structure $\{\phi \nu^r \chi, r\in\NN\}$, for some non-empty positional words $\phi, \nu$ and~$\chi$. We will use this property later, when proving a lower bound against the language of automata with infinitely many blocking factors.
\begin{restatable}{lemma}{lemmagoodbf}\label{lemma:good-bf}
    If $\MBF(\Aa)$ is infinite, then there exist positional words $\phi,\nu_+,\nu_-,\chi$ such that:
    \begin{enumerate}
        \item the words $\nu_+$ and $\nu_-$ have the same length,
        \item there exists a constant $ S = 2^{\poly(m)}$ such that $|\phi|, |\nu_+|, |\nu_-|, |\chi|\le  S$,
        \item there exists an index $i_*\in\ZZ/ \lambda\ZZ$ and a state $q_*\in Q_{i_*}$ such that for every integer $r \ge 1$,
        the positional word $\tau_{-,r} = \phi(\nu_-)^r\chi$ is blocking for~$\Aa$, and for every $s < r$, we have
        \[q_* \xrightarrow{\tau_{+,r,s}} q_* \text{ where } \tau_{+,r,s} = \phi(\nu_-)^j\nu_+(\nu_-)^{r-1-s}\chi.\]
        In particular, $\tau_{+,r,s}$ is not blocking for~$\Aa$.
    \end{enumerate}
\end{restatable}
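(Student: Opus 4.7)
The plan is to apply the pumping lemma to the regular language $\MBF(\Aa)$ (regular by the corollary preceding the lemma) and then use a double structural analysis on the minimal blocking factors it produces. Since $\MBF(\Aa)$ is infinite and recognised by an NFA of size $2^{O(m)}$, pumping yields positional words $\phi_0, \nu_0, \chi_0$ with $\nu_0 \ne \emptyw$ and $|\phi_0 \nu_0 \chi_0| = 2^{O(m)}$, such that $\phi_0 \nu_0^r \chi_0 \in \MBF(\Aa)$ for every $r \ge 0$; positional consistency forces $|\nu_0| \equiv 0 \pmod{\lambda}$. I then set $\nu_- := \nu_0^M$ with $M$ a multiple of $m!$ chosen large enough that $|\nu_-| \ge \rho$, so that every state admitting some $\nu_0$-cycle becomes directly $\nu_-$-fixed (using that any state with a $\nu_0$-cycle has one of length at most $m$); the length bound $2^{\poly(m)}$ is preserved.

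I then construct two $\nu_-$-fixed states $r_1, r_2$ lying in the same periodicity class. The proper factor of $\tau_r := \phi_0 \nu_-^r \chi_0$ obtained by removing its last letter is non-blocking by minimality and labels a run $p_* \xrightarrow{\phi_0} s_0 \to \cdots \to s_r$ in $\Aa$; all $s_i$ lie in a single periodicity class (since $|\nu_-| \equiv 0 \pmod{\lambda}$), and pigeonholing yields $s_{i_L} = s_{j_L} =: r_1$, which is $\nu_-$-fixed by the refinement. For $r_2$, I consider the proper factor obtained by removing the first letter of $\tau_r$: its $\nu_-$-reachability from the starting state $t_0$ of a witnessing run must enter an SCC of the $\nu_-$-action graph on $\Aa$ whose states are all $\nu_-$-fixed and which contains a state from which $\chi_0$ can be read---otherwise, for $r$ large enough, every run would be trapped without being able to read $\chi_0$, contradicting non-blockingness for every $r$. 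I set $r_2$ to be such a state, so that $r_2 \xrightarrow{\chi_0} q_*^F$ directly in $\Aa$. A straightforward position-class calculation shows $r_1, r_2 \in Q_{i_0 + |\phi_0|}$.

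The main obstacle is establishing $r_1 \ne r_2$. I argue by contradiction: if $r_1 = r_2$, then $p_* \xrightarrow{\phi_0 \nu_-^{i_L}} r_1 \xrightarrow{\chi_0} q_*^F$ is a valid run in $\Aa$, and by $\nu_-$-fixedness of $r_1$ it extends to $p_* \xrightarrow{\phi_0 \nu_-^{i_L + r'} \chi_0} q_*^F$ for every $r' \ge 0$. But $\phi_0 \nu_-^{i_L+r'} \chi_0 \in \MBF(\Aa)$ is blocking from every state of $Q_{i_0}$, in particular from $p_*$: contradiction.

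Finally, Fact~\ref{fact:periodicity} and strong connectivity of $\Aa$ supply a word $\nu_+$ of length exactly $|\nu_-|$ from $r_1$ to $r_2$ in $\Aa$. I set $q_* := p_*$, $\phi := \phi_0 \nu_-^{i_L}$, and $\chi := \chi_0 \beta$ where $\beta$ is a short ($O(m^2)$-length) strong-connectivity path from $q_*^F$ back to $p_*$. Then $\phi \nu_-^r \chi = \phi_0 \nu_-^{i_L + r} \chi_0 \beta$ contains the blocking factor $\phi_0 \nu_-^{i_L+r} \chi_0 \in \MBF(\Aa)$, hence is blocking, while $\phi \nu_-^s \nu_+ \nu_-^{r-1-s} \chi$ labels the cycle $q_* \to r_1 \to r_1 \to r_2 \to r_2 \to q_*^F \to q_*$ in $\Aa$. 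All constructed words have length $2^{\poly(m)}$.
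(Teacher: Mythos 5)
Your proof follows the same blueprint as the paper's: pump $\MBF(\Aa)$ to obtain $\phi_0\nu_0^r\chi_0 \in \MBF(\Aa)$ for all $r$, pass to a proper factor on each side to get non-blocking words, pigeonhole the $\nu$-runs to find looping states $r_1, r_2$ in the same periodicity class, and invoke \cref{fact:periodicity} to build $\nu_+$. You also make two points explicit that the paper's write-up glosses over (the distinctness $r_1 \neq r_2$, and the closing word $\beta$ that turns the chain into a genuine cycle at $q_*$); both of your arguments there are sound, with the $r_1\ne r_2$ contradiction being exactly the right one.

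However, your construction of $r_2$ has a genuine gap. You claim that the witnessing run for the first-letter-removed proper factor "must enter an SCC of the $\nu_-$-action graph whose states are all $\nu_-$-fixed and which contains a state from which $\chi_0$ can be read," and you then take $r_2$ with $r_2 \xrightarrow{\chi_0} q_*^F$ directly. This is not justified and can fail: the $\nu_-$-run can loop on a $\nu_-$-fixed state for most of its length and then \emph{exit} along a short tail of $\nu_-$-steps to a non-cyclic state that reads $\chi_0$. For every $r$ a path of length $r$ from $u_0$ into the set of $\chi_0$-readable states exists as soon as some $\nu_-$-fixed state lies on the route, without any $\chi_0$-readable state itself being $\nu_-$-fixed; so the "trapped for large $r$" contradiction you invoke does not arise. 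The paper avoids this by absorbing the tail into $\chi$: after pigeonholing $v_0\xrightarrow{\mu}\cdots\xrightarrow{\mu}v_m\xrightarrow{\eta}q$ and finding $v_i=v_j$, it sets $p_2 := v_i$ and $\chi := \mu^{k_3}\eta$ so that $p_2\xrightarrow{\chi}q$ holds by construction. Your argument is easily repaired the same way: fix $r=m$, let $r_2 := u_i$ be the first repeated state in $u_0\xrightarrow{\nu_-}\cdots\xrightarrow{\nu_-}u_m\xrightarrow{\chi_0}q_*^F$, and redefine $\chi := \nu_-^{m-i}\chi_0\beta$. Then $r_2\xrightarrow{\chi}q_*$, $\phi\nu_-^r\chi$ still contains $\phi_0\nu_-^{i_L+r+m-i}\chi_0\in\MBF(\Aa)$ and hence is blocking, and your $r_1\ne r_2$ contradiction adapts verbatim by replacing $r_1\xrightarrow{\chi_0}q_*^F$ with $r_1\xrightarrow{\nu_-^{m-i}\chi_0}q_*^F$.
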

Note that here, the state $q_*$ is the same for \emph{every} integers $r, s$.
\begin{proof}
    As $\MBF(\Aa)$ is infinite, there must exist an integer $i_*$ such that $\Aa$ has infinitely many minimal $i_*$-blocking factors; we fix such $i_*$ in what follows.

    Let us recall the Pumping Lemma~\cite[Chapter 1, Proposition 2.2]{Pin2021}, with a formulation adapted to our purpose.
    \begin{fact}[Pumping Lemma]
        Let $L$ be a regular language recognized by an automaton of size $T$.
        There exists an integer $S = \cO(T)$ such that any word $u$ of $L$ of length at least $S$ can be factorized as $u=xyz$
        such that $|xy| \le S, |y| \ge 1$ and, for all $k \ge 0$, $xy^kz \in L$.
    \end{fact}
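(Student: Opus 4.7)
The plan is to combine the Pumping Lemma applied to the regular language $\MBF(\Aa)$ (provided by the preceding corollary) with pigeonhole arguments on runs of~$\Aa$ and the strong connectivity of~$\Aa$ to construct the desired positional words $\phi, \nu_-, \nu_+, \chi$ and the state $q_*$.

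First, since $\MBF(\Aa)$ is infinite and recognized by an NFA of size $T = 2^{\cO(m)}$, I pick a shortest minimal blocking factor of length at least the Pumping Lemma constant $S_0 = \cO(T)$ and decompose it as $xyz$ with $|xy| \le S_0$, $|y| \ge 1$, and $xy^kz \in \MBF(\Aa)$ for all $k \ge 0$. Minimality of the chosen length additionally forces $|z| < S_0$, since the pumped-down word $xz \in \MBF(\Aa)$ must be shorter than the selected one. Setting $\phi_0 := x$, $\nu := y$, $\chi_0 := z$ yields an infinite family $\{\phi_0 \nu^k \chi_0\}_{k \ge 0}$ of minimal blocking factors sharing a common starting position $i_* \in \ZZ/\lambda\ZZ$, all of length at most $S_0 = 2^{\cO(m)}$.

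Next, since each $\phi_0 \nu^k \chi_0$ is minimal blocking, its prefix $\phi_0 \nu^k$ and its suffix $\nu^k \chi_0$ are non-blocking, so by \cref{lemma:intuition-timed-word} each admits a run in~$\Aa$ from the appropriate periodicity class. Taking $k > m$ and applying pigeonhole to the $\nu$-chain of the prefix-run produces states $q_* \in Q_{i_*}$ and $p^*$ with $q_* \xrightarrow{\phi_0 \nu^i} p^* \xrightarrow{\nu^a} p^*$ for some $0 \le i < i+a \le m$; an analogous pigeonhole on the suffix-run yields a state $p^\dagger$ looping on some power $\nu^b$ and successfully reading $\chi_0$. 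I then set $\nu_- := \nu^L$ where $L = \lcm(a, b, \lambda/\gcd(|\nu|,\lambda))$, which is still of length $2^{\cO(m)}$, so that $\nu_-$ is a common loop word on $p^*$ and $p^\dagger$ with $|\nu_-| \equiv 0 \pmod \lambda$. Absorbing $\nu^i$ into a new prefix $\phi$ and the trailing $\nu$-powers into a new suffix $\chi$ keeps the family $\phi \nu_-^r \chi \in \MBF(\Aa)$ for all $r \ge 1$, which is exactly the blocking claim for $\tau_{-,r}$.

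Finally, for $\nu_+$ I use strong connectivity: choosing $L$ large enough so that $|\nu_-| \ge \rho(\Aa) \le 3m^2$ and arranging (via the pigeonhole selection) that $p^*$ and $p^\dagger$ lie in the same periodicity class, \cref{fact:periodicity} yields a path in~$\Aa$ from $p^*$ to $p^\dagger$ of length exactly $|\nu_-|$, whose label paired with the starting position of $\nu_-$ defines the positional word $\nu_+$. The desired run of $\tau_{+,r,s}$ then decomposes as
\[
q_* \xrightarrow{\phi} p^* \xrightarrow{\nu_-^s} p^* \xrightarrow{\nu_+} p^\dagger \xrightarrow{\nu_-^{r-1-s}} p^\dagger \xrightarrow{\chi} q_*.
\]
The main obstacle is closing this loop, i.e.\ ensuring that $p^\dagger \xrightarrow{\chi} q_*$ for the \emph{same} $q_*$ that initiates the $\phi$-run. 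This is handled by one further pigeonhole over the finitely many candidate origin/endpoint pairs arising from the pumped family, and, if needed, by padding $\chi$ with a short strong-connectivity detour back to $Q_{i_*}$ so that the two endpoints coincide. Since all components of the construction are compositions of polynomially many words of length $2^{\cO(m)}$, the overall bound $|\phi|, |\nu_-|, |\nu_+|, |\chi| \le S = 2^{\poly(m)}$ holds.
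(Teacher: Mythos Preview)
You are not proving the statement you were given. The statement is the classical Pumping Lemma for regular languages, which the paper records as a cited \emph{Fact} (from~\cite[Chapter~1, Proposition~2.2]{Pin2021}) and does not prove. A proof would be the standard pigeonhole argument on the states visited by an accepting run of length at least the number of states of the automaton; none of that appears in your proposal.

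What you have actually sketched is a proof of the \emph{enclosing} lemma, \cref{lemma:good-bf}, which \emph{uses} the Pumping Lemma as a black box to produce the positional words $\phi,\nu_-,\nu_+,\chi$ and the state $q_*$. That is a different statement. If your intent was indeed to prove \cref{lemma:good-bf}, your outline is close to the paper's: both apply the Pumping Lemma to the regular language of minimal blocking factors to get a pumpable triple, then pigeonhole on the non-blocking prefix $\phi_0\nu^k$ and suffix $\nu^k\chi_0$ to find looping states on a power of $\nu$, and finally use strong connectivity (\cref{fact:periodicity}) to build $\nu_+$ connecting the two loop states. The differences are minor: the paper takes $K=\rho\cdot k_1\cdot k_2$ rather than an $\lcm$, and bounds the component lengths by stripping loops from the automaton's accepting run rather than by your minimal-length trick. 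Your concern about ``closing the loop'' so that the run on $\tau_{+,r,s}$ starts and ends at the same $q_*$ is legitimate and is treated somewhat tersely in the paper's proof as well; padding with a short strongly-connected detour, as you propose, is the natural fix.
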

    As the set of minimal $i_*$-blocking factors is an infinite regular language recognized by an NFA of size $T = 2^{\cO(m)}$.
    Let $S = 2^{\cO(m)}$ be the constant given by the Pumping Lemma: since the language is infinite, it contains at least one positional word of length greater than $S$.
    Hence, there exist positional words $\tau, \mu$ and~$\eta$, with $|\mu|\ge 1$, such that for any non-negative integer $k, \tau\mu^k\eta$ is a minimal $i_*$-blocking factor.
    By removing factors that label loops in the automaton, we can assume that each of them has length at most~$S$.
    Furthermore, we can assume w.l.o.g. that neither $\tau$ nor $\eta$ is empty, otherwise we set their value to $\mu$: after this modification, $\tau\mu^k\eta$ is still a minimal $i_*$-blocking factor for every $k\ge 0$.

    Notice that the word $\tau\mu^{m}$ is not a blocking factor, as a proper factor of the minimal blocking factor $\tau\mu^{m}\eta$. Therefore, by the pigeonhole principle, there exist integers ${k_0}, k_1 \ge 1$ with ${k_0} + k_1 = m$ and states $p, p_1$ such that we have 
    \[p \xrightarrow{\tau\mu^{k_0}} p_1 \xrightarrow{\mu^{k_1}} p_1.\]
    Note that, by \cref{fact:periodicity}, $p_1 \xrightarrow{\mu^{k_1}} p_1$ implies that $k_1 \cdot |\mu| = 0\pmod{\lambda}$.

    Similarly, the word $\mu^{m}\eta$ is not a blocking factor, since it is a proper factor of the minimal $i_*$-blocking factor $\tau \mu^{m}\eta$.
    Again, there exist integers $k_2 \ge 1, {k_3}$ summing to $m$ and states $p_2 $ and $q$ such that
    \[p_2 \xrightarrow{\mu^{k_2}} p_2 \xrightarrow{\mu^{k_3}\eta} q.\]

    Now, define $\phi = \tau\mu^{k_0}, \chi = \mu^{k_3} \eta$ and $\nu_- = \mu^{K}$, where $K = \rho\cdot k_1\cdot k_2$.
    As there are transitions starting from $p_1$ and $p_2$ labeled by $\mu$, $p_1$ and $p_2$ belong to the same periodicity class.
    Therefore, by \cref{fact:periodicity}, as $K\ge \rho$ and $K\cdot|\mu|= 0\pmod{\lambda}$, 
    there exists a word $\nu_+$ of length $K\cdot|\mu|$ such that $p_1\xrightarrow{\nu_+}p_2$.
    This choice of $\phi,\nu_+,\nu_-$ and $\chi$ satisfies all the conditions of the lemma.
\end{proof}

\subsubsection{Constructing a Hard Distribution $\Dd$}
Let $\eps > 0$ be sufficiently small and let~$n$ be a large enough integer.
In what follows, $m$ denotes the number of states of~$\Aa$.
To construct the hard distribution $\Dd$, we will use an infinite family of blocking factors that share a common structure, given by \cref{lemma:good-bf}.

The crucial property here is that $\tau_{-,r}$ and $\tau_{+,r, s}$ are very similar: they have the same length, differ in at most~$S$ letters, yet one of them is blocking and the other is not.

We now use the words $\tau_{-,r}$ and $\tau_{+,r,s}$ and the constant $ S$ to describe how to sample an input $\mu = \twu$ of length~$n$ w.r.t. $\Dd$.

Let $\pi$ be a uniformly random bit. If $\pi = 1$, we will construct a positive instance $\mu \in \timedlang{\Aa}$, and otherwise the instance will be $\eps$-far from $\timedlang{\Aa}$ with high probability.
We divide the interval $[0 \dd n-1]$ into $k = \eps n$ intervals of length $\ell = 1/\eps$, plus small initial and final segments $\mu_i$ and $\mu_f$ of length $\cO(\rho)$ to be specified later.
For the sake of simplicity, we assume that~$k$ and $\ell$ are integers and that $\lambda$ divides $\ell$.
For $j=1,\ldots, k$, let $a_j, b_j$ denote the endpoints of the $j$-th interval.
For each interval, we sample independently at random a variable $\kappa_j$ with the following distribution:
\begin{equation}
    \kappa_j = \begin{cases}
        t, &\text{ with prob. } p_t = 3\cdot 2^t S\eps/\log(( S\eps)^{-1}) \text{ for } t = 1,2,\ldots, \log(( S\eps)^{-1}),\\
        0,&\text{ with prob. } p_0 = 1 - \sum_{t=1}^{{\log(( S\eps)^{-1})}} p_t.
    \end{cases} 
\end{equation}
The event $\kappa_j > 0$ means that the $j$-th interval is filled with $N \approx 2^{-\kappa_j}/\eps$ ``special'' factors.
When $\pi = 0$, these ``special'' factors will be minimal blocking factors $\tau_{-, r}$ for $r = 2^{\kappa_j}$, whereas when $\pi = 1$, they will instead be similar non-blocking factors $\tau_{+, r, s}$ for a uniformly random $s$: they will be hard to distinguish with few queries.
On the other hand, the event $\kappa_j = 0$ means that the $j$-th interval contains no specific information.
More precisely, we choose a positional word $\eta_*$ of length $\ell$ such that $q_* \xrightarrow{\eta_*} q_*$: by \cref{fact:periodicity}, this is possible as $\ell = 0 \pmod\lambda$. Then, if $\kappa_j = 0$, we set $\mu[a_j\dd b_j] = \eta_*$, regardless of the value of $\pi$.

Formally, if $\kappa_j > 0$, let $r = 2^{\kappa_j}$, $N = 2^{-\kappa_j}/( S\eps)$
and let $\eta$ be a word of length $\ell - N\cdot |\tau_{-,r}|$ such that $q_* \xrightarrow{\eta} q_*$: such a word exists as $\lambda$ divides $\ell$ and $|\tau_{-,r}|$.
We construct the $j$-th interval as follows:
\begin{itemize}
     \item if $\pi = 0$, we set $\mu[a_j\dd b_j] = (\tau_{-,r})^N\eta$,
     \item if $\pi = 1$, we select $s\in [0\dd  r-1]$ uniformly at random, and set $\mu[a_j\dd b_j] = (\tau_{+,r,s})^N\eta$.
\end{itemize}
Finally, the initial and final fragments $\mu_i$ and $\mu_f$ of $\mu$ are chosen to be the shortest words that label a transition from $q_0$ to $q_*$ and $q_*$ to a final state, respectively.

\subsubsection{Properties of the distribution $\Dd$}
Next, we establish that the distribution $\Dd$ has the desired properties.
\begin{observation}
    If $\eps$ is small enough, $\Dd$ is well-defined, i.e. for every~$t$ between~$0$ and $\log(( S\eps)^{-1})$, we have $0 \le p_t \le 1$.
\end{observation}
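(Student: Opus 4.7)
The plan is to verify the three conditions that make $\Dd$ a probability distribution: $p_t \ge 0$ and $p_t \le 1$ for every $t \in \{1, \ldots, T\}$, where $T = \log(( S\eps)^{-1})$, and also $0 \le p_0 \le 1$. Nonnegativity of $p_t$ for $t \ge 1$ is immediate from the formula. The upper bound $p_t \le 1$ is monotone in $t$, so it suffices to check the largest index $t = T$: there $2^t = ( S\eps)^{-1}$, and the $ S\eps$ factors cancel, leaving $p_T = 3/\log((S\eps)^{-1})$, which is at most $1$ once $\eps$ is small enough that $\log(( S\eps)^{-1}) \ge 3$.

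The only nontrivial step is showing $p_0 \ge 0$, i.e. that the remaining mass $\sum_{t=1}^{T} p_t$ does not exceed~$1$. I would compute this sum directly as a geometric series:
\[
\sum_{t=1}^{T} p_t \;=\; \frac{3\, S\eps}{\log(( S\eps)^{-1})} \sum_{t=1}^{T} 2^t \;\le\; \frac{3\, S\eps}{\log(( S\eps)^{-1})} \cdot 2^{T+1} \;=\; \frac{6\, S\eps \cdot 2^T}{\log(( S\eps)^{-1})} \;=\; \frac{6}{\log(( S\eps)^{-1})},
\]
using that $2^T = ( S\eps)^{-1}$. This bound is at most $1$ as soon as $\log(( S\eps)^{-1}) \ge 6$, i.e. $\eps \le 1/(64\, S)$, which is guaranteed by the assumption that $\eps$ is sufficiently small. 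The bound $p_0 \le 1$ then follows from $p_t \ge 0$ for $t \ge 1$.

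No step is truly hard here; the proof is essentially a geometric-series estimate together with the convenient telescoping of $2^T$ against $S\eps$ built into the choice of $T$. The only subtlety is recording an explicit threshold on $\eps_0$ (for instance, $\eps_0 = 1/(64\,S)$) so that the statement ``if $\eps$ is small enough'' becomes quantitative, matching the way this constant will be used throughout the lower bound argument.
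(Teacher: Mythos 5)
Your proof is correct, and it is the natural one: the paper states this observation without any accompanying proof, so there is nothing to compare against. Your direct computation — nonnegativity is immediate for $t\ge 1$, the bound $p_t\le 1$ is monotone in $t$ and telescopes to $3/\log((S\eps)^{-1})$ at $t=T$, and the geometric-series bound $\sum_{t\ge 1}p_t\le 6/\log((S\eps)^{-1})$ gives $p_0\ge 0$ — is exactly the verification that the paper leaves implicit, and your explicit threshold $\eps_0 = 1/(64S)$ is a reasonable instantiation of ``$\eps$ small enough.''
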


\begin{observation}
    If $\pi = 1$, then $\mu\in \timedlang{\Aa}$.
\end{observation}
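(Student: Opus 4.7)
The plan is to explicitly exhibit an accepting run of $\timedA$ on $\mu$, reading the blocks from left to right. By construction $\mu$ decomposes as $\mu = \mu_i \cdot B_1 \cdots B_k \cdot \mu_f$, where $\mu_i$ is chosen to label a run $q_0 \xrightarrow{\mu_i} q_*$ and $\mu_f$ is chosen to label a run from $q_*$ to some final state. Hence it suffices to show that, when $\pi = 1$, each block $B_j$ labels a loop $q_* \xrightarrow{B_j} q_*$, so that concatenating everything gives an accepting run $q_0 \to q_* \to \cdots \to q_* \to q_f$.

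For each $j$, I would split on the value of $\kappa_j$. If $\kappa_j = 0$, then $B_j = \eta_*$, which by construction satisfies $q_* \xrightarrow{\eta_*} q_*$. If $\kappa_j > 0$ with $r = 2^{\kappa_j}$, $N = 2^{-\kappa_j}/( S\eps)$ and some $s \in [0\dd r-1]$, then $B_j = (\tau_{+,r,s})^N \eta$. Applying Lemma~\ref{lemma:good-bf} to the chosen $r, s$ gives $q_* \xrightarrow{\tau_{+,r,s}} q_*$, so iterating $N$ times yields $q_* \xrightarrow{(\tau_{+,r,s})^N} q_*$; and $\eta$ was chosen precisely to satisfy $q_* \xrightarrow{\eta} q_*$. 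Composing these two loops establishes $q_* \xrightarrow{B_j} q_*$.

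The only remaining point is to check that what I am writing is a valid positional word over $\ZZ/\lambda\ZZ \times \Sigma$, i.e.\ that indices line up modulo $\lambda$ when concatenating $\mu_i, B_1, \ldots, B_k, \mu_f$. This is automatic from \cref{corollary:position-length}: each of $\tau_{+,r,s}$, $\eta$ and $\eta_*$ labels a loop at $q_* \in Q_{i_*}$, so each has length $0 \pmod{\lambda}$, and $\ell$ is a multiple of $\lambda$ by assumption. Thus every $B_j$ has length $0 \pmod \lambda$, which means concatenating them preserves the starting index $i_*$ on the left end of each block, and everything attaches correctly to $\mu_i$ and $\mu_f$. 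I do not anticipate any real obstacle here; the observation is essentially a bookkeeping verification that the $\pi = 1$ branch of the construction was designed to loop at $q_*$.
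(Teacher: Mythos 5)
Your proof is correct. The paper states this as an observation and omits the proof entirely, and your argument fills in the details in the only natural way: you decompose $\mu = \mu_i B_1 \cdots B_k \mu_f$, observe that $\mu_i$ and $\mu_f$ label runs $q_0 \to q_*$ and $q_* \to q_f$ by construction, and show that each block labels a loop at $q_*$ (using $q_* \xrightarrow{\tau_{+,r,s}} q_*$ from \cref{lemma:good-bf} when $\kappa_j > 0$, and the definitions of $\eta$ and $\eta_*$ otherwise), with the modular-arithmetic bookkeeping for positional indices following from the standing assumption that $\lambda$ divides $\ell$ together with \cref{corollary:position-length}. Nothing to add.
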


\begin{lemma}\label{lemma:far-whp}
    Conditioned on $\pi = 0$, the probability of the event $\Ff = \{ \mu$ is $\eps$-far from $\timedlang{\Aa} \}$ goes to~$1$ as~$n$ goes to infinity.
\end{lemma}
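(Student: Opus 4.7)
The plan is to show that, conditioned on $\pi=0$, with probability $1-o(1)$ the positional word $\mu$ contains at least $\eps n$ pairwise disjoint blocking factors of $\Aa$. Since any occurrence of a blocking factor in $\mu$ forbids $\mu\in\timedlang{\Aa}$, and since disjoint blocking factors must be destroyed by disjoint sets of substitutions, this immediately yields $d(\mu,\timedlang{\Aa})\ge\eps n$ on this event, i.e.\ $\mu$ is $\eps$-far from $\timedlang{\Aa}$.

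\textbf{Expected number of planted blocking factors.} For $j\in\{1,\dots,k\}$ let $X_j$ be the number of copies of $\tau_{-,2^{\kappa_j}}$ planted in the $j$-th interval; by construction $X_j = 2^{-\kappa_j}/(S\eps)$ when $\kappa_j\ge 1$ and $X_j=0$ when $\kappa_j=0$. The distribution of $\kappa_j$ is tuned so that every scale $t$ contributes the same amount to the expectation:
$$\EE[X_j\mid\pi=0] \;=\; \sum_{t=1}^{\log((S\eps)^{-1})} p_t\cdot\frac{2^{-t}}{S\eps} \;=\; \sum_{t=1}^{\log((S\eps)^{-1})} \frac{3}{\log((S\eps)^{-1})} \;=\; 3,$$
so by linearity $\EE\bigl[\sum_j X_j\mid\pi=0\bigr] = 3k = 3\eps n$.

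\textbf{Concentration.} The $X_j$'s are independent, and bounding the geometric series $\sum_t 2^{-t}\le 1$ gives $\EE[X_j^2]\le 3/(S\eps\log((S\eps)^{-1}))$, hence $\mathrm{Var}[\sum_j X_j] \le 3n/(S\log((S\eps)^{-1}))$. Chebyshev's inequality then yields
$$\Pr\!\Big[\,\textstyle\sum_j X_j < \eps n \;\Big|\; \pi=0\,\Big] \;\le\; \frac{\mathrm{Var}[\sum_j X_j]}{(2\eps n)^2} \;=\; O\!\left(\frac{1}{\eps^2 n\,\log((S\eps)^{-1})}\right),$$
which tends to $0$ as $n\to\infty$ with $\eps$ fixed, so the event $\sum_j X_j\ge\eps n$ holds with probability $1-o(1)$.

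\textbf{From disjoint blocking factors to Hamming distance.} Within the $j$-th interval, the $X_j$ planted copies of $\tau_{-,2^{\kappa_j}}$ are concatenated consecutively without overlap, and different intervals are disjoint by construction, so on the above high-probability event $\mu$ contains at least $\eps n$ pairwise disjoint blocking factors. Any $\mu'\in\timedlang{\Aa}$ must differ from $\mu$ in at least one position inside each such factor (else a blocking factor would survive in $\mu'$), and disjointness allows these substitutions to be counted without overlap, yielding $d(\mu,\timedlang{\Aa})\ge\eps n$ and hence $\Pr[\Ff\mid\pi=0]\to 1$. The only delicate step is the concentration bound: since $X_j$ can be as large as $1/(2S\eps)$, a crude Hoeffding/union bound would be too weak, but the second-moment estimate is well-matched to the geometric tuning of the probabilities $p_t$, which is precisely why the distribution $\Dd$ was designed in this way.
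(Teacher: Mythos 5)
Your proof is correct and follows the same overall structure as the paper's: compute the expected number of planted blocking factors ($3k = 3\eps n$), show concentration, and observe that since the sampling procedure places each blocking factor $\timedword{i_*}{x}$ at a position congruent to $i_*$ modulo $\lambda$, any word in $\timedlang{\Aa}$ must differ from $\mu$ in at least one position of every such factor, so the disjointness gives $d(\mu, \timedlang{\Aa}) \ge \sum_j X_j$. The only substantive difference is the concentration step: you use Chebyshev with a second-moment bound, while the paper applies Hoeffding's inequality directly.

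Your closing remark that a Hoeffding bound would be too weak is, however, incorrect. The variables $B_j$ are bounded in $[0, 1/(2S\eps)]$, so Hoeffding gives $\Pr[\EE[B] - B \ge \eps n] \le \exp(-8S^2\eps^3 n)$, an exponential tail that is considerably stronger than your $O\bigl(1/(\eps^2 n \log((S\eps)^{-1}))\bigr)$. Since both bounds vanish as $n\to\infty$ for fixed $\eps$, your proof is still valid, but the claim that the geometric tuning of the $p_t$ is ``precisely why'' a second-moment argument is needed should be dropped: the tuning serves to make the expectation a constant per interval, not to rescue an otherwise-failing concentration bound.
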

\begin{proof}
    When $\pi = 0$, the procedure for sampling $\mu$ puts blocking factors of the form $\timedword{i_*}{x}$ at positions equal to $i_* \mod \lambda$. Any word containing such a factor at such a position is not in $\timedlang{\Aa}$, therefore any sequence of substitutions that transforms $\mu$ into a word of $\timedlang{\Aa}$ must make at least one substitution in every such factor.
    Consequently, the distance between $\mu$ and $\timedlang{\Aa}$ is at least the number of blocking factors in $\mu$. To prove the lemma, we show that this number is at least $\eps n$ with high probability, by showing that it is larger than $\eps n$ by a constant factor in expectation and using a concentration argument.

    Let $B_j$ denote the number of blocking factors in the $j$-th interval: it is equal to $2^{-\kappa_j}/( S\eps)$ when $\kappa_j>0$ and to~$0$ otherwise.
    
    \begin{claim}\label{claim:expe}
        Let $B = \sum_{j=1}^{k} B_j$, and let $E = \EE\left[B\right]$.
        We have $E \ge 2\eps n$.
    \end{claim}
    \begin{proof}[Claim proof]
        By direct calculation:
        \begin{align*}
            E &= \sum_{j=1}^{k}\EE\left[ B_j\right]= \sum_{j=1}^{k}\sum_{t= 1}^{\log( S/\eps)} 2^{-t}/( S\eps) \cdot p_t\\
                &= \sum_{j=1}^{k}\sum_{t= 1}^{\log( S/\eps)} 2^{-t}/( S\eps)\cdot  3\cdot2^t\eps S/\log( S/\eps) = \sum_{j=1}^{k}\sum_{t= 1}^{\log( S/\eps)} 3/\log( S/\eps)\\
                &= 3k \ge 2\eps n\qedhere
        \end{align*}
    \end{proof}

    We will now show that $\PP(B <\eps n)$ goes to~$0$ as~$n$ goes to infinity. We use Hoeffding's inequality, which we recall here:
    \begin{fact}[{\cite[Theorem 2]{hoeffding1994probability}}]\label{fact:hoeffding}
        Let $X_1,\ldots, X_k$ be independent random variables such that for every $i = 1,\ldots, k$, we have $a_i \le X_i \le b_i$, and let $S= \sum_{i=1}^k X_i$. Then, for any $t> 0$, we have \[\Prob{\EE[S] -S \ge t} \le \exp\left(-\frac{2t^2}{\sum_{i=1}^k (b_i-a_i)^2}\right).\]
    \end{fact}
    By \cref{claim:expe}, we have $B < \eps n \Rightarrow E-B\ge \eps n$, and therefore $\PP(B <\eps n) \le \PP(E-B\ge \eps n)$.
    The random variable~$B$ is the sum of~$k$ independent random variables, each taking values between~$0$ and $1/( S\eps)$.
    Therefore, by Hoeffding's inequality (\cref{fact:hoeffding}), we have
    \begin{flalign*}
        \PP(E - B <\eps n) 
            &\le \exp\left(-\frac{2\eps^2n^2}{k/( S\eps)^2}\right)\\
            &\le \exp\left(-\frac{2 S^2\eps^4n^2}{\eps n}\right) \text{ as } k \le \eps n\\
            &\le \exp\left(-2 S^2\eps^3n\right)
    \end{flalign*}
    This probability goes to~$0$ as~$n$ goes to infinity, which concludes the proof.
\end{proof}

\begin{corollary}\label{coro:large-fail-proba}
    For large enough~$n$, we have $\Prob{\Ff} \ge 5/12$.
\end{corollary}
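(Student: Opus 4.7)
The plan is to derive this corollary directly from \cref{lemma:far-whp} by conditioning on the value of the random bit $\pi$. Recall that $\pi$ is sampled uniformly in $\{0,1\}$ at the start of the procedure defining $\Dd$, so $\Prob{\pi = 0} = 1/2$. By the law of total probability,
\[
\Prob{\Ff} \;\ge\; \Prob{\Ff \mid \pi = 0} \cdot \Prob{\pi = 0} \;=\; \tfrac{1}{2}\,\Prob{\Ff \mid \pi = 0}.
\]
\cref{lemma:far-whp} asserts that $\Prob{\Ff \mid \pi = 0} \to 1$ as $n \to \infty$. In particular, for all sufficiently large $n$ one has $\Prob{\Ff \mid \pi = 0} \ge 5/6$, which yields $\Prob{\Ff} \ge (1/2)(5/6) = 5/12$, as required.

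There is essentially no obstacle here: the corollary is a routine repackaging of \cref{lemma:far-whp}. The only thing to notice is that, when $\pi = 1$, the input $\mu$ lies in $\timedlang{\Aa}$ and so is certainly not $\eps$-far from it; thus the contribution of the $\pi = 1$ branch to $\Prob{\Ff}$ is $0$, and we cannot hope to push the constant $5/12$ above $1/2$ using this argument alone. The specific value $5/12$ is chosen so that it remains strictly greater than $1/3$, leaving enough slack to combine with the bound on the error probability of any deterministic low-query tester in the upcoming step of Yao's minimax argument.
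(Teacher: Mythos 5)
Your proof is correct and is the same (indeed, the only natural) derivation the paper intends; the paper states this corollary without an explicit proof precisely because it follows immediately from \cref{lemma:far-whp} together with $\Prob{\pi=0}=1/2$. Your closing remark about why $5/12$ (rather than anything close to $1/2$) is the right target also correctly anticipates the union bound $\Prob{\Mm\cap\Ff}\ge 1-7/12-1/12=1/3$ used in the proof of \cref{thm:scc-lb}.
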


Intuitively, our distribution is hard to test because positive and negative instances are very similar.
Therefore, a tester with few queries will likely not be able to tell them apart: the perfect completeness constraint forces the tester to accept in that case.
Below, we establish this result formally.
\begin{lemma}\label{lemma:must-accept}
    Let~$T$ be a deterministic tester with perfect completeness (i.e. it always accepts $\tau\in \timedlang{\Aa}$) and let $q_j$ denote the number of queries that it makes in the $j$-th interval.
    Conditioned on the event $\Mm = \{\forall j \text{ s.t. }\kappa_j > 0, q_j < 2^{\kappa_j}\}$, the probability that~$T$ accepts~$\mu \sim{}\Dd$ is~$1$.
\end{lemma}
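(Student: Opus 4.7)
The plan is to use an adaptive coupling argument: on the event $\Mm$, any deterministic tester's execution on a sample $\mu \sim \Dd$ can be matched by its execution on some positive instance $\mu' \in \timedlang{\Aa}$ that agrees with $\mu$ on every queried position. Perfect completeness then forces the tester to accept. The case $\pi = 1$ is immediate since then $\mu \in \timedlang{\Aa}$ by construction, so the core work is to reduce the $\pi = 0$ case to the $\pi = 1$ case.

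First, I would isolate the key structural observation from \cref{lemma:good-bf}. Inside an interval $j$ with $\kappa_j > 0$, set $r = 2^{\kappa_j}$ and $N = 2^{-\kappa_j}/(S\eps)$; both the $\pi = 0$ and the $\pi = 1$ instances are built from $N$ concatenated copies of a ``special'' factor whose templates $\tau_{-,r} = \phi \nu_-^r \chi$ and $\tau_{+,r,s} = \phi \nu_-^s \nu_+ \nu_-^{r-1-s} \chi$ differ only on the $s$-th $\nu$-slot (recall $|\nu_-| = |\nu_+|$). Crucially, the same value $s = s_j$ is reused across all $N$ copies inside an interval, so within any one copy it is exactly the $s_j$-th $\nu$-slot that witnesses the discrepancy between $\pi = 0$ and $\pi = 1$.

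Next, I would classify each query made in interval $j$ according to where it lands inside its enclosing copy of the special factor. Queries landing in $\phi$, $\chi$, or the trailing padding $\eta$ are oblivious to $s_j$; a query inside the $s$-th $\nu$-slot (for some $s \in [0, r-1]$) ``forbids'' the single value $s_j = s$, in the sense that agreement with the $\pi = 0$ answer requires $s_j \ne s$. Under $\Mm$ the tester makes $q_j < r$ queries in interval $j$, hence fewer than $r$ values of $s_j$ are forbidden, and some $s_j^\star \in [0, r-1]$ survives.

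Finally, I would close the coupling. Condition on $\Mm$ and fix the deterministic execution of the tester on a $\pi = 0$ sample $\mu$. Using the $s_j^\star$ just constructed in every interval with $\kappa_j > 0$, and keeping $\mu_i$, $\mu_f$ and the $\kappa_j = 0$ intervals unchanged, assemble a word $\mu'$. By construction $\mu'$ is a legal sample under $\pi = 1$: each $\tau_{+,r,s_j^\star}$ labels a loop at $q_\star$ by \cref{lemma:good-bf}, and the intervals are stitched together via $\eta$ and $\eta_*$ exactly as in the $\pi = 1$ case, so $\mu' \in \timedlang{\Aa}$. Since $\mu'$ agrees with $\mu$ at every position the tester queries along its $\mu$-run, determinism implies the tester performs the identical run on $\mu'$ and accepts by perfect completeness; hence it also accepts $\mu$. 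The main obstacle is this adaptive bookkeeping: the forbidden set depends on the queries, which depend on answers, which depend on the instance, so one must first observe the $\pi = 0$ run to choose the $s_j^\star$, and then verify that swapping in $\mu'$ reproduces the same query-answer transcript and lands in $\timedlang{\Aa}$.
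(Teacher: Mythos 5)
Your proposal is correct and follows essentially the same mechanism as the paper's proof: observe the deterministic tester's run on a $\pi=0$ sample, use $q_j < 2^{\kappa_j}$ to find a $\nu$-slot index $s_j^\star$ that is unqueried across all $N$ copies in interval $j$, swap $\nu_-$ for $\nu_+$ at that slot to produce a legal $\pi=1$ sample in $\timedlang{\Aa}$ agreeing with the original at every queried position, and invoke perfect completeness. The only difference is cosmetic: the paper frames this as a proof by contradiction (a rejected $\pi=0$ sample yields a rejected positive sample), while you present it as a direct coupling; both make the same use of \cref{lemma:good-bf} and of the fact that a single $s$ is reused across all copies in an interval.
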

\begin{proof}
    We proceed by contradiction, and show that if there exists a word $\tau$ with non-zero probability w.r.t. $\Dd$ under $\Mm$ that~$T$ rejects,
    then there exists a word $\tau'\in \timedlang{\Aa}$ that~$T$ rejects that also has non-zero probability, contradicting the fact that~$T$ has perfect completeness.

    Let $\tau$ be the word rejected by $T$: as~$T$ has perfect completeness, $\tau$ is not in $\timedlang{\Aa}$, and there must be at least one interval with $\kappa_j > 0$. Consider every interval~$j$ such that $\kappa_j > 0$: it is of the form $(\tau_{-,r})^N\eta$ where $r = 2^{\kappa_j}$ and $\tau_{-,r} = \phi(\nu_-)^r\chi$.
    Therefore, if $q_j < 2^{\kappa_j}$, then there is a copy of $\nu_-$ that has not been queried by~$T$ across all copies of $\tau_{-,r}$. Consider the word $\tau'$ obtained by replacing this copy of $\nu_-$ with $\nu_+$ in all~$N$ copies of $\tau_{-,r}$ in the block.
    The result block is of the form $(\tau_{+,r,s})^N\eta$ for some $s < r$, and by construction it is not blocking.
    Applying this operation to all blocks results in a word $\tau'$ that is in $\timedlang{\Aa}$.
    Furthermore, $\tau'$ has non-zero probability under $\Dd$ conditioned on $\Mm$: it can be obtained by flipping the random bit $\pi$ and choosing the right index~$s$ in every block.
\end{proof}

Next, we show that if a tester makes few queries, then the event $\Mm$ has large probability.

\begin{lemma}\label{lemma:proba-fail}
    Let~$T$ be a deterministic tester, let $q_j$ denote the number of queries that it makes in the $j$-th interval, and assume that~$T$ makes at most $\frac{1}{72}\cdot \log(S/\eps)/\eps$ queries, i.e. $\sum_j q_j \le \frac{1}{72}\cdot \log(S/\eps)/\eps$.
    The probability of the event $\Mm = \{\forall j \text{ s.t. }\kappa_j > 0, q_j < 2^{\kappa_j}\}$ is at least $11/12$.
\end{lemma}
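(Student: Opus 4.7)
The goal is to show $\Pr[\Mm^c] \le 1/12$, where $\Mm^c = \{\exists j : \kappa_j > 0 \wedge q_j \ge 2^{\kappa_j}\}$. My plan is to combine a union bound with Markov's inequality, and then exploit the deterministic budget constraint $\sum_j q_j \le Q_0 := \log(S/\eps)/(72\eps)$ together with the key algebraic fact that the ratio $p_t/2^t = 3S\eps/\log((S\eps)^{-1})$ is \emph{independent of $t$}.

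First, I would apply a union bound over intervals $j \in \{1,\dots,k\}$ and levels $t \in \{1,\dots,\log((S\eps)^{-1})\}$:
\[
\Pr[\Mm^c] \le \sum_j \sum_{t\ge 1} \Pr[\kappa_j = t,\, q_j \ge 2^t].
\]
Next, using $\mathbf{1}[q_j \ge 2^t] \le q_j/2^t$ and linearity,
\[
\Pr[\kappa_j = t,\, q_j \ge 2^t] \;\le\; \EE\!\left[\mathbf{1}[\kappa_j = t] \cdot \frac{q_j}{2^t}\right].
\]
Summing over $t \ge 1$ and observing that for each realization $\kappa_j$ takes a single value, the inner sum collapses to $q_j \mathbf{1}[\kappa_j > 0]/2^{\kappa_j}$. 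Summing over $j$ then gives
\[
\Pr[\Mm^c] \;\le\; \EE\!\left[\sum_j q_j \cdot \frac{\mathbf{1}[\kappa_j > 0]}{2^{\kappa_j}}\right].
\]

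To bound this expectation, I would expand by the law of total expectation over $\kappa_j$, pulling out the constant $p_t/2^t$. The crucial step is that summing $p_t/2^t$ over the $\log((S\eps)^{-1})$ admissible values of $t$ yields exactly $3S\eps$, so after reorganising, the bound takes the form $3S\eps \cdot \EE[\sum_j q_j]$ (modulo a careful handling of the conditioning of $q_j$ on $\kappa_j$, see below). Using the budget bound $\EE[\sum_j q_j] \le Q_0 = \log(S/\eps)/(72\eps)$ gives a numerical constant matching the $11/12$ threshold, provided $\eps$ is small enough that $\log(S/\eps) \approx \log((S\eps)^{-1})$.

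The main obstacle is that $q_j$ is not independent of $\kappa_j$ — the tester's queries inside interval $j$ depend on the letters it observes there, which depend on $\kappa_j$. The way to sidestep this is precisely the reformulation above: rather than conditioning on $\{\kappa_j = t\}$ separately for each $t$ (which would require bounding $\sum_t \EE[q_j \mid \kappa_j = t]$, something not directly controlled by the budget), the expectation $\EE[\sum_j q_j \mathbf{1}[\kappa_j > 0]/2^{\kappa_j}]$ is taken over the joint distribution of $(\mu, \kappa)$, where the deterministic bound $\sum_j q_j(\mu) \le Q_0$ holds for every realization. Combining this pointwise budget inequality with the bound $\EE[\mathbf{1}[\kappa_j > 0]/2^{\kappa_j}] = \sum_{t\ge 1} p_t/2^t = 3S\eps$ (which holds for each $j$ separately, as $\kappa_j$ is independent of everything used to define other intervals) yields the required bound, after the constants $3 \cdot (1/72) = 1/24 \le 1/12$ are verified.
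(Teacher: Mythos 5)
Your setup (union bound over intervals, then try to use the query budget) is the right shape, but the Markov-inequality step $\mathbf{1}[q_j \ge 2^t] \le q_j/2^t$ loses exactly the $\log(1/\eps)$ factor that the whole argument depends on, and your final arithmetic is off by that factor.

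Concretely, write $c = p_t/2^t$ for the $t$-independent constant and $T_{\max} = \log((S\eps)^{-1})$ for the range of $t$. Treating $q_j$ as fixed, the exact per-interval probability is
$\Prob{\kappa_j>0 \wedge q_j \ge 2^{\kappa_j}} = \sum_{t\le \lfloor\log q_j\rfloor} p_t = c\sum_{t\le \lfloor\log q_j\rfloor}2^t \le 2c\,q_j$,
because $p_t \propto 2^t$ is geometric and the sum is dominated by its top term. Summing over $j$ and using $\sum_j q_j \le Q_0$ gives $2cQ_0 = 1/12$, which is the paper's proof. Your Markov step, by contrast, replaces the truncated sum $\sum_{t\le \log q_j} p_t$ with the \emph{untruncated} sum $\sum_{t\le T_{\max}} p_t \cdot q_j/2^t = q_j \cdot c\,T_{\max}$. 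That inserts spurious contributions from all the levels $t>\log q_j$, and since $c\,T_{\max} = \Theta(\eps)$ while $2c = \Theta(\eps/\log(1/\eps))$, your bound is worse by a factor of $T_{\max}/2 = \Theta(\log(1/\eps))$. You end up with $c\,T_{\max}Q_0 = \Theta(\log(1/\eps))$, which is not $\le 1/12$. The geometric structure of $p_t$ is the heart of the argument, and it is precisely what your uniform Markov bound throws away: you should not ``spread'' the mass over all $t$, you should exploit the fact that $\{q_j\ge 2^{\kappa_j}\}$ kills all levels above $\log q_j$.

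Your final-sentence arithmetic ``$3\cdot(1/72) = 1/24 \le 1/12$'' is also wrong for this reason: $\sum_t p_t/2^t = 3S\eps$ (the $\log((S\eps)^{-1})$ in the denominator of $p_t$ cancels against the number of admissible $t$), whereas $Q_0 = \log(S/\eps)/(72\eps)$ still carries its log factor, so the product is $S\log(S/\eps)/24$, not $1/24$.

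Finally, a comment on the adaptivity concern you raised: you are right that, as stated, the paper's computation treats $q_j$ as a deterministic quantity, which is only literally correct for a non-adaptive tester (or after a standard reduction to that case). Your reformulation as $\EE[\sum_j q_j\,\mathbf{1}[\kappa_j>0]/2^{\kappa_j}]$ is a clean way to make the expectation well-defined without that assumption, but it does not by itself fix the coupling between $q_j$ and $\kappa_j$ (your ``pointwise budget'' remark only controls $\sum_j q_j$, not the cross terms), and in any case the bound you then feed into it is too weak, so the observation does not rescue the proof.
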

\begin{proof}
    We show that the probability of $\overline{\Mm}$, the complement of $\Mm$, is at most $1/12$.
    We have:
    \begin{align*}
        \Prob{\overline{\Mm}} &=\Prob{\exists j: \kappa_j > 0 \land q_j \ge 2^{\kappa_j}}&\\
        &\le \sum_{j} \Prob{\kappa_j > 0 \land q_j \ge 2^{\kappa_j}} &\text{ by union bound}\\
        &\le \sum_{j} \sum_{t=1}^{\lfloor \log q_j \rfloor} p_t = \sum_{j} \sum_{t=1}^{\lfloor \log q_j \rfloor} \frac{3\cdot 2^t\eps}{\log( S/\eps)}&\text{ by def. of } p_t\\
        &\le \frac{3\eps}{\log( S/\eps)} \sum_{j} \sum_{t=1}^{\lfloor \log q_j \rfloor}  2^t&
    \end{align*}
    By upper bounding the sum of power of $2$ up to $k = \lfloor \log q_j \rfloor$ by $2^{k+1}$, we obtain:
    \begin{align*}
        \Prob{\overline{\Mm}} 
        &\le \frac{3\eps}{\log( S/\eps)} \sum_{j} 2q_j\\
        &= \frac{3\eps}{\log( S/\eps)} \cdot \frac{2}{72}\cdot \frac{\log(S/\eps)}{\eps}\\
        &\le 1/12\qedhere
    \end{align*}
\end{proof}

We are now ready to prove \cref{thm:scc-lb}.
\begin{proof}[Proof of \cref{thm:scc-lb}]
    We want to show that any tester with perfect completeness for $\lang{\Aa}$ requires at least $\frac{1}{72}\cdot \log(S/\eps)/\eps$ queries, by showing that any tester with fewer queries errs with probability at least $1/3$.
    We show that any \textbf{deterministic} algorithm~$T$ with perfect completeness that makes less than $\frac{1}{72}\cdot \log(S/\eps)/\eps$ queries errs on~$u$ when $\twu\sim{}\Dd$ with probability at least $1/3$, and conclude using Yao's Minimax principle.

    Consider such an algorithm~$T$.
    The probability that~$T$ makes an error on~$u$ is lower-bounded by the probability that~$u$ is $\eps$-far from $\lang{\Aa}$ and~$T$ accepts, which in turn is larger than the probability of $\Mm\cap \Ff$.
    By \cref{coro:large-fail-proba}, we have $\Prob{\Ff}\ge 5/12$, and by \cref{lemma:proba-fail}, $\Prob{\Mm}$ is at least $11/12$.
    Therefore, we have
     \[ \Prob{T \text{ errs}} \ge \Prob{\Mm\cap \Ff} \ge 1 - 7/12 - 1/12 = 4/12 = 1/3.\]
    This concludes the proof of \cref{thm:scc-lb}, and consequently of \cref{thm:scc}.
\end{proof}

\section{Characterisation of Hard Languages for All NFAs}
\label{sec:general}
In this section we extend the results of the previous section to all finite automata.
This extension is based on a generalization of blocking factors: we introduce \emph{blocking sequences}, which are sequences of factors that witness the fact that we cannot take any path through the strongly connected components of the automaton.
For the lower bound, we define a suitable partial order on blocking sequences, which extends the factor relation on words to those sequences, and allows us to define \emph{minimal} blocking sequences.

\subsection{Blocking sequences}\label{sec:blocking-seq}

\subsubsection{Examples motivating blocking sequences}

Before presenting the technical part of the proof, let us go through two examples, which motivate the notions that we introduce and illustrate some of the main difficulties.

\begin{example}
	Consider the automaton $\Aa_1$ depicted in \cref{fig:aut-ab-bc}: it recognizes the language $L_1$ of words in which all $c$'s appear before the first $b$, over the alphabet $\set{a,b,c}$.

	\begin{figure}[htbp]
	\begin{center}
		\begin{tikzpicture}[shorten >=1pt, node distance=3cm, on grid, auto,
			every initial by arrow/.style={thick}, every accepting by arrow/.style={thick}, >={stealth'}]
			\tikzset{
				state/.style={
					circle,
					thick,
					draw,
					minimum size=8mm,
					inner sep=2pt
				},
				accepting/.style={
					state,
					accepting by arrow,
					accepting below
				}
			}
			
			\node[state, initial , accepting] (q0) {$q_0$};
			\node[state, accepting] (q1) [right=of q0] {$q_1$};
			\node[state] (q2) [right=of q1] {$q_2$};
			
			\path[->, thick, >=stealth']
			(q0) edge[above] node {\textcolor{red}{$b$}} (q1)
			(q1) edge[above] node {\textcolor{red}{$c$}} (q2)
			(q0) edge[loop above] node {$a, c$} ()
			(q1) edge[loop above] node {$a, b$} ()
			(q2) edge[loop above] node {$a, b, c$} ();
			
		\end{tikzpicture}
		\caption{An automaton $\Aa_1$ that recognizes the language $L_1 = (a+c)^* (a+b)^*$.}
		\label{fig:aut-ab-bc}
	\end{center}
	\end{figure}
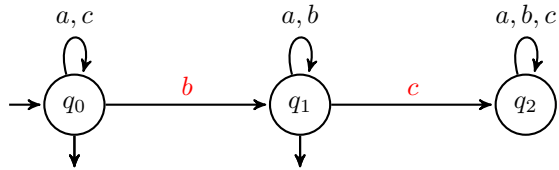

	The set of minimal blocking factors of $\Aa_1$ is infinite: it is the language $ba^*c$.
	Yet, $L_1$ is easy to test: we sample $\cO(1 / \eps)$ letters at random, answer ``no'' if the sample contains a~$c$ occurring after a~$b$, and ``yes'' otherwise.
	To prove that this yields a property tester, we rely on the following property: 
	\begin{property}
		If $u$ is $\eps$-far from $L_1$, then it can be decomposed into $u = u_1 u_2$ where $u_1$ contains $\Omega(\eps n)$ letters $b$ and $u_2$ contains $\Omega(\eps n)$ letters $c$.
	\end{property}

	The pair of factors $(b, c)$ is an example of blocking sequence: a word that contains an occurrence of the first followed by an occurrence of the second cannot be in $L_1$.
	We can also show that a word $\eps$-far from $L_1$ must contains many disjoint blocking sequences -- this property (\cref{lem:far-from-path-implies-many-blocking}) underpins the algorithm for general regular languages.
	
	What this example shows is that blocking factors are not enough: we need to consider sequences of factors, yielding the notion of \emph{blocking sequences}.
	Intuitively, a blocking sequence for $L$ is a sequence $\sigma = (v_1,\ldots, v_k)$ of (positional) words such that if each word of the sequence appears in $u$, with the occurrences of the $v_i$'s ordered as in $\sigma$, then $u$ is not in $L$.\footnote{This is not quite the definition, but it conveys the right intuition.}
	While $L_1$ has infinitely many minimal blocking factors, it has a single minimal blocking sequence $\sigma = (b, c)$.
\end{example}

Notice that the (unique) blocking sequence $(b,c)$ can be visualized on \cref{fig:aut-ab-bc}:
it is composed of the red letters that label transitions between the different SCCs. This is no coincidence: in many simple cases, blocking sequences are exactly sequences that contain one blocking factors for each SCC.
This fact could lead one to believe that the set of minimal blocking sequences is exactly the set of sequences of minimal blocking factors, one for each SCC. In particular, this would imply that as soon as one SCC has infinitely many minimal blocking factors, the language of the whole automaton is hard to test.
We show in the next example that this is not always the case, because SCCs might share minimal blocking factors.
\begin{example}\label{example:hard-to-easy}
	Consider the automaton in \cref{fig:series}: it has two SCCs and a sink state.
	The minimal blocking factors of the first SCC are given by $B_1 = be^*c + a$, and $B_2 = \set{a}$ for the second SCC.
	This automaton is easy to test: intuitively, a word that is $\varepsilon$-far from this language has to contain many~$a$'s, as otherwise we can make it accepted by deleting all~$a$'s, thanks to the second SCC.
	However,~$a$ is also a blocking factor of the first SCC, therefore, as soon as we find two~$a$'s in the word, we know that it is not in $L_2$.
	\begin{figure}[htbp]
	\begin{center}
	\begin{tikzpicture}[shorten >=1pt, node distance=3cm, on grid, auto,
		every initial by arrow/.style={thick}, every accepting by arrow/.style={thick}, >={stealth'}]
		\tikzset{
			state/.style={
				circle,
				thick,
				draw,
				minimum size=8mm,
				inner sep=2pt
			},
			accepting/.style={
				state,
				accepting by arrow
			}
		}
		
		\node[state , accepting, accepting below] (q1) {$q_1$};
		\node[state , initial, above=of q1] (q0) {$q_0$};
		\node[state, accepting, accepting below] (q2) [right=of q1] {$q_2$};
		\node[state] (q3) [right=of q2] {$q_3$};
		
		\path[->, thick, >=stealth']
		(q1) edge[above] node {\textcolor{red}{$a,c$}} (q2)
		(q2) edge[above] node {\textcolor{red}{$a$}} (q3)
		(q0) edge[above right] node {\textcolor{red}{$a$}} (q2)
		(q1) edge[loop above] node {$b,e$} ()
		(q0) edge[loop above] node {$c, d, e$} ()
		(q2) edge[loop above] node {$b, c, d, e$} ()
		(q3) edge[loop above] node {$a, b, c,d,e$} ();
		
		\path[->, thick, >=stealth',bend right]
		(q1) edge[right] node {{$d$}} (q0)
		(q0) edge[left] node {{$b$}} (q1);
		
	\end{tikzpicture}
	\caption{An automaton $\Aa_2$ that recognizes the language $L_2 = [((c+d+e)^* b (b+e)^* d)^* a] (b+c+d+e)^*$.}
	\label{fig:series}
	\end{center}
	\end{figure}
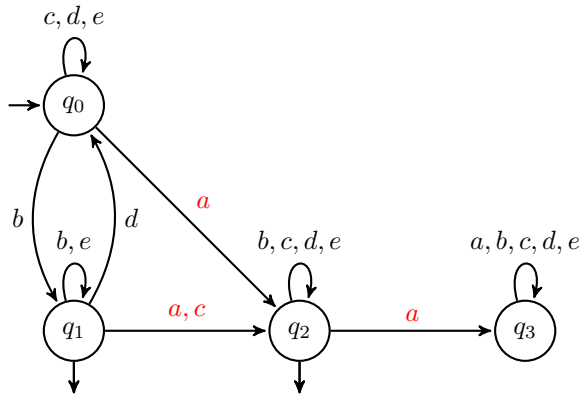

	The crucial facts here are that the set $B_2$ of minimal blocking factors of the second SCC is finite and it is a subset of $B_1$: the infinite nature of $B_1$ is made irrelevant because any word far from the language contains many~$a$'s.
	Therefore, $\Aa_2$ has a \emph{single} minimal blocking sequence, $\sigma = (a)$.
\end{example}

\subsubsection{Portals and SCC-paths}

Intuitively, blocking sequences are sequences of blocking factors of successive strongly connected components. To formalize this intuition, we use \emph{portals}, which describe how a run in the automaton interacts with a strongly connected component, and \emph{SCC-paths}, that describe a succession of portals.

In what follows, we fix an NFA $\Aa = (Q, \Sigma, \delta, q_{0}, \set{q_f})$.
We assume w.l.o.g. that $\Aa$ has a unique final state $q_f$.
Let $\SCCset$ be the set of SCCs of $\Aa$.
We define the partial order relation $\leq_\Aa$ on $\SCCset$ by $S \leq_\Aa T$ if and only if $T$ is reachable from $S$.  We write $<_\Aa$ for its strict part $\leq_\Aa \setminus \geq_\Aa$. These relations can be naturally extended to states through their SCC: if $s\in S$ and $t\in T$, then $s\leq_\Aa t$ if and only if $S\leq_\Aa T$.

We define $p$ as the least common multiple of the lengths of all simple cycles of $\Aa$.
Given a number $k \in \ZZ/p\ZZ$, we say that a state $t$ is $k$-reachable from a state $s$ if there is a path from $s$ to $t$ of length $k$ modulo $p$.
In what follows, we use ``positional words'' for $p$-positional words with this value of $p$.

\begin{remark}
	In the rest of this section we will not try to optimize the constants in the formulas. They will, in fact, become quite large in some of the proofs.
	We make this choice to make the proofs more readable, although some of them are already technical.
	
	For instance, the choice of $p$ as the $\lcm$ of the lengths of simple cycles is not optimal: we could use, for instance, the $\lcm$ of the periods of the SCCs.
\end{remark}

\begin{definition}[Portal]
	A \emph{portal} is a 4-tuple $P = \portal{s}{x}{t}{y} \in (Q \times \ZZ/p\ZZ)^2$, such that $s$ and $t$ are in the same SCC. 
	It describes the first and last states visited by a path in an SCC, and the positions $x,y$ (modulo $p$) at which it first and lasts visits that SCC.
\end{definition}

The positional language of a portal is the set \[\timedlanguage{s}{x}{t}{y} = \set{(x:w) \mid s\xrightarrow{w}t \land x+|w| = y \pmod{p}}.\]
Portals were already defined by Alon et al.~\cite{alon2001regular}, in a slightly different way.
Our definition will allow us to express blocking sequences more naturally.

\begin{definition}
	A positional word $\timedword{n}{u}$ is blocking for a portal $P$ if it is not a factor of any word of $\lang{P}$.
	In other words, there is no path that starts in $s$ and ends in $t$, of length $y-x$ modulo $p$, which reads $u$ after $n-x$ steps modulo $p$.
\end{definition}

The above definition matches the definition of blocking factors for strongly connected automata. This is no coincidence: we show in the next lemma that the language of a portal has a strongly connected automaton.
\begin{lemma}\label{lemma:portal-to-SC}
    Let $\Aa$ be an automaton and $P$ a portal of $\Aa$.
    There is a strongly connected NFA with at most $p|\Aa|$ states that recognizes $L' = \lang{P}$.
\end{lemma}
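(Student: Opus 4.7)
The plan is to build, for the given portal $P = \portal{s}{x}{t}{y}$, a product automaton between the SCC of $\Aa$ containing $s$ and $t$ and the group $\ZZ/p\ZZ$ used to track the current position modulo $p$. Denote by $S$ the SCC containing $s,t$ and let $\lambda$ be its period. Define $\Aa'$ with states $S \times \ZZ/p\ZZ$, initial state $(s,x)$, unique final state $(t,y)$, and transitions $(q,i) \xrightarrow{(i,a)} (q',i+1 \bmod p)$ whenever $q \xrightarrow{a} q'$ in $\Aa$ with $q' \in S$. By construction, an accepting run on a positional word $\timedword{x}{w}$ exists in $\Aa'$ iff there is a path $s \xrightarrow{w} t$ in $\Aa$ with $|w|+x \equiv y \pmod p$, which is exactly the defining condition of $\timedlanguage{s}{x}{t}{y}$. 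The size bound is immediate: $|S|\cdot p \leq |\Aa|\cdot p$.

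The main obstacle is showing strong connectivity. I will take the trim of $\Aa'$ (restricting to states reachable from $(s,x)$ and coreachable from $(t,y)$) and prove that this trim automaton is strongly connected. For this I need to use that $\lambda$ divides $p$: indeed $p$ is the lcm of lengths of all simple cycles of $\Aa$, each such cycle inside $S$ has length divisible by $\lambda$, and the simple cycles inside $S$ generate the period, so $\lambda \mid p$. Consequently, for any state $q \in S$, all paths from $s$ to $q$ in $\Aa$ have length congruent to a single value modulo $\lambda$; call it $d(s,q)$.

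The key step is the following claim: a state $(q,i)$ is reachable in $\Aa'$ from $(s,x)$ iff $i - x \equiv d(s,q) \pmod{\lambda}$ and there exists a path in $\Aa$ from $s$ to $q$ of length exactly $i - x \pmod p$. Given two reachable states $(q_1,i_1)$ and $(q_2,i_2)$, I build a path from the first to the second in $\Aa'$ as follows: take a path $q_1 \to s$ of length $a$ (which is necessarily $\equiv x - i_1 \pmod \lambda$ by the cycle argument above), followed by a path $s \to q_2$ of length $b \equiv i_2 - x \pmod \lambda$. Their concatenation has length $a+b \equiv i_2 - i_1 \pmod\lambda$. Then inserting $k$ copies of a simple cycle at $s$ (of some length $\ell$ with $\lambda \mid \ell$ and $\ell \mid p$), and using \cref{fact:periodicity} applied inside $S$ to adjust the length to be $\equiv i_2 - i_1 \pmod p$, yields a path of the required total length modulo $p$. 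This shows reachable states form a strongly connected set, and since the final state $(t,y)$ is reachable (assuming $\lang{P}\neq\emptyset$, otherwise the result is trivial with an empty NFA), every state in the trim is both reachable and coreachable, proving strong connectivity.

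The delicate point in the argument is checking the compatibility modulo $\lambda$: without it, the product could split into $\lambda$-many non-communicating reachability components, and the whole proof relies on $\lambda \mid p$ to force all reachable $(q,i)$ to live in the single component where $a+b \equiv i_2 - i_1 \pmod \lambda$ is satisfied for all pairs. Once this is established, the periodicity fact gives enough flexibility to realize any required path length modulo $p$, completing the proof.
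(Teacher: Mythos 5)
Your proof is correct and follows essentially the same route as the paper's: a product of the SCC with $\ZZ/p\ZZ$, using $\lambda \mid p$ and Fact~\ref{fact:periodicity} to obtain strong connectivity. The only presentational difference is that the paper restricts the state set to the periodicity-compatible pairs $(s',i+j\lambda)$ with $s'\in Q_i$ up front, whereas you take the full product $S\times\ZZ/p\ZZ$ and trim afterwards; the ``simple cycle at $s$'' detour is redundant since Fact~\ref{fact:periodicity} already realizes every admissible length, but this does not affect correctness.
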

\begin{proof}
    Let $S$ denote the SCC of $s$ and $t$ in $\Aa$, and let $\lambda$ denote its period. By definition of $p$, $\lambda$ divides $p$: let $k$ be the integer such that $p = \lambda k$.
    The automaton $\Aa'$ for $L'$ simulates the behavior of $\Aa$ restricted to $S$ starting from the state $s$, while keeping track of the number of letters read modulo $p$, starting from $x$.
    More precisely, let $Q_0,\ldots Q_{\lambda-1}$ be the partition of the states of $S$ given by \cref{fact:periodicity}.
    The set of states of $\Aa'$ is given by
    \[Q' = \set{(s', i + j\lambda) \mid s'\in Q_i \land i=0\ldots,\lambda-1 \land j = 0,\ldots, k-1}.\]
    It is a subset of $Q \times \ZZ/p\ZZ$, hence it has cardinality at most $p|\Aa|$.
    The transitions in $\Aa'$ are of the form $(s_1, i+j\lambda)\xrightarrow{(i,a)} (s_2, i+j\lambda+1\pmod{p})$
    for any $s_1,s_2$ such that $s_1\xrightarrow{a}s_2$ in $\Aa$. 
    
    Furthermore, $\Aa'$ is strongly connected.
    Let $i_1,i_2$ be indices of periodicity classes of $S$, and let $s_1\in Q_{i_1}, s_2\in Q_{i_2}$ be states of $S$.
    We show that for any $j_1,j_2 < k$, there is a path from $\sigma_1 = (s_1, i_1+j_1\lambda)$ to $\sigma_2 = (s_2, i_2+j_2\lambda)$ in $\Aa'$.
    Let $\ell$ be a sufficiently large integer equal to $(i_2-i_1)+(j_2-j_1)\lambda \pmod{p}$.
    As $\lambda$ divides $p$, $\ell$ is equal to $(i_2-i_1)\pmod{\lambda}$.
    By taking $\ell$ larger than the reachability constant of $S$, \cref{fact:periodicity} gives us that there is a path of length $\ell$ from $s_1$ to $s_2$ in $S$, labeled by some word $u$.
    The positional word $\timedword{x}{u}$ labels a transition from $\sigma_1$ to $\sigma_2$ in $\Aa'$, hence it is strongly connected.
    
    Note that the period of $\Aa'$ is $p$, hence we can apply the results we obtained on strongly connected NFAs in \cref{sec:scc} to portals, with $p |\Aa|$ as the number of states and $p$ as the period.
\end{proof}

Portals describe the behavior of a run inside a single strongly connected component of the automaton.
Next, we introduce SCC-paths, which describe the interaction of a run with multiple SCCs and between two successive SCCs.
\begin{definition}[SCC-path]
	An \emph{SCC-path} $\pi$ of $\Aa$ is a sequence of portals linked by single-letter transitions
	\(\SCCpath = \portal{s_0}{x_0}{t_0}{y_0} \xrightarrow{a_1} \portal{s_1}{x_1}{t_1}{y_1} \cdots  \xrightarrow{a_{k}} \portal{s_k}{x_k}{t_k}{y_k},\)
	such that for all $i \in \set{1, \ldots, k}$,  $x_i = y_{i-1} +1 \pmod{p}$, $t_{i-1}\xrightarrow{a_i} s_{i}$, and $t_{i-1}  <_{\Aa} s_i$.
\end{definition}
Intuitively, an SCC-path is a description of the states and positions at which a path through the automaton enters and leaves each SCC.

\begin{definition}\label{def:lang-of-sccpath}
	The language $\lang{\SCCpath}$ of an SCC-path $\SCCpath = P_0 \xrightarrow{a_1} P_1 \xrightarrow{a_2}\cdots  P_k$ is the set 
	\[\lang{\SCCpath} = \lang{P_0} a_1 \lang{P_2} a_2 \cdots \lang{P_k}.\]
\end{definition}

We say that $\SCCpath$ is \emph{accepting} if $P_0 = \portal{s_0}{x_0}{t_0}{y_0}$, $P_k = \portal{s_k}{x_k}{t_k}{y_k}$ with $x_0 = 0$, $s_0 = q_{0}$, $t_k = q_{f}$ and $\lang{\SCCpath}$ is non-empty.

\begin{lemma}\label{lemma:lang-path-union}
	We have \(\timedlang{\Aa} = \bigcup_{\SCCpath \text{ accepting}} \lang{\SCCpath}.\)
\end{lemma}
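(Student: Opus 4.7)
The plan is to prove the equality by double inclusion, viewing an SCC-path as the skeleton of an accepting run recording exactly how it enters and leaves each SCC.

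For the inclusion $\bigcup_{\SCCpath \text{ accepting}} \lang{\SCCpath} \subseteq \timedlang{\Aa}$, I will fix an accepting SCC-path $\SCCpath = P_0 \xrightarrow{a_1} P_1 \xrightarrow{a_2}\cdots \xrightarrow{a_k} P_k$ with $P_i = \portal{s_i}{x_i}{t_i}{y_i}$ and a positional word $\tau = \tau_0 (y_0{+}1,a_1) \tau_1 \cdots (y_{k-1}{+}1,a_k) \tau_k \in \lang{\SCCpath}$, where $\tau_i \in \lang{P_i}$. By the definition of $\lang{P_i}$, each $\tau_i$ is a positional word starting at position $x_i$ and labelling a run from $s_i$ to $t_i$ in $\Aa$ of length $y_i - x_i \pmod{p}$. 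The SCC-path condition $x_i = y_{i-1}+1 \pmod p$ guarantees that successive positional fragments line up consistently, and the single-letter transitions $t_{i-1} \xrightarrow{a_i} s_i$ glue consecutive runs into one continuous run of $\Aa$. Because $\SCCpath$ is accepting, this global run starts at $q_{0}$ with positional offset $0$ and ends at $q_f$, so $\tau \in \timedlang{\Aa}$.

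For the reverse inclusion, I will start from a word $u \in \Sigma^*$ with $\twu \in \timedlang{\Aa}$ and fix an accepting run $q_{0} = r_0 \xrightarrow{u[0]} r_1 \xrightarrow{u[1]} \cdots \xrightarrow{u[n-1]} r_n = q_f$ in $\Aa$. Let $S_0 <_\Aa S_1 <_\Aa \cdots <_\Aa S_k$ be the sequence of SCCs visited by this run, in the order they are entered. (Because the relation $\leq_\Aa$ is a partial order, each SCC is visited at most once, so the sequence is well-defined.) I partition the indices $\{0,\dots,n\}$ according to which SCC the corresponding state belongs to: this induces indices $0 = j_0 \le j_0' < j_1 \le j_1' < \cdots < j_k \le j_k' = n$ where $r_{j_i}, \dots, r_{j_i'}$ are the consecutive visits to $S_i$, and $r_{j_i'} \xrightarrow{u[j_i']} r_{j_{i+1}}$ is a single-letter transition leaving $S_i$. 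Setting $P_i = \portal{r_{j_i}}{j_i \bmod p}{r_{j_i'}}{(j_i'\bmod p)}$ and $a_{i+1} = u[j_i']$ produces an SCC-path $\SCCpath$; the positional fragment of $\twu$ between positions $j_i$ and $j_i'$ lies in $\lang{P_i}$, and the congruence conditions $x_{i+1} = y_i + 1 \pmod p$ hold automatically since all positions come from consecutive indices of $\twu$. Finally $j_0 = 0$, $r_{j_0}=q_0$ and $r_{j_k'} = q_f$, so $\SCCpath$ is accepting and $\twu \in \lang{\SCCpath}$.

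I do not expect a major obstacle here: the only subtle point is to check that the relation ``the run enters a new SCC'' gives a monotone sequence with respect to $<_\Aa$, which follows immediately from the fact that SCCs are maximal strongly connected sets, so once the run leaves an SCC it cannot return. The bookkeeping on positional indices is routine, as positions are simply the global offsets modulo $p$ inherited from $\twu$.
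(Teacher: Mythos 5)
Your proof is correct and takes essentially the same route as the paper: decompose an accepting run of $\Aa$ at the positions where it first enters each new SCC to build the corresponding accepting SCC-path, with the easy inclusion (union into language) handled by gluing portal runs via the single-letter transitions, which the paper dispatches as "by definition." One small bookkeeping slip in your forward direction: the transition letter $a_i$ occupies positional index $y_{i-1}$, not $y_{i-1}+1$, since a word of $\lang{P_{i-1}}$ occupies positions $x_{i-1},\ldots,y_{i-1}-1$ and then $x_i = y_{i-1}+1$ leaves exactly one index for $a_i$; your reverse-direction computation, with $a_{i+1}=u[j_i']$ at index $j_i' = y_i \pmod p$, already has it right.
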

\begin{proof}
	We show that for any word $\mu$ in $\timedlang{\Aa}$, there is an accepting SCC-path $\SCCpath$ whose language contains $\mu$.
	Let $\mu = a_1 \cdots a_n$ be a word of length $n$ in $\timedlang{\Aa}$: there exists an accepting run $\rho = q_0 \xrightarrow{a_1} q_1 \cdots \xrightarrow{a_n} q_n = q_f$ in $\Aa$.
	
	We define the sequence of indices $i_0 < i_1 < \ldots < i_k <i_{k+1}$ as follows:
	\begin{itemize}
		\item $i_0 = 0, i_{k+1} = n+1$,
		\item for every $j=1,\ldots, k$, $i_j$ is the smallest index such that $q_{i_j-1} <_\Aa q_{i_j}$, i.e.~$q_{i_j-1}$ and~$q_{i_j}$ belong to distinct SCCs.
	\end{itemize}
	In other words, those are the indices at which $\rho$ enters a new SCC.
	We then define the SCC-path $\SCCpath(\rho)$ as follows:
	\[\SCCpath(\rho) = 
	\portal{q_0}{0}{q_{i_1-1}}{y_0} \xrightarrow{a_{i_1}} \portal{q_{i_1}}{x_1}{q_{i_2-1}}{y_1} \cdots  \xrightarrow{a_{i_k}} \portal{q_{i_k}}{x_k}{q_n}{y_k} \]
	where $x_j = i_j \pmod{p}$ and $y_j = x_{j+1}-1 \pmod{p}$ for all $j = 0, \ldots, k+1$.
	
	By construction, $\mu \in \lang{\SCCpath(\rho)}$ and $\SCCpath(\rho)$ is an accepting SCC-path.
	
	The converse inclusion follows by definition of (accepting) SCC-paths.
\end{proof}

As a consequence the distance between a word $\mu$ and the (positional) language of $\Aa$ is equal to the minimum of the distances between $\mu$ and the languages of the SCC-paths of $\Aa$.
\begin{corollary}\label{coro:dist-to-l-path}
	For any positional word $\mu$, we have \[d(\mu, \timedlang{\Aa}) = \min_{\SCCpath \text{ accepting}} d(\mu,\lang{\SCCpath}).\]
\end{corollary}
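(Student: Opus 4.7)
The plan is to derive the corollary directly from \cref{lemma:lang-path-union}. The key identity is the standard fact that for any metric $d$ and any family of sets $(X_i)_{i\in I}$, one has $d(\mu, \bigcup_i X_i) = \inf_i d(\mu, X_i)$, which follows from expanding the definitions and swapping the order of the two infima. Applying this with $X_{\SCCpath} = \lang{\SCCpath}$ ranging over accepting SCC-paths, \cref{lemma:lang-path-union} yields
\[d(\mu, \timedlang{\Aa}) = \inf_{\SCCpath \text{ accepting}} \inf_{v \in \lang{\SCCpath}} d(\mu, v) = \inf_{\SCCpath \text{ accepting}} d(\mu, \lang{\SCCpath}).\]

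To upgrade the $\inf$ to a $\min$, I would argue that there are only finitely many accepting SCC-paths in $\Aa$. Indeed, along an SCC-path $P_0 \xrightarrow{a_1} P_1 \xrightarrow{a_2} \cdots \xrightarrow{a_k} P_k$, the SCCs of $s_0, s_1, \ldots, s_k$ are strictly increasing with respect to $<_\Aa$, so $k$ is bounded by $|\SCCset|-1$. Each portal lives in the finite set $(Q \times \ZZ/p\ZZ)^2$, and each separating letter is drawn from the finite alphabet $\Sigma$, so the collection of accepting SCC-paths is finite; in particular the infimum is attained.

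The only mild case to handle separately is when $\timedlang{\Aa}$ is empty. Then no SCC-path from $(q_0,0)$ to $(q_f, \cdot)$ has nonempty language, so by definition there are no accepting SCC-paths, and both sides of the equality are $+\infty$ under the convention $\inf \emptyset = +\infty$. I do not anticipate any real obstacle here, as this corollary is essentially a bookkeeping consequence of the decomposition established in the previous lemma.
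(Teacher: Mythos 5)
Your proposal is correct and matches the intent of the paper, which presents this as an immediate consequence of \cref{lemma:lang-path-union} and gives no explicit proof. Your argument supplies exactly the standard details one would fill in: the distance-to-union identity, finiteness of the set of accepting SCC-paths (bounded length because the SCCs strictly increase along the path, and finitely many choices of portals and separating letters), and the degenerate empty-language case handled by the $\inf\emptyset = +\infty$ convention.
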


Decomposing $\Aa$ as a union of SCC-paths allows us to use them as an intermediate step to define blocking sequences.
We earlier defined blocking factors for portals: we now generalize this definition to blocking sequences for SCC-paths, to finally define blocking sequence of automata.

\begin{definition}[(Strongly) Blocking Sequences for SCC-paths]
	We say that a sequence $\sigma = (\mu_1, \ldots, \mu_\ell)$ of positional factors is blocking for an SCC-path 
	$\SCCpath = P_0 \xrightarrow{a_1} \cdots  P_k$ if there is a sequence of indices $i_0 \leq i_1 \leq \cdots \leq i_k$ 
	such that for every $j, \mu_{i_j}$ is blocking for~$P_j$.

	Furthermore, if there is a sequence of indices $i_0 < i_1 < \cdots < i_k$ with the same property, then~$\sigma$ is said to be \emph{strongly blocking} for~$\SCCpath$.
\end{definition}

Note that, crucially, in the definition of blocking sequences, consecutive indices $i_j$ and $i_{j+1}$ can be equal, i.e. a single factor of the sequence may be blocking for multiple consecutive SCCs in the SCC-path. This choice is motivated by \cref{example:hard-to-easy}, where the language is easy because consecutive SCCs share blocking factors.

We say that two occurrences of blocking sequences in a word $\mu$ are \emph{disjoint} if the occurrences of their factors appear on disjoint sets of positions in $\mu$.

In the strongly connected case, we had the property that if $\mu$ contains an occurrence of a factor blocking for $\Aa$, then $\mu$ is not in the language of $\Aa$.
The following lemma gives an extension of this result to \emph{strongly} blocking sequences and the language of an SCC-path.
\begin{lemma}\label{lem:strongly-blocking-implies-not-in-path}
	Let $\SCCpath$ be an SCC-path.
    If $\mu$ contains a strongly blocking sequence for~$\SCCpath$,
    then $\mu \notin \lang{\SCCpath}$.
\end{lemma}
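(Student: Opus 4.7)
The plan is to argue by contradiction via induction on the length $k$ of the SCC-path $\SCCpath = P_0 \xrightarrow{a_1} P_1 \cdots \xrightarrow{a_k} P_k$. Assume $\mu \in \lang{\SCCpath}$; by \cref{def:lang-of-sccpath} we can factor $\mu = w_0 a_1 w_1 a_2 \cdots a_k w_k$ with $w_j \in \lang{P_j}$ for each $j$. The hypothesis supplies positional factors $\mu_{i_0}, \ldots, \mu_{i_k}$ with $i_0 < i_1 < \cdots < i_k$, each $\mu_{i_j}$ blocking for $P_j$, appearing disjointly and in order in $\mu$ at intervals $[p_0,q_0],\ldots,[p_k,q_k]$ (so $p_{j+1} > q_j$). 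The strategy is to exhibit an index $j$ for which $\mu_{i_j}$ occurs entirely within $w_j$: this immediately contradicts blocking, since then $\mu_{i_j} \factor w_j \in \lang{P_j}$.

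For the base case $k=0$, $\mu = w_0 \in \lang{P_0}$ contains $\mu_{i_0}$ as a factor, directly contradicting the fact that $\mu_{i_0}$ is blocking for $P_0$.

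For the inductive step, let $\alpha_1$ denote the position of $a_1$ in $\mu$, and inspect the leftmost occurrence $[p_0,q_0]$ of $\mu_{i_0}$. If $q_0 < \alpha_1$, the occurrence sits entirely inside $w_0$ and we are done. Otherwise $q_0 \geq \alpha_1$, and by disjointness $p_j > q_0 \geq \alpha_1$ for every $j \geq 1$, so the occurrences of $\mu_{i_1},\ldots,\mu_{i_k}$ all lie within the suffix $\mu' = w_1 a_2 w_2 \cdots a_k w_k$. This suffix belongs to $\lang{\SCCpath'}$ where $\SCCpath' = P_1 \xrightarrow{a_2} \cdots \xrightarrow{a_k} P_k$ is a valid SCC-path obtained by dropping $P_0$, and $(\mu_{i_1},\ldots,\mu_{i_k})$ is a strongly blocking sequence for $\SCCpath'$ (the indices remain strictly increasing and each $\mu_{i_j}$ still blocks $P_j$ by a property intrinsic to the portal). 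Applying the induction hypothesis to $\SCCpath'$ and $\mu'$ yields the desired contradiction.

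I expect no deep obstacle: the argument is a clean peeling induction whose essence is that the leftmost occurrence either sits inside $w_0$ or forces every subsequent occurrence past $a_1$. The one point that warrants care is the positional bookkeeping: the decomposition $\mu = w_0 a_1 w_1 \cdots a_k w_k$ is a concatenation of positional words whose index labels match by the SCC-path condition $x_i = y_{i-1}+1 \pmod p$, so restricting to a suffix preserves positional factors, and blocking depends only on the individual portal. With this verified, the reduction from $\SCCpath$ to $\SCCpath'$ is immediate and the induction closes.
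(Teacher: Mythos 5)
Your proof is correct and takes essentially the same approach as the paper's: induction on the length $k$ of the SCC-path, arguing by contradiction from a factorization $\mu = w_0 a_1 w_1 \cdots a_k w_k$, and observing that the first blocking factor cannot sit entirely inside $w_0$, which pushes all subsequent occurrences past $a_1$ and lets you recurse on $P_1 \xrightarrow{a_2} \cdots P_k$. Your version is marginally more explicit about the positional bookkeeping (tracking $\alpha_1$ and the intervals $[p_j,q_j]$), but the argument is the same.
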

\begin{proof}
	We proceed by induction on the length $k$ of the SCC-path $\SCCpath =  P_0 \xrightarrow{a_1} P_1 \cdots \xrightarrow{a_k} P_k$.
	Let $\sigma = (\nu_0,\ldots, \nu_k)$ be a strongly blocking sequence for $\SCCpath$ that occurs in $\mu$.
	If $k = 0$, then~$\sigma$ consists only of a blocking factor for $P_0$, hence $\mu$ is not in $\lang{P_0}$, which is equal to $\lang{\SCCpath}$.

	For $k > 0$, assume for the sake of contradiction that $\mu\in\lang{\SCCpath}$. By definition of $\lang{\SCCpath}$,
	$\mu$ can then be written as $\mu_0a_1\mu'$, with $\mu_0\in \lang{P_0}$ and $\mu' \in\lang{\SCCpath'}$, where $\SCCpath' = P_1 \xrightarrow{a_2}\cdots \xrightarrow{a_k} P_k$.
	As $\nu_0$ is blocking for $P_0$, the prefix $\mu_0$ of $\mu$ must end before the occurrence of~$\nu_0$ in~$\mu$, and the sequence $\sigma' = (\nu_1,\ldots, \nu_k)$ occurs in $\mu'$. 
	Furthermore, because $\sigma$ is \emph{strongly} blocking for~$\SCCpath$, $\sigma'$ is strongly blocking for~$\SCCpath'$.
	Using the induction hypothesis on~$\mu'$ and the path~$\SCCpath'$ of length $k-1$, this implies that $\mu' \notin\lang{\SCCpath'}$, a contradiction.
\end{proof}

We can now define sequences that are blocking for an automaton: they are sequences that are blocking for \emph{every} accepting SCC-path of the automaton.
\begin{definition}[{Blocking sequence for $\Aa$}]
    Let $\sigma = (\mu_1, \ldots, \mu_\ell)$ be a sequence of positional words.
    We say that $\sigma$ is blocking for $\Aa$ if it is blocking for all accepting SCC-paths of~$\Aa$.
\end{definition}
As an example, observe that the sequences $(\timedword{0}{ab}, \timedword{1}{ab})$ and $(\timedword{0}{aa}, \timedword{0}{b})$ are both blocking for the automaton displayed in \cref{fig:SCCpaths} (see \cref{ex:SCCpath}).

\begin{example}
	\label{ex:SCCpath}
	Consider the automaton displayed in \cref{fig:SCCpaths}.
	The $\lcm$ of the lengths of its simple cycles is $p=2$.
	This automaton has six accepting SCC-paths, including
	\begin{align*}
		\pi_1 &= \portal{q_0}{0}{q_0}{0} \xrightarrow{a} \portal{q_1}{1}{q_1}{1} \xrightarrow{a} \portal{q_3}{0}{q_3}{0} \xrightarrow{b} \portal{q_4}{1}{q_4}{1}\\
		\pi_2 &= \portal{q_0}{0}{q_0}{0} \xrightarrow{a} \portal{q_2}{1}{q_1}{0} \xrightarrow{a} \portal{q_3}{1}{q_3}{0} \xrightarrow{b} \portal{q_4}{1}{q_4}{1}
	\end{align*}

	The language of the portal $\pi_1$ is $a(ba)^*a(a^2)^*b$.
	A blocking sequence for this SCC-path is $(\timedword{0}{aa}, \timedword{0}{b})$, which is in fact blocking for all of the SCC-paths.

	\begin{figure}[htbp]
	\begin{center}
	\begin{tikzpicture}[shorten >=1pt, node distance=3cm, on grid, auto,
		every initial by arrow/.style={thick}, every accepting by arrow/.style={thick}, >={stealth'}]
		\tikzset{
			state/.style={
				circle,
				thick,
				draw,
				minimum size=8mm,
				inner sep=2pt
			},
			accepting/.style={
				state,
				accepting by arrow
			}
		}
		
		\node[state , initial] (q0) {$q_0$};
		\node[state ,right=of q0] (q1) {$q_1$};
		\node[state] (q2) [above=of q1] {$q_2$};
		\node[state] (q3) [right=of q1] {$q_3$};
		\node[state, accepting, accepting right] (q4) [above=of q3] {$q_4$};
		
		\path[->, thick, >=stealth']
		(q0) edge[above] node {\textcolor{red}{$a$}} (q1)
		(q0) edge[above left] node {\textcolor{red}{$a$}} (q2)
		(q1) edge[above] node {\textcolor{red}{$a$}} (q3)
		(q2) edge[above] node {\textcolor{red}{$b$}} (q4)
		(q3) edge[right] node {\textcolor{red}{$b$}} (q4)
		(q3) edge[loop right] node {$a$} ();
		
		\path[->, thick, >=stealth',bend right]
		(q1) edge[right] node {{$b$}} (q2)
		(q2) edge[left] node {{$a$}} (q1);
	\end{tikzpicture}
	\caption{Automaton used for \cref{ex:SCCpath}.}
	\label{fig:SCCpaths}
	\end{center}
	\end{figure}

	On the other hand, $(\timedword{0}{ab})$ is not blocking for $\pi_1$, as $\timedword{0}{ab}$ is not a blocking factor for the portal $\portal{q_1}{1}{q_1}{1}$. It is, however, a blocking sequence for $\pi_2$.
	This is because if we enter the SCC $\set{q_1,q_2}$ through $q_1$, a factor $ab$ can only appear after an even number of steps, while if we enter through $q_2$, it can only appear after an odd number of steps.
\end{example}

\subsection{An efficient property tester}\label{sec:generic-ub}

In this section, we show that for any regular language $L$ and any small enough $\eps > 0$, there is an $\eps$-property tester for $L$ that uses $\cO(\epslogeps)$ queries.

\begin{theorem}\label{thm:gen-ub}
    For any NFA $\Aa$ and any small enough $\eps  >0$, there exists an $\eps$-property tester for $\lang{\Aa}$ that uses $\cO(\epslogeps)$ queries. 
\end{theorem}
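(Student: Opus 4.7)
I would mimic the structure of \cref{alg:generic-tester-scc}, but replace single blocking factors with strongly blocking sequences for accepting SCC-paths of $\Aa$. The tester first handles the degenerate cases ($\lang{\Aa}\cap\Sigma^n=\emptyset$ or $n\le L=\Theta(1/\eps)$) exactly as there. Otherwise, it computes $\Ff\gets\textsc{Sampler}(\twu,n/L,L)$ with $L=\Theta(1/\eps)$ (hidden constant depending on $|\Aa|$ and $p$), and rejects iff for \emph{every} accepting SCC-path $\pi$ of $\Aa$, some sample in $\Ff$ contains a strongly blocking sequence for $\pi$. Since the number of accepting SCC-paths depends only on $\Aa$ and $p$, \cref{claim:generic-sampling-alg-complexity} gives the target $\cO(\epslogeps)$ query bound. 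Completeness is direct: if $u\in\lang{\Aa}$, then by \cref{lemma:lang-path-union} there is an accepting $\pi$ with $\twu\in\lang{\pi}$, and by \cref{lem:strongly-blocking-implies-not-in-path} no strongly blocking sequence for $\pi$ occurs anywhere in $\twu$, hence none in any sample of $\Ff$.

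The heart of the argument is the generalisation of \cref{lemma:many-short-blocking}: \emph{if $\twu$ is $\eps$-far from $\lang{\pi}$, then $\twu$ contains $\Omega(\eps n)$ disjoint strongly blocking sequences for $\pi$, each of whose constituent factors all fit inside a window of length $\cO(1/\eps)$ in $\twu$.} By \cref{coro:dist-to-l-path} it suffices to establish this per SCC-path. I would fix an optimal factorisation $\twu=\mu_0 a'_1\mu_1\cdots a'_k\mu_k$ realising $d(\twu,\lang{\pi})$, at positions compatible with the periodicity prescribed by $\pi$. Setting $\eps_i n_i := d(\mu_i,\lang{P_i})$ and noting that the substitutions needed to fix the inter-segment letters $a'_i\neq a_i$ contribute at most the constant $k$, we get $\sum_i \eps_i n_i \ge \eps n - k$. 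By \cref{lemma:portal-to-SC}, each $\lang{P_i}$ is recognised by a strongly connected NFA of size $\cO(p|\Aa|)$, so the strongly-connected lemma (\cref{lemma:many-short-blocking}) applied inside each sufficiently long $\mu_i$ yields $\Omega(\eps_i n_i)$ disjoint blocking factors of $P_i$ of length $\cO(1/\eps)$.

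Once these local pools of blocking factors are available, a greedy left-to-right scan builds strongly blocking sequences for $\pi$ by picking, at each round, the leftmost still-available blocking factor from each segment in turn; summing the per-round yields gives $\Omega(\eps n)$ disjoint such sequences. Bounding the span of each by $\cO(1/\eps)$ requires a second averaging step: the total span of all sequences is at most $n$, so at most a constant fraction can have span exceeding $C/\eps$ for a large enough constant $C$; discarding those still leaves $\Omega(\eps n)$ short-spanning chains. Each such chain sits inside a window of length $\cO(1/\eps)$, and invoking \cref{lemma:generic-sampling-alg} with these windows as the special set $\Ss_\pi$ ensures that some sample in $\Ff$ contains a complete strongly blocking sequence for $\pi$ with probability at least $2/3$. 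A constant number of independent boosting rounds followed by a union bound over the constantly many accepting SCC-paths delivers the overall error bound required by \cref{def:tester}.

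The main obstacle is the many-sequences lemma, and specifically the span bound: the error mass $\eps_i n_i$ can be distributed arbitrarily across the $k+1$ segments, so some may be nearly in $\lang{P_i}$ while others are catastrophically far, making it non-trivial to chain blocking factors of all portals close together. Additional care is needed for portals whose SCC is trivial (where the notion of blocking factor degenerates and the portal must be absorbed into its neighbour in $\pi$), for short segments $\mu_i$ below the applicability threshold of \cref{lemma:many-short-blocking} (whose contribution can be amortised against the additive constant $k$), and for ensuring that the greedy chaining selects factors that are genuinely disjoint across chains—this bookkeeping is what the authors call ``natural but technical''.
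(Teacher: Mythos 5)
Your high-level plan (iterate over accepting SCC-paths, use Corollary~\ref{coro:dist-to-l-path} and Lemma~\ref{lem:strongly-blocking-implies-not-in-path} for soundness and completeness, reuse \textsc{Sampler}) matches the paper, and your completeness argument is correct. But the core of your soundness argument contains a step that is not merely technical but false, and the paper's algorithm is structured differently precisely to avoid it.

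You want a single call to \textsc{Sampler} with window size $\cO(1/\eps)$ to return a window containing an \emph{entire} strongly blocking sequence for $\pi$. This hinges on your claimed generalisation: that an $\eps$-far word $\twu$ contains $\Omega(\eps n)$ disjoint strongly blocking sequences \emph{each fitting in a window of length $\cO(1/\eps)$}. Your justification is an averaging step — ``the total span of all sequences is at most $n$'' — but that bound does not hold: disjointness of blocking sequences means their \emph{factor occurrences} occupy disjoint positions, not that their \emph{spans} (start of first factor through end of last factor, including gaps) are disjoint. Concretely, take an SCC-path $P_0 \xrightarrow{c} P_1$ whose language is $a^* c\, b^*$ and $\mu = b^{n/2} a^{n/2}$: here $\mu$ is $\Theta(1)$-far, the $P_0$-blocking letters $b$ all live in the left half and the $P_1$-blocking letters $a$ all live in the right half, so the $j$-th closest pair $(b,a)$ spans $\approx 2j$ positions and only $\cO(1/\eps)$ of them have span $\le 1/\eps$ — not the $\Omega(\eps n)$ you need. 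All $\Omega(\eps n)$ sequences' spans overlap at position $n/2$, so the total span is $\Theta((\eps n)^2) \gg n$.

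The paper sidesteps this by \emph{not} requiring the whole chain to fit in one sampled window. Instead, using Lemma~\ref{lem:far-from-path-implies-many-blocking}, it partitions $\mu$ into $\mu_0\cdots\mu_k$ with each $\mu_i$ containing $\Omega(\eps n)$ disjoint $P_i$-blocking factors of length $\cO(1/\eps)$. It then runs \textsc{Sampler} \emph{separately for each portal} $P_i$ (boosted $\cO(\log(K(k{+}1)))$ times), finds one $P_i$-blocking factor per portal, and stitches these together into a strongly blocking sequence via Observation~\ref{obs:sbs-from-disj-factors}. The factors for distinct portals may be far apart in $\mu$; only each \emph{individual} blocking factor needs to be short. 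A union bound over the $K$ accepting SCC-paths (a constant) finishes the argument with the same $\cO(\epslogeps)$ budget. So your plan needs to be restructured along these lines: rather than hunting for a complete chain in one window, hunt for each portal's blocking factor independently and assemble them afterwards.

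Your observations about optimal factorisations, the additive constant $k$ for inter-segment letters, trivial-SCC portals, and the need to reuse Lemma~\ref{lemma:many-short-blocking} via Lemma~\ref{lemma:portal-to-SC} are all on the right track and are close to what the paper does inside the proof of Lemma~\ref{lem:far-from-path-implies-many-blocking}.
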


As mentioned in the overview, this result supersedes the one  given by Bathie and Starikovskaya~\cite{bathie2021property}:
while both testers use the same number of queries, the tester in~\cite{bathie2021property} works under the edit distance, while that of \cref{thm:gen-ub} is designed for the Hamming distance. As the edit distance never exceeds the Hamming distance, the set of words that are $\eps$-far with respect to the former is contained in the set of words $\eps$-far for the latter. Therefore, an $\eps$-tester for the Hamming distance is also an $\eps$-tester for the edit distance, and this result is stronger.

The property tester behind \cref{thm:gen-ub} uses the property tester for strongly connected NFAs as a subroutine, and its correctness is based on an extension of \cref{lemma:many-short-blocking} to blocking sequences.
We show that we can reduce property testing of $\lang{\Aa}$ to a search for blocking sequences in the word, in the following sense:
\begin{itemize}
	\item If $\mu$ contains a strongly blocking sequence for each of the SCC-paths of $\Aa$, then it is not in the language and we can answer no (\cref{cor:many-strongly-bs-implies-not-A}).
	\item If $\mu$ is $\eps$-far from the language, then for each accepting SCC-path $\SCCpath$ of $\Aa$, $\mu$ is far from for the language of $\SCCpath$ and contains many disjoint strongly blocking sequences for $\SCCpath$ (\cref{lem:far-from-path-implies-many-blocking}), hence random sampling is likely to find at least one of them, and we reject $\mu$ with constant probability.
\end{itemize}

\begin{corollary}
	\label{cor:many-strongly-bs-implies-not-A}
	If $\mu$ contains a strongly blocking sequence for each SCC-path of~$\Aa$, then $\mu \notin \timedlang{\Aa}$.
\end{corollary}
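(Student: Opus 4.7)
The plan is to derive this corollary directly by combining the two key lemmas already established in the section: the decomposition of $\timedlang{\Aa}$ as a union over accepting SCC-paths (\cref{lemma:lang-path-union}), and the fact that a strongly blocking sequence for an SCC-path $\SCCpath$ witnesses non-membership in $\lang{\SCCpath}$ (\cref{lem:strongly-blocking-implies-not-in-path}).

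Concretely, I would argue by contradiction. Suppose $\mu \in \timedlang{\Aa}$. By \cref{lemma:lang-path-union}, there exists an accepting SCC-path $\SCCpath$ of $\Aa$ such that $\mu \in \lang{\SCCpath}$. But by assumption, $\mu$ contains a strongly blocking sequence for every SCC-path of $\Aa$, and in particular for this specific $\SCCpath$. Applying \cref{lem:strongly-blocking-implies-not-in-path} to $\mu$ and $\SCCpath$ then yields $\mu \notin \lang{\SCCpath}$, which contradicts our choice of $\SCCpath$.

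There is no real obstacle here: the corollary is a one-line consequence of the two lemmas, and the only subtlety worth flagging is that the hypothesis is quantified over \emph{all} SCC-paths (including non-accepting ones), which is strictly stronger than what the proof actually uses. One could slightly weaken the hypothesis to ``every accepting SCC-path'' without affecting the conclusion, but as stated the argument goes through immediately.
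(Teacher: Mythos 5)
Your proof is correct and matches the paper's reasoning: the paper cites \cref{lemma:lang-path-union} and implicitly relies on \cref{lem:strongly-blocking-implies-not-in-path}, which you spell out explicitly. Your side remark that only accepting SCC-paths are needed is also accurate.
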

\begin{proof}
	This follows from \cref{lemma:lang-path-union}.
\end{proof}

The next lemma expresses a partial converse to \cref{cor:many-strongly-bs-implies-not-A} and generalizes \cref{lemma:many-blocking} from the strongly connected case: if a word is far from the language, then it contains many strongly blocking sequences for any SCC-path.
\begin{lemma}
    \label{lem:far-from-path-implies-many-blocking}
    Let $\SCCpath  = P_0 \xrightarrow{a_1} \cdots  P_k$ be an SCC-path, let $L = \lang{\SCCpath}$, and let $\mu$ be a positional word of length $n$ such that $d(\mu, L)$ is finite.
    There is a constant $C$ such that if $n \geq C/\eps$
    and $\mu$ is $\eps$-far from $L$, then $\mu$ can be partitioned into $\mu = \mu_0\mu_1\cdots\mu_k$ such that for every $i = 0,\ldots, k$,
    $\mu_i$ contains at least $\frac{\eps n}{C}$ disjoint blocking factors for~$P_i$, each of length at most $\cO(1/\eps)$.
\end{lemma}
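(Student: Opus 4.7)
I would prove this lemma by induction on $k$, the number of transitions in $\SCCpath$. In the base case $k = 0$, the SCC-path reduces to a single portal $P_0$ whose language, by \cref{lemma:portal-to-SC}, is recognised by a strongly connected NFA of size $\cO(p|\Aa|)$ and period $p$; applying \cref{lemma:many-short-blocking} to $\mu$ and this NFA directly yields $\Omega(\eps n)$ disjoint blocking factors of length $\cO(1/\eps)$ in $\mu = \mu_0$, with constants depending only on $|\Aa|$ and $p$.

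For the inductive step, write $\SCCpath = P_0 \xrightarrow{a_1} \SCCpath'$ and, for each candidate split position $\ell$, consider
\begin{equation*}
    g(\ell) = d(\mu[0 \dd \ell), \lang{P_0}) \qquad\text{and}\qquad h(\ell) = d(\mu[\ell \dd n), a_1\lang{\SCCpath'}).
\end{equation*}
Concatenating closest words on either side of $\ell$ produces a word in $\lang{\SCCpath}$, so $g(\ell) + h(\ell) \geq d(\mu,\lang{\SCCpath}) \geq \eps n$ for every admissible $\ell$. The heart of the argument is to find a balanced split $\ell^*$ at which both $g(\ell^*) \geq c\eps n$ and $h(\ell^*) \geq c\eps n$, for some $c > 0$ depending only on $|\Aa|$. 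I would establish this by an intermediate-value argument: by \cref{fact:periodicity}, the lengths admissible for $\lang{P_0}$ form an arithmetic progression modulo $p$ past a reachability threshold, and on this progression $g$ is $\cO(p)$-Lipschitz (moving to the next admissible length changes the distance by at most $p$). Hence, as $\ell$ grows through admissible values, $g$ either crosses $c\eps n$ (giving the split, with $h(\ell^*) \geq (1-c)\eps n$ automatically), or remains below it throughout, in which case $g(n)$ is small and the symmetric argument applied to $h$ produces the split. Given $\ell^*$, \cref{lemma:many-short-blocking} applied to $(\mu_0 = \mu[0\dd\ell^*), P_0)$ via \cref{lemma:portal-to-SC} yields $\Omega(\eps n)$ disjoint blocking factors of length $\cO(1/\eps)$ in $\mu_0$: the bound $g(\ell^*) \geq c\eps n$ forces $|\mu_0| \geq c\eps n$, giving a relative farness of $\Omega(1)$. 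For the suffix, absorb the transition letter at position $\ell^*$ into $\mu_1$ and apply the induction hypothesis to $\mu[\ell^* + 1 \dd n)$ with the shorter SCC-path $\SCCpath'$, which is $\Omega(\eps n)$-far from $\lang{\SCCpath'}$ up to a $\cO(1)$ correction; this supplies the remaining parts $\mu_1, \ldots, \mu_k$ of the decomposition. Since $k$ is bounded by the number of SCCs of $\Aa$, the constants accumulated through the induction depend only on $|\Aa|$ and $p$.

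The main obstacle is establishing the balanced split $\ell^*$. The periodicity constraints make the analysis delicate: $\lang{P_0}$ only contains words whose lengths lie in a specific arithmetic progression modulo $p$, so $g(\ell)$ is infinite outside this progression and one must track $g$ only along admissible lengths. The transition letter $a_1$ may additionally be mismatched in $\mu$ at position $\ell^*$, contributing an extra $\cO(1)$ to the distance computation that must be absorbed into the constants. Making the intermediate-value argument fully rigorous, while taking proper account of these constraints and of the possibility that $g(n)$ itself is small, is where the technical bulk of the proof lies.
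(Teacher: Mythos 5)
Your proposal takes a genuinely different route from the paper. The paper extends the greedy left-to-right decomposition from \cref{lemma:many-blocking} across the whole SCC-path: it scans $\mu$, accumulating disjoint blocking factors for $P_0$; once it has $K = \eps n / C$ of them it switches to $P_1$, and so on. If it runs out of word before collecting $K$ factors for every portal, one shows (by neutralizing the finitely many found factors and using the padding to bridge between SCCs) that $d(\mu,L) < \eps n$, a contradiction. You instead induct on $k$ and look for a balanced split point $\ell^*$ via an intermediate-value argument, then recurse on the suffix. Your route is structurally cleaner as a reduction to the $k=0$ case, but the paper's greedy process gives the partition directly and never has to reason about the quantitative behaviour of $d(\mu[0\dd\ell),\lang{P_0})$ as a function of the split point.

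You correctly flag the intermediate-value step as the technical heart, but I want to emphasize two specific gaps there that you acknowledge only loosely. First, the $\cO(p)$-Lipschitz claim for $g$ between consecutive admissible lengths is not symmetric: showing $g(\ell+p)\le g(\ell)+\cO(p)$ requires extending a witness word in $\lang{P_0}$ by a short cycle, which needs the gap between admissible lengths to exceed the reachability constant $\rho$; showing $g(\ell)\le g(\ell+p)+\cO(p)$ (or, what IVT actually needs, that $g$ cannot jump from below $c\eps n$ to far above it) requires truncating a witness and repairing its end to land back in $t_0$, again an $\cO(\rho)$ change. Neither direction is ``at most $p$'' on the nose, and neither is stated with a reachability argument. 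Also, admissibility of $\ell$ is governed not only by the lengths realizable by $\lang{P_0}$ but also by those realizable by $a_1\lang{\SCCpath'}$; you track only the former. Second, the branch ``$g$ remains below $c\eps n$ throughout'' is not actually handled by a symmetric argument on $h$: if $g(\ell)<c\eps n$ for all admissible $\ell$, no balanced split exists for that $c$. What rescues you is that this branch is vacuous: if $d(\mu,\lang{P_0})<c\eps n$ then by strong connectivity of $P_0$'s SCC one can modify only the last $\cO(1)$ letters of a close witness to traverse $P_1,\ldots,P_k$, giving $d(\mu,L)\le c\eps n+\cO(1)<\eps n$ for $c<1$ and large $n$, contradicting $\eps$-farness. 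This is a fact you need and do not establish; without it, the ``symmetric argument'' you gesture at does not produce a split. With these two points supplied, together with the constant-tracking you mention, your induction should go through.
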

\begin{proof}
    We proceed similarly to the proof of \cref{lemma:many-blocking}, and only sketch this proof.
    Starting from the left end of $\mu$, we accumulate letters until we find a factor blocking for~$P_0$, and iterate again starting from~$p$ positions later, where~$p$ is the~$\lcm$ of the length of all cycles in~$\Aa$; notably, it is a multiple of the reachability constant of a strongly connected automaton recognizing $\lang{P_0}$. When we have found at least $K = \frac{\eps n}{C}$ blocking factors ($C$ is to be determined later) for $\lang{P_i}$, this position marks the end of~$\mu_i$, and we iterate with the next portal in $\SCCpath$.

    Let us assume that the process ends (i.e. we reach the right end of~$\mu$) before finding enough blocking factors for all portals. We show that in this case, the distance between~$\mu$ and $L$ is at most $\eps n$.
    Assume that we stop before finding enough blocking factors for the $i$-th portal, $P_i$.
    As in the proof of \cref{lemma:many-blocking}, we replace the last letter of each blocking factor and use the padding between them to make the run accepted by the SCC-path: this uses at most $((i+1)\cdot (K+1) + 2)p$ substitutions.
    If we set $C = 4(k+3)p$, this is less than $\eps n$ when $n \geq C/\eps$.
    Therefore, if $\mu$ is $\eps$-far from $\lang{\SCCpath}$, then the decomposition process finds at least $K$ blocking factors for $P_i$ in $\mu_i$ for each $i$.

    Then, since all of these factors are disjoint, we can use the same technique as in \cref{lemma:many-short-blocking} to show that at least half of these factors have length $\cO(1/\eps)$, and the result holds, up to doubling~$C$. 
\end{proof}

\begin{corollary}\label{cor:far-L-many-bs}
	Let $L = \timedlang{\Aa}$ and let $\mu$ be a positional word of length $n$.
	If $L$ contains a word of length $n$ and $\mu$ is $\eps$-far from $L$,
	then $\mu$ contains $\Omega(\eps n)$ disjoint blocking sequences for $\Aa$.
\end{corollary}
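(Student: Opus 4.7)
The plan is to combine the finitely many accepting SCC-paths $\SCCpath_1, \ldots, \SCCpath_M$ of $\Aa$ using \cref{lem:far-from-path-implies-many-blocking}, and to weave the resulting blocking factors into blocking sequences for $\Aa$. By \cref{coro:dist-to-l-path}, the hypothesis that $\mu$ is $\eps$-far from $L$ implies that $\mu$ is $\eps$-far from $\lang{\SCCpath_j}$ for each $j$ such that $\lang{\SCCpath_j}$ contains a word of length $n$; for the other accepting SCC-paths, any word of length $n$ lies outside $\lang{\SCCpath_j}$ for trivial length reasons, so they need not be handled. For each remaining SCC-path $\SCCpath_j = P_0^j \to \cdots \to P_{k_j}^j$, applying \cref{lem:far-from-path-implies-many-blocking} yields a partition $\mu = \mu_0^j \cdots \mu_{k_j}^j$ into consecutive intervals such that each $\mu_i^j$ contains $\Omega(\eps n)$ disjoint blocking factors for the portal $P_i^j$, each of length $\cO(1/\eps)$.

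Next, I would build the blocking sequences greedily. At iteration $t$, for every pair $(j, i)$ with $1 \le j \le M$ and $0 \le i \le k_j$, I select a previously unused blocking factor $f_i^{j,t}$ in $\mu_i^j$, chosen to be disjoint from all factors used so far. Listing the chosen factors in order of their positions in $\mu$ yields a sequence $\sigma_t$. To check that $\sigma_t$ is a blocking sequence for $\Aa$, fix an accepting SCC-path $\SCCpath_j$: since $\mu_0^j, \ldots, \mu_{k_j}^j$ are consecutive intervals of $\mu$, the factors $f_0^{j,t}, \ldots, f_{k_j}^{j,t}$ appear in this order as a subsequence of $\sigma_t$, each blocking the corresponding portal of $\SCCpath_j$, so $\sigma_t$ is blocking for $\SCCpath_j$---and hence for $\Aa$.

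To conclude, I would bound the number of iterations. Each $\sigma_t$ uses $\sum_j (k_j + 1) = \cO_\Aa(1)$ factors of length $\cO(1/\eps)$, so only $\cO(1/\eps)$ positions per iteration. Because the $\Omega(\eps n)$ candidate factors in each $\mu_i^j$ are disjoint and of length $\cO(1/\eps)$, any single used factor of length $\cO(1/\eps)$ overlaps at most $\cO(1)$ of them; hence after $t$ iterations at most $\cO_\Aa(t)$ candidates are rendered unusable in any single $\mu_i^j$, so the greedy step succeeds for every $t \le c\eps n$ for a suitable constant $c$ depending on $\Aa$, yielding $\Omega(\eps n)$ disjoint blocking sequences for $\Aa$. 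The main obstacle---and the reason for building $\sigma_t$ out of blocking factors selected independently across SCC-paths---is verifying that a single sequence can simultaneously witness blocking for every accepting SCC-path; this is exactly where the \emph{consecutive} structure of each partition $\mu_0^j, \ldots, \mu_{k_j}^j$ produced by \cref{lem:far-from-path-implies-many-blocking} is used, as it ensures that ordering the chosen factors by position in $\mu$ automatically respects the ordering required by the definition of a blocking sequence for each $\SCCpath_j$.
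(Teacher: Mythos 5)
Your approach diverges from the paper's in a substantive way, and the divergence introduces a gap. The paper topologically orders \emph{all} portals of $\Aa$ into a single linear order and runs the decomposition of \cref{lem:far-from-path-implies-many-blocking} once against that ordering, producing a \emph{single} partition of $\mu$ into consecutive intervals, one per portal; a blocking sequence is then built by picking one candidate factor from each interval, and disjointness across sequences is automatic because the intervals are pairwise disjoint. You instead run \cref{lem:far-from-path-implies-many-blocking} separately on each accepting SCC-path, obtaining $M$ \emph{overlapping} partitions of $\mu$, and then try to weave them together by a greedy disjointness argument. This is where the problem appears.

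The counting step is incorrect. You assert that ``any single used factor of length $\cO(1/\eps)$ overlaps at most $\cO(1)$ of the $\Omega(\eps n)$ disjoint candidates in $\mu_i^j$''. This would require the candidates to have length $\Omega(1/\eps)$, but \cref{lem:far-from-path-implies-many-blocking} only gives an \emph{upper} bound $\cO(1/\eps)$ on their length. If the blocking factors for portal $P_i^j$ have length $1$ (a single forbidden letter is a common case), one used factor of length $\Theta(1/\eps)$ kills $\Theta(1/\eps)$ candidates. The correct accounting is: after $t$ iterations the selected factors occupy $\cO_\Aa(t/\eps)$ positions of $\mu$, and since the candidates in $\mu_i^j$ are disjoint each occupied position kills at most one of them, so at most $\cO_\Aa(t/\eps)$ candidates in any $\mu_i^j$ are disabled. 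The greedy step is therefore only guaranteed to succeed while $\cO_\Aa(t/\eps) < c\,\eps n$, i.e.\ for $t = \cO_\Aa(\eps^2 n)$ iterations, yielding only $\Omega(\eps^2 n)$ disjoint blocking sequences --- not the $\Omega(\eps n)$ claimed. This loss is not cosmetic: the downstream \cref{lem:far-bounded-mbs} and \cref{coro:fin-bs-then-easy} use the $\Omega(\eps n)$ bound to get a $\cO(1/\eps)$-query tester, and a $\Omega(\eps^2 n)$ bound would only give $\cO(1/\eps^2)$ queries.

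A second, smaller issue: you discard accepting SCC-paths whose language contains no word of length $n$, saying they ``need not be handled''. But a blocking sequence for $\Aa$ must by definition be blocking for \emph{all} accepting SCC-paths, and the factors you select need not block those paths. The paper's construction avoids this by running the decomposition over the topological order of \emph{all} portals of $\Aa$, so every accepting SCC-path is covered regardless of its length spectrum. To fix your argument you would need either to show that your sequences automatically block those paths (which is not obvious and, I believe, false in general), or to adopt the paper's global ordering, which resolves both this issue and the counting issue at once.
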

\begin{proof}
	We use a proof identical to that of \cref{lem:far-from-path-implies-many-blocking}, except that we consider a linear ordering of all the portals of $\Aa$ given by topological ordering, instead of the linear given by an SCC-path.
	The graph used for the topological ordering is the graph of all portals of $\Aa$, with an edge from $P$ to $P'$ when $P$ and $P'$ appear consecutively in some SCC-path of $\Aa$.
	Since any two portals in an SCC-path are from different SCCs of $\Aa$, this graph is acyclic, and its vertices can be topologically ordered.
\end{proof}

We are now ready to prove \cref{thm:gen-ub}.
\begin{proof}[Proof of \cref{thm:gen-ub}]
	Our algorithm iterates over all $K$ accepting SCC-paths~$\SCCpath = P_0\xrightarrow{a_1}\ldots\xrightarrow{a_k} P_k$ of~$\Aa$, and for each~$\SCCpath$, searches for blocking sequences for~$\SCCpath$ in $\mu = \twu$.
	If we find a strongly blocking sequence for $\SCCpath$ in $\mu$, then by \cref{lem:strongly-blocking-implies-not-in-path}, $\mu$ is not in $\timedlang{\Aa}$ and we can reject.
	Note that if $\mu\in\timedlang{\Aa}$, then the algorithm will not reject, hence the perfect completeness property is satisfied.

	Next, we show that if $\mu$ is $\eps$-far from $\timedlang{\Aa}$, then we can find a strongly blocking sequence for $\SCCpath$ with probability at least $1 - 1/(3K)$ using $\cO(\epslogeps)$ queries.
	Our algorithm is based on the following observation:
	\begin{observation}\label{obs:sbs-from-disj-factors}
		Let $\SCCpath  = P_0 \xrightarrow{a_1} \cdots  P_k$ be an SCC-path.
		Let $\nu_0, \ldots, \nu_k$ be positional words such that $\nu_i$ is blocking for $P_i$.
		Then, $\sigma = (\nu_0, \ldots, \nu_k)$ is a strongly blocking sequence of $\SCCpath$.
	\end{observation}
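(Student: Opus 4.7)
The statement is essentially unpacking the definition of a strongly blocking sequence, so the plan is to exhibit the witness indices directly. The definition requires a sequence $i_0 < i_1 < \cdots < i_k$ such that the $i_j$-th entry of $\sigma$ is blocking for $P_j$; the natural choice is simply $i_j = j$ (or $i_j = j+1$ depending on the indexing convention for the entries of $\sigma$).

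In more detail, the plan is: given the hypothesis that $\nu_j$ is blocking for $P_j$ for every $j \in \{0, \ldots, k\}$, I would set $i_j := j$ and verify the two conditions required by the definition. First, $i_0 < i_1 < \cdots < i_k$ holds trivially since $0 < 1 < \cdots < k$. Second, the entry of $\sigma = (\nu_0, \ldots, \nu_k)$ at position $i_j = j$ is precisely $\nu_j$, which by assumption is a blocking factor for the portal $P_j$. These are exactly the clauses in the definition of \emph{strongly blocking sequence}, so $\sigma$ is strongly blocking for $\SCCpath$.

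There is no genuine obstacle here: the observation is a direct rephrasing of the definition of strongly blocking sequences once one notices that the $k+1$ factors $\nu_0,\ldots,\nu_k$ already sit in the sequence in strict positional order and already cover the $k+1$ portals one-to-one. The only subtlety worth flagging explicitly in the write-up is the indexing convention, namely that the definition phrases the sequence as $(\mu_1,\ldots,\mu_\ell)$ while the observation uses $(\nu_0,\ldots,\nu_k)$; this is harmless and resolved by the shift $i_j = j+1$ if one wants to match the definition verbatim.
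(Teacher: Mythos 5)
Your proposal is correct and matches the intended argument: the paper states this as an unproved \emph{observation} inside the proof of Theorem~\ref{thm:gen-ub}, and the one-line justification is exactly the one you give — choose $i_j = j$ (modulo the indexing offset you flag), which is strictly increasing and by hypothesis has $\nu_{i_j}$ blocking for $P_j$, exactly instantiating the definition of a strongly blocking sequence.
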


	We can assume w.l.o.g. that $\mu$ has length at least $2C/\eps$, where $C$ is the constant defined in \cref{lem:far-from-path-implies-many-blocking}, otherwise we can read all of $\mu$ using $\cO(1/\eps)$ queries.
	Therefore, we can apply \cref{lem:far-from-path-implies-many-blocking}, and $\mu$ can be partitioned into $k+1$ words $\mu_0,\ldots,\mu_k$ such that each $\mu_i$ contains at least $\eps n/C$ disjoint blocking factors for $C$, each of length $L = \cO(1/\eps)$.

	For each $i$, we can use the algorithm of \cref{lemma:generic-sampling-alg} to sample from $\mu$ a set $\Ff$ that contains a factor that contains a $\nu_i$ with probability at least $2/3$.
	By repeating the procedure $\cO(\ln(3K\cdot (k+1)))$ times and taking the union of all returned sets $\Ff$, we can increase this probability to $1 - \frac{1}{3K\cdot (k+1)}$.
	Then, by the union bound, we find a blocking factor~$\nu_i$ for each~$P_i$ in the corresponding~$\mu_i$ with probability at least $1 - 1/(3K)$. As observed above, the sequence $\sigma= (\nu_0, \ldots, \nu_k)$ is strongly blocking for $\pi$.

	By union bound again, this algorithm finds a strongly blocking sequence for each of the $K$ SCC-paths in $\Aa$, and therefore rejects $\mu$, with probability at least $2/3$.

	For a single $\mu_i$ of a given SCC-path, the sampling procedure uses $\cO(\epslogeps)$ queries (by \cref{claim:generic-sampling-alg-complexity}).
	As the lengths and number of SCC-paths in $\Aa$ does not depend on the input length, this algorithm uses $\cO(\epslogeps)$ queries in total.
\end{proof}

\subsection{Lower bound}

In order to characterize hard languages for all automata, we define a partial order $\pobs$ on sequences of positional factors.
It is an extension of the factor partial order on blocking factors. 
It will let us define \emph{minimal blocking sequences}, which we use to characterize the complexity of testing a language. 

\begin{definition}[Minimal blocking sequence]\label{def:MBS}
	Let $\sigma = (\mu_1, \mu_2, \ldots, \mu_k)$ and $\sigma' =  (\mu_1',\ldots, \mu_{t}')$ be sequences of positional words.
	We have $\sigma \pobs \sigma'$ if there exists a sequence of indices $i_1 \leq i_2 \leq \ldots \leq i_k$ such that $\mu_{j}$ is a factor of $\mu_{i_j}'$ for all $j = 1,\ldots, k$.
	
	A blocking sequence $\sigma$ of $\Aa$ (resp. $\SCCpath$) is \emph{minimal} if it is a minimal element of $\pobs$ among blocking sequences of $\Aa$ (resp. $\SCCpath$). The set of minimal blocking sequences of $\Aa$ (resp. $\SCCpath$) is written $\MBS(\Aa)$ (resp. $\MBS(\SCCpath)$).
\end{definition}

\begin{remark}
	If $\sigma \pobs \sigma'$ and $\sigma$ is a blocking sequence for an SCC-path $\pi$ then $\sigma'$ is also a blocking sequence for $\pi$.
\end{remark}

We make the remark that minimal blocking sequences have a bounded number of terms. This is because if we build the sequence from left to right by adding terms one by one, the minimality implies that at each step we should block a previously unblocked portal.
\begin{lemma}
    \label{lem:bound-length-min-blocking}
    A minimal blocking sequence for $\Aa$ contains at most $p^2|Q|^2$ terms.
\end{lemma}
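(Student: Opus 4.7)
The plan is to build an injection $P$ from the index set $\{1,\ldots,k\}$ into the set of portals of $\Aa$; since a portal is a 4-tuple in $(Q\times\ZZ/p\ZZ)^2$, this set has cardinality at most $p^2|Q|^2$, which will yield the bound.

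For each index $j$, the minimality of $\sigma$ provides an accepting SCC-path $\pi_j$ that is blocked by $\sigma$ but not by $\sigma\setminus\mu_j$. A key consequence is that every valid assignment of the portals of $\pi_j$ to indices of $\sigma$ must use the index $j$ at least once: otherwise the same assignment (after re-indexing) would witness that $\sigma\setminus\mu_j$ still blocks $\pi_j$. In particular, the \emph{leftmost} valid assignment $f_j^*$ for $\pi_j$, defined greedily by $f_j^*(l) = \min\{i \ge f_j^*(l-1) : \mu_i \text{ is blocking for the } l\text{-th portal of } \pi_j\}$, uses $j$ at some first position $l_j^*$. I define $P(j)$ to be the $l_j^*$-th portal of $\pi_j$; by construction, $\mu_j$ is blocking for $P(j)$.

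The central step is to prove that $P$ is injective. Suppose for contradiction that $P(j)=P(j')=P$ for some $j<j'$. Since a portal determines its SCC and its entry/exit states and positions, both $\pi_j$ and $\pi_{j'}$ traverse the same SCC $S(P)$ exactly at $P$. I form the \emph{spliced} SCC-path $\pi^*$ obtained by concatenating the prefix of $\pi_j$ up to and including $P$ with the suffix of $\pi_{j'}$ strictly after $P$: the antisymmetry of the strict partial order $<_\Aa$ on SCCs forces the SCCs occurring in the two halves to be disjoint and correctly ordered, so $\pi^*$ is a valid accepting SCC-path. Splicing the prefix of $f_j^*$ with the suffix of $f_{j'}^*$ (whose indices are all $\ge j'>j$, hence compatible with having used $j$ at $P$) produces a valid assignment for $\pi^*$ that does not use $\mu_{j'}$. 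Reinterpreting this construction — using that $\pi_{j'}$'s own prefix leftmost assignment only uses indices $<j'$ — yields a valid assignment for $\pi_{j'}$ in $\sigma\setminus\mu_{j'}$, contradicting the essentiality of $\mu_{j'}$ established by the choice of $\pi_{j'}$. Hence $P$ is injective and $k \le p^2|Q|^2$.

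The main obstacle is the last splicing step: one must verify that after deleting $\mu_{j'}$ enough indices remain to cover the portal $P$ in $\pi_{j'}$'s assignment, and that $\pi_{j'}$'s prefix leftmost assignment (which uses only indices $<j'$) can be concatenated with indices $\ge j$ for the suffix while preserving monotonicity of the combined assignment. In corner cases where no index in a critical interval blocks $P$, the argument must be supplemented either by using both leftmost and rightmost assignments jointly to define $P(j)$, or by re-choosing the witness SCC-path $\pi_{j'}$ so that a clean splicing invariant holds.
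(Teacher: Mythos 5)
Your proof attempts an explicit injection from the index set $\{1,\dots,k\}$ into the set of portals, which is a genuinely different route from the paper's. The paper instead introduces a nested sequence $\Ss_0 \subseteq \Ss_1 \subseteq \cdots \subseteq \Ss_\ell$, where $\Ss_i$ is the set of portals $P$ such that, for \emph{every} accepting SCC-path $\pi$ through $P$, the prefix of $\pi$ ending at $P$ is blocked by $\sigma_i=(\mu_1,\dots,\mu_i)$. It then argues that minimality forces each inclusion to be strict: if $\Ss_i=\Ss_{i+1}$, one could remove $\mu_{i+1}$ and still block $\Aa$, contradicting minimality. Since $\Ss_0=\emptyset$ and $\Ss_\ell$ contains all portals, $\ell$ is bounded by the number of portals, i.e.\ by $p^2|Q|^2$. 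The crucial design choice is the universal quantification over \emph{all} SCC-paths through $P$: it makes $\Ss_i$ a canonical, witness-free object, so the chain is automatically monotone and there is never any need to reconcile information coming from different witness paths.

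Your argument, by contrast, picks a single witness $\pi_j$ per index $j$ and a portal $P(j)$ on it, and then must prove these portals are pairwise distinct — and that is exactly where it breaks. Two concrete problems. First, the claim that splicing the prefix of $f_j^*$ with the suffix of $f_{j'}^*$ ``produces a valid assignment for $\pi^*$ that does not use $\mu_{j'}$'' is false: the leftmost assignment $f_{j'}^*$ may continue to use the index $j'$ on the portals immediately following $P$, so the spliced assignment can still use $\mu_{j'}$. Second, the ``reinterpretation'' that is supposed to yield a blocking assignment for $\pi_{j'}$ inside $\sigma\setminus\mu_{j'}$ does not go through: by the definition of $P(j')$ via the leftmost assignment, the portal just before $P$ in $\pi_{j'}$ receives some index $a$ with $j<a<j'$, and no index in $\{a,\dots,j'-1\}$ blocks $P$ — in particular $\mu_j$ (with $j<a$) cannot be substituted for $\mu_{j'}$ at $P$ without breaking monotonicity. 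Establishing injectivity would require a genuinely new argument exploiting that the spliced paths $\pi_j^- P\,\pi_{j'}^+$ and $\pi_{j'}^- P\,\pi_j^+$ are themselves accepting SCC-paths that $\sigma$ must block; your sketch never uses this constraint, and you flag the gap yourself in the final paragraph. The paper's definition of $\Ss_i$ over all SCC-paths is precisely what avoids having to pick, and then reconcile, per-index witnesses.
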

\begin{proof}
    First, remark that there at most $p^2|Q|^2$ portals in $\Aa$.
    Let $\sigma = (\mu_1, \ldots, \mu_\ell)$ be a minimal blocking sequence for $\Aa$.
    For all $i = 1,\ldots, \ell$, we define $\sigma_i = (\mu_1, \ldots, \mu_i)$, and $\sigma_0$ is the empty sequence.
    
    Then, for each $i$, we consider the set $\Ss_i$ of portals $P$ such that for all accepting SCC-path $\SCCpath$ of $\Aa$ containing $P$, the prefix of $\SCCpath$ ending at $P$ is blocked by $\sigma_i$.
    We have $\Ss_0 = \emptyset$, and $\Ss_\ell$ is the set of all portals of $\Aa$.
    
    We claim that for every $i <\ell$, $\Ss_i$ is a proper subset of $\Ss_{i+1}$.
    Otherwise, if $\Ss_i = \Ss_{i+1}$, then removing $\mu_{i+1}$ from $\sigma$ gives a blocking sequence $\sigma'$ of $\Aa$, such that $\sigma'\pobs \sigma$, contradicting the minimality of $\sigma$.
    Therefore, it follows that $\ell \le p^2|Q|^2$.
\end{proof}

\subsubsection{Reducing to the strongly connected case}

To prove a lower bound on the number of queries necessary to test a language when $\MBS(\Aa)$ is infinite, we present a reduction to the strongly connected case.
Under the assumption that $\Aa$ has infinitely many minimal blocking sequences, we exhibit a portal $P$ of $\Aa$ with infinitely many minimal blocking factors and ``isolate it'' by constructing two sequences of positional factors $\sigma_l$ and $\sigma_r$ such that for all $\mu$, $\sigma_l, (\mu), \sigma_r$ is blocking for $\Aa$ if and only if $\mu$ is a blocking factor of $P$.
Then we reduce the problem of testing the language of this portal to the problem of testing the language of $P$.

To define ``isolating $P$'' formally, we define the left (and right) effect of a sequence on an SCC-path.
Informally, the left effect of a sequence $\sigma$ on an SCC-path $\SCCpath$ is related to the index of the first portal in $\SCCpath$ where a run can be after reading $\sigma$, because all previous portals have been blocked. 
The right effect represents the same in reverse, starting from the end of the run.

More formally, the \emph{left effect} of a sequence $\sigma$ on an SCC-path $\SCCpath = P_0 \xrightarrow{a_1} \cdots P_k$ is the largest index $i$ such that the sequence is blocking for $P_0 \xrightarrow{a_1} \cdots P_i$ ($-1$ if there is no such~$i$). We denote it by $\lefteffect{\sigma}{\SCCpath}$.
Similarly, the \emph{right effect} of a sequence on $\pi$ is the smallest index~$i$ such that the sequence is blocking for $P_i \xrightarrow{a_{i+1}} \cdots P_k$  ($k+1$ if there is no such~$i$); we denote it by $\righteffect{\sigma}{\pi}$.

\begin{remark}
	A sequence $\sigma$ is blocking for an SCC-path $\SCCpath = P_0 \xrightarrow{a_1} \cdots  P_k$ if and only if $\lefteffect{\sigma}{\SCCpath} = k$, if and only if $\righteffect{\sigma}{\SCCpath} = 0$.
	
	Also, given two sequences $\sigma_l, \sigma_r$, the sequence $\sigma_l \sigma_r$ is blocking for $\SCCpath$ if and only if $\lefteffect{\sigma_l}{\SCCpath} \ge \righteffect{\sigma_r}{\SCCpath}$.
\end{remark}

For the next lemma we define a partial order on portals:
$P \leqportals P'$ if all blocking factors of $P'$ are also blocking factors of $P$.
We write $\geqportals$ for the reverse relation, $\equivportals$ for the equivalence relation $\leqportals \cap \geqportals$ and $\nequivportals$ for the complement relation of $\equivportals$.

Additionally, given an SCC-path~$\SCCpath = P_0\xrightarrow{x_1} \ldots P_k$ and two sequences of positional words $\sigma_l, \sigma_r$, we say that the portal $P_i$ \emph{survives $(\sigma_l, \sigma_r)$ in $\SCCpath$} if $\lefteffect{\sigma_l}{\SCCpath} < i < \righteffect{\sigma_r}{\SCCpath}$.

\begin{definition}
	Let $P$ be a portal and  $\sigma_l$ and $\sigma_r$ sequences of positional words.
	
	We define three properties that those objects may have:
	\begin{description}
		\item[P1)] $\sigma_l \sigma_r$ is not blocking for $\Aa$
		\item[P2)] $P$ has infinitely many minimal blocking factors
		\item[P3)] for any accepting SCC-path $\SCCpath$ in $\Aa$, 
		every portal in $\SCCpath$ which survives $(\sigma_l, \sigma_r)$ is $\equivportals$-equivalent to $P$.
	\end{description}
\end{definition}

\begin{restatable}{lemma}{HardAutHardPortal}
	\label{lem:hard-aut-to-hard-portal}
	If $\Aa$ has infinitely many minimal blocking sequences, then there exist a portal $P$ and sequences $\sigma_l$ and $\sigma_r$ satisfying properties P1, P2 and P3.
\end{restatable}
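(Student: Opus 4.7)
The plan combines a signature-based pigeonhole with a minimality argument to isolate a single $\equivportals$-equivalence class of portals.

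First, by \cref{lem:bound-length-min-blocking} every minimal blocking sequence of $\Aa$ has length at most $p^2|Q|^2$, so since $\MBS(\Aa)$ is infinite, for some $k$ there are infinitely many MBS of length exactly $k$. For each such $\sigma = (\mu_1, \ldots, \mu_k)$, define its \emph{signature} as the function $i \mapsto \{[P] : \mu_i \text{ is a blocking factor of } P\}$ from $\{1, \ldots, k\}$ to subsets of $\equivportals$-equivalence classes of portals. Since there are only finitely many signatures, pigeonhole yields an infinite subfamily $\Ff$ of MBS sharing one signature $s$. The key structural observation is that for any $\sigma \in \Ff$ and any position $i$, the reduction $(\sigma_l, \sigma_r) = ((\mu_1, \ldots, \mu_{i-1}), (\mu_{i+1}, \ldots, \mu_k))$ has left and right effects on each accepting SCC-path that depend only on $s$, not on the specific $\sigma$. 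Consequently the set $U$ of portals surviving $(\sigma_l, \sigma_r)$ across all SCC-paths depends only on $s$ and $i$, as does the ``center'' $C = \{\mu : \sigma_l(\mu)\sigma_r \text{ is blocking for } \Aa\}$, which equals the intersection of the blocking-factor sets of the portals of $U$. Since $\Ff$ is infinite and $k$ is fixed, some position $i^*$ takes infinitely many distinct values across $\Ff$, all of which lie in $C$; so $C$ at $i^*$ is infinite, and taking $(\sigma_l, \sigma_r)$ from any such $\sigma$ at position $i^*$ satisfies P1 by minimality of $\sigma$.

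To obtain P3, I would further restrict to pairs $(\sigma_l, \sigma_r)$ satisfying P1 and having infinite center, picking one whose surviving-portal set $U$ is inclusion-minimal (which exists since only finitely many sets $U$ can arise this way). Let $P$ be a $\leqportals$-maximal element of $U$, and suppose for contradiction that some $P' \in U$ has $P' \not\equivportals P$. Maximality of $P$ gives $P \not\leqportals P'$, so there is a positional word $\nu$ that is a blocking factor of $P'$ but not of $P$. Inserting $\nu$ at a carefully chosen location in $\sigma_l$ or $\sigma_r$ produces a new pair whose surviving set is strictly smaller: $P'$ becomes blocked, while every portal $\equivportals$-equivalent to $P$ remains unblocked by $\nu$ (so $P$ still survives, P1 is preserved, and the new center, being an intersection over a smaller set, only grows and remains infinite). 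This contradicts the inclusion-minimality of $U$, so $U \subseteq [P]$, which is P3. Property P2 follows immediately: since $U \subseteq [P]$, $C$ equals the set of blocking factors of $P$, and the minimality of each enclosing MBS guarantees that no proper factor of $\mu_{i^*}$ lies in $C$, so each of the infinitely many $\mu_{i^*}$'s is a minimal blocking factor of $P$.

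The main obstacle is the precise insertion of $\nu$: it must simultaneously (i) remove $P'$ from the surviving set, (ii) keep $P$ surviving so that P1 is preserved, and (iii) not accidentally fully block an SCC-path that was previously unblocked. My expectation is that inserting $\nu$ at a boundary of $\sigma_l$ or $\sigma_r$ adjacent to the surviving segment of $P'$'s SCC-path works, because the resulting extension of the left or right effect then cannot pass a $[P]$-equivalent portal (since $\nu$ is not a blocking factor of any such portal). When the surviving segment of $P'$'s SCC-path also contains portals between $P$ and $P'$, the argument likely needs iterated insertions of several blocking factors, processing the portals in $U \setminus [P]$ in an appropriate order so that each step preserves richness and P1.
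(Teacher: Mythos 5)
Your proposal follows essentially the same route as the paper: extract an infinite uniform family of minimal blocking sequences by pigeonhole, use minimality to get P1, pick an extremal portal with respect to $\leqportals$, and iteratively extend $\sigma_l,\sigma_r$ to isolate its $\equivportals$-class. Your signature-based pigeonhole and the inclusion-minimality-of-$U$ trick are clean alternatives to the paper's extraction by left/right effects followed by an explicit while-loop, and your choice of a $\leqportals$-\emph{maximal} $P$ in $U$ is the right one for the argument: one needs, for each surviving $P'\nequivportals P$, a $\tau\in\MBF(P')\setminus\MBF(P)$, i.e.\ that no surviving $P'$ has a strictly smaller blocking-factor set than $P$. (The paper writes ``$\leqportals$-minimal'' here, which with the paper's own definition of $\leqportals$ gives the opposite inclusion; I believe this is a slip in the paper, and your version is the correct one.)

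That said, there are two real gaps. First, the insertion step for P3, which you flag yourself: inserting $\nu$ at the end of $\sigma_l$ (say) does not necessarily shrink $U$, because extending the left effect along a given SCC-path can stall on the very first surviving portal that $\nu$ does not block (for instance, one in $[P]$), so $P'$ may still survive. Making this step rigorous requires the kind of case analysis the paper does in its second claim: for each SCC-path, locate the contiguous block of $[P]$-equivalent portals around a surviving $P$-equivalent one, and append the full sequence $\tau_{P_{i_\ell+1}},\ldots,\tau_{P_c}$ (resp.\ $\tau_{P_d},\ldots,\tau_{P_{i_r-1}}$) of blocking factors for the non-equivalent portals on the left and right, so that after the extension exactly the $[P]$-block survives on that path. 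This cannot be reduced to inserting a single $\nu$ and handwaving an ordering.

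Second, and more seriously because you do not flag it, your P2 argument does not go through as stated. You conclude that each $\mu_{i^*}$ is a \emph{minimal} blocking factor of $P$ by appealing to minimality of the enclosing MBS, but that minimality only says no proper factor of $\mu_{i^*}$ lies in the \emph{original} center $C^{\mathrm{orig}}=\bigcap_{Q\in U^{\mathrm{orig}}}\MBF(Q)$. After extending to achieve P3, the center grows to $\MBF(P)\supsetneq C^{\mathrm{orig}}$, and a proper factor of $\mu_{i^*}$ may lie in $\MBF(P)$ while failing to block some $Q\in U^{\mathrm{orig}}\setminus U^{\mathrm{new}}$. So the $\mu_{i^*}$'s need not be minimal blocking factors of $P$, and an infinite $\MBF(P)$-set does not by itself imply infinitely many \emph{minimal} blocking factors (any nonempty upward-closed set is infinite). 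There is an additional subtlety: if you choose the inclusion-minimal pair over \emph{all} pairs with P1 and infinite center, rather than over extensions of the original one, the $\mu_{i^*}$'s need not even lie in the new center. The paper avoids this order-of-quantifiers problem by establishing P2 \emph{before} any extension --- the surviving portal $P_*$ is fixed once and for all, its infinitude of minimal blocking factors is an intrinsic property of $P_*$, and the subsequent extension of $\sigma_l,\sigma_r$ preserves P2 trivially. You would need to reorder your argument the same way, or give a separate covering/pigeonhole argument showing directly that some surviving portal of the original pair already has infinitely many minimal blocking factors.
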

\begin{proof}
	By \cref{lem:bound-length-min-blocking}, a minimal blocking sequence has a bounded number of elements.
	Therefore, if $\Aa$ as an infinite number of minimal blocking sequences, there exists an integer~$i_*$ and an infinite family $(\sigma_j)_{j\in\NN}$ of minimal blocking sequences of $\Aa$ such that the length of~$i_*$-th term of $\sigma_j$ is at least $j$, for every $j$. 
	For each $j$, let $\sigma_{j, l}$ denote the sequence containing the elements of $\sigma_j$, up to index $i_*-1$,
	and let $\sigma_{j, r}$ denote the sequence with the elements starting from index $i_* + 1$.
	As there is a finite number of SCC-paths in $\Aa$, we can extract from the sequence $(\sigma_j)_j$ an infinite subsequence $(\sigma_j')_{j\in \NN}$ such that for all SCC-paths $\SCCpath$ of $\Aa$,  all of the $\sigma_{j, l}$ have the same left effect as $\sigma_l = \sigma_{0, l}$ on $\SCCpath$, and symmetrically for the right effect of the $(\sigma_{j, r})_j$ and $\sigma_r = \sigma_{0, r}$.

	Then, we can replace $\sigma_{j, l}$ with $\sigma_{l}$ and $\sigma_{j, r}$ with $\sigma_{r}$ in each $\sigma_j'$, to obtain an infinite sequence of minimal blocking sequences of the form $(\sigma_l, \nu_j, \sigma_r)$, where each $\nu_j$ is a positional word of length at least $j$.
	As these blocking sequences are minimal, the pair $(\sigma_{l}, \sigma_{r})$ is not blocking for $\Aa$, there is an accepting SCC-path~$\SCCpath_*$ and a portal~$P_*$ that survives $(\sigma_{l}, \sigma_{r})$ in that~$\SCCpath_*$.
	If there are multiple possible choices for~$\SCCpath_*$ and~$P_*$, we choose them so that~$P_*$ is $\leqportals$-minimal among the possible choices.
	The following claim shows that we can choose such a $P_*$ with infinitely many minimal blocking factors.

	\begin{claim}
		There exists such a $P_*$ with infinitely many minimal blocking factors.
	\end{claim}
	\begin{claimproof}
		The word $\nu_j$ is blocking for all portals that survive $(\sigma_{l}, \sigma_{r})$, and there are arbitrarily long~$\nu_j$ such that $(\sigma_l, \nu_j, \sigma_r)$ is a minimal blocking sequence.
		Therefore, all letters in each $\nu_j$ must belong to a minimal blocking factor of some $\leqportals$-minimal portal, hence one of them has infinitely many minimal blocking factors.
	\end{claimproof}

	So far, properties P1 and P2 are satisfied.
	Next, we extend the sequences $\sigma_l$ and $\sigma_r$ until the property P3 is satisfied, while preserving properties P1 and P2.

	\begin{claim}
		There exist $\sigma_l, \sigma_r$ such that
		$\sigma_l \sigma_r$ is not a blocking sequence for $\Aa$, and
		for any accepting SCC-path $\pi$ in $\Aa$, 
		every surviving portal in $\pi$ is $\equivportals$-equivalent to $P_*$.
	\end{claim}
	\begin{claimproof}
		Note that for each $P \nequivportals P_*$, we can pick a positional word~$\tau_{P}$ that is blocking for~$P$ but not for~$P_*$, since~$P_*$ is $\leqportals$-minimal.
		
		We extend $\sigma_l$ and $\sigma_r$ as follows.
		While there is a surviving portal $P$ that is not $\equivportals$-equivalent to $P_*$:
		\begin{itemize}
			\item We pick an SCC-path $\SCCpath = P_0 \xrightarrow{a_1}\ldots P_k$ such that $P$ survives in $\SCCpath$.
			
			\item Let $i_\ell = \lefteffect{\sigma_l}{\SCCpath}$ and $i_r = \righteffect{\sigma_r}{\SCCpath}$
			
			\item If for all $i \in \set{i_\ell+1,\ldots,i_r-1}$, $P_i \nequivportals P_*$ then we append at the end of $\sigma_l$ the sequence $\tau_{P_{i_\ell+1}}, \ldots, \tau_{P_{i_r-1}}$.
			The sequence $\sigma_l \sigma_r$ is now blocking for $\SCCpath$.
			On the other hand, since we did not add any blocking factor for $P_*$, there must still be a surviving portal that is $\equivportals$-equivalent to it.
			
			\item If there is an $i \in \set{i_\ell+1,\ldots,i_r-1}$ such that $P_i \equivportals P_*$ then let $c$ be the maximal index in $\set{i_\ell+1, \ldots, i}$ such that $P_c$ is not equivalent to $P_*$ for $\equivportals$, or $i_\ell$ if there is no such index. 
			Symmetrically, let $d$ the minimal index in $\set{i, \ldots, i_r-1}$ such that $P_d \nequivportals P_*$, or $i_r$ if there is no such index.
			We append at the end of $\sigma_l$ the sequence $\tau_{P_{i_\ell+1}}, \ldots, \tau_{P_{c}}$.
			We append at the beginning of $\sigma_r$ the sequence $\tau_{P_d}, \ldots, \tau_{P_{i_r-1}}$.
			Now all surviving portals in $\SCCpath$ are $\equivportals$-equivalent to $P_*$, and $P_i$ still survives.
		\end{itemize}
		
		We iterate this step until all surviving portals are $\equivportals$-equivalent to $P_*$.
		We made sure that at least one portal was still surviving after each step, hence in the end the sequence~$\sigma_l \sigma_r$ is not blocking for $\Aa$.
	\end{claimproof}
\end{proof}

\begin{lemma}
	\label{lem:seq-left}
	Let $\SCCpath = P_0 \xrightarrow{a_1} \cdots  P_\ell$ be an accepting SCC-path,
	denote $P_j = \portal{s_j}{x_j}{t_j}{y_j}$ for each $j = 0,\ldots,\ell$,
	let $i \in \set{0, \ldots, \ell}$,
	and let $\sigma_l = (\nu_{1,l}, \ldots, \nu_{k,l})$ be a sequence such that $\lefteffect{\sigma_l}{\SCCpath} < i$.

	Then, for any integer $N \in \NN$, there is a positional word $w_l^*$ of length at most $(3|\Aa|^3+|\Aa|)(k+1)  + N(2p^2+p)k|\Aa| + pN\sum_{t=1}^k|\nu_{t,l}|$ such that $|w_l^*| = x_i-x_0 \pmod{p}$, there is a run reading $w_l^*$ from $s_0$ to $s_i$ in $\Aa$, and $\timedword{x_0}{w_l^*}$ contains $N$ occurrences of $\nu_{1,l}$, followed by $N$ occurrences of $\nu_{2,l}$, etc. up to $\nu_{k,l}$, all disjoint.
\end{lemma}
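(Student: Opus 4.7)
The plan is to reduce to a combinatorial selection problem, then build the run piece by piece using strong connectivity of each SCC, and finally account for lengths.

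\textbf{Step 1 (a monotone selection).} Since $\lefteffect{\sigma_l}{\SCCpath} < i$, $\sigma_l$ is not blocking for the prefix $\pi' = P_0 \xrightarrow{a_1} \cdots P_i$. I claim that there is a non-decreasing map $\alpha : \{1, \ldots, k\} \to \{0, \ldots, i\}$ with $\nu_{t,l}$ not blocking for $P_{\alpha(t)}$ for every $t$. Construct $\alpha$ greedily, with $\alpha(0) := 0$ and $\alpha(t) := \min\{j \in [\alpha(t-1), i] : \nu_{t,l} \text{ non-blocking for } P_j\}$. If this set is ever empty at some step $t$, define $\beta : \{0, \ldots, i\} \to \{1, \ldots, k\}$ by $\beta(j) = 1$ on $[0, \alpha(1)-1]$, $\beta(j) = s+1$ on $[\alpha(s), \alpha(s+1)-1]$ for $1 \le s \le t-2$, and $\beta(j) = t$ on $[\alpha(t-1), i]$. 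By the minimality used to define each $\alpha(s)$ and the greedy failure at step $t$, $\nu_{\beta(j),l}$ is blocking for every $P_j$, so $\beta$ witnesses that $\sigma_l$ is blocking for $\pi'$, contradicting the hypothesis.

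\textbf{Step 2 (constructing $w_l^*$).} Let $T_j := \alpha^{-1}(j)$, enumerated in increasing order. For each $j \in \{0, \ldots, i-1\}$, \cref{lemma:intuition-timed-word} applied inside the strongly connected automaton for $\lang{P_j}$ provided by \cref{lemma:portal-to-SC} gives, for every $t \in T_j$, a run labeled by $\nu_{t,l}$ in the SCC of $P_j$ starting in the periodicity class consistent with position $x_j$. I assemble these into a word $w_j$ labeling a run from $s_j$ to $t_j$ inside that SCC, containing $N$ disjoint ordered copies of each $\nu_{t,l}$ with $t \in T_j$, by interleaving them with connector paths provided by \cref{fact:periodicity}, each of length at most the reachability constant, adjusted by at most $p$ to match positions modulo $p$. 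If $T_i \ne \emptyset$, I build $w_i$ analogously as a loop at $s_i$ in the SCC of $P_i$: run from $s_i$ to $t_i$ collecting the required $\nu_{t,l}$'s, then close with a path $t_i \to s_i$ of length $\equiv x_i - y_i \pmod p$; such a path exists by strong connectivity and \cref{fact:periodicity}. Set $w_l^* := w_0 \, a_1 \, w_1 \, a_2 \cdots a_i \, w_i$ (omitting the final $w_i$ if $T_i = \emptyset$). By construction $w_l^*$ labels a run from $s_0$ to $s_i$ and $|w_l^*| \equiv x_i - x_0 \pmod p$, and $\timedword{x_0}{w_l^*}$ contains the required ordered disjoint occurrences.

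\textbf{Step 3 (length bound).} The $\nu$-content contributes $N\sum_t |\nu_{t,l}|$, absorbed into the $pN\sum|\nu_{t,l}|$ term (the factor $p$ slack covering the position-alignment adjustments). There are at most $k$ distinct blocks of $\nu$-occurrences, each using $N+1$ connectors between consecutive copies, plus $\cO(1)$ entry/exit pads per SCC and $i \le |\Aa|$ single-letter inter-SCC transitions $a_j$. Bounding each connector by the reachability constant (of order $p^2|\Aa|$ after accounting for the positional refinement) yields the $N(2p^2+p)k|\Aa|$ term, while the $\cO(k)$ entry/exit/transition overhead gives the $(3|\Aa|^3 + |\Aa|)(k+1)$ term.

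\textbf{Main obstacle.} The whole proof rests on Step 1: producing $\alpha$ requires turning the abstract non-blocking hypothesis into a concrete assignment, and the contrapositive $\beta$-construction is the key insight. A secondary subtlety is the case $\alpha(t) = i$, where the run must end at $s_i$ rather than $t_i$; this is handled cleanly by the loop construction in Step 2, whose existence is guaranteed by the periodicity consistency of \cref{fact:periodicity}.
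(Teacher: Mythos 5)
Your Step 1 and the paper's proof are the same idea in two guises. The paper proves the lemma by induction on $k$: at each step, $j = \lefteffect{\nu_{1,l}}{\SCCpath}$ determines where to insert $\nu_{1,l}$ (at portal $P_{j+1}$, which it does not block), then the remaining suffix $(\nu_{2,l},\ldots,\nu_{k,l})$ is handled by the induction hypothesis on the tail SCC-path $P_{j+1}\xrightarrow{a_{j+2}}\cdots P_\ell$. Your greedy map $\alpha$ unrolls that induction into one pass, and your contrapositive $\beta$-construction (showing that failure at step $t$ would exhibit a non-decreasing index assignment witnessing $\lefteffect{\sigma_l}{\SCCpath}\ge i$) is exactly the justification the paper packs into the claim $\lefteffect{\sigma'}{\SCCpath'} < i-j-1$ inside the recursion. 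Your reformulation is arguably cleaner — it sidesteps the index-shifting that the paper's recursion forces — but it is not a different route. Step 2's assembly (shortest connectors within each SCC, loop at $s_i$ when $T_i\neq\emptyset$) is also structurally the paper's construction.

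Step 3 is where the argument becomes imprecise and, as written, does not reach the stated bound. You bound each connector by ``the reachability constant (of order $p^2|\Aa|$ after accounting for the positional refinement)''. The reachability constant of the positional automaton of \cref{lemma:portal-to-SC} is actually $\rho \le 3(p|\Aa|)^2 = 3p^2|\Aa|^2$, not $p^2|\Aa|$; with $\Theta(kN)$ connectors, that gives a total of order $kNp^2|\Aa|^2$, which does not fit inside $N(2p^2+p)k|\Aa|$. Moreover, the reachability constant is the wrong tool here: it is needed when you require a path of a \emph{prescribed} length $r\ge\rho$, whereas in the positional automaton the residue modulo $p$ is part of the state, so a \emph{simple} path already has the right residue and has length at most $p|\Aa|$. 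That $\cO(p|\Aa|)$-per-connector bound — equivalently the paper's cycle-trimming giving a cycle $v'_j$ of length $\le |\nu_{t,l}|+(2p+1)|\Aa|$, repeated $pN$ times — is what produces the $N(2p^2+p)k|\Aa|$ term. So the construction works, but the length accounting needs the shortest-path/cycle-trimming observation rather than the reachability constant.
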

\begin{proof}	
	We define $w_l^*$ by induction on $k$, the length of $\sigma_l$.
	As $\SCCpath$ is accepting, by definition its language $\lang{\SCCpath}$ is nonempty, and thus for all $j \in \set{0, \ldots, \ell}$, there exists a word $u_j$ of length $y_j-x_j \pmod{p}$ that labels a path from $s_j$ to $t_j$.
	By \cref{fact:periodicity}, there is such a word $u_j$ of length at most $3|\Aa|^2$.
	As a result, for all $z \in \set{0, \ldots, \ell}$ we can form a word $w_z = u_0 a_1 u_1 \cdots a_{z}$, of length at most $3 |\Aa|^3 + |\Aa|$, that labels a path of length $x_z-x_0 \pmod{p}$ from $q_{0}$ to $s_z$ in $\Aa$.
	If $k=0$, we can simply set $w_l^* = w_i$.
	
	Let $k> 0$, and assume that the lemma holds for $k-1$.
	Let $j = \lefteffect{\nu_{1,l}}{\SCCpath}$.
	As $\lefteffect{\nu_{1,l}}{\SCCpath} \leq \lefteffect{\sigma_l}{\SCCpath} < i$, we have $j<i$, hence $\nu_{1,l}$ is not blocking for $P_{j+1}$.
	As a consequence, there is a word $v_j$ that labels a path from $s_j$ to $t_j$ such that $\tau_j = \timedword{x_j}{v_j}$ has $\nu_{1,l}$ as a factor.
	We can remove cycles of length $0 \pmod{p}$ in that path, before and after reading $\tau_j$, so we can assume that $|v_j| \leq |\nu_{1,l}| + 2p|\Aa|$. 
	As $s_j$ and $t_j$ are in the same SCC, we can extend $v_j$ into a word $v'_j$ of length at most $|v_j| + |\Aa| \leq |\nu_{1,l}| + (2p+1)|\Aa|$ that labels a cycle from $s_j$ to itself.
	
	Let $\sigma' = (\nu_{2,l}, \ldots, \nu_{k,l})$ and $\SCCpath' = P_{j+1} \xrightarrow{a_{j+2}} \cdots  P_\ell$.
	As $\sigma_l$ is the concatenation of $\nu_{1,l}$ and $\sigma'$, and $j = \lefteffect{\nu_{1,l}}{\SCCpath}$, we have $\lefteffect{\sigma'}{\SCCpath'} < i-j-1$.
	By induction hypothesis, there is a word $w'$ of length at most $(3|\Aa|^3+|\Aa|)k  + N(2p^2+p)(k-1)|\Aa| + pN\sum_{t=2}^{k}|\nu_{t,l}|$ such that $|w'| = x_i-x_{j+1} \pmod{p}$, there is a run reading $w'$ from $s_{j+1}$ to $s_i$ in $\Aa$, and $\timedword{x_{j+1}}{w'}$ contains $N$ occurrences of $\nu_{t,l}$, all disjoint, for each $t = 2,\ldots, k$.

	We set $w_l^* = w_{j+1} (v'_j)^{pN} w'$. This word has length $x_i-x_0 \pmod{p}$, and satisfies:
	\begin{align*}
		|w_l^*| 
			&\le|w_{j+1}|+pN|v'_j|+ |w'|\\
			&\le 3|\Aa|^3+|\Aa| + pN(|\nu_{1,l}| + (2p+1)|\Aa|) +|w'|\\
			&\le (3|\Aa|^3+|\Aa|)(k+1)  + N(2p^2+p)k|\Aa| + pN\sum_{t=1}^k|\nu_{t,l}|.
	\end{align*}
	By construction, the word $\timedword{x_0}{w_l^*}$ labels a path from $s_0$ to $s_i$, and contains $N$ occurrences of $\nu_{1,l}$, followed by $N$ occurrence of $\nu_{2,l}$, etc. up to $\nu_{k,l}$, all disjoint, which concludes the proof.
\end{proof}

\begin{lemma}
	\label{lem:seq-right}
	Let $\SCCpath = P_0 \xrightarrow{a_1} \cdots  P_\ell$ be an accepting SCC-path,
	denote $P_j = \portal{s_j}{x_j}{t_j}{y_j}$ for each $j = 0,\ldots,\ell$,
	let $i \in \set{0, \ldots, \ell}$,
	and let $\sigma_r = (\nu_{1,r}, \ldots, \nu_{k,r})$ be a sequence such that $\lefteffect{\sigma_l}{\SCCpath} < i$.

	Then, for any integer $N \in \NN$, there is a word $w_r^*$ of length at most $(3|\Aa|^3+|\Aa|)(k+1)  + N(2p^2+p)k|\Aa| + pN\sum_{i=1}^k|\nu_{i,r}|$ such that $|w_r^*| = x_i-x_0 \pmod{p}$, there is a run reading $w_r^*$ from $s_0$ to $s_i$ in $\Aa$, and $\timedword{x_0}{w_r^*}$ contains $N$ occurrences of $\nu_{1,r}$, followed by $N$ occurrence of $\nu_{2,r}$, etc. up to $\nu_{k,r}$, all disjoint.
\end{lemma}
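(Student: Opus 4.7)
The plan is to prove this symmetrically to \cref{lem:seq-left}. Observe that the statement is almost a mirror image of \cref{lem:seq-left}: the conclusion is the same shape (a word of bounded length labelling a path from $s_0$ to $s_i$ and containing the prescribed factors in order), only the hypothesis is swapped so that the relevant constraint comes from the right effect of $\sigma_r$ rather than the left effect of $\sigma_l$ (reading the hypothesis as $\righteffect{\sigma_r}{\SCCpath} > i$, which is the natural mirror condition). Consequently the cleanest route is either to repeat the induction of \cref{lem:seq-left} from the opposite end, or to reduce to \cref{lem:seq-left} by reversing the automaton. I would favor the reduction.

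More concretely, the plan is to introduce the reverse automaton $\Aa^{\mathrm{op}}$ obtained by flipping every transition of $\Aa$, swapping initial and final states, and adapting positions: a position $x$ modulo $p$ becomes $-x$ modulo $p$. Under this involution, SCCs of $\Aa$ correspond to SCCs of $\Aa^{\mathrm{op}}$, portals $\portal{s}{x}{t}{y}$ become portals $\portal{t}{-y}{s}{-x}$, and an accepting SCC-path $\SCCpath = P_0\xrightarrow{a_1}\cdots P_\ell$ becomes an accepting SCC-path $\SCCpath^{\mathrm{op}} = P_\ell^{\mathrm{op}}\xrightarrow{a_\ell}\cdots P_0^{\mathrm{op}}$, with $\lang{P_j^{\mathrm{op}}}$ the mirror of $\lang{P_j}$. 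Under this correspondence, the right effect of a sequence $\sigma_r$ on $\SCCpath$ equals the left effect of the reversed sequence $\sigma_r^{\mathrm{op}} = (\nu_{k,r}^{\mathrm{op}}, \ldots, \nu_{1,r}^{\mathrm{op}})$ (where each $\nu_{t,r}^{\mathrm{op}}$ is the mirror positional word) on $\SCCpath^{\mathrm{op}}$, up to a reindexing $j \mapsto \ell - j$.

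With this dictionary in hand, I would apply \cref{lem:seq-left} to $\Aa^{\mathrm{op}}$, $\SCCpath^{\mathrm{op}}$, the index $\ell - i$ and the sequence $\sigma_r^{\mathrm{op}}$. The hypothesis $\righteffect{\sigma_r}{\SCCpath} > i$ translates to the corresponding left-effect inequality, so the lemma yields a word $\widetilde{w}$ in $\Aa^{\mathrm{op}}$ of the required length, labelling a path and containing $N$ disjoint occurrences of each $\nu_{t,r}^{\mathrm{op}}$ in order. Reversing $\widetilde{w}$ back to a word $w_r^*$ of $\Aa$ preserves the length bound, provides a run from $s_0$ to $s_i$ in $\Aa$, ensures that $|w_r^*| = x_i - x_0 \pmod p$, and recovers the occurrences of $\nu_{1,r}, \ldots, \nu_{k,r}$ in the correct order as factors of $\timedword{x_0}{w_r^*}$.

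The main obstacle I anticipate is purely bookkeeping: verifying that positional words and periodicity classes behave correctly under the reversal (one needs the partition of \cref{fact:periodicity} to be preserved up to relabelling, and the indices $x_j$ to transform as $-x_j \pmod p$ consistently across all portals of $\SCCpath$). Once this correspondence is pinned down once and for all, the conclusion of \cref{lem:seq-right} is obtained immediately from \cref{lem:seq-left}, with the same numerical bounds since reversing words does not change their lengths. If instead one prefers to avoid introducing $\Aa^{\mathrm{op}}$, the direct proof is the same induction as for \cref{lem:seq-left} but carried out from the rightmost term of $\sigma_r$ inwards, picking $j = \righteffect{\nu_{k,r}}{\SCCpath}$ at each step and padding with a cycle on $s_j$ of length a multiple of $p$; the bookkeeping is then slightly more cumbersome but strictly parallel to the previous proof.
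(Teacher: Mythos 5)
Your proposal is correct. The paper's own proof is a single line — ``By a proof symmetric to the one of the previous lemma'' — which corresponds to your alternative suggestion: repeat the induction of \cref{lem:seq-left} but peel off $\sigma_r$ from the right, choosing at each step $j = \righteffect{\nu_{k,r}}{\SCCpath}$ and padding with a cycle through $t_j$ of length divisible by $p$. Your preferred route via the reversed automaton $\Aa^{\mathrm{op}}$ is a genuinely different but equally valid formalization of the same symmetry: it buys a black-box reuse of \cref{lem:seq-left} at the cost of setting up the reversal dictionary once (states flipped, positions negated modulo $p$, portals $\portal{s}{x}{t}{y}\mapsto\portal{t}{-y}{s}{-x}$, positional words mirrored), and the checks you flag — that the periodicity partition of \cref{fact:periodicity} and the position annotations transform consistently — are routine but must be done. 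Both routes yield the same numerical bounds. You also correctly spot that the statement as printed is a verbatim copy of \cref{lem:seq-left}'s hypothesis and conclusion; reading the hypothesis as $\righteffect{\sigma_r}{\SCCpath} > i$ and the conclusion as a run from $t_i$ to $q_f$ (as the usage in \cref{lem:P1-P3-then-hard} confirms) is the intended mirror statement.
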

\begin{proof}
	By a proof symmetric to the one of the previous lemma.
\end{proof}

Given a sequence $\sigma$, define $\sigmasize{\sigma}$ as the sum of the lengths of the terms of $\sigma$.
\begin{restatable}{lemma}{PropertiesImpliesHard}
	\label{lem:P1-P3-then-hard}
	If there exist a portal $P$ and $\sigma_l$, $\sigma_r$ satisfying properties P1, P2 and P3 then $\lang{\Aa}$  is hard.
\end{restatable}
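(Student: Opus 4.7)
The plan is to reduce testing $\lang{P}$ to testing $\lang{\Aa}$ and invoke the strongly-connected lower bound. First, by P2 the portal $P$ has infinitely many minimal blocking factors, and by \cref{lemma:portal-to-SC} $\lang{P}$ is the language of a strongly connected NFA whose minimal blocking factors are exactly those of $P$; so \cref{thm:scc-lb} gives an $\Omega(\epslogeps)$ lower bound for testing $\lang{P}$ with perfect completeness. By P1 there is an accepting SCC-path $\SCCpath^* = P_0 \xrightarrow{a_1} \cdots P_\ell$ and an index $i^*$ with $\lefteffect{\sigma_l}{\SCCpath^*} < i^* < \righteffect{\sigma_r}{\SCCpath^*}$, so $P_{i^*}$ survives $(\sigma_l, \sigma_r)$; by P3, $P_{i^*} \equivportals P$, and hence $\lang{P_{i^*}}$ also requires $\Omega(\epslogeps)$ queries to test.

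Next, I build the reduction. Apply \cref{lem:seq-left} and the symmetric \cref{lem:seq-right} with $N = \lceil \eps n / C \rceil$, where $C$ is a constant depending only on $\Aa, \sigma_l, \sigma_r$ chosen so that $|w_l^*| + |w_r^*| \leq \eps n / 4$. The resulting $w_l^*$ labels a run $q_0 \to s_{i^*}$ in $\Aa$ and, in positional form, contains $N$ disjoint copies of each factor of $\sigma_l$ in the prescribed order; $w_r^*$ labels a run $t_{i^*} \to q_f$ and contains $\sigma_r$ with the same redundancy. Given $v$ of length $n$ with $|v| \equiv y_{i^*} - x_{i^*} \pmod{p}$, the reduction outputs $u = w_l^* \cdot v \cdot w_r^*$ of length $n' \leq (1 + \eps/2) n$. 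Perfect completeness is immediate: if $v \in \lang{P_{i^*}}$, concatenating runs gives $u \in \lang{\Aa}$.

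For the farness direction, I show that if $v$ is $\eps$-far from $\lang{P_{i^*}}$, then $u$ is $\Omega(\eps)$-far from $\lang{\Aa}$. Fix any $u' \in \lang{\Aa}$ with $k = d(u, u')$; the goal is $k = \Omega(\eps n)$. If $k \geq N$ we are done, so assume $k < N$. Since each factor of $\sigma_l$ occurs $N > k$ times disjointly in the $w_l^*$-region of $u$, at least one copy of each survives the $k$ modifications; hence $u'$ still contains $\sigma_l$ in its prefix and $\sigma_r$ in its suffix at the correct positional indices. Fix an accepting SCC-path $\SCCpath'$ witnessing $u' \in \lang{\Aa}$: by P3, every portal of $\SCCpath'$ that survives $(\sigma_l, \sigma_r)$ is $\equivportals$-equivalent to $P_{i^*}$, and together these surviving portals cover (up to $O(1)$ boundary positions at each end) the middle region of $u'$. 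Let $v' := u'[|w_l^*| \dd |w_l^*| + n - 1]$; then $v'$ has no factor blocking for $P_{i^*}$. Using \cref{lemma:many-short-blocking} applied to the strongly connected NFA for $\lang{P_{i^*}}$, since $v$ is $\eps$-far from $\lang{P_{i^*}}$ it contains $\Omega(\eps n)$ disjoint blocking factors for $P_{i^*}$, each of which must be destroyed in $v'$, giving $d(v, v') = \Omega(\eps n)$ and hence $k = \Omega(\eps n)$. Since $|u| = O(n)$, $u$ is $\Omega(\eps)$-far from $\lang{\Aa}$.

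An $\eps'$-tester for $\lang{\Aa}$ with $q$ queries therefore yields an $\eps$-tester for $\lang{P_{i^*}}$ with $q$ queries, for $\eps' = \Theta(\eps)$, and combined with the $\Omega(\epslogeps)$ lower bound for $\lang{P_{i^*}}$ this gives $q = \Omega(\epslogeps)$, so $\lang{\Aa}$ is hard. The main obstacle is the step that controls the accepting run of $u'$: rigorously arguing that after $k < N$ modifications the surviving $\sigma_l$ and $\sigma_r$ force the decomposition of $u'$ along $\SCCpath'$ to place the middle $n$ positions entirely within portals $\equivportals$-equivalent to $P_{i^*}$ (up to $O(1)$ boundary positions), and then cleanly converting the absence of blocking factors of $P_{i^*}$ in $v'$ into a Hamming distance lower bound between $v$ and $v'$.
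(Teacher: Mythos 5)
Your proof takes essentially the same route as the paper: build $w_l^*$ and $w_r^*$ via Lemmas~\ref{lem:seq-left} and~\ref{lem:seq-right}, reduce testing the surviving portal's language to testing $\lang{\Aa}$ via $w = w_l^* v w_r^*$, and invoke \cref{thm:scc-lb}. The farness direction is phrased slightly differently---the paper counts the $\Omega(\eps n)$ disjoint blocking sequences $\sigma_l\,\mu\,\sigma_r$ embedded in $w$ and reads off the distance bound, whereas you fix an arbitrary nearby $u'\in\lang{\Aa}$ and argue through the surviving copies of $\sigma_l,\sigma_r$ and the accepting SCC-path of $u'$---but this is the same mechanism made a bit more explicit; one small correction: $v'$ may still contain $O(1)$ blocking factors for $P_{i^*}$ that straddle the portal boundaries of $u'$'s accepting run, so you should say ``all but $O(1)$'' rather than ``none,'' which does not affect the conclusion.
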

\begin{proof}[Proof of \cref{lem:P1-P3-then-hard}]
	A direct consequence of properties P1 and P3 is that for all $\nu'$, then $\sigma_l \nu' \sigma_r$ is blocking for $\Aa$ if and only if $\nu'$ is blocking for $P$.
	
	The proof goes as follows: we show that we can turn an algorithm testing $\lang{\Aa}$ with~$f(\eps)$ samples into an algorithm testing $\lang{P}$ with $f(\eps/X)$ samples with $X$ a constant.
	We then apply \cref{thm:scc-lb} from the strongly connected case to obtain the lower bound.
	
	Consider an algorithm testing $\lang{\Aa}$ with $f(\eps)$ samples for some function $f$.
	We describe an algorithm for testing $\lang{P}$.
	Say we are given a threshold $\eps$ and a word $v$ of length $n$.
	First of all we can apply Lemmas~\ref{lem:seq-left} and~\ref{lem:seq-right} to compute two words $w_l^*$ and $w_r^*$ of length at most $E + \eps n F$ for some constants $E$ and $F$ such that we can read~$w_l^*$ from $q_{0}$ to $s$ and $w_r^*$ from $t$ to $q_f$ and $w_l^*$ contains occurrences of each element of $\sigma_l$ at least $\eps n$ times, all disjoint, with all occurrences of the $i$-th of $\sigma_l$ appearing before element~$j$ for $i<j$, and similarly for $w_r^*$ and $\sigma_r$.
	Let $w = w_l^* v w_r^*$, and assume that $|v| \geq \frac{6p^2|\Aa|^2}{\eps}$ and that~$d(v, \lang{P}) < + \infty$.
	
	\begin{itemize}
		\item If $v \in \lang{P}$ then clearly $w \in \lang{\Aa}$.
		\item If $d(v, \lang{P}) \geq \eps n$ then by \cref{lemma:many-blocking} (in light of \cref{lemma:portal-to-SC}), $\timedword{x}{v}$ contains at least $\frac{\eps n}{6p^2|\Aa|^2}$ blocking factors for $P$. 
		Then we have that $w$ contains at least $\frac{\eps n}{6p^2|\Aa|^2}$ disjoint blocking sequences for $\Aa$.
		As a result, $d(w, \lang{\Aa}) \geq \frac{\eps n}{6p^2|\Aa|^2}$. 
		We divide this by the length of $w$, which is at most $2E +  2F \eps n + n$. We obtain that $d(w, \lang{\Aa}) \geq \frac{\eps}{X} |w|$ for some constant $X$. 
	\end{itemize}
	
	Let us now describe the algorithm for testing $\lang{P}$.
	\begin{itemize}
		\item If $\lang{P} \cap \Sigma^{n} = \emptyset$ then we reject.
		
		\item If $|v| < \frac{6p^2|\Aa|^2}{\eps}$ then we read $v$ entirely and check that it is in $\lang{P}$.
		
		\item If $v \in \lang{P}$ then we apply our algorithm for testing $\lang{\Aa}$ on $w = w_l^* v w_r^*$ with parameter $\eps' =\frac{\eps}{X}$.
	\end{itemize}
	
	The number of queries used on $v$ is at most the number of queries needed on $w$, hence at most~$f(\eps / X)$ queries.
	We obtain a procedure to test $\lang{P}$ using $f(\eps/X)$ queries.
	By \cref{thm:scc-lb}, $f(\eps/X) = \Omega(\epslogeps)$, hence $f(\eps) = \Omega(\epslogeps)$.
	This concludes our proof.
\end{proof}

\begin{proposition}
		If $\Aa$ has infinitely many minimal blocking sequences, 
		then $\lang{\Aa}$ is hard.
\end{proposition}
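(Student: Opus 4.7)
The plan is to simply chain the two key lemmas of this subsection. First, I would invoke \cref{lem:hard-aut-to-hard-portal}: since $\Aa$ has infinitely many minimal blocking sequences, this lemma produces a portal $P$ together with two sequences $\sigma_l$ and $\sigma_r$ satisfying properties P1, P2 and P3. Namely, $\sigma_l\sigma_r$ is not blocking for $\Aa$, the portal $P$ has infinitely many minimal blocking factors, and every portal of $\Aa$ surviving $(\sigma_l,\sigma_r)$ in any accepting SCC-path is $\equivportals$-equivalent to $P$.

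Second, I would feed this triple $(P,\sigma_l,\sigma_r)$ into \cref{lem:P1-P3-then-hard}, which converts these three properties into the hardness of $\lang{\Aa}$ by reducing property testing of $\lang{P}$ to property testing of $\lang{\Aa}$ and appealing to \cref{thm:scc-lb} on the strongly connected portal language. Combining both lemmas yields that any $\eps$-property tester for $\lang{\Aa}$ requires $\Omega(\epslogeps)$ queries, which is precisely the definition of hardness. Together with the matching upper bound of \cref{thm:gen-ub}, this gives $\lang{\Aa}$ query complexity $\Theta(\epslogeps)$.

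There is no real obstacle here: all the work has been done in the two preceding lemmas. The only thing worth noting is that the two lemmas were deliberately stated in a form that fits together — the conclusion of \cref{lem:hard-aut-to-hard-portal} is exactly the hypothesis of \cref{lem:P1-P3-then-hard} — so the proof reduces to a one-line composition.
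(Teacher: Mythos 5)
Your proposal is correct and matches the paper's proof exactly: the paper states only ``We combine Lemmas~\ref{lem:hard-aut-to-hard-portal} and~\ref{lem:P1-P3-then-hard}.'' You have simply spelled out the same one-line composition in more detail, correctly noting that the conclusion of \cref{lem:hard-aut-to-hard-portal} is precisely the hypothesis of \cref{lem:P1-P3-then-hard}.
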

\begin{proof}
	We combine Lemmas~\ref{lem:hard-aut-to-hard-portal} and~\ref{lem:P1-P3-then-hard}.
\end{proof}

\section{Trivial and Easy languages}
\label{sec:trivial-easy}
\subsection{Upper bound for easy languages}

We first establish that an automaton with finitely many minimal blocking sequences is easy (or trivial) to test.

\begin{lemma}\label{lem:far-bounded-mbs}
	Let $\Aa$ be an NFA with a finite number of minimal blocking sequences,
    let~$\SCCpath  = P_0 \xrightarrow{a_1} \cdots  P_k$ be an SCC-path of $\Aa$, let $L = \lang{\SCCpath}$, and let $\mu$ be a positional word of length $n$ such that $d(\mu, L)$ is finite.
    There are constants $B, D$ such that if $n \geq 2D/\eps$
    and $\mu$ is $\eps$-far from $L$, then $\mu$ can be partitioned into $\mu = \tau_0\tau_1\cdots\tau_k$ such that for every $i = 0,\ldots, k$,
    $\tau_i$ contains at least $\frac{\eps n}{D}$ disjoint blocking factors for~$P_i$, each of length at most~$B$.
\end{lemma}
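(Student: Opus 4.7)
The plan is to follow the proof of Lemma~\ref{lem:far-from-path-implies-many-blocking}, upgrading the length bound on the blocking factors from $\cO(1/\eps)$ to a constant $B$; the upgrade is the place where the hypothesis that $\MBS(\Aa)$ is finite gets used. The first ingredient is to extract from this assumption a constant $B_0$ that bounds the length of every positional factor appearing as a term in any minimal blocking sequence (each MBS has at most $p^2|Q|^2$ terms by Lemma~\ref{lem:bound-length-min-blocking}, and there are only finitely many of them). Picking any $\sigma \in \MBS(\Aa)$ and using that $\sigma$ is blocking for $\Aa$ and that $\SCCpath$ is an accepting SCC-path, the definition of a blocking sequence for an SCC-path yields indices $i_0 \leq \cdots \leq i_k$ such that $\sigma[i_j]$ is blocking for $P_j$. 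Hence every portal $P_j$ of $\SCCpath$ admits a blocking factor $\beta_j$ of length at most $B_0$.

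Next, I replay the greedy decomposition of Lemma~\ref{lem:far-from-path-implies-many-blocking}, with the modification that the blocking factor sought at each step is required to have length at most $B$, for a constant $B$ slightly larger than $B_0$ (allowing room for $p$-letter padding used to synchronize the runs between successive recorded factors as in \cref{fact:periodicity}). Starting from the left of $\mu$ with the portal $P_0$, I scan for a blocking factor of length at most $B$ within a local window; when one is found it is recorded and the pointer advances past it with $p$ letters of padding, and otherwise the pointer advances by a single letter. Once enough short blocking factors for the current portal have been collected, the process transitions to the next portal, and the partition $\mu = \tau_0 \cdots \tau_k$ arises from these transitions.

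The density bound is then obtained by contradiction, generalising the substitution argument of Lemma~\ref{lemma:many-blocking}. Suppose that for some portal $P_j$ the decomposition collects fewer than $\eps n/D$ short blocking factors within $\tau_j$. Then most of $\tau_j$ is covered by ``empty'' windows in which no blocking factor of length $\leq B$ for $P_j$ was found; I show that each such window can be corrected to a valid partial run of $P_j$'s SCC at constant amortized cost, using the bounded-length blocking factor $\beta_j$ together with the periodicity constants of \cref{fact:periodicity} to route through. Combining this local correction with the one-substitution cost per recorded factor and with the $\cO(\rho k)$ substitutions needed to stitch partial runs via the padding regions produces a word in $\lang{\SCCpath}$ at Hamming distance strictly less than $\eps n$ from $\mu$, contradicting the $\eps$-far assumption. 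Since $B$ is already a constant, no halving step as in Lemma~\ref{lemma:many-short-blocking} is needed.

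The main obstacle is the correctness argument — specifically, the claim that each ``empty'' window admits a constant-cost local correction to a valid partial run. The delicate point is that the absence of a short blocking factor for $P_j$ in a window does not immediately furnish a valid partial run through it, so the argument has to leverage the presence of the universal bounded-length obstruction $\beta_j$ and the strong connectivity of $P_j$'s SCC (which makes $\lang{P_j}$ representable by a strongly connected NFA, as in \cref{lemma:portal-to-SC}) in order to produce such a run by at most $\cO(1)$ well-chosen edits per window.
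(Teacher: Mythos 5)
Your proposal departs from the paper's proof in a fundamental way, and the point where it departs is precisely where the gap you yourself flag becomes fatal. You try to adapt the greedy window scan of \cref{lem:far-from-path-implies-many-blocking}, now demanding that each recorded blocking factor be short (length $\le B$), and then to argue by contradiction that a portal $P_j$ whose region $\tau_j$ yields too few short blocking factors can be corrected cheaply. The missing ingredient is a reason why ``no short blocking factor for $P_j$ in a window'' should imply ``the window can be corrected at $\cO(1)$ cost.'' That implication is false in general: a portal $P_j$ may well have infinitely many minimal blocking factors (the finiteness of $\MBS(\Aa)$ does not force $\MBF(P_j)$ to be finite for each portal individually), and if long blocking factors for $P_j$ span your empty windows, no bounded local edit can restore a valid partial run. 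The constant $B_0$ you extract from the finite-MBS hypothesis bounds the terms of minimal blocking sequences of $\Aa$; it gives you no handle at all on the lengths of the blocking factors for $P_j$ that actually occur in $\mu$. Invoking the ``universal bounded-length obstruction'' $\beta_j$ (a short blocking factor for $P_j$ that merely exists somewhere) does not help — its presence says nothing about the cost of repairing $\mu$.

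The paper's proof uses the finite-MBS hypothesis in a structurally different and essential way. It begins with \cref{cor:far-L-many-bs}: $\mu$ contains $\Omega(\eps n)$ \emph{disjoint blocking sequences for $\Aa$} (not isolated blocking factors for individual portals). Each of these is then refined to a \emph{minimal} blocking sequence $\sigma_j'$; since there are only finitely many minimal blocking sequences, each $\sigma_j'$ has all its terms of length bounded by some global constant $B$. Because $\SCCpath$ is accepting, each $\sigma_j'$ is blocking for $\SCCpath$, and by the definition of blocking sequences for SCC-paths this means that for each portal $P_i$ there is a term of $\sigma_j'$ — short, by the above — blocking for $P_i$, with these terms ordered consistently with the ordering of the portals. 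Finally a pigeonhole/greedy left-to-right cut produces the partition $\mu = \tau_0\cdots\tau_k$ so that each $\tau_i$ contains at least $N/(k+1)$ of these short blocking factors for $P_i$. The length bound $B$ thus comes for free from the refinement-to-MBS step, not from any local search, and the density bound comes from the disjointness of the original blocking sequences, not from a substitution argument about empty windows. You should replace the window-scan contradiction with this extract-and-pigeonhole argument.
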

\begin{proof}
	By \cref{cor:far-L-many-bs}, the positional word $\mu$ contains $N \ge \eps n/C$ disjoint blocking sequences $(\sigma_j)_{j=1,\ldots, N}$ for $\Aa$, for some constant $C$.
	We can extract from each $\sigma_j$ a minimal blocking sequence $\sigma_j' = (\nu_{0,j},\ldots, \nu_{s_j, j})$.
	By definition of blocking sequences, $\sigma_j'$ is also blocking for  $\SCCpath$.
	
	As $\Aa$ has a finite number of minimal blocking sequences, hence there is a constant $B$ such that any $\nu_{i,j}$ has length at most $B$.

	We build the decomposition $\mu = \tau_0\tau_1\cdots\tau_k$ with the following iterative process.
	For the index $i = 0$, we set $\tau_0$ to the shortest prefix of $\mu$ that contains the leftmost $N/(k+1)$ components of the $\sigma_j'$ that are blocking for $P_0$.
	Since the $(\sigma_j)_j$ are disjoint in $\mu$, so are the $(\nu_{i,j})_{i,j}$, and this leaves us with at least $N(1-1/(k+1))$ of the $\sigma_j'$ that have their component blocking for $P_0$, and therefore also for $P_1$ in the part of~$\mu$ outside of~$\tau_0$.
	We then iterate again for $i=1,\ldots, k+1$, with the invariant that at step $i$, we have $N(1-i/(k+1))$ of the $\sigma_j'$ that have their component blocking for $P_i$ outside for $\tau_0\ldots\tau_{i-1}$. We then take for~$\tau_i$ the shortest prefix of the rest of~$\mu$ that contains the leftmost $N/(k+1)$ components of these~$\sigma_j'$ that are blocking for $P_i$.

	At each step, the factor~$\tau_i$ contains $N/(k+1)$ blocking factors for $P_i$, hence the decomposition $\mu = \tau_0\tau_1\cdots\tau_k$ has the desired property for $D = C\cdot (k+1)$.
\end{proof}

\begin{proposition}
	\label{coro:fin-bs-then-easy}
	If $\Aa$ has finitely many minimal blocking sequences, then there is a tester for $\lang{\Aa}$ that uses $\cO(1/\eps)$ queries.
\end{proposition}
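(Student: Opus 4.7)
The plan is to mirror the proof of \cref{thm:gen-ub}, but invoking the Sampler on blocking factors of \emph{constant} length rather than length $\cO(1/\eps)$. The key reason the logarithmic overhead disappears is that \cref{lem:far-bounded-mbs} plays the role of \cref{lem:far-from-path-implies-many-blocking}, but with a bound $B$ on the length of the short blocking factors that depends only on $\Aa$ (via the finite set of minimal blocking sequences), not on $\eps$. Since the $\log(1/\eps)$ cost in \cref{lemma:generic-sampling-alg} comes from the geometric doubling over factor lengths, truncating that range to a constant collapses the per-call cost of the Sampler from $\cO(\epslogeps)$ to $\cO(1/\eps)$.

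Concretely, the tester first handles the cases $n < 2D/\eps$ (query all of $u$ and simulate $\Aa$) and $\lang{\Aa}\cap\Sigma^{n}=\emptyset$ (reject) as in \cref{alg:generic-tester-scc}, where $D$ and $B$ are the constants of \cref{lem:far-bounded-mbs} taken as the worst case over the finitely many accepting SCC-paths of $\Aa$. Otherwise, it enumerates the accepting SCC-paths $\SCCpath = P_0 \xrightarrow{a_1}\cdots P_k$ with $\lang{\SCCpath}\cap\Sigma^{n}\neq\emptyset$; for each such $\SCCpath$ and each portal $P_i$ in it, the tester invokes $\textsc{Sampler}(\twu, \eps n/D, B)$, repeated a constant number of times so that the failure probability drops below $1/(3K(k+1))$, where $K$ bounds the number of SCC-paths. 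It records the positions of all sampled factors, and rejects iff for every relevant $\SCCpath$ the collected set contains an ordered tuple $(\nu_0,\ldots,\nu_k)$ with each $\nu_i$ blocking for $P_i$, i.e., a strongly blocking sequence for $\SCCpath$ occurring in $\twu$.

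For the query complexity, \cref{claim:generic-sampling-alg-complexity} gives $\cO(n\log(B)/(\eps n/D)) = \cO(1/\eps)$ per call to the Sampler, and the total number of calls is bounded by a constant (the number of SCC-paths, times the maximal number of portals per SCC-path, times the constant amplification factor). For perfect completeness, if $u\in\lang{\Aa}$, then $\twu\in\lang{\SCCpath}$ for some accepting $\SCCpath$ by \cref{lemma:lang-path-union}; by \cref{lem:strongly-blocking-implies-not-in-path}, no strongly blocking sequence for that $\SCCpath$ occurs in $\twu$, so the rejection condition is not met. For soundness, if $u$ is $\eps$-far from $\lang{\Aa}$ then \cref{coro:dist-to-l-path} makes $\twu$ $\eps$-far from $\lang{\SCCpath}$ for every accepting $\SCCpath$, so \cref{lem:far-bounded-mbs} applies; together with \cref{lemma:generic-sampling-alg} this gives the Sampler a success probability $\ge 2/3$ for each portal, and union bounds over portals and SCC-paths yield overall rejection probability $\ge 2/3$.

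The main subtle point is that the blocking factors returned across portals of a common $\SCCpath$ must occur in the right order inside $\twu$ to certify a strongly blocking sequence. This is handled for free by \cref{lem:far-bounded-mbs}: its decomposition $\twu=\tau_0\cdots\tau_k$ localizes the $P_i$-blocking factors to $\tau_i$, so any $P_i$-blocking factor produced by the Sampler sits inside $\tau_i$ and the tuple of such factors is naturally ordered along $\twu$. The tester never needs to recover the decomposition itself — verification is purely positional from the samples' recorded indices.
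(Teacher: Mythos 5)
Your proposal is correct and follows the paper's proof essentially verbatim: the paper also says to re-run the algorithm of Theorem~\ref{thm:gen-ub} but substitute the factors from Lemma~\ref{lem:far-bounded-mbs} (length bound $B$, a constant depending only on $\Aa$) in place of the $\cO(1/\eps)$-length factors from Lemma~\ref{lem:far-from-path-implies-many-blocking}, so that Claim~\ref{claim:generic-sampling-alg-complexity} yields $\cO(\log(B)/\eps)=\cO(1/\eps)$ queries. One small imprecision: it is not quite that ``any $P_i$-blocking factor produced by the Sampler sits inside $\tau_i$'' (the Sampler samples positions uniformly across all of $\twu$, so a returned factor may contain a $P_i$-blocking factor lying outside $\tau_i$); rather, Lemma~\ref{lemma:generic-sampling-alg} applied to the $\eps n/D$ disjoint $P_i$-blocking factors \emph{inside} $\tau_i$ guarantees that with high probability at least one of \emph{those} is found, and since the $\tau_i$ are consecutive blocks, the resulting witnesses can be arranged in increasing positional order — but the conclusion you draw and the rest of the argument are sound.
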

\begin{proof}
	We use the same algorithm that for \cref{thm:gen-ub}, except that we use the factors given by \cref{lem:far-bounded-mbs}, therefore, in the call to the \textsc{Sampler} function (\cref{alg:generic-sampling}), the upper bound on the length of the factors is~$B$ instead of $\cO(1/\eps)$.
	In that case, the query complexity becomes $\cO(\log(B)/\eps) = \cO(1/\eps)$.
\end{proof}

This already gives us a clear dichotomy: all languages either require $\Theta(\epslogeps)$ queries to be tested, or can be tested with $\cO(1/\eps)$ queries.

\subsection{Separation between trivial and easy languages}

It remains to show that languages that can be tested with $\cO(1/\eps)$ queries have query complexity either $\Theta(1/\eps)$, or $0$ for large enough $n$.
Our proof uses the class of \emph{trivial} regular languages identified by Alon et al.~\cite{alon2001regular}, which we revisit next.

An example of a trivial language is $L_2$ consisting of words containing at least one $a$ over the alphabet $\set{a,b}$.
For any word $u$, replacing any letter by $a$ yields a word in $L_2$, hence $d(u,L_2) \le 1$.
Therefore, for $n > 1/\eps$, no word of length $n$ is $\eps$-far from $L_2$, and the trivial property tester that answers ``yes'' without sampling any letter is correct.

Alon et al.~\cite{alon2001regular} define non-trivial languages as follows.
\begin{restatable}[{\cite[Definition 3.1]{alon2001regular}}]{definition}{trivialalondef}\label{def:trivial-alon}
	A language~$L$ is non-trivial if there exists a constant $\eps_0 > 0$, so that for infinitely
	many values of $n$ the set $L\cap\Sigma^n$ is non-empty, and there exists a word $w \in \Sigma^n$ so that $d(w, L) \ge \eps_0 n$.
\end{restatable}
It is easy to see that if a language is trivial in the above sense (i.e. not non-trivial), then for large enough input length $n$, the answer to testing membership in~$L$ only depends $n$, and the algorithm does not need to query the input.
Alon et al.~\cite[Property 2]{alon2001regular} show that if a language is non-trivial, then testing it requires $\Omega(1/\eps)$ queries for small enough $\eps > 0$.

To obtain our characterization of \emph{trivial} languages, we show that $\MBS(\Aa)$ is non-empty if and only if $\lang{\Aa}$ is non-trivial (in the above sense).
It follows that if $\MBS(\Aa)$ is empty, then testing $\lang{\Aa}$ requires $0$ queries for large enough $n$.
Furthermore, by the result of Alon et al.~\cite{alon2001regular}, if $\MBS(\Aa)$ is non-empty, then testing $\lang{\Aa}$ requires $\Omega(1/\eps)$ queries.

Recall that we focus on infinite languages, since we know that all finite ones are trivial (Remark~\ref{rmk:finite}). 
\begin{lemma}\label{lem:trivialiffMBSempty}
	$\MBS(\Aa)$ is empty if and only if $L = \lang{\Aa}$ is trivial.
\end{lemma}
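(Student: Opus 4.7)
The plan is to prove the two directions separately, first observing that $\MBS(\Aa) = \emptyset$ if and only if $\Aa$ admits no blocking sequence at all: given any blocking sequence $\sigma$, the set of $\pobs$-predecessors of $\sigma$ is finite (each term can only be a factor of the corresponding term of $\sigma$, and the number of terms is bounded by $\size{\sigma}$), so a minimal element exists in that finite set, and it is also minimal among all blocking sequences. I will also rely on the easy fact that the ``trivial'' condition of the paper coincides with ``not non-trivial'' in the sense of \cref{def:trivial-alon}: the length spectrum of a regular language is eventually periodic, so emptiness of $L \cap \Sigma^n$ is decidable from $n$ alone and requires no queries.

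For the direction $\MBS(\Aa) = \emptyset \Rightarrow L$ trivial, I would apply \cref{cor:far-L-many-bs} contrapositively. That corollary guarantees that any $\mu$ of length $n$ with $L \cap \Sigma^n \ne \emptyset$ and $\mu$ that is $\eps$-far from $L$ contains $\Omega(\eps n)$ disjoint blocking sequences for $\Aa$. If no blocking sequence exists at all, then for every fixed $\eps_0 > 0$ the bound $\Omega(\eps_0 n) \ge 1$ forces $n = O(1/\eps_0)$, so only finitely many lengths can admit $\eps_0$-far words. This contradicts the non-triviality condition of \cref{def:trivial-alon} for every $\eps_0$, hence $L$ is trivial.

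For the converse direction $\MBS(\Aa) \ne \emptyset \Rightarrow L$ non-trivial, I fix a minimal blocking sequence $\sigma = (\mu_1, \ldots, \mu_k)$ of $\Aa$, and build, for infinitely many $n$ with $L \cap \Sigma^n \ne \emptyset$, a word $w_n$ of length $n$ with $d(w_n, L) \ge \eps_0 n$ for a constant $\eps_0 > 0$. Starting from any $v_n \in L \cap \Sigma^n$ (which exists on an arithmetic progression of $n$'s since $L$ is infinite and regular), I partition $v_n$ into $\lfloor n/K \rfloor$ consecutive blocks, where $K$ is a constant depending on $|\mu_1| + \cdots + |\mu_k|$ and $p$, and within each block I overwrite letters so that $\mu_1, \ldots, \mu_k$ appear consecutively at positions compatible modulo $p$. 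The result contains $\Omega(n)$ disjoint copies of $\sigma$, and since $\sigma$ is blocking for every accepting SCC-path, any transformation of $w_n$ into a word of $L$ must alter at least one position inside each copy, so $d(w_n, L) \ge \lfloor n/K \rfloor = \Omega(n)$. This witnesses \cref{def:trivial-alon} with $\eps_0 = 1/(2K)$, establishing that $L$ is non-trivial.

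The main obstacle lies in this last construction: the blocking sequence $\sigma$ consists of positional words that must be placed at positions congruent to fixed residues modulo $p$, so I would need to pad each block so that the positional letters of $\sigma$ land on compatible positions. Choosing $K$ to be a multiple of $p$ larger than $|\mu_1| + \cdots + |\mu_k| + p$, together with aligning the first block via a short prefix shift, is what makes the planting always possible and guarantees that each planted copy is genuinely a blocking sequence rather than an unblocked positional shift of $\sigma$.
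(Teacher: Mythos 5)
Your overall approach matches the paper's: the first direction uses \cref{cor:far-L-many-bs} contrapositively exactly as the paper does, and the second direction plants $\Omega(n)$ disjoint copies of a blocking sequence into a word whose length is in the length spectrum of $L$, corresponding to the paper's \cref{lem:charac-trivial}.

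However, there is a genuine gap in your distance lower bound. You assert that ``since $\sigma$ is blocking for every accepting SCC-path, any transformation of $w_n$ into a word of $L$ must alter at least one position inside each copy.'' This is false: a sequence that is merely \emph{blocking} (as opposed to \emph{strongly} blocking) for an SCC-path $\SCCpath$ can perfectly well occur inside a word of $\lang{\SCCpath}$, because the definition of blocking permits a single term of $\sigma$ to witness blocking for several consecutive portals, and that term can then straddle the transition letters between those portals. Concretely, for $L = a^*ca^*ca^*$ with one accepting SCC-path of three portals, the singleton sequence $\sigma = (cc)$ is blocking for $\Aa$, yet the word $acca$ is in $L$ and contains $cc$. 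The correct tool is \cref{lem:strongly-blocking-implies-not-in-path}, which requires a \emph{strongly} blocking sequence. The paper handles this by first replacing $\sigma$ with its $K$-fold concatenation, where $K$ is the maximum number of portals in an accepting SCC-path, so that each planted copy is itself strongly blocking for every accepting SCC-path and therefore individually forces an edit. Your argument can be repaired along similar lines (e.g.\ observe that the concatenation of any $K$ unaltered copies is strongly blocking, so all but $K-1$ of your copies must be edited, which is still $\Omega(n)$), but as written the step ``alter at least one position inside each copy'' is unjustified and, in fact, incorrect. The remaining differences — you overwrite an existing word $v_n \in L$ rather than building $w_N$ directly from padded copies $\nu_i$ of the $\mu_i$'s of length $\equiv 0 \pmod p$, and you invoke eventual periodicity of the length spectrum rather than \cref{lem:fin-dist-iff-modulo} — are cosmetic and both work.
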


We prove the two directions separately.
\begin{lemma}
	If $\MBS(\Aa)$ is empty, then $L = \lang{\Aa}$ is trivial in the sense of \cref{def:trivial-alon}.
\end{lemma}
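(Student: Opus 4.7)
The plan is to leverage $\MBS(\Aa) = \emptyset$ to exhibit a constant $C = C(\Aa)$ such that, for every sufficiently large $n$ with $L \cap \Sigma^n \neq \emptyset$, every $u \in \Sigma^n$ lies within Hamming distance $C$ of $L$. Once such a bound is established, triviality in the sense of \cref{def:trivial-alon} follows immediately: for each $\eps > 0$, whenever $n \geq C/\eps$ satisfies $L \cap \Sigma^n \neq \emptyset$, no word of length $n$ can be $\eps$-far from $L$.

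The first step is to find an accepting SCC-path $\SCCpath_* = P_0 \xrightarrow{a_1} \cdots P_k$ that carries no blocking sequence at all. If every accepting SCC-path $\SCCpath_1, \ldots, \SCCpath_K$ (finitely many) had a blocking sequence $\sigma_i$, the concatenation $\sigma = \sigma_1 \sigma_2 \cdots \sigma_K$ would satisfy $\sigma_i \pobs \sigma$ for each $i$ by taking the indices into the contiguous copy of $\sigma_i$ inside $\sigma$, and the upward-closure remark after \cref{def:MBS} would then make $\sigma$ block every $\SCCpath_i$, hence block $\Aa$, contradicting $\MBS(\Aa) = \emptyset$. Next, I would observe that some portal $P_j$ along $\SCCpath_*$ admits no blocking factor, for otherwise the sequence $(\nu_0, \ldots, \nu_k)$ of per-portal blocking factors (with indices $i_j = j+1$) would block $\SCCpath_*$, contradicting its choice.

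The routing step exploits $P_j$'s universality. For $n$ in the residue class of $\SCCpath_*$ and large, decompose $u = u_{\text{pre}} \, v \, u_{\text{suf}}$ with $\size{u_{\text{pre}}}$ and $\size{u_{\text{suf}}}$ equal to constants chosen so as to encode runs from $q_{0}$ through $P_0, \ldots, P_{j-1}$ to the entry of $P_j$ and from the exit of $P_j$ through $P_{j+1}, \ldots, P_k$ to $q_f$; such runs exist by \cref{fact:periodicity} inside each SCC, combined with the hard-coded inter-SCC letters $a_i$. Then \cref{lemma:intuition-timed-word}, applied to the strongly connected automaton for $\lang{P_j}$ supplied by \cref{lemma:portal-to-SC}, yields states $s, t$ in the SCC of $P_j$ such that the positional form of $v$ labels a path $s \xrightarrow{v} t$. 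Overwriting $u_{\text{pre}}$ and $u_{\text{suf}}$ with the chosen routing words produces a word of $\lang{\SCCpath_*} \subseteq L$ at Hamming distance $C = \size{u_{\text{pre}}} + \size{u_{\text{suf}}}$ from $u$, with $C$ depending only on $\Aa$.

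The principal obstacle is length bookkeeping: $\SCCpath_*$ only covers one residue class modulo $p$, while $N_L = \set{n : L \cap \Sigma^n \neq \emptyset}$ may populate several. The resolution is to iterate the contradiction argument for each residue class appearing in $N_L$: whenever infinitely many lengths in a given residue class lie in $N_L$, the same concatenation reasoning identifies at least one SCC-path servicing that residue with no blocking sequence, and the universal-portal routing above applies. Taking the maximum of the constants produced over the finitely many residues yields a uniform $C(\Aa)$, completing the proof that every large-enough $u \in \Sigma^n$ with $L \cap \Sigma^n \ne \emptyset$ has $d(u, L) \le C$, and hence that $L$ is trivial.
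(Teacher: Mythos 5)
The paper's own proof is a short application of \cref{cor:far-L-many-bs}: a long $\eps$-far word would have to contain $\Omega(\eps n)$ disjoint blocking sequences for $\Aa$, and since $\MBS(\Aa) = \emptyset$ implies $\Aa$ has no blocking sequences at all, no long word can be $\eps$-far, hence $L$ is trivial. You instead rebuild the conclusion from scratch by locating an accepting SCC-path $\pi_*$ carrying no blocking sequence, then a portal $P_j$ on it with no blocking factor, and routing an arbitrary input through $P_j$ at bounded Hamming cost. This is a genuinely different and more hands-on route; Steps~1 and~2 (the concatenation arguments producing $\pi_*$ and $P_j$) are correct.

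The gap is in the last step, the residue bookkeeping you rightly identify as ``the principal obstacle.'' The concatenation argument yields exactly \emph{one} accepting SCC-path $\pi_*$ with a free portal, in whatever residue class modulo $p$ the lengths of $\lang{\pi_*}$ happen to occupy; it does \emph{not} yield one in every residue class serviced by $L$. If you restrict the concatenation to only those accepting SCC-paths whose languages contain words of length $\equiv r \pmod p$, then its failure to block those SCC-paths is not a contradiction with $\MBS(\Aa) = \emptyset$, because SCC-paths servicing \emph{other} residues might already be unblockable (they may have free portals of their own), so no blocking sequence for all of $\Aa$ is produced. In other words, $\MBS(\Aa) = \emptyset$ is entirely consistent with every SCC-path servicing residue $r$ being fully blockable, in which case your routing step has no free portal to exploit for words of length $\equiv r$, and the proof stalls. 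The paper's proof sidesteps this by invoking \cref{cor:far-L-many-bs}, whose argument topologically orders \emph{all} portals of $\Aa$ rather than following a single SCC-path. (A more minor point: in Step~3 you would also need to pad from the entry state $s_j$ of $P_j$ to the state $s$ supplied by \cref{lemma:intuition-timed-word}, and from the exit state $t$ back to $t_j$, which changes the split of the budget between $u_{\text{pre}}$ and $u_{\text{suf}}$; but that is routine bookkeeping, whereas the per-residue existence claim is the step that actually fails.)
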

\begin{proof}
	We showed in \cref{cor:far-L-many-bs} that if $\mu$ is long enough and $\eps$-far from $L$,
	then~$\mu$ contains $\Omega(\eps n)$ disjoint blocking sequences for $\Aa$.
	As $\Aa$ has no minimal blocking sequences, it does not have blocking sequences either, and long enough words cannot be $\eps$-far from $L$, hence it is trivial in the sense of \cref{def:trivial-alon}.
\end{proof}

To prove the converse property, we need the following extension of Kleene's Lemma for languages of SCC-paths: for large enough~$\ell$, whether $\lang{\SCCpath}$ contains a word of length~$\ell$ only depends on the value of~$\ell$ modulo~$p$ ($p$ is the $\lcm$ of all the lengths of the simple cycles in $\Aa$).
\begin{lemma}
    Let $\SCCpath = P_0 \xrightarrow{a_1} \cdots P_k$ be an SCC-path.
    There exists a constant $B $ such that, for all $\ell \ge B$, if there is a word $\mu$ of length~$\ell$ in $\lang{\SCCpath}$,
    then there exists a word $\mu'$ of length  $\ell - p$  and a word $\mu''$ of length  $\ell + p$ in $\lang{\SCCpath}$.
\end{lemma}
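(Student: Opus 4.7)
The plan is to use a standard pumping argument inside the strongly connected automaton associated to one well-chosen portal. Given $\mu \in \lang{\SCCpath}$ of length $\ell$, I would first decompose it along \cref{def:lang-of-sccpath} as $\mu = \mu_0 a_1 \mu_1 \cdots a_k \mu_k$ with each $\mu_i \in \lang{P_i}$, and then modify exactly one of the $\mu_i$'s so as to change the total length by $\pm p$ while preserving membership in $\lang{\SCCpath}$.

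Let $\rho = 3|\Aa|^2$, a uniform upper bound on the reachability constants $\rho_i$ of all SCCs of $\Aa$ (from \cref{fact:periodicity}), and set $B = (k+1)(\rho + p) + k$. If $\ell \ge B$, then $\sum_i |\mu_i| = \ell - k \ge (k+1)(\rho + p)$, so by pigeonhole there is an index $i$ with $|\mu_i| \ge \rho + p$. Write $P_i = \portal{s_i}{x_i}{t_i}{y_i}$ and let $\lambda_i$ be the period of the SCC containing $s_i$ and $t_i$; since $p$ is the $\lcm$ of the lengths of all simple cycles of $\Aa$, we have $\lambda_i \mid p$. Because $\mu_i \in \lang{P_i}$ labels a run $s_i \xrightarrow{\mu_i} t_i$ of length congruent to $y_i - x_i$ modulo $p$, the same congruence holds modulo $\lambda_i$.

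Applying part~(2) of \cref{fact:periodicity} to this SCC with target lengths $|\mu_i| - p$ and $|\mu_i| + p$—both $\ge \rho_i$ since $|\mu_i| \ge \rho + p \ge \rho_i + p$, and both $\equiv |\mu_i| \pmod{\lambda_i}$ since $\lambda_i \mid p$—I obtain words $w^-$ and $w^+$ of those respective lengths labeling runs from $s_i$ to $t_i$ in $\Aa$. Substituting $w^-$ (resp.\ $w^+$) for $\mu_i$ in the decomposition then produces the desired positional words $\mu'$ and $\mu''$ of lengths $\ell - p$ and $\ell + p$.

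The main subtlety I would call out is checking that the substitution really lands in $\lang{\SCCpath}$ and not just in a weaker run-based language. This amounts to two verifications: first, that $|w^{\pm}| \equiv y_i - x_i \pmod{p}$ rather than merely modulo $\lambda_i$, which holds because the shift is by a multiple of $p$; and second, that after the substitution every letter $a_j$ and every factor $\mu_j$ with $j > i$ still sits at the same position modulo $p$, so the portal constraints $\lang{P_{i+1}}, \ldots, \lang{P_k}$ continue to be satisfied. Both points reduce to the single observation that shifting indices by a multiple of $p$ is invisible at the positional level.
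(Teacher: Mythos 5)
Your proof is correct, and it takes a cleaner route than the paper's. The paper's argument works inside the positional automata $\Aa_i$ from \cref{lemma:portal-to-SC} (of size up to $p|\Aa|$): it removes simple loops from the run of $\mu$ whose lengths sum to some $\ell_0 + p$ with $\ell_0 \ge 3(p|\Aa|)^2$, and then re-inserts a single cycle of length $\ell_0$ at some state of a non-trivial SCC, invoking \cref{fact:periodicity} to justify the existence of that cycle; the $+p$ case is handled separately by iterating one simple cycle of length $m \mid p$ exactly $p/m$ times. Your approach instead isolates, by pigeonhole, a single portal factor $\mu_i$ with $|\mu_i| \ge \rho + p$, and applies \cref{fact:periodicity}(2) to the SCC of $\Aa$ (not the positional automaton) once for each sign, replacing $\mu_i$ in place by a word of length $|\mu_i| \mp p$. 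This avoids the paper's remove-several/add-one bookkeeping, handles $\mu'$ and $\mu''$ symmetrically with the same mechanism, and gives a comparably polynomial constant $B$. The point you flag at the end — that shifting a suffix by a multiple of $p$ preserves all the positional constraints of later portals, since $\lambda_i \mid p$ — is exactly the observation that makes the single-factor substitution sufficient, and is where the choice of $p$ as an $\lcm$ of cycle lengths earns its keep.

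Two small remarks, neither a gap. First, you should note explicitly that the pigeonhole selects an index $i$ with $|\mu_i| \ge \rho + p > 0$, which forces $\mu_i$ to be non-empty and hence the SCC containing $s_i, t_i$ to be non-trivial, so that \cref{fact:periodicity} is applicable. Second, \cref{fact:periodicity} is stated for directed graphs, so what it directly produces are paths from $s_i$ to $t_i$ of the desired lengths; one then reads off a labeling word $w^{\pm}$ from the transitions along that path. Both are implicit in your write-up but worth making visible.
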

\begin{proof}
    Recall the definition of $\lang{\SCCpath}$ (\cref{def:lang-of-sccpath}):
    \[\lang{\SCCpath} = L_0 a_1 L_1 a_2 \cdots L_k, \text{ where } L_i = \lang{P_i} \text { for } i =0,\ldots,k.\]
    It follows that a word $\mu\in \lang{\SCCpath}$ can be written as $\mu= \mu_1a_1\mu_2\ldots \mu_k$ with $\mu_i\in L_i$.
    Each $L_i$ is recognized by a strongly connected automaton $\Aa_i$ with at most $p|\Aa|$ states.
    Let $B = 5(p|\Aa|)^2$.
    If the length $\ell$ of $\mu$ exceeds $B$, then the run of $\mu$ in each of the $\Aa_i$'s contains simple loops with sum of lengths greater than $p+ 3(p|\Aa|)^2$. Let $\ell_0+p$ denote the sum of the length of these simple cycles: by construction $\ell_0$ is greater than $3(p|\Aa|)^2$.
    We remove these simple cycles from the run: the resulting run is still in $\lang{\SCCpath}$. Next, select any non-trivial SCC $S_i$ in $\SCCpath$ and let $s$ be a state of $S_i$ used by the run.
    As $\ell_0 \ge 3(p|\Aa|)^2$, by \cref{fact:periodicity}, there is a path of length $\ell_0$ from $s$ to itself in $\Aa_i$. Adding this path to the run yields an accepting run of length $\ell - (\ell_0 + p) +\ell_0 = \ell-p$: the word labeling this run is the desired word $\mu'$.
    
    To obtain $\mu''$, consider any simple cycle in the run of $\mu$ in $\Aa$, and let $m$ denote the length of this cycle.
    By definition of $p$, $m$ divides $p$. Iterating this cycle $p/m$ times yields a word $\mu''$ of length $\ell+p$ that is in $\lang{\SCCpath}$.
\end{proof}
\begin{corollary}\label{lem:fin-dist-iff-modulo}
    Let $\SCCpath$ be an SCC path.
    For large enough $\ell$, whether there is an word of length $\ell$ in $\lang{\SCCpath}$ only depends on the value of $\ell \pmod{p}$.
\end{corollary}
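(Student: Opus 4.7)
The plan is to obtain the corollary as a direct consequence of the preceding lemma by iterating its two conclusions. Let $B$ be the constant from that lemma. I claim that the threshold ``large enough $\ell$'' can be taken to be $B$ itself: for any two integers $\ell_1, \ell_2 \ge B$ with $\ell_1 \equiv \ell_2 \pmod{p}$, the language $\lang{\SCCpath}$ contains a word of length $\ell_1$ if and only if it contains one of length $\ell_2$.

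The argument splits into two symmetric directions. First I would show $\Rightarrow$ (moving upward): assuming WLOG $\ell_1 \le \ell_2$, write $\ell_2 = \ell_1 + kp$ for some $k \ge 0$, and apply the ``$+p$'' conclusion of the preceding lemma $k$ times, starting from $\ell_1 \ge B$; each intermediate length $\ell_1 + ip$ is itself $\ge B$, so the lemma applies and we obtain a word of length $\ell_2$ in $\lang{\SCCpath}$. Conversely, for $\Leftarrow$ (moving downward), I would apply the ``$-p$'' conclusion $k$ times starting from $\ell_2$; at each step the current length is at least $\ell_1 + p > B$, so the hypothesis $\ell \ge B$ of the lemma is satisfied throughout, yielding a word of length $\ell_1$.

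Combining both directions gives, for each residue $r \in \ZZ/p\ZZ$, that the set $\{\ell \ge B : \ell \equiv r \pmod{p},\ \lang{\SCCpath} \cap \Sigma^\ell \neq \emptyset\}$ is either empty or equals the entire arithmetic progression $\{\ell \ge B : \ell \equiv r \pmod{p}\}$. This is exactly the statement of the corollary, with the threshold for ``large enough'' equal to $B$.

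There is essentially no obstacle here beyond bookkeeping: the only subtlety is confirming that when iterating the ``$-p$'' direction down to $\ell_1$, all intermediate lengths remain $\ge B$, which is immediate from $\ell_1 \ge B$.
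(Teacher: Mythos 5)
Your proof is correct and supplies exactly the iteration argument that the paper leaves implicit — the corollary is stated without proof as an immediate consequence of the preceding lemma, and your bidirectional iteration with the check that all intermediate lengths stay $\ge B$ is the intended argument.
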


To finish our characterization of trivial languages, we show that if $\MBS(\Aa)$ is not empty, then $L = \lang{\Aa}$ is non-trivial in the sense of Alon et al.~\cite{alon2001regular}.
\begin{lemma}
	\label{lem:charac-trivial}
	Let $\Aa$ be a trim NFA such that $L = \lang{\Aa}$ is infinite.
	If $\Aa$ admits a blocking sequence,
	then there exists $\eps_0 > 0$, such that for infinitely many $n$ there exist words in $\lang{\Aa} \cap \Sigma^n$ and there exists $w \in \Sigma^n$ such that $d(w,\lang{\Aa})\geq \eps_0 n$
\end{lemma}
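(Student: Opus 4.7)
The plan is to exhibit, for infinitely many lengths $n$, a word $w \in \Sigma^n$ that packs many disjoint, positionally aligned copies of a minimal blocking sequence of $\Aa$; these copies will force $d(w, L) = \Omega(n)$. First, since $L = \bigcup_{\SCCpath} \lang{\SCCpath}$ over accepting SCC-paths (by \cref{lemma:lang-path-union}) and $L$ is infinite, applying \cref{lem:fin-dist-iff-modulo} to each accepting SCC-path yields a residue $r \in \ZZ/p\ZZ$ such that $L \cap \Sigma^n \neq \emptyset$ for all sufficiently large $n$ with $n \equiv r \pmod p$. Fix a minimal blocking sequence $\sigma = (\mu_1,\dots,\mu_\ell) \in \MBS(\Aa)$, write $\mu_i = \timedword{x_i}{u_i}$, and let $S = \sum_i |u_i|$ and $K$ be the maximum number of portals in any accepting SCC-path of $\Aa$; by \cref{lem:bound-length-min-blocking}, $\ell$, $S$ and $K$ are constants depending only on $\Aa$.

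The key combinatorial step is the following observation: if a word contains $K$ copies of $\sigma$ placed left-to-right on disjoint position sets, then it contains a strongly blocking sequence for every accepting SCC-path. Indeed, given such an SCC-path $\SCCpath = P_0 \xrightarrow{a_1} \cdots P_{k_\pi}$ with a non-strict blocking witness $j_0 \le \cdots \le j_{k_\pi}$ for $\sigma$, one extracts strictly increasing positional indices from the $K$ copies by ``advancing'' to the next copy whenever the $j_r$ fail to strictly increase; this advancement occurs at most $k_\pi \le K-1$ times, so $K$ copies suffice. By \cref{cor:many-strongly-bs-implies-not-A}, any word containing $K$ such copies is not in $L$.

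For each large enough $n \equiv r \pmod p$, I would construct $w \in \Sigma^n$ by prepending $x_1$ arbitrary letters (so the first block starts at position $\equiv x_1 \pmod p$), then concatenating $N = \lfloor (n - x_1)/B \rfloor = \Omega(n)$ identical blocks where $B = S + (\ell+1)p$, each block being $u_1 z_1 u_2 z_2 \cdots u_\ell z_\ell$ with padding $z_i$ of length $< p$ chosen so that the next $u_{i+1}$ (resp. the first $u_1$ of the next block, in the case of $z_\ell$) begins at a position $\equiv x_{i+1} \pmod p$ (resp. $\equiv x_1 \pmod p$); the remaining positions are filled with arbitrary letters to reach length exactly $n$. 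Each block then contributes a positionally valid copy of $\sigma$, and the blocks occupy pairwise disjoint positions by construction.

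To conclude, I would consider any $v \in L$ and call a block \emph{preserved in $v$} if $v$ agrees with $w$ on all $S$ positions covered by $u_1, \dots, u_\ell$ of that block. If at least $K$ blocks were preserved, $v$ would contain $K$ disjoint left-to-right copies of $\sigma$, contradicting $v \in L$ by the previous paragraph. Hence at most $K-1$ blocks are preserved and $d(w, v) \ge N - (K-1) \ge \eps_0 n$ for $n$ large enough, with $\eps_0 = 1/(2B) > 0$ depending only on $\Aa$. The main technical nuisance is the positional bookkeeping in the construction: ensuring each $u_i$ lands at a position $\equiv x_i \pmod p$ so that it realises the positional factor $\mu_i$; since $p$ is fixed and the padding lengths can freely range over $\{0,\dots,p-1\}$, this reduces to routine modular arithmetic.
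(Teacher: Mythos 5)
Your proof is correct and follows essentially the same route as the paper's: both pack $\Theta(n)$ disjoint, positionally aligned occurrences of a blocking sequence into a word of length $n$ chosen (via \cref{lem:fin-dist-iff-modulo}) so that $\lang{\Aa} \cap \Sigma^n \neq \emptyset$, then argue that any $v \in L$ must modify almost all of them. The paper packages your ``advancing'' observation (that $K$ disjoint left-to-right copies of $\sigma$ yield a strongly blocking occurrence for every accepting SCC-path) as the preliminary step of replacing $\sigma$ by its $K$-fold self-concatenation, and lays out the copies as $\nu_1^N\cdots\nu_k^N$ rather than block by block, which is a cosmetic difference.
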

\begin{proof} 
	Let $\sigma = (\mu_1, \ldots, \mu_k)$ be a blocking sequence for $\Aa$.
	We can assume w.l.o.g. that $\sigma$ is strongly blocking for every accepting $\SCCpath$ of $\Aa$, as we can make it strongly blocking by concatenating $\sigma$ to itself $K$ times, where $K$ is the maximum length of an accepting SCC-path in $\Aa$.
	Let $C$ be the maximum length of a $\mu_i$'s.
	As~$L$ is infinite, there exists an accepting SCC-path~$\SCCpath$ in~$\Aa$ and $w \in \lang{\SCCpath}$ with $|w| \geq t$ for arbitrary $t$.
	By \cref{lem:fin-dist-iff-modulo}, for all sufficiently large $\ell$ such that $\ell = |w| \pmod{p}$, there exists $w' \in \lang{\SCCpath}$ with $|w'| = \ell$.
	
	For all $i = 1,\ldots, k$, let $\nu_i$ be a shortest word of the form $\timedword{0}{v_i}$, for some $v_i$, and of length $\ell_i$ equal to $0$ modulo $p$, such that $\mu_i$ is a factor of $\nu_i$. By minimality, $\ell_i$ is at most $C+2p$.
	Then, for any integer $N \in \NN$, let $w_N = \nu_1^N \cdots \nu_k^N \timedword{0}{a^{|w|}}$, where $a$ is an arbitrary letter.
	
	As $w_N$ is of length $|w| \pmod{p}$, there is a word of the same length in $\lang{\Aa}$, i.e. $\lang{\Aa} \cap \Sigma^n$ is nonempty.
	On the other hand, it contains $N$ disjoint occurrences of $\sigma$, which is a strongly blocking sequence for every accepting SCC-path of~$\Aa$, therefore, the distance between $w_N$ and $\lang{\Aa}$ is at least $N$.
	Furthermore, the length of $w_N$ is less than $|w| + N(C+2p)$. 
	Therefore, if we let $\eps_0 = \frac{1}{C+2p+|w|}$, then we have $\eps_0 |w_N| \leq N \leq d(w_N, \lang{\Aa})$, i.e. $w_N$ is $\eps$-far from~$L$ for any $\eps \le \eps_0$ and any $N$.
\end{proof}

It is easy to see that if a language is trivial in the above sense, then for large enough input length $n$, membership in~$L$ only depends $n$, and the algorithm does not need to query the input.
Alon et al.~\cite{alon2001regular} show that if a language is non-trivial, then testing it requires $\Omega(1/\eps)$ queries for small enough $\eps > 0$.
As a corollary of that lower bound, we obtain that if $\MBS(\Aa)$ is non-empty, then testing $\lang{\Aa}$ requires $\Omega(1/\eps)$ queries.

\section{Hardness of classifying}
\label{sec:complexity}
In the previous sections, we have shown that testing some regular languages (\emph{easy} ones) that requires fewer queries than testing others (\emph{hard} ones).
Therefore, given the task of testing a word for membership in $\lang{\Aa}$,
it is natural to first try to determine if the language of $\Aa$ is easy, and if this is the case, run the appropriate $\eps$-tester, that uses fewer queries.
In this section, we investigate the computational complexity of checking which class of the trichotomy the language of a given automaton belongs to.
We formalize this question as the following decision problems:

\begin{problem}[Triviality problem]
	Given an finite automaton $\Aa$, is $\lang{\Aa}$ trivial?
\end{problem}
\begin{problem}[Easiness problem]
	Given an finite automaton $\Aa$, is $\lang{\Aa}$ easy?
\end{problem}
\begin{problem}[Hardness problem]
	Given an finite automaton $\Aa$, is $\lang{\Aa}$ hard?
\end{problem}

In these problems, the automaton $\Aa$ is the input and is no longer fixed.
We show that, our combinatorial characterization based on minimal blocking sequences is effective, in the sense that all three problems are decidable. However, it does not lead to efficient algorithms, as both problems are \PSPACE-complete.

\begin{theorem}\label{thm:PSPACE-main}
	The triviality and easiness problems are both \PSPACE-complete, even for strongly connected NFAs.
\end{theorem}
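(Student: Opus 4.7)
For PSPACE membership, the plan is to leverage \cref{lemma:blocking-regular}: for a strongly connected NFA $\Aa$ with $m$ states, the set $\MBF(\Aa)$ is recognised by an NFA $\Bb$ of size $2^{\cO(m)}$. Triviality then becomes emptiness of $\lang{\Bb}$, and easiness becomes ``$\lang{\Bb}$ finite and non-empty''. Both tests can be performed on the fly without materialising $\Bb$: each state of $\Bb$ takes $\cO(m)$ bits to encode, non-emptiness is reachability of an accepting state in $\Bb$, and finiteness amounts to the absence of a reachable cycle from which an accepting state is still reachable, all achievable nondeterministically in $\cO(m)$ space and hence in $\PSPACE$ by Savitch's theorem. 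For a general NFA I extend the argument using the characterisation of $\MBS$ from \cref{sec:general}: accepting SCC-paths can be explored in polynomial space and their portals analysed individually via \cref{lemma:portal-to-SC}, which reduces each portal to a strongly connected instance.

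For PSPACE-hardness of triviality in the strongly connected case, I will reduce from NFA universality, classically PSPACE-complete. Given a trim NFA $\Aa=(Q,\Sigma,\delta,q_0,F)$, I construct $\Aa'$ over $\Sigma\cup\{\#\}$ by adding a fresh hub state $q_*$, declaring it both initial and the sole accepting state, adding $\#$-transitions $q_*\to q_0$ and $q_f\to q_*$ for every $q_f\in F$, and keeping all of $\Aa$'s original transitions. Trimness of $\Aa$ makes $\Aa'$ strongly connected, with $\lang{\Aa'}=(\#\lang{\Aa}\#)^*$, and a mild padding of $\Aa$ ensures the period of $\Aa'$ is $1$. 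A case analysis then shows that when $\Aa$ is universal every word over $\Sigma\cup\{\#\}$ is readable from $q_*$ or $q_0$ in $\Aa'$, so $\Aa'$ admits no blocking factor; conversely, any witness $w\notin\lang{\Aa}$ produces the blocking factor $\#w\#$. Hence $\Aa'$ is trivial iff $\Aa$ is universal, yielding PSPACE-hardness.

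For PSPACE-hardness of easiness, the main obstacle is that easiness is the middle class of a trichotomy, so the previous reduction only separates triviality from non-triviality. My plan is to compose the above construction with a fixed strongly connected easy ``seed'' automaton $\Bb_0$ having exactly one minimal blocking factor (for instance the natural NFA for $(ab)^*$ over $\{a,b\}$), gluing $\Aa'$ and $\Bb_0$ through yet another fresh separator symbol so as to preserve strong connectivity. The seed always contributes its MBS, so the resulting $\Aa''$ is never trivial; when $\Aa$ is universal the $\Aa'$-part contributes no further MBS, leaving $\Aa''$ easy, and when $\Aa$ is not universal the $\#w\#$-patterns contribute infinitely many further MBS, making $\Aa''$ hard. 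The most delicate step, which I expect will require the most care, is guaranteeing that non-universality uniformly yields \emph{infinitely many} minimal blocking factors rather than possibly finitely many: I plan to enforce this by first pre-processing $\Aa$ with a standard padding that makes $\Sigma^*\setminus\lang{\Aa}$ either empty or infinite, while preserving PSPACE-hardness of universality.
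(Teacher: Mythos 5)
Your PSPACE upper bound for the strongly connected case is essentially the paper's argument (on-the-fly exploration of the exponential automaton from \cref{lemma:blocking-regular}, as in the proof of \cref{lem:aut-block-fact}). However, your hardness reduction has a concrete bug, and the general upper bound is underspecified.

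\textbf{The hardness reduction is flawed.} In your $\Aa'$ the only $\#$-transitions are $q_*\xrightarrow{\#}q_0$ and $q_f\xrightarrow{\#}q_*$ for $q_f\in F$, with no $\#$-self-loops. Consequently $q_*$ has \emph{no} outgoing $\Sigma$-transitions, and the only way to enter $q_*$ is via a $\#$-transition from a final state. Now consider the factor $a\#b$ with $a,b\in\Sigma$: to read it from any state, after reading $a$ one must be at a state with an outgoing $\#$-transition, which can only be a final state of $\Aa$ (not $q_*$, since $q_*$ is only reached by $\#$); the $\#$ then lands in $q_*$, which cannot read $b$. So $a\#b$ is \emph{always} a blocking factor of $\Aa'$, regardless of whether $\Aa$ is universal. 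Indeed $\lang{\Aa'}=(\#\lang{\Aa}\#)^*$, and even $(\#\Sigma^*\#)^*$ is not trivial: a word like $a\#a\#\cdots a\#$ is $\Omega(1)$-far from it, because words of the language have $\#$ only as a prefix, a suffix, or in blocks of length at least two. So your claimed equivalence ``$\Aa'$ trivial iff $\Aa$ universal'' fails in the forward direction. The paper avoids this by adding a $\#$-self-loop to \emph{every} state (so $\#$ can pad any run without moving) and using a \emph{separate} fresh letter $!$ for the restart transitions $q_f\to q_0$; with these self-loops $a\#b$ and all its variants become readable.

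\textbf{The general-NFA upper bound is hand-waved in a way that would fail for easiness.} Deciding easiness requires distinguishing ``finitely many minimal blocking sequences'' from ``infinitely many,'' and this does \emph{not} reduce to analysing portals individually: \cref{example:hard-to-easy} exhibits an automaton in which one SCC has infinitely many minimal blocking factors, yet the whole language is easy because a later SCC shares a short blocking factor. The paper handles this with the recursive characterisation of \cref{lem:charac-hard-PSPACE} and the ``uniformisation'' \cref{lem:uniformisation-blocking-seq}, which let it search for a suitable portal together with sequences $\sigma_l,\sigma_r$ that collapse \emph{all} surviving portals to an $\equivportals$-equivalence class. You would need to supply an argument of comparable strength. (For triviality, the per-portal check you suggest is fine, since a blocking sequence for $\Aa$ must block every portal on some accepting SCC-path; this matches \cref{prop:trivial-PSPACE}.)

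\textbf{The easiness-hardness reduction.} Your seed-automaton idea is in the right spirit but heavier than necessary. The paper simply enlarges the alphabet by a fresh letter $\flat$ with no transitions: $\flat$ is then a minimal blocking factor of every period class, so $\MBF(\Bb')=\MBF(\Bb)\cup\set{\flat}$, which makes the language easy exactly when $\MBF(\Bb)$ is finite, i.e.\ when $\Aa$ is universal. This avoids gluing two automata and the ensuing strong-connectivity and period bookkeeping, and it also avoids the preprocessing you flag to ensure that non-universality yields infinitely many minimal blocking factors (the paper's $!$/$\#$ construction already yields the infinite family $u\#^n v$ and variants from a single non-member $w=uv$).
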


In \cref{sec:ppace-ub} we show the \PSPACE upper bounds on the hardness and triviality problems (Propositions~\ref{prop:hard-to-test-PSPACE} and~\ref{prop:trivial-PSPACE}). The upper bound on the easiness problem follows immediately, as the three properties form a trichotomy.

In \cref{sec:pspace-hard}, we show that all three problems are \PSPACE-hard (\cref{lemma:trivial-complexity} and \cref{cor:hardness-easy}).

\subsection{A $\PSPACE$ upper-bound}\label{sec:ppace-ub}

\subsubsection{Testing hardness}

A naive algorithm to check hardness of a language $\lang{\Aa}$ would be to construct an automaton recognising blocking sequences of $\lang{\Aa}$ (exponential in $\Aa$), and use it to get an automaton recognising the minimal ones (which requires complementation and could yield another exponential blow-up). This would a priori not give a \PSPACE algorithm, since we obtain a doubly-exponential state space.
We solve this by providing another characterisation of automata with hard languages, resulting in a recursive \PSPACE algorithm to test it.

\begin{lemma}
	\label{lem:uniformisation-blocking-seq}
	Let $\SCCpath = P_0 \xrightarrow{a_1} \cdots  P_\ell$ be an SCC-path, $i$ an index, $\Pi$ a set of SCC-paths and $(\sigma_{\SCCpath'})_{\SCCpath' \in \Pi}$ a family of sequences of positional words such that $\lefteffect{\sigma_{\SCCpath'}}{\SCCpath} < i$ for all $\SCCpath'$.
	
	There exists a sequence of positional words $\sigma$ such that:
	\begin{itemize}
		\item $\lefteffect{\sigma}{\pi} < i$
		\item $\lefteffect{\sigma_{\SCCpath'}}{\SCCpath'} \leq \lefteffect{\sigma}{\SCCpath'}$ for all $\SCCpath' \in \Pi$.
	\end{itemize}
\end{lemma}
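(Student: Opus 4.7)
The plan is to build $\sigma$ by merging the family $(\sigma_{\SCCpath'})_{\SCCpath' \in \Pi}$ into a single sequence whose left effect on $\SCCpath$ remains below $i$, while each $\sigma_{\SCCpath'}$ appears as a subsequence. The first step is to reduce the second bullet of the conclusion to a purely combinatorial condition: if $\sigma$ contains $\sigma_{\SCCpath'}$ as a subsequence, then any index sequence $j_0 \le \cdots \le j_k$ in $\sigma_{\SCCpath'}$ witnessing $\lefteffect{\sigma_{\SCCpath'}}{\SCCpath'} \ge k$ lifts to an index sequence in $\sigma$ witnessing the same inequality. So it suffices to produce a common supersequence of all $\sigma_{\SCCpath'}$ with $\lefteffect{\sigma}{\SCCpath} < i$.

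Next I would proceed by induction on $|\Pi|$. The base case $|\Pi|=1$ is immediate, taking $\sigma = \sigma_{\SCCpath'}$. For the inductive step, let $\sigma'$ be a sequence produced by the induction hypothesis applied to $\Pi \setminus \{\SCCpath''\}$, and let $\sigma_{\SCCpath''}$ be the new sequence to incorporate. Naive concatenation $\sigma' \sigma_{\SCCpath''}$ (or the reverse) is not enough in general, since blocking factors from $\sigma'$ for early portals $P_0, \ldots, P_j$ of $\SCCpath$ can combine with blocking factors from $\sigma_{\SCCpath''}$ for the remaining portals $P_{j+1}, \ldots, P_i$ to yield a length-$(i{+}1)$ matching chain on $\SCCpath$ that neither sequence contained alone. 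The remedy is to interleave: I would insert the factors of $\sigma_{\SCCpath''}$ into $\sigma'$ one by one, in their original order within $\sigma_{\SCCpath''}$, at positions chosen so that no partial matching prefix of $\SCCpath$ ever reaches length $i+1$. The resulting $\sigma$ contains both $\sigma'$ and $\sigma_{\SCCpath''}$ as subsequences, hence satisfies condition (b) by the first step.

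The main obstacle is showing that a legal insertion position always exists for each factor $\mu$ of $\sigma_{\SCCpath''}$. The key invariant would be to track, for every position $q$ in the current sequence, the largest index $p(q)$ such that the prefix up to $q$ matches $P_0, \ldots, P_{p(q)}$; the insertion is legal iff placing $\mu$ at the candidate position does not push some $p(q)$ past $i-1$. The hypothesis $\lefteffect{\sigma_{\SCCpath''}}{\SCCpath} < i$ enters here crucially: the factors of $\sigma_{\SCCpath''}$ alone cannot drive $p(\cdot)$ to $i$, and combined with the inductive property $\lefteffect{\sigma'}{\SCCpath} < i$ of $\sigma'$ this rules out the ``worst case'' where every available slot would complete a forbidden chain. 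The technical content of the proof is to make this invariant precise and to verify that the insertion procedure can always be carried out while respecting the order of $\sigma_{\SCCpath''}$.
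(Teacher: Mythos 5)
Your first observation is correct: since $\pobs$ as defined in the paper makes the common-supersequence relation a special case, it does suffice to produce a common supersequence of all the $\sigma_{\SCCpath'}$ with $\lefteffect{\sigma}{\SCCpath} < i$. But your proof of that reduced statement has a real gap, and it occurs exactly where you say the ``technical content'' lies.

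The problem is the inductive structure. You induct on $|\Pi|$ and fix the merged sequence $\sigma'$ produced by the inductive hypothesis before looking at $\sigma_{\SCCpath''}$. The inductive hypothesis gives no control over which interleaving $\sigma'$ is returned, so your inductive step implicitly needs the much stronger statement: \emph{every} interleaving of $\sigma_{\SCCpath'}$, $\SCCpath' \in \Pi \setminus \{\SCCpath''\}$, that has left effect $< i$ on $\SCCpath$ can be further interleaved with $\sigma_{\SCCpath''}$ while staying below $i$. That is not the lemma's claim, and nothing in your sketch rules out the IH handing you a ``dead end'' interleaving. You would either have to prove this stronger statement, or strengthen the IH to return only extendable merges, and in both cases the actual argument has not been given. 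The legality-of-insertion step itself is also left open: the invariant $p(\cdot)$ you propose is reasonable, but the insertion of a single new factor $\mu$ can simultaneously lengthen a matching prefix on its left (as a $P_{j}$-blocker after a chain for $P_0,\ldots,P_{j-1}$) and supply an earlier blocker that completes a chain on its right, and you have not shown that these two constraints can always be satisfied together once the positions of previously inserted factors of $\sigma_{\SCCpath''}$ are frozen.

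The paper sidesteps all of this by not fixing any partial merge in advance. It inducts on the \emph{total length} of the sequences in $\Pi$ and, at each step, globally selects the first remaining term with the \emph{least} left effect on $\SCCpath$, prepends it, and replaces $\SCCpath$ by the surviving suffix $\tilde{\SCCpath}$ before recursing on the remaining tails. The choice of minimum left effect is exactly what makes the key claim ($\lefteffect{\sigma_{\SCCpath'}}{\tilde{\SCCpath}} < i-r-1$ for all other $\SCCpath'$) go through: the chosen factor advances the prefix-match as little as possible, so every other sequence still has enough room on the residual $\tilde{\SCCpath}$. This is a genuinely different decomposition, and it is the missing idea that would make your interleaving argument work; without it, your proof does not close.
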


\begin{proof}
	We prove this by induction on the sum of the lengths of the elements of~$\Pi$.
	If~$\Pi$ is empty or contains only empty sequences, then we can set~$\sigma$ as the empty sequence.
	
	If not, let $\SCCpath^*$ be such that the first term $\nu_1$ of $\sigma_{\SCCpath^*} = (\nu_1, \ldots, \nu_k)$ has the least left effect on $\SCCpath$ among all SCC-paths in $\Pi$; let $\SCCpath^* = P_0' \xrightarrow{a_1'} \cdots  P_{\ell'}'$. We consider the effect of $\nu_1$ (as a single-element sequence) on $\SCCpath^*$ and $\SCCpath$: let~$j = \lefteffect{\nu_1}{\SCCpath^*}$ and~$r = \lefteffect{\nu_1}{\SCCpath}$.

	Next, we build a set $\Pi'$ of SCC-paths as follows.
	Let $\overline{\SCCpath}$ denote the part of $\SCCpath^*$ that survives $\nu_1$, if any, i.e. $\overline{\SCCpath} = P_{j+1}' \xrightarrow{a_{j+1}'} \cdots P_{\ell'}'$.
	We define $\Pi' = \Pi \setminus \set{\SCCpath^*} \cup \set{\overline{\SCCpath}}$ if $j<\ell'$ and $\Pi' = \Pi \setminus \set{\SCCpath^*}$ otherwise.
	In the first case the sequence associated with $\overline{\SCCpath}$ is $\sigma_{\overline{\SCCpath}} = (\nu_2, \ldots, \nu_k)$.
	
	We now wish to apply the induction hypothesis to the set $\Pi'$ and the part of $\SCCpath$ that survives $\nu_1$, i.e. on $\tilde{\SCCpath} = P_{r+1}\xrightarrow{a_{r+1}} \ldots \rightarrow P_\ell$, with a target left effect of $i-r-1$.
	By construction, the sum of the lengths of the elements in~$\Pi'$ is smaller than that of~$\Pi$.
	The following claim shows that, for any~$\SCCpath'$ in~$\Pi'$, the left effect of~$\sigma_{\SCCpath'}$ on~$\tilde{\SCCpath}$ is at most~$i-r-1$.

	\begin{claim}
		For all $\SCCpath' \in \Pi'$, we have $\lefteffect{\sigma_{\SCCpath'}}{\tilde{\SCCpath}} < i-r-1$.
	\end{claim}
	
	\begin{claimproof}
		Let $\pi' \in \Pi\setminus \set{\SCCpath^*}$, and let $\sigma_{\SCCpath'} = (\nu_1', \ldots, \nu_m')$.
		Since the first term of $\sigma_{\SCCpath^*}$ was the one with the least left effect on $\SCCpath$, the first term of every other sequence has a left effect at least $r$ on it.
		Formally, let $z = \lefteffect{\nu_1'}{\SCCpath}$: we have $z \ge r$.
		
		In other words, $\nu_1'$ is blocking for all portals in $\tilde{\SCCpath}$ up to $P_z$.
		Therefore, the sequence $(\nu_2', \ldots, \nu_m')$ will be applied to the same portals in~$\SCCpath$ and in~$\tilde{\SCCpath}$. Since portal $P_i$ survives in~$\SCCpath$, it must also survive in~$\tilde{\SCCpath}$, and we have $\lefteffect{\sigma_{\SCCpath'}}{\tilde{\SCCpath}} < i-r-1$.
	\end{claimproof}
		
	By induction hypothesis, we obtain a sequence $\tilde{\sigma}$ such that 
	\begin{itemize}
		\item $\lefteffect{\tilde{\sigma}}{\tilde{\SCCpath}} < i-r-1$
		\item $\lefteffect{\sigma_{\SCCpath'}}{\SCCpath'} \leq \lefteffect{\tilde{\sigma}}{\SCCpath'}$ for all $\SCCpath' \in \Pi'$.
	\end{itemize}
	Then, the sequence obtained by prepending $\nu_1$ to $\tilde{\sigma}$ satisfies both conditions of the lemma, as $\tilde{\SCCpath}$ is the part of $\SCCpath$ that survives $\nu_1$, and prepending $\nu_1$ cannot decrease the left effect of a sequence.
\end{proof}

\begin{lemma}\label{lem:charac-hard-PSPACE}
	An automaton $\Aa$ is hard if and only if there exists an accepting SCC-path $\SCCpath$ containing a portal $P$ such that:
	\begin{itemize}
		\item $P$ has infinitely many minimal blocking factors.
		\item For any accepting SCC-path $\SCCpath'$ there exist sequences $\sigma_{l, \SCCpath'}, \sigma_{r, \SCCpath'}$ such that: 
		\begin{itemize}
			\item $P$ survives $(\sigma_{l, \SCCpath'}, \sigma_{r, \SCCpath'})$ in $\SCCpath$
			\item All portals surviving $(\sigma_{l, \SCCpath'}, \sigma_{r, \SCCpath'})$ in $\SCCpath'$ are $\equivportals$-equivalent to $P$
		\end{itemize} 
	\end{itemize} 
\end{lemma}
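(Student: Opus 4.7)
The plan is to derive the lemma from the existing characterization of hardness in terms of properties P1, P2, P3. By Lemmas~\ref{lem:hard-aut-to-hard-portal} and~\ref{lem:P1-P3-then-hard}, $\lang{\Aa}$ is hard if and only if there exists a portal $P^*$ and sequences $\sigma_l^*, \sigma_r^*$ satisfying those three properties. I would show that the present statement is a reformulation of this, obtained by relaxing the global witnesses $\sigma_l^*, \sigma_r^*$ of P3 into per-SCC-path witnesses; the passage between the two formulations is mediated by Lemma~\ref{lem:uniformisation-blocking-seq}.

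For the ``hard $\Rightarrow$ new characterization'' direction, I would start from $P^*, \sigma_l^*, \sigma_r^*$ given by Lemma~\ref{lem:hard-aut-to-hard-portal}. Since $\sigma_l^* \sigma_r^*$ is not blocking (by P1), some accepting SCC-path $\SCCpath$ contains a portal surviving $(\sigma_l^*, \sigma_r^*)$; take this surviving portal as $P$. By P3 this $P$ is $\equivportals$-equivalent to $P^*$, hence inherits its infinitely many minimal blocking factors. Setting $\sigma_{l,\SCCpath'} := \sigma_l^*$ and $\sigma_{r,\SCCpath'} := \sigma_r^*$ uniformly for every accepting $\SCCpath'$ then verifies both bullets: $P$ survives in $\SCCpath$ by construction, and every portal surviving in any $\SCCpath'$ is $\equivportals$-equivalent to $P^*$ by P3 and hence to $P$.

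For the converse, let $P = P_i$ be the distinguished portal of $\SCCpath$. The survival condition gives $\lefteffect{\sigma_{l,\SCCpath'}}{\SCCpath} < i$ and $\righteffect{\sigma_{r,\SCCpath'}}{\SCCpath} > i$ for every accepting $\SCCpath'$. I would apply Lemma~\ref{lem:uniformisation-blocking-seq} to $\SCCpath$, $i$, and the family $(\sigma_{l,\SCCpath'})_{\SCCpath'}$ to obtain a single sequence $\sigma_l$ with $\lefteffect{\sigma_l}{\SCCpath} < i$ and $\lefteffect{\sigma_{l,\SCCpath'}}{\SCCpath'} \le \lefteffect{\sigma_l}{\SCCpath'}$ for every $\SCCpath'$, and use the symmetric right-effect version of the lemma to obtain an analogous $\sigma_r$ satisfying $\righteffect{\sigma_r}{\SCCpath} > i$ and $\righteffect{\sigma_{r,\SCCpath'}}{\SCCpath'} \ge \righteffect{\sigma_r}{\SCCpath'}$. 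Checking P1--P3 for $(P, \sigma_l, \sigma_r)$ then allows invoking Lemma~\ref{lem:P1-P3-then-hard}: P2 is given, P1 follows from $\lefteffect{\sigma_l}{\SCCpath} < i < \righteffect{\sigma_r}{\SCCpath}$, and for P3 any portal $Q$ at position $j$ surviving $(\sigma_l, \sigma_r)$ in some $\SCCpath'$ must also survive $(\sigma_{l,\SCCpath'}, \sigma_{r,\SCCpath'})$ in $\SCCpath'$ thanks to the chain $\lefteffect{\sigma_{l,\SCCpath'}}{\SCCpath'} \le \lefteffect{\sigma_l}{\SCCpath'} < j < \righteffect{\sigma_r}{\SCCpath'} \le \righteffect{\sigma_{r,\SCCpath'}}{\SCCpath'}$, and is therefore $\equivportals$-equivalent to $P$.

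The main obstacle I anticipate is the right-effect analog of Lemma~\ref{lem:uniformisation-blocking-seq}, which is not spelled out in the paper; its proof should be entirely symmetric to the left-effect case but the direction of every inequality has to be tracked carefully, so that surviving the uniformized $(\sigma_l, \sigma_r)$ in a path $\SCCpath'$ always forces surviving the original $(\sigma_{l,\SCCpath'}, \sigma_{r,\SCCpath'})$ — the converse direction would not give P3 and would break the argument.
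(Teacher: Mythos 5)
Your proposal is correct and follows essentially the same route as the paper: both directions pass through Lemmas~\ref{lem:hard-aut-to-hard-portal} and~\ref{lem:P1-P3-then-hard}, and the converse direction uses Lemma~\ref{lem:uniformisation-blocking-seq} (and its unstated right-effect mirror) to merge the per-path witnesses into a single pair $(\sigma_l,\sigma_r)$ satisfying P1--P3. Your forward direction is somewhat more careful than the paper's terse version, since you explicitly extract the surviving portal $P$ and its accepting SCC-path from P1 and P3 rather than implicitly relying on internals of Lemma~\ref{lem:hard-aut-to-hard-portal}, and your worry about the direction of the inequalities in the right-effect analogue is well placed and correctly resolved.
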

\begin{proof}
	The left-to-right direction follows from \cref{coro:fin-bs-then-easy}, by taking $\sigma_{l, \SCCpath'} = \sigma_l$ and $\sigma_{r, \SCCpath'} = \sigma_r$ for every $\SCCpath'$.

	Let us now prove the other direction.
	Suppose we have $\SCCpath$ and $P$ satisfying the conditions of the lemma.
	We only need to construct two sequences $\sigma_l, \sigma_r$ such that properties P1 and P3 are satisfied.
	The result follows by \cref{lem:P1-P3-then-hard}.
	
	Let $\Pi$ be the set of accepting SCC-paths in $\Aa$. 
	Consider families of sequences $(\sigma_{l, \SCCpath'})_{\SCCpath' \in \Pi}$ and $(\sigma_{r, \SCCpath'})_{\SCCpath' \in \Pi}$ such that for all $\SCCpath' \in \Pi$:
	\begin{itemize}
		\item $P$ survives $(\sigma_{l, \SCCpath'}, \sigma_{r, \SCCpath'})$ in $\SCCpath$
		\item All portals surviving $(\sigma_{l, \SCCpath'}, \sigma_{r, \SCCpath'})$ in $\SCCpath'$ are $\equivportals$-equivalent to $P$
	\end{itemize} 
	
	Let $i$ be the index of $P$ in $\SCCpath$.	
	By \cref{lem:uniformisation-blocking-seq} we can build a sequence $\sigma_l$ such that 
	\begin{itemize}
		\item $\lefteffect{\sigma_l}{\SCCpath} < i$, and
		\item $\lefteffect{\sigma_{l, \SCCpath'}}{\SCCpath'} \leq \lefteffect{\sigma_l}{\SCCpath'}$ for all $\SCCpath' \in \Pi$.
	\end{itemize}
	
	Using a symmetric argument, we build a sequence $\sigma_r$ such that 
	\begin{itemize}
		\item $i< \righteffect{\sigma_r}{\SCCpath}$, and
		\item $\righteffect{\sigma_{r, \SCCpath'}}{\SCCpath'} \geq \righteffect{\sigma_r}{\SCCpath'}$ for all $\SCCpath' \in \Pi$.
	\end{itemize}
	
	As a consequence, for all accepting SCC-path $\SCCpath' \in \Pi$, all portals surviving $(\sigma_l, \sigma_r)$ in $\SCCpath'$ are $\equivportals$-equivalent to $P$.
	Furthermore, $P$ survives $(\sigma_l, \sigma_r)$ in $\SCCpath$.
	
	We have shown that $P$ and $(\sigma_l, \sigma_r)$ satisfy properties P1 and P3. 
	P2 is immediate by assumption. 
	We simply apply \cref{lem:P1-P3-then-hard} to obtain the result.
\end{proof}

Next, we establish that the items listed in the previous lemma can all be checked in polynomial space in $|\Aa|$.

\begin{lemma}
	\label{lem:aut-block-fact}
	Given a portal $P$, we can check whether it has infinitely many minimal blocking factors in space polynomial in $|\Aa|$.
\end{lemma}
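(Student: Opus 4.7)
The plan is to reduce the question to testing whether a regular language, implicitly presented by a polynomial-space NFA, is infinite, and then to apply Savitch's theorem.

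By \cref{lemma:portal-to-SC}, $\lang{P}$ is recognised by a strongly connected NFA $\Aa'$ whose states live in $Q \times \ZZ/p\ZZ$. Because $p$ is the least common multiple of lengths of simple cycles of $\Aa$ and each such length is at most $|Q|$, we have $\log p = O(|\Aa|)$, so each state of $\Aa'$ can be encoded in $\poly(|\Aa|)$ bits. A positional word $\tau = \timedword{n}{u}$ is blocking for $P$ iff no run of $\Aa'$ starting from a state with second component $n$ reads $\tau$; I will capture this by a standard subset-construction simulation, maintaining a single set $T \subseteq Q$ of the first components of currently reachable states (the second component is forced by the current position modulo $p$). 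Thus $T$ has at most $|Q|$ elements and fits in $\poly(|\Aa|)$ bits; from any such $T$ and any letter, the next $T$ is computable in $\poly(|\Aa|)$ time.

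Next, I will exploit the observation that $\tau$ is a \emph{minimal} blocking factor iff $\tau$ is blocking and neither the word obtained by dropping its first letter nor the word obtained by dropping its last letter is blocking. Indeed, containing a blocking factor makes a word blocking, so if $\tau$ has a proper blocking factor $v$ then at least one of these two one-letter shortenings still contains $v$ and is itself blocking; conversely, if either shortening is blocking, it directly witnesses a proper blocking factor of $\tau$. Using this characterisation, I will define an NFA $\mathcal{M}$ over the positional alphabet whose states are tuples $(T^{(1)}, T^{(2)}, T^{(3)}, \phi)$ where $T^{(1)}$ drives the subset-simulation for $\tau$, $T^{(2)}$ the simulation for the first-letter-dropped version (initialised only after the first input letter has been consumed), $T^{(3)}$ the simulation for the last-letter-dropped version (whose update is suspended when the NFA non-deterministically guesses the current letter is the last), and $\phi$ records the current phase modulo $p$. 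Acceptance is the conjunction $T^{(1)} = \emptyset$, $T^{(2)} \neq \emptyset$, $T^{(3)} \neq \emptyset$, so the language of $\mathcal{M}$ is exactly the set of minimal blocking factors of $P$.

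Each state of $\mathcal{M}$ fits in $\poly(|\Aa|)$ bits, so $\mathcal{M}$ has $2^{\poly(|\Aa|)}$ states. The portal $P$ has infinitely many minimal blocking factors iff $\lang{\mathcal{M}}$ is infinite, which in turn is equivalent to the existence of a state of $\mathcal{M}$ that is reachable from an initial state, lies on a non-empty cycle, and is co-reachable to an accepting state. Each of these is a reachability question in a graph whose vertices admit polynomial-space encoding and whose transitions can be decoded in polynomial space, hence decidable in $\mathsf{NSPACE}(\poly(|\Aa|))$, which collapses to \PSPACE{} by Savitch's theorem. The main technical hurdle I anticipate is verifying correctness of the triple-subset construction, in particular that $T^{(2)}$ and $T^{(3)}$ are initialised and suspended correctly so that $\lang{\mathcal{M}}$ matches the set of minimal blocking factors exactly; the infiniteness check is then a standard space-efficient pumping argument.
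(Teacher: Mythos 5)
Your proposal is correct and follows the same high-level strategy as the paper: reduce to an infiniteness check on an implicitly represented NFA whose states are representable in $\poly(|\Aa|)$ bits, run the reachability/cycle/co-reachability check in nondeterministic polynomial space, and invoke Savitch. The paper's proof references the construction of \cref{lemma:blocking-regular} somewhat opaquely (``standard constructions for complement and intersection'' yielding an automaton for words with no proper factor in the blocking-factor language), and then simply asserts that its transition function is computable in polynomial space. You instead build the automaton explicitly as a triple subset construction, using the observation that, because the set of blocking factors is upward-closed under the factor relation, $\tau$ is \emph{minimal} iff $\tau$ is blocking while neither of the two one-letter-shortenings $\tau[1\dd]$ and $\tau[\dd |\tau|-2]$ is blocking. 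That observation is correct and is precisely what makes the ``no proper factor in $L$'' construction collapse to just two extra subset-tracks rather than an arbitrary factor check; it is a nice, self-contained way to make the paper's terse appeal to \cref{lemma:blocking-regular} concrete. The only detail to pin down carefully (which you flag) is the last-letter guess for $T^{(3)}$: the NFA should mark that it has committed to ``this is the last letter,'' read that one letter, and have no outgoing transitions afterwards, so that premature guesses cannot survive to acceptance. This is routine, and the rest (polynomial encoding of $p$, factoring the index into the phase $\phi$, the pumping-style infiniteness criterion) all checks out.
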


\begin{proof}
	Recall that, by \cref{lemma:portal-to-SC}, $L = \lang{P}$ is recognized by a strongly connected automaton $\Aa'$ with at most $p|\Aa|$ states. While this number may be exponential in $|\Aa|$, the transition function of $\Aa'$ can be computed in polynomial space from the polynomial-sized representation of a state.
	Furthermore, in this case, we can show that the same property holds for the construction used in \cref{lemma:blocking-regular}, as in the determinization step, all states share the index modulo $p$.
	
	We then simply need to check if the resulting automaton has an infinite language, which is the case if and only if it has a cycle reachable from the initial state and from which a final state is reachable.
	This can be checked by exploring the state space of the automaton, in non-deterministic polynomial space (in $|\Aa|$), and applying Savitch's theorem~\cite[Theorem 1]{SAVITCH1970177}.  
\end{proof}

\begin{lemma}
	\label{lem:PSPACE-blocking-for-one}
	Given two SCC-paths $\SCCpath$ and $\SCCpath'$, one can check in \PSPACE whether there is a sequence $\sigma$ that is blocking for $\SCCpath$ and not $\SCCpath'$.
\end{lemma}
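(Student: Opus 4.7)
The plan is to recast the question as reachability in a polynomial-size product graph whose edges are checkable in \PSPACE.

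First, we observe that whether a sequence $\sigma = (\mu_1, \ldots, \mu_\ell)$ is blocking for an SCC-path $\SCCpath = P_0 \xrightarrow{a_1} \cdots P_k$ admits a greedy characterisation: initialise a pointer $p := 0$, and for each $t$ in order, increment $p$ as long as $p \le k$ and $\mu_t$ is blocking for $P_p$. Then $\sigma$ blocks $\SCCpath$ iff $p$ reaches $k+1$. Soundness is immediate; completeness follows by a short induction on a witnessing index sequence $i_0 \le \cdots \le i_k$. Moreover, discarding any $\mu_t$ that fails to advance the $\SCCpath$-pointer preserves blocking for $\SCCpath$ and can only shrink the $\SCCpath'$-advance, so we may assume $\ell \le k+1$. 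Writing $\SCCpath' = P'_0 \xrightarrow{a'_1} \cdots P'_{k'}$, the problem reduces to asking whether, in the polynomial-size product space $\{0, \ldots, k+1\} \times \{0, \ldots, k'+1\}$, there is a path from $(0,0)$ to some $(k+1, p')$ with $p' \le k'$, where an edge $(p, p') \to (q, q')$ represents the existence of a positional word $\mu$ whose greedy advances on $\SCCpath$ and $\SCCpath'$ bring the pointers exactly to $(q, q')$.

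The key step is testing this edge relation in \PSPACE. Such an edge requires a positional word $\mu$ that is a blocking factor for each of $P_p, \ldots, P_{q-1}$ and $P'_{p'}, \ldots, P'_{q'-1}$, and is not blocking for $P_q$ (if $q \le k$) nor for $P'_{q'}$ (if $q' \le k'$), so that greedy stops exactly at $q$ and $q'$. For each portal $P$, a direct construction exploiting \cref{fact:periodicity} yields a \emph{factor automaton} $A_P$ of polynomial size in $|\Aa|$ accepting exactly the positional words that are factors of some word of $\lang{P}$: its states are pairs consisting of a state in the SCC of $P$ together with a position tracked modulo the SCC's period $\lambda_S$ (which divides $p$), so the count stays polynomial even though the positional alphabet is exponential in $|\Aa|$. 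A positional word is blocking for $P$ iff $A_P$ rejects it, so the edge test reduces to non-emptiness of an intersection of polynomially many polynomial-size NFAs and complements of polynomial-size NFAs. This is the classical \PSPACE{} problem, solved by guessing $\mu$ one positional letter at a time (each letter represented in polynomially many bits, since $\log p = O(|\Aa|)$) while tracking one nondeterministic state per positive constraint and one subset of states per complement constraint.

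Composing the two steps, Savitch-style reachability over the polynomial-size product space relative to a \PSPACE{} edge oracle yields a \PSPACE{} procedure overall. The main subtlety I expect to require care is the factor-automaton construction itself: tracking positions modulo the SCC's period $\lambda_S$ rather than modulo the global period $p$ is what keeps the state count polynomial, and correctly encoding the reachability-from-$s_P$ and co-reachability-to-$t_P$ constraints through the periodicity classes of \cref{fact:periodicity} is what makes the rejection semantics of $A_P$ coincide with the blocking condition.
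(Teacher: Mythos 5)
Your proof is correct, and it takes a genuinely different route from the paper's. The paper proves the lemma via a recursive characterisation (a Claim that strips off the first factor and advances $\SCCpath$ by one portal and $\SCCpath'$ by zero or one), implemented as a recursive decision procedure with depth at most $|\SCCpath|+|\SCCpath'|$. You instead make explicit, via the greedy pointer, \emph{how far} a single factor advances each path, and recast the question as reachability in a product of the two index spaces with a \PSPACE\ edge test.

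Although structurally similar (both traverse the same $\{0,\ldots,k{+}1\}\times\{0,\ldots,k'{+}1\}$ space), your formulation is more careful in a respect that actually matters. The paper's Claim has, as its second disjunct, ``there is $\mu$ blocking $P_0$ and $P_0'$, and $\sigma'$ blocking $P_1\cdots P_k$ but not $P_1'\cdots P_\ell'$'', with the right-to-left direction justified by ``take $\sigma=\mu\sigma'$''. But if $\mu$ also happens to block $P_1'$ (possible, since consecutive portals can share blocking factors — precisely the phenomenon that \cref{example:hard-to-easy} illustrates), then $\mu\sigma'$ can block $\SCCpath'$ by using $\mu$ for both $P_0'$ and $P_1'$ and then $\sigma'$ for $P_2'\cdots P_\ell'$, even though $\sigma'$ does not block $P_1'\cdots P_\ell'$. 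Your edge relation sidesteps this by forcing $\mu$ to advance the $\SCCpath'$-pointer by the exact greedy amount (blocking $P'_{p'},\ldots,P'_{q'-1}$ and not $P'_{q'}$), so no ``free'' extra advance is hidden in the recursive step. As far as I can tell, the paper's Claim as stated needs this same refinement (a case split on how many of $P_0',P_1',\ldots$ the word $\mu$ blocks) to make the ``clear'' direction go through, so your approach is not merely an alternative but a more robust one.

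The one place you rightly flag as needing care is the factor automaton. Your observation that one need only track positions modulo the SCC period $\lambda_S$ (a divisor of $p$) to keep the state count polynomial is the right idea; one must then spell out the initial and accepting conditions in terms of the periodicity classes of \cref{fact:periodicity} and the portal endpoints $s,x,t,y$, and handle trivial SCCs and short/empty factors. The paper takes a different bookkeeping route: it accepts the $p|\Aa|$-state automaton of \cref{lemma:portal-to-SC} (potentially exponentially many states) and argues that states admit a polynomial-size representation with polynomial-space-computable transitions. Either choice lands the intersection-nonemptiness test in \PSPACE; yours is a bit cleaner once the periodicity details are pinned down.
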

\begin{proof}
	The algorithm relies on the following property.
	\begin{claim}
		There is a sequence $\sigma$ that is blocking for $\SCCpath = P_0 \xrightarrow{a_1} \cdots  P_k$ and not $\SCCpath' = P_0' \xrightarrow{a'_1} \cdots  P_\ell'$ if and only if either:
		\begin{itemize}
			\item there is a positional word $\mu$ that is a blocking factor for $P_0$ and not $P_0'$ and there is a sequence $\sigma'$ that is blocking for $P_1 \xrightarrow{a_2} \cdots  P_k$ and not $\SCCpath'$,
			
			\item or there is a positional word $\mu$ that is a blocking factor for $P_0$ and $P_0'$ and there is a sequence $\sigma'$ that is blocking for $P_1 \xrightarrow{a_2} \cdots  P_k$ and not $P_1' \xrightarrow{a'_2} \cdots  P_\ell'$.
		\end{itemize}
	\end{claim}
	\begin{claimproof}
		The right-to-left direction is clear (just take $\sigma =  \mu \sigma'$ in both cases).
		
		For the left-to-right direction, consider a sequence $\sigma$ that is blocking for $\SCCpath$ and not $\SCCpath'$, of minimal length.
		Let $\sigma_+$ and $\mu$ be such that $\sigma = \mu \sigma_+$.
		\begin{itemize}
			\item If $\mu$ is not blocking for $P_0$ then $\sigma_+$ is blocking for $\SCCpath$ and not $\SCCpath'$, contradicting the minimality of $\sigma$.
			
			\item If $\mu$ is blocking for $P_0$ and not $P_0'$ then we set $\sigma' = \sigma$. We know that $\sigma$ is not blocking for $\SCCpath'$. On the other hand, as $\sigma$ is blocking for $\SCCpath$, it is also blocking for $P_1 \xrightarrow{a_2} \cdots  P_k$.
			
			\item If $\mu$ is blocking for both $P_0$ and  $P_0'$ then we set $\sigma' = \sigma$. 
			As $\sigma$ is blocking for $\SCCpath$, it is also blocking for $P_1 \xrightarrow{a_2} \cdots  P_k$. 
			On the other hand, if $\sigma$ was blocking for $P_1' \xrightarrow{a'_2} \cdots  P_\ell'$, then it would also be blocking for $\SCCpath'$, a contradiction. Hence $\sigma$ is not blocking for $P_1' \xrightarrow{a'_2} \cdots  P_\ell'$
		\end{itemize}
	\end{claimproof}
	
	The claim above lets us define a recursive algorithm.
	
	\begin{itemize}
		\item First check if there is a positional word $\mu$ that is blocking for $P_0$ and not $P_0'$.
		If it is the case, make a recursive call to check if there is a sequence $\sigma'$ that is blocking for $P_1 \xrightarrow{a_2} \cdots  P_k$ and not $\SCCpath'$.
		If it is the case, answer yes.
		
		\item Then check if there is a positional word $\mu$ that is a blocking factor for $P_0$ and $P_0'$. If so, make a recursive call to check if there is a sequence $\sigma'$ that is blocking for $P_1 \xrightarrow{a_2} \cdots  P_k$ and not $P_1' \xrightarrow{a'_2} \cdots  P_\ell'$.
		If it is the case, answer yes.
	\end{itemize}
	
	If both items fail, answer no.
	
	The existence of those positional words can be checked in polynomial space using the automaton $\Bb$ constructed in the proof of \cref{lem:aut-block-fact}.
	The depth of the recursive calls is at most the sum of the lengths of $\SCCpath$ and $\SCCpath'$, which is bounded by $2|\Aa|$.
	In consequence, this algorithm runs in polynomial space.
	
\end{proof}

\begin{proposition}
	\label{prop:hard-to-test-PSPACE}
	The hardness problem is in \PSPACE.
\end{proposition}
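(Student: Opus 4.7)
The plan is to combine the structural characterization of hardness given by Lemma \ref{lem:charac-hard-PSPACE} with the PSPACE subroutines already established in Lemmas \ref{lem:aut-block-fact} and \ref{lem:PSPACE-blocking-for-one}. Since $\PSPACE = \textsf{NPSPACE}$ by Savitch's theorem and PSPACE is closed under complementation, I am free to alternate existential and universal quantification in the algorithm, so it suffices to design a polynomial-space procedure that on input $\Aa$ verifies the three conditions in the statement of Lemma \ref{lem:charac-hard-PSPACE}.

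First I would non-deterministically guess an accepting SCC-path $\SCCpath$ together with a portal $P$ appearing on $\SCCpath$ at some index $i$. Each accepting SCC-path has length at most $|Q|$, and each portal is encoded by four components of polynomial size, so this guess fits in polynomial space. I then invoke the procedure of Lemma \ref{lem:aut-block-fact} to check in PSPACE that $P$ has infinitely many minimal blocking factors. The remaining condition is universal: for every accepting SCC-path $\SCCpath'$ of $\Aa$, there must exist sequences $\sigma_{l,\SCCpath'},\sigma_{r,\SCCpath'}$ such that $P$ survives $(\sigma_{l,\SCCpath'},\sigma_{r,\SCCpath'})$ in $\SCCpath$ and every portal of $\SCCpath'$ surviving this pair is $\equivportals$-equivalent to $P$. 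I universally enumerate such $\SCCpath'$ portal by portal, keeping only the currently read portal in memory; note that $\equivportals$-equivalence of two portals reduces to equality of two regular languages of blocking factors and can be checked in PSPACE via the construction already used in Lemma \ref{lem:aut-block-fact}.

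For a fixed $\SCCpath'$, I then existentially guess two indices $j_l<j_r$ in $\SCCpath'$ such that every intermediate portal $P_m'$ with $j_l<m<j_r$ satisfies $P_m'\equivportals P$, and verify the existence of $\sigma_l$ with prescribed left effects on both $\SCCpath$ and $\SCCpath'$ simultaneously (with $\lefteffect{\sigma_l}{\SCCpath}=i-1$ and $\lefteffect{\sigma_l}{\SCCpath'}=j_l$), together with a symmetric $\sigma_r$. The main obstacle is this last step, because Lemma \ref{lem:PSPACE-blocking-for-one} only speaks about blocking one SCC-path while not blocking another. I plan to generalize its recursive algorithm to process the two pairs (prefix of $\SCCpath$ up to index $i$, prefix of $\SCCpath'$ up to index $j_l+1$) in lockstep: at each step one guesses the next positional word $\mu$ of the sequence, checks in PSPACE whether $\mu$ is a blocking factor of the leading portal of each of the two current SCC-paths, and recurses on the appropriate truncations based on the four possible outcomes, terminating when both prefixes are fully blocked with the leading portals of the suffixes ($P$ and $P_{j_l+1}'$) still surviving. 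The recursion depth is bounded by the total number of portals, which is polynomial in $|\Aa|$, and each call uses only polynomial local space, so the whole procedure runs in PSPACE, completing the proof.
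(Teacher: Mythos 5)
Your overall algorithm matches the paper's: guess an accepting SCC-path $\SCCpath$ and a portal $P = P_i$ on it, check via \cref{lem:aut-block-fact} that $P$ has infinitely many minimal blocking factors, and then, for every accepting $\SCCpath'$, guess indices $j_l < j_r$, verify that all intermediate portals $P_m'$ ($j_l<m<j_r$) are $\equivportals$-equivalent to $P$, and verify the existence of suitable $\sigma_l,\sigma_r$. Where you diverge from the paper is the last step, and there you have both a missed shortcut and a real gap.

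The ``main obstacle'' you identify is not actually an obstacle. The condition needed from \cref{lem:charac-hard-PSPACE} is that $P$ survives $(\sigma_l,\sigma_r)$ in $\SCCpath$ and that the surviving portals of $\SCCpath'$ lie strictly between $j_l$ and $j_r$. On the left, this translates to: $\sigma_l$ is blocking for the prefix $P_0'\xrightarrow{a'_1}\cdots P_{j_l}'$ of $\SCCpath'$, and $\sigma_l$ is \emph{not} blocking for the prefix $P_0\xrightarrow{a_1}\cdots P_i$ of $\SCCpath$. Both prefixes are themselves SCC-paths, so this is literally the hypothesis of \cref{lem:PSPACE-blocking-for-one}: ``a sequence blocking for one SCC-path and not another.'' No lockstep generalization is required; the paper simply invokes the lemma on these two truncations (and symmetrically for $\sigma_r$).

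The genuine gap is in your lockstep termination condition. You insist on the \emph{exact} equalities $\lefteffect{\sigma_l}{\SCCpath}=i-1$ and $\lefteffect{\sigma_l}{\SCCpath'}=j_l$, whereas the characterization only asks for $\lefteffect{\sigma_l}{\SCCpath}<i$ and $\lefteffect{\sigma_l}{\SCCpath'}\geq j_l$. Your stronger condition is not established to be equivalent to the weaker one, and there is no clear reason it should be: for instance, if every blocking factor of $P_{i-1}$ also blocks $P_i$ (that is, $P_i \leqportals P_{i-1}$ strictly), then no sequence can achieve $\lefteffect{\sigma_l}{\SCCpath}=i-1$ without also blocking $P_i$; yet the one-sided condition (some portal at index $\leq i$ of $\SCCpath$ survives) can still hold with a strictly smaller left effect. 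So your algorithm is sound but potentially rejects hard automata, i.e.\ it may be incomplete. Relaxing the termination condition to the one-sided bounds fixes this, and at that point your lockstep procedure reduces to the already-proved \cref{lem:PSPACE-blocking-for-one}.
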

\begin{proof}
	Our algorithm is based on \cref{lem:charac-hard-PSPACE}.
	We use the following algorithm to check whether the characterization holds.
	\begin{enumerate}
		\item First, we nondeterministically guess an SCC-path $\SCCpath = P_0 \xrightarrow{a_1} \cdots  P_k$ and an index $i$.
		\item Using \cref{lem:aut-block-fact}, we check that $P_i$ has infinitely many minimal blocking factors.
	
		\item For each accepting SCC-path $\SCCpath' = P_0' \xrightarrow{a'_1} \cdots  P_\ell'$  of $\Aa$, we guess indices $j_l$ and $j_r$, and check that every portal $P_j'$ with $j_l < j < j_r$ is $\equivportals$-equivalent to $P_i$.

		\item Then, we use \cref{lem:PSPACE-blocking-for-one} to check that there is a sequence $\sigma_l$ that is blocking for $P_0' \xrightarrow{a'_1} \cdots  P_{j_l}'$ and not $P_0 \xrightarrow{a_1} \cdots P_i$.
		Symmetrically, we check that there is a sequence $\sigma_r$ that is blocking for $P_{j_r}' \xrightarrow{a'_1} \cdots  P_\ell'$ and not $P_i \xrightarrow{a_{i+1}} \cdots  P_k$.
	\end{enumerate}
	
	If all those tests succeed, we answer ``yes'', otherwise we answer ``no''.
	This algorithm is correct and complete by \cref{lem:charac-hard-PSPACE}.
\end{proof}

\subsubsection{Testing triviality}

We show the \PSPACE upper bound on the complexity of checking if a language is trivial.
It is based on the characterisation of trivial languages given by \cref{lem:charac-trivial}, and uses the following result.

\begin{lemma}
	\label{lem:exists-blocking-fact}
	Given a portal $P$, we can check whether it has a blocking factor in space polynomial in $|\Aa|$.
\end{lemma}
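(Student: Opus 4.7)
The plan is to reduce the existence of a blocking factor for a portal $P = \portal{s}{x}{t}{y}$ to a polynomial number of non-universality checks on NFAs of polynomial size, each of which is known to lie in PSPACE.

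First I would dispose of the degenerate case: if the SCC of $s$ and $t$ is trivial (a single state with no self-loop), then $\lang{P}$ contains at most the empty word, so any non-empty positional word is blocking and the answer is yes. Otherwise, let $\lambda$ be the period of the SCC of $P$ and let $(Q_j)_{j \in \ZZ/\lambda\ZZ}$ be its periodicity classes, numbered so that $s \in Q_0$. The key combinatorial claim I intend to establish is: for each $j \in \set{0, \ldots, \lambda-1}$ and each $w \in \Sigma^*$, the positional word $\timedword{x+j}{w}$ is non-blocking for $P$ if and only if there exist $r \in Q_j$ and $r'$ in the SCC of $P$ with $r \xrightarrow{w} r'$ in $\Aa$; moreover, whether $\timedword{x'}{w}$ is blocking depends only on $(x'-x) \bmod \lambda$ and on $w$.

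The forward direction is immediate by decomposing a witnessing $(x:u) \in \lang{P}$ containing $\timedword{x+j}{w}$ as $u = u_1 w u_2$ and reading $r$ and $r'$ off the run. For the backward direction, given $r$ and $r'$, I will build $u_1$ with $s \xrightarrow{u_1} r$ and $u_2$ with $r' \xrightarrow{u_2} t$ of lengths congruent to $x+j-x$ and $y-x-j-|w|$ modulo $p$ respectively, using \cref{fact:periodicity} applied inside the SCC. Given this characterization, for each $j$ I define the NFA $\Bb_j$ over $\Sigma$ whose states are those of the SCC of $P$, with initial states $Q_j$, accepting states all states, and transitions inherited from $\Aa$ restricted to the SCC. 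Then $w$ is rejected by $\Bb_j$ precisely when $\timedword{x+j}{w}$ is blocking for $P$, so $P$ has a blocking factor if and only if $\Bb_j$ is non-universal for some $j$. Each $\Bb_j$ has at most $|Q|$ states, non-universality of an NFA of size $n$ is decidable in PSPACE in $n$ via on-the-fly subset construction and Savitch's theorem, and there are at most $\lambda \le |Q|$ values of $j$, giving an overall procedure in space polynomial in $|\Aa|$.

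The main obstacle will be the backward direction of the characterization, complicated by the fact that $p$ may be exponential in $|\Aa|$: the length congruences modulo $p$ cannot be enforced directly by arithmetic on $p$. The key idea is that, because $\lambda$ divides $p$ and cycles of length any sufficiently large multiple of $\lambda$ exist inside the SCC by \cref{fact:periodicity}, one can always adjust $|u_1|$ and $|u_2|$ by splicing such cycles; this changes lengths by multiples of $\lambda$, hence by multiples of $p$ modulo $p$ after iterating $p/\lambda$ times, and the periodicity classes of $r$, $r'$ and $t$ automatically match those forced by the portal constraints. All reasoning can therefore be kept at the level of $\lambda$, which is polynomial in $|\Aa|$, and never materializes $p$ explicitly.
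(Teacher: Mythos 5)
Your route is genuinely different from the paper's. The paper goes through \cref{lemma:portal-to-SC} to produce a strongly connected automaton for $\lang{P}$ with up to $p|\Aa|$ states (exponential, but with polynomial-size state representations), then, as in the proof of \cref{lem:aut-block-fact}, builds the automaton for blocking factors from \cref{lemma:blocking-regular} and checks on-the-fly whether an accepting state is reachable, invoking Savitch's theorem. You avoid this machinery by characterizing blocking factors for a portal directly in terms of transitions inside the SCC and periodicity classes, and reduce to $\lambda$ non-universality checks on NFAs of size $\le |Q|$. Your observation that whether $\timedword{x'}{w}$ is blocking depends only on $(x'-x)\bmod\lambda$ rather than modulo $p$ is the key simplification that keeps everything polynomial without ever materializing $p$; the paper instead keeps the full $\ZZ/p\ZZ$ index in states and relies on its compact representation. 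Both approaches land in \PSPACE, but yours is cleaner and more self-contained.

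There is one gap: you silently assume $\lang{P}\ne\emptyset$. The backward direction of your characterization requires a path $u_2$ from $r'$ to $t$ of length $\equiv y-x-j-|w| \pmod p$, and such a path exists (via \cref{fact:periodicity}) only when the periodicity class of $t$ equals $(y-x)\bmod\lambda$ --- which is exactly the condition for $\lang{P}$ to be non-empty (given a non-trivial SCC). If $\lang{P}=\emptyset$, every non-empty positional word is a blocking factor, yet your automata $\Bb_j$ can all be universal (e.g.\ a two-state cycle with $s=t$, $x=0$, $y=1$), so the reduction would wrongly report ``no blocking factor.'' The fix is to first test $\lang{P}=\emptyset$ (compute $\lambda$ and the periodicity classes, check whether the class of $t$ matches $(y-x)\bmod\lambda$), which is polynomial; answer ``yes'' in that case. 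Note that in the paper this lemma is only invoked on portals of \emph{accepting} SCC-paths (\cref{prop:trivial-PSPACE}), whose portal languages are automatically non-empty, so the omission does not affect the paper's use, but the lemma as stated allows arbitrary portals and the check should be included.
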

\begin{proof}
	We proceed as in the proof of \cref{lem:aut-block-fact}, except that we only need to check whether some final state is reachable from the final state.
\end{proof}

\begin{proposition}\label{prop:trivial-PSPACE}
	The triviality problem is in \PSPACE.
\end{proposition}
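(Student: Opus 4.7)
The plan is to reduce triviality to a structural check on portals that can be verified in polynomial space. By \cref{lem:trivialiffMBSempty}, $\lang{\Aa}$ is trivial if and only if $\Aa$ admits no blocking sequence, so the task is to decide the (non-)existence of a blocking sequence in \PSPACE.

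The key observation to establish is the following characterisation: \emph{$\Aa$ admits a blocking sequence if and only if every portal $P$ appearing in some accepting SCC-path of $\Aa$ admits a blocking factor}. The forward direction is immediate from the definition: if $\sigma = (\mu_1, \ldots, \mu_k)$ blocks an accepting SCC-path $\pi = P_0 \xrightarrow{a_1} \cdots \xrightarrow{a_\ell} P_\ell$ via indices $i_0 \leq \cdots \leq i_\ell$, then each $P_j$ has $\mu_{i_j}$ as a blocking factor. For the backward direction, I fix a topological ordering $S_1, \ldots, S_m$ of the SCCs of $\Aa$; for each portal $P$ appearing in some accepting SCC-path, I pick a blocking factor $\mu_P$, and form $\sigma$ by concatenating the chosen $\mu_P$ in the topological order of the SCC containing $P$ (breaking ties within an SCC arbitrarily). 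For any accepting SCC-path $\pi = Q_0 \xrightarrow{} \cdots \xrightarrow{} Q_\ell$, the definition of an SCC-path forces the SCCs containing $Q_0, \ldots, Q_\ell$ to be strictly increasing in the topological order, so the positions of $\mu_{Q_0}, \ldots, \mu_{Q_\ell}$ in $\sigma$ form a strictly increasing sequence of indices witnessing that $\sigma$ blocks $\pi$.

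Given this characterisation, $\lang{\Aa}$ is non-trivial if and only if there exists a portal $P$ appearing in some accepting SCC-path of $\Aa$ yet having no blocking factor. The algorithm in nondeterministic polynomial space proceeds by guessing a portal $P = \portal{s}{x}{t}{y}$, then verifying that $P$ appears in an accepting SCC-path by simulating $\Aa$ state by state from $q_0$ to $q_f$ while tracking the current position modulo $p$ and ensuring that the simulated run enters the SCC of $s$ at $s$ with position $x \pmod p$ and leaves it at $t$ with position $y \pmod p$, and finally verifying that $P$ has no blocking factor, which is the complement of the check in \cref{lem:exists-blocking-fact}. Since \PSPACE is closed under complement and nondeterministic polynomial space equals \PSPACE by Savitch's theorem, this yields the desired upper bound. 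I expect the structural characterisation to be the main conceptual step; once it is in place, the algorithmic implementation is routine simulation of $\Aa$ combined with the subroutine from \cref{lem:exists-blocking-fact}.
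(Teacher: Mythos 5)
Your proposal follows the same route as the paper: reduce triviality to the (non-)existence of a portal, appearing in some accepting SCC-path, that has no blocking factor, and then decide that in nondeterministic polynomial space via \cref{lem:exists-blocking-fact} and Savitch's theorem. Your treatment of the key equivalence is in fact more complete than the paper's, which only argues the easy direction (a blocking sequence for $\Aa$ blocks every accepting SCC-path, hence yields a blocking factor for every portal in such a path); you additionally spell out the converse by concatenating one chosen blocking factor per portal in the topological order of the SCCs, and you correctly observe that an SCC-path visits distinct, strictly increasing SCCs so that the resulting indices are strictly increasing, giving a (strongly) blocking sequence for every accepting SCC-path.

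One small slip: after establishing the characterisation ``$\Aa$ admits a blocking sequence iff every portal in an accepting SCC-path admits a blocking factor,'' you write that $\lang{\Aa}$ is \emph{non-trivial} iff there is a portal with no blocking factor. Since trivial means no blocking sequence (\cref{lem:trivialiffMBSempty}), the correct statement is that $\lang{\Aa}$ is \emph{trivial} iff such a portal exists. Your algorithm (guess a portal, check it lies in an accepting SCC-path, check it has no blocking factor) therefore decides triviality directly rather than its complement; the closing appeal to closure under complement is harmless but unnecessary. With that wording corrected, the argument is sound and matches the paper's.
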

\begin{proof}
	Recall that $\lang{\Aa}$ is trivial if and only if $\Aa$ has no blocking sequences.
	\begin{claim}
		There is an accepting SCC-path $\SCCpath$ of $\Aa$ that contains a portal $P$ with no blocking factors if and only if $\Aa$ has no blocking sequence.
	\end{claim}
	\begin{claimproof}
		Any blocking sequence of $\Aa$ is blocking for $\SCCpath$, therefore it contains a blocking factor for $P$.
	\end{claimproof}
	Therefore, it suffices to enumerate all accepting SCC-paths $\SCCpath$ in the automaton, and then check that all portals in $\SCCpath$ have at least one blocking factor, using \cref{lem:exists-blocking-fact}.
\end{proof}

\subsection{Hardness of classifying automata}\label{sec:pspace-hard}

We prove hardness of the triviality problem and easiness problems, concluding on their \PSPACE-completeness.
We reduce from the universality problem for NFAs, which is well-known to be \PSPACE-complete (see e.g.~\cite[Theorem 10.14]{aho1974design}).

\begin{lemma}\label{lemma:trivial-complexity}
	The triviality and hardness problems are \PSPACE-hard.
\end{lemma}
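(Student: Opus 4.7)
The plan is to reduce from the universality problem for NFAs, which is well-known to be \PSPACE-complete (see e.g.~\cite[Theorem 10.14]{aho1974design}). Both reductions share a common backbone: given an NFA $\Bb$ over $\Sigma$, I introduce a fresh separator $\$ \notin \Sigma$ and construct in polynomial time a strongly connected trim NFA $\Aa_\Bb$ over $\Sigma \cup \{\$\}$ recognising $(\lang{\Bb}\cdot\$)^*$, obtained by taking $\Bb$, making its initial state $q_0$ the unique accepting state, and adding a $\$$-transition from each original final state of $\Bb$ back to $q_0$.

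For triviality, I claim $\lang{\Aa_\Bb}$ is trivial iff $\lang{\Bb} = \Sigma^*$; by \cref{lem:trivialiffMBSempty}, it suffices to show that $\MBS(\Aa_\Bb)$ is empty iff $\lang{\Bb}$ is universal. In the universal direction, any positional word can be extended to a word of $\lang{\Aa_\Bb}$ by appending a single $\$$, so $\Aa_\Bb$ admits no blocking sequence. Conversely, for any $u \notin \lang{\Bb}$, the positional word $\$u\$$ is a blocking factor: in any $(\lang{\Bb}\,\$)^*$-decomposition of a word containing $\$u\$$, the $\Sigma$-segment $u$ would have to lie in $\lang{\Bb}$, a contradiction. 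This yields a polynomial-time reduction from universality to non-triviality, proving \PSPACE-hardness of the triviality problem.

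For hardness, the same construction alone is insufficient since $\lang{\Bb}^c$ may be finite, leaving $\Aa_\Bb$ only easy in the non-universal case. I would first preprocess $\Bb$ into $\Bb'$ on an alphabet enlarged by one or two fresh letters, in polynomial size, so that (i) $\lang{\Bb'}$ is universal iff $\lang{\Bb}$ is, and (ii) whenever $\lang{\Bb} \neq \Sigma^*$, the language $\lang{\Bb'}^c$ contains an infinite antichain under the factor order, each element being factor-minimal in $\lang{\Bb'}^c$. Applying the $(\lang{\Bb'}\,\$)^*$ construction then produces an NFA $\Aa_{\Bb'}$ whose set of minimal blocking sequences contains $\$v\$$ for each antichain element $v$; by \cref{thm:general}, $\lang{\Aa_{\Bb'}}$ is trivial when $\lang{\Bb} = \Sigma^*$ and hard otherwise, establishing \PSPACE-hardness of the hardness problem.

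The main obstacle is engineering this preprocessing step. Naive padding by $c$-self-loops on every state of $\Bb$ achieves (i) and produces an infinite complement, but does not create a factor-antichain: a shortest witness $u \notin \lang{\Bb}$ remains the unique factor-minimal element of $\lang{\Bb'}^c$, so $\Aa_{\Bb'}$ has only one minimal blocking factor and is still easy rather than hard. Overcoming this calls for a more delicate construction that frames the forbidden pattern by fresh symbols on both sides -- for instance, forcing $\lang{\Bb'}^c$ to contain a family such as $\{x\, u\, y^k\, u\, z : k \ge 1\}$ of pairwise factor-incomparable elements, each minimal because any proper factor either drops a framing symbol (and can thus be completed to a word of $\lang{\Bb'}$) or shortens the $y^k$ block (losing the forbidden structure). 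Verifying both factor-incomparability and factor-minimality is the delicate combinatorial core of the argument.
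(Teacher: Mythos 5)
Your reduction from NFA universality is the right approach and matches the paper's, and the triviality direction of your argument is essentially sound (modulo small technicalities: ensure $q_0$ has no incoming $\Sigma$-transitions so that $\Aa_\Bb$ really recognises $(\lang{\Bb}\$)^*$, and note that if $\Bb$ is universal then $q_0 \in F$ gives a $\$$-self-loop, so the period is $1$ and no positional subtleties arise). But the ``gap'' you identify in the hardness direction is a phantom, and the elaborate $\{x\,u\,y^k\,u\,z\}$ workaround you sketch is unnecessary.

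Concretely, you claim that after adding $c$-self-loops to $\Bb$, ``a shortest witness $u \notin \lang{\Bb}$ remains the unique factor-minimal element of $\lang{\Bb'}^c$.'' This is false, provided $|u|\geq 2$. Remember that ``factor'' means \emph{contiguous} subword: if $u = u_1u_2$ with both parts nonempty, then $u = u_1u_2$ is \emph{not} a contiguous factor of $u_1 c^n u_2$ for any $n\geq 1$, and every proper contiguous factor of $u_1 c^n u_2$ (namely a piece of the form $u_1'c^j$, $c^ju_2'$, $c^j$, or $u_1'c^nu_2'$ with at least one of $u_1',u_2'$ strictly shorter) has $\Sigma$-projection a \emph{proper} factor of $u$, hence is accepted by $\Bb'$ because $u$ was chosen of minimal length. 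So each $u_1c^nu_2$ is factor-minimal in $\lang{\Bb'}^c$, and they are pairwise incomparable since the lengths of the $c$-blocks differ. Wrapped in $\$\cdots\$$, they yield infinitely many minimal blocking factors, exactly as you wanted. The only thing you need to add is the harmless polynomial-time preprocessing ensuring that $\Bb$ accepts all words of length $\leq 1$, which guarantees $|u|\geq 2$ and does not change whether $\Bb$ is universal; the paper makes this assumption explicitly. (Without it, a length-$1$ witness $a$ would indeed give only $a$ as the factor-minimal element, which is the one special case where your intuition is right.)

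The paper streamlines further: it builds a single automaton $\Bb$ from $\Aa$ by adding $\#$-self-loops on every state \emph{and} $!$-transitions from each final state back to $q_0$ (keeping $F$ as accepting states rather than collapsing to $q_0$). This automaton is strongly connected, has period $1$ thanks to the self-loops, and satisfies: $\MBF(\Bb)=\emptyset$ iff $\Aa$ is universal, and $\MBF(\Bb)$ is infinite otherwise, witnessed by $u\#^nv$, $!u\#^nv$, $u\#^nv!$, or $!u\#^nv!$ depending on whether $w=uv$ is a factor, prefix, or suffix of some accepted word. Thus one construction proves both \PSPACE-hardness claims at once, whereas your version would require carrying through two separate reductions. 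I recommend you drop the proposed two-sided framing family, add the length-$\geq 2$ preprocessing, and re-verify the factor-minimality computation above; with those adjustments your argument goes through along the same lines as the paper's.
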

\begin{proof}
	Consider an NFA $\Aa = (Q, \Sigma, \delta, q_0, F)$ on an alphabet $\Sigma$.
	Without loss of generality, we assume that $\Aa$ is trim (up to removing unreachable or non-co-reachable states) and that it accepts all words of length less than 2: this can be checked in polynomial time and does not affect the \PSPACE-hardness of universality.
	Let $\#$ and $!$ be two letters that are not in $\Sigma$.
	We apply the following transformations to $\Aa$:
	\begin{itemize}
		\item add a transition labeled by $!$ from every final state to the initial state~$q_0$
		\item add a self-loop labeled by $\#$ to each state.
	\end{itemize}
	
	We call the resulting automaton $\Bb = (Q, \Sigma\cup\set{!,\#}, \delta', q_0, F)$.
	Note that $\Bb$ is strongly connected:
	consider any two states $q, q' \in Q$, we show that~$q'$ is reachable from~$q$. As $\Aa$ is trim, there exists $q_f \in F$ that is reachable from~$q$, and~$q'$ is reachable from the initial state~$q_0$.
	Furthermore, we have put a $!$ transition from~$q_f$ to~$q_0$, hence~$q'$ is reachable from~$q$.
	
	Recall that the language of a strongly connected automaton is trivial if and only if it has no minimal blocking factor, and hard if and only if it has infinitely many minimal blocking factors.
	
	Hence, to complete the proof, we show that $\MBF(\Bb)$ is empty when $\Aa$ is universal and infinite otherwise. 
	
	First, let us describe the language recognized by $\Bb$. It is given by
	\[\lang{\Bb} = \set{u_1!u_2!\cdots !u_n \mid \forall i, u_i \in (\Sigma\cup\set{\#})^* ~\land~ \pi_{\Sigma}(u_i) \in \lang{\Aa}},\]
	where $\pi_{\Sigma}(u)$ is the word in $\Sigma^*$ obtained by removing all letters not in $\Sigma$ from $u$.
	
	\begin{claim}
		If $\Aa$ is universal, then $\Bb$ is also universal. 
	\end{claim}
	\begin{claimproof}
		Indeed, any word in $u$ in $ $ can be uniquely decomposed into $u = u_1!u_2!\cdots !u_n$ where each $u_i$ does not contain the letter ``$!$''.
		As $\#$ is idempotent on $\Bb$, $\delta'(q_0,u_i)$ is equal to $\delta(q_0, \pi_{\Sigma}(u_i))$ for every $i$.
		Since $\Aa$ is universal, each of the $\delta'(q_0, u_i)$ contains a final state, hence $\delta'(q_0, u_i!) = \set{q_0}$.
		Therefore, the set $\delta'(q_0, u)$ is equal to $\delta'(q_0, u_n)$, which contains a final state, and  $u$ is in $\lang{\Bb}$, which shows that $\Bb$ is universal.
	\end{claimproof}
	This shows that if $\Aa$ is universal, then $\MBF(\Bb)$ is empty.
	
	Now we show that a word $w \in \Sigma^*$ not in $\lang{\Aa}$ induces infinitely many minimal blocking factors for $\Bb$.
	Consider such a $w$ of minimal size. As we assumed that $\Aa$ accepts all words of size less than 2, $|w| \geq 2$. 
	Let $u, v$ be words of length at least 1 such that $w=uv$.
	For all $n \in \NN$, at least one of $u\#^nv, !u\#^nv, u\#^nv!, !u\#^nv!$ is a minimal blocking factor (depending respectively on whether $w$ is not a factor of any word of $\lang{\Aa}$ or is a prefix/suffix of a word of $\lang{\Aa}$ or not).
	As a consequence, $\Bb$ has infinitely many blocking factors, and is thus hard to test by \cref{thm:scc}.
	
	In summary, $\Aa$ is universal if and only if $\Bb$ is trivial to test, and  $\Aa$ is \emph{not} universal if and only if $\Bb$ is hard to test.
	This shows the \PSPACE-hardness of the triviality problem.
\end{proof}

The above proof can be extended to show the \PSPACE-hardness of the easiness problem.
\begin{corollary}\label{cor:hardness-easy}
	The easiness problem is \PSPACE-hard.
\end{corollary}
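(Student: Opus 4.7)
The plan is to reduce the universality problem for NFAs to the easiness problem, refining the construction of \cref{lemma:trivial-complexity} so that the dichotomy becomes easy-versus-hard instead of trivial-versus-hard. Given an NFA $\Aa$ on alphabet $\Sigma$, I first build the strongly connected automaton $\Bb$ over $\Sigma_{\Bb} = \Sigma \cup \set{!,\#}$ produced by \cref{lemma:trivial-complexity}, which satisfies $\lang{\Bb} = \Sigma_{\Bb}^*$ when $\Aa$ is universal and has $\MBF(\Bb)$ infinite otherwise. I then define $\Cc$ by enlarging the alphabet of $\Bb$ with a single fresh letter $c$, without adding any new transition for $c$. The resulting $\Cc$ remains strongly connected, trim, and has the same period $\lambda$ as $\Bb$, with $\lang{\Cc} = \lang{\Bb}$ as subsets of $\Sigma_{\Cc}^* = (\Sigma_{\Bb} \cup \set{c})^*$.

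Next I would analyze the two cases. If $\Aa$ is universal, $\lang{\Cc}$ accepts all words over $\Sigma_{\Bb}$, so a positional word is blocking for $\Cc$ exactly when it contains an occurrence of $c$; the minimal such positional words are precisely the $\lambda$ single-letter ones $\timedword{i}{c}$ for $i \in \ZZ/\lambda\ZZ$. Hence $\MBF(\Cc)$ is finite and non-empty, and $\lang{\Cc}$ is easy by \cref{thm:scc}. If $\Aa$ is not universal, then every minimal blocking factor of $\Bb$ remains a minimal blocking factor of $\Cc$ when viewed as a positional word over the enlarged alphabet: it stays blocking because enlarging the alphabet without adding transitions only shrinks the language, and stays minimal because each of its proper factors still occurs as a factor of some word of $\lang{\Bb} \subseteq \lang{\Cc}$. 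Thus $\MBF(\Cc)$ inherits the infinitude of $\MBF(\Bb)$, and $\lang{\Cc}$ is hard by \cref{thm:scc}.

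Combining the two cases, $\lang{\Cc}$ is easy if and only if $\Aa$ is universal, so \PSPACE-hardness of universality transfers to the easiness problem via this polynomial-time reduction. The only delicate point, which is not really an obstacle, is checking that the alphabet extension produces exactly the $\lambda$ trivial minimal blocking factors in the universal case without interfering with those inherited from $\Bb$ in the non-universal case; this is immediate once one observes that $c$ cannot appear in any accepted word of $\Cc$, so any blocking factor containing $c$ already has some $\timedword{j}{c}$ as a proper factor, while positional words over $\Sigma_{\Bb}$ are blocking for $\Cc$ if and only if they were blocking for $\Bb$.
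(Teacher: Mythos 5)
Your proof is correct and takes essentially the same approach as the paper: both build on the automaton $\Bb$ from the triviality/hardness reduction and then enlarge the alphabet with a single fresh letter carrying no transitions, so that $\MBF$ grows by exactly the (positional) single-letter blocking factors for that new symbol, turning empty into finite-nonempty and leaving infinite as infinite. Your write-up is a bit more explicit than the paper's about the positional bookkeeping (the $\lambda$ copies $\timedword{i}{c}$) and about why the old minimal blocking factors remain minimal, but the argument is the same.
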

\begin{proof}
	We proceed as in the proof of \cref{lemma:trivial-complexity}: given an automaton $\Aa$ over an alphabet $\Sigma$,
	we build an automaton $\Bb$ over the alphabet $\Sigma\cup\set{!,\#}$ such that if $\Aa$ is universal,
	$\MBF(\Bb)$ is empty, and if $\Aa$ is not universal, then $\MBF(\Bb)$ is infinite.
	
	To show the hardness of the easiness problem, let $\flat$ denote a new letter not in $\Sigma \cup \set{\#, !}$ and consider the automaton $\Bb'$ equal to $\Bb$ but taken over the alphabet $\Sigma \cup \set{\#, !, \flat}$.
	As there are no transitions labeled by $\flat$ in $\Bb'$, the word $\flat$ is always a minimum blocking factor of $\Bb'$. As a result, we have $\MBF(\Bb') = \MBF(\Bb) \cup\set{\flat}$, hence $\Aa$ is universal if and only if $\MBF(\Bb')$ is finite but non-empty: by \cref{thm:scc}, this is equivalent to $\lang{\Bb'}$ is easy to test.
	Therefore, the easiness problem is also \PSPACE-hard.
\end{proof}

This concludes the proof of \cref{thm:PSPACE-main}


\section{Conclusion}
We presented an effective classification of regular languages in three classes, each associated with an optimal query complexity for property testing.
We thus close a line of research aiming to determine the optimal complexity of regular languages.
All our results are with respect to the Hamming distance. We conjecture that they can be adapted to the edit distance.
We use non-deterministic automata to represent regular languages. 
A natural open question is the complexity of classifying languages represented by \emph{deterministic} automata.

\bibliography{biblio}

\begin{thebibliography}{10}

\bibitem{aho1974design}
Alfred~V. Aho and John~E. Hopcroft.
\newblock {\em The Design and Analysis of Computer Algorithms}.
\newblock Addison-Wesley Longman Publishing Co., Inc., 1974.

\bibitem{Aliakbarpour2018DifferentiallyPI}
Maryam Aliakbarpour, Ilias Diakonikolas, and Ronitt Rubinfeld.
\newblock Differentially private identity and equivalence testing of discrete
  distributions.
\newblock In {\em International Conference on Machine Learning, {ICML}}, 2018.
\newblock URL: \url{https://proceedings.mlr.press/v80/aliakbarpour18a.html}.

\bibitem{AlonDLRY94}
Noga Alon, Richard~A. Duke, Hanno Lefmann, Vojtech R{\"{o}}dl, and Raphael
  Yuster.
\newblock The algorithmic aspects of the regularity lemma.
\newblock {\em Journal of Algorithms}, 16(1):80--109, 1994.
\newblock \href {https://doi.org/10.1006/JAGM.1994.1005}
  {\path{doi:10.1006/JAGM.1994.1005}}.

\bibitem{alon2000efficient}
Noga Alon, Eldar Fischer, Michael Krivelevich, and Mario Szegedy.
\newblock Efficient testing of large graphs.
\newblock {\em Combinatorica}, 20(4):451--476, 2000.
\newblock \href {https://doi.org/10.1007/s004930070001}
  {\path{doi:10.1007/s004930070001}}.

\bibitem{alon2001regular}
Noga Alon, Michael Krivelevich, Ilan Newman, and Mario Szegedy.
\newblock Regular languages are testable with a constant number of queries.
\newblock {\em SIAM Journal on Computing}, 30(6):1842--1862, 2001.
\newblock \href {https://doi.org/10.1109/SFFCS.1999.814641}
  {\path{doi:10.1109/SFFCS.1999.814641}}.

\bibitem{AlonS08a}
Noga Alon and Asaf Shapira.
\newblock Every monotone graph property is testable.
\newblock {\em {SIAM} Journal of Computing}, 38(2):505--522, 2008.
\newblock \href {https://doi.org/10.1137/050633445}
  {\path{doi:10.1137/050633445}}.

\bibitem{amarilli2021dynamic}
Antoine Amarilli, Louis Jachiet, and Charles Paperman.
\newblock Dynamic membership for regular languages.
\newblock In Nikhil Bansal, Emanuela Merelli, and James Worrell, editors, {\em
  48th International Colloquium on Automata, Languages, and Programming,
  {ICALP} 2021, July 12-16, 2021, Glasgow, Scotland (Virtual Conference)},
  volume 198 of {\em LIPIcs}, pages 116:1--116:17. Schloss Dagstuhl -
  Leibniz-Zentrum f{\"{u}}r Informatik, 2021.
\newblock URL: \url{https://doi.org/10.4230/LIPIcs.ICALP.2021.116}, \href
  {https://doi.org/10.4230/LIPICS.ICALP.2021.116}
  {\path{doi:10.4230/LIPICS.ICALP.2021.116}}.

\bibitem{BabuLRV13}
Ajesh Babu, Nutan Limaye, Jaikumar Radhakrishnan, and Girish Varma.
\newblock Streaming algorithms for language recognition problems.
\newblock {\em Theoretical Computer Science}, 494:13--23, 2013.

\bibitem{bathie2021property}
Gabriel Bathie and Tatiana Starikovskaya.
\newblock Property testing of regular languages with applications to streaming
  property testing of visibly pushdown languages.
\newblock In {\em International Colloquium on Automata, Languages, and
  Programming, {ICALP}}, 2021.
\newblock \href {https://doi.org/10.4230/LIPIcs.ICALP.2021.119}
  {\path{doi:10.4230/LIPIcs.ICALP.2021.119}}.

\bibitem{blum1990self}
Manuel Blum, Michael Luby, and Ronitt Rubinfeld.
\newblock Self-testing/correcting with applications to numerical problems.
\newblock {\em Journal of Computer and System Sciences}, 47(3):549--595, 1993.
\newblock \href {https://doi.org/https://doi.org/10.1016/0022-0000(93)90044-W}
  {\path{doi:https://doi.org/10.1016/0022-0000(93)90044-W}}.

\bibitem{diakonikolas2016new}
Ilias Diakonikolas and Daniel~M Kane.
\newblock A new approach for testing properties of discrete distributions.
\newblock In {\em IEEE Symposium on Foundations of Computer Science, {FOCS}},
  pages 685--694. IEEE, 2016.
\newblock \href {https://doi.org/10.1109/FOCS.2016.78}
  {\path{doi:10.1109/FOCS.2016.78}}.

\bibitem{dodis1999improved}
Yevgeniy Dodis, Oded Goldreich, Eric Lehman, Sofya Raskhodnikova, Dana Ron, and
  Alex Samorodnitsky.
\newblock Improved testing algorithms for monotonicity.
\newblock In {\em International Workshop on Randomization and Approximation
  Techniques in Computer Science, {RANDOM}}, pages 97--108. Springer, 1999.
\newblock \href {https://doi.org/10.1007/978-3-540-48413-4_10}
  {\path{doi:10.1007/978-3-540-48413-4_10}}.

\bibitem{ergun1998spot}
Funda Ergün, Sampath Kannan, S.Ravi Kumar, Ronitt Rubinfeld, and Mahesh
  Viswanathan.
\newblock Spot-checkers.
\newblock {\em Journal of Computer and System Sciences}, 60(3):717--751, 2000.
\newblock \href {https://doi.org/https://doi.org/10.1006/jcss.1999.1692}
  {\path{doi:https://doi.org/10.1006/jcss.1999.1692}}.

\bibitem{fischer2018improved}
Eldar Fischer, Fr{\'e}d{\'e}ric Magniez, and Tatiana Starikovskaya.
\newblock Improved bounds for testing {D}yck languages.
\newblock In {\em Proceedings of the Twenty-Ninth Annual ACM-SIAM Symposium on
  Discrete Algorithms}, pages 1529--1544. SIAM, 2018.

\bibitem{francois_et_al:LIPIcs:2016:6355}
Nathana{\"e}l Fran{\c{c}}ois, Fr{\'e}d{\'e}ric Magniez, Michel de~Rougemont,
  and Olivier Serre.
\newblock Streaming property testing of visibly pushdown languages.
\newblock In {\em Proceedings of the 24th Annual European Symposium on
  Algorithms}, volume~57 of {\em LIPIcs}, pages 43:1--43:17, 2016.
\newblock \href {https://doi.org/10.4230/LIPIcs.ESA.2016.43}
  {\path{doi:10.4230/LIPIcs.ESA.2016.43}}.

\bibitem{GanardiHL18}
Moses Ganardi, Danny Hucke, and Markus Lohrey.
\newblock Randomized sliding window algorithms for regular languages.
\newblock In {\em Proceedings of the 45th International Colloquium on Automata,
  Languages, and Programming}, volume 107 of {\em LIPIcs}, pages 127:1--127:13,
  2018.
\newblock \href {https://doi.org/10.4230/LIPIcs.ICALP.2018.127}
  {\path{doi:10.4230/LIPIcs.ICALP.2018.127}}.

\bibitem{ganardi2019sliding}
Moses Ganardi, Danny Hucke, Markus Lohrey, and Tatiana Starikovskaya.
\newblock {Sliding Window Property Testing for Regular Languages}.
\newblock In Pinyan Lu and Guochuan Zhang, editors, {\em 30th International
  Symposium on Algorithms and Computation (ISAAC 2019)}, volume 149 of {\em
  Leibniz International Proceedings in Informatics (LIPIcs)}, pages 6:1--6:13,
  Dagstuhl, Germany, 2019. Schloss Dagstuhl -- Leibniz-Zentrum f{\"u}r
  Informatik.
\newblock URL:
  \url{https://drops.dagstuhl.de/entities/document/10.4230/LIPIcs.ISAAC.2019.6},
  \href {https://doi.org/10.4230/LIPIcs.ISAAC.2019.6}
  {\path{doi:10.4230/LIPIcs.ISAAC.2019.6}}.

\bibitem{goldreich2017introduction}
Oded Goldreich.
\newblock {\em Introduction to property testing}.
\newblock Cambridge University Press, 2017.
\newblock \href {https://doi.org/10.1017/9781108135252}
  {\path{doi:10.1017/9781108135252}}.

\bibitem{goldreich1998property}
Oded Goldreich, Shari Goldwasser, and Dana Ron.
\newblock Property testing and its connection to learning and approximation.
\newblock {\em Journal of the {ACM}}, 45(4):653–750, jul 1998.
\newblock \href {https://doi.org/10.1145/285055.285060}
  {\path{doi:10.1145/285055.285060}}.

\bibitem{hoeffding1994probability}
Wassily Hoeffding.
\newblock Probability inequalities for sums of bounded random variables.
\newblock {\em The collected works of Wassily Hoeffding}, pages 409--426, 1994.

\bibitem{JN14}
Rahul {Jain} and Ashwin {Nayak}.
\newblock The space complexity of recognizing well-parenthesized expressions in
  the streaming model: The index function revisited.
\newblock {\em IEEE Transactions on Information Theory}, 60(10):6646--6668, Oct
  2014.
\newblock \href {https://doi.org/10.1109/TIT.2014.2339859}
  {\path{doi:10.1109/TIT.2014.2339859}}.

\bibitem{Krebs}
Andreas Krebs, Nutan Limaye, and Srikanth Srinivasan.
\newblock Streaming algorithms for recognizing nearly well-parenthesized
  expressions.
\newblock In {\em Proc. of {MFCS} 2011}, volume 6907 of {\em LNCS}, pages
  412--423. Springer, 2011.
\newblock \href {https://doi.org/10.1007/978-3-642-22993-0\_38}
  {\path{doi:10.1007/978-3-642-22993-0\_38}}.

\bibitem{magniez2007property}
Fr{\'e}d{\'e}ric Magniez and Michel de~Rougemont.
\newblock Property testing of regular tree languages.
\newblock {\em Algorithmica}, 49(2):127--146, 2007.
\newblock \href {https://doi.org/10.1007/s00453-007-9028-3}
  {\path{doi:10.1007/s00453-007-9028-3}}.

\bibitem{MagniezMN14}
Fr{\'{e}}d{\'{e}}ric Magniez, Claire Mathieu, and Ashwin Nayak.
\newblock Recognizing well-parenthesized expressions in the streaming model.
\newblock {\em {SIAM} J. Comput.}, 43(6):1880--1905, 2014.
\newblock \href {https://doi.org/10.1137/130926122}
  {\path{doi:10.1137/130926122}}.

\bibitem{paninski2008coincidence}
Liam Paninski.
\newblock A coincidence-based test for uniformity given very sparsely sampled
  discrete data.
\newblock {\em IEEE Transactions on Information Theory}, 54(10):4750--4755,
  2008.
\newblock \href {https://doi.org/10.1109/TIT.2008.928987}
  {\path{doi:10.1109/TIT.2008.928987}}.

\bibitem{parnas2003testing}
Michal Parnas, Dana Ron, and Ronitt Rubinfeld.
\newblock Testing membership in parenthesis languages.
\newblock {\em Random Struct. Algorithms}, 22(1):98--138, 2003.
\newblock \href {https://doi.org/10.1002/rsa.10067}
  {\path{doi:10.1002/rsa.10067}}.

\bibitem{Pin2021}
Jean{-}{\'{E}}ric Pin, editor.
\newblock {\em Handbook of Automata Theory}.
\newblock European Mathematical Society Publishing House, Z{\"{u}}rich,
  Switzerland, 2021.
\newblock URL: \url{https://doi.org/10.4171/Automata}, \href
  {https://doi.org/10.4171/AUTOMATA} {\path{doi:10.4171/AUTOMATA}}.

\bibitem{rubinfeld1996robust}
Ronitt Rubinfeld and Madhu Sudan.
\newblock Robust characterizations of polynomials with applications to program
  testing.
\newblock {\em SIAM Journal on Computing}, 25(2):252--271, 1996.
\newblock \href {https://doi.org/10.1137/S0097539793255151}
  {\path{doi:10.1137/S0097539793255151}}.

\bibitem{SAVITCH1970177}
Walter~J. Savitch.
\newblock Relationships between nondeterministic and deterministic tape
  complexities.
\newblock {\em Journal of Computer and System Sciences}, 4(2):177--192, 1970.
\newblock \href {https://doi.org/https://doi.org/10.1016/S0022-0000(70)80006-X}
  {\path{doi:https://doi.org/10.1016/S0022-0000(70)80006-X}}.

\bibitem{ThokairZMV23}
Mosaad~Al Thokair, Minjian Zhang, Umang Mathur, and Mahesh Viswanathan.
\newblock Dynamic race detection with {O(1)} samples.
\newblock {\em Proc. {ACM} Program. Lang.}, 7({POPL}):1308--1337, 2023.
\newblock \href {https://doi.org/10.1145/3571238} {\path{doi:10.1145/3571238}}.

\bibitem{yao1977probabilistic}
Andrew Chi-Chin Yao.
\newblock Probabilistic computations: {T}oward a unified measure of complexity.
\newblock In {\em Symposium on Foundations of Computer Science, {SFCS}}, pages
  222--227. IEEE, 1977.
\newblock \href {https://doi.org/10.1109/SFCS.1977.24}
  {\path{doi:10.1109/SFCS.1977.24}}.

\end{thebibliography}

\newpage







\end{document}